 \newcommand\setreflabel[1]{\protected@edef\@currentlabel{#1}}\newcounter{claim}
 \newcommand\claimlabelfmt[1]{(#1)}
   {\removelastskip\ifx\newenvironment#1\newenvironment%
    \refstepcounter{claim}\def\@claim{\arabic{claim}}\else\def\@claim{#1}\fi%
    \setreflabel{\expandafter\claimlabelfmt{\@claim}}%
    \global\edef\@lastclaim{clm@\the\inputlineno}%
    \label{\@lastclaim}\smallskip\par%
    \begin{compactitem}[~{\@currentlabel}]\item\it}
   {\end{compactitem}\smallskip\par}
   {\par\edef\@thisclaim{\ifx\newenvironment#1\newenvironment%
    \@lastclaim\else#1\fi}\noindent\ignorespaces}
   {This proves~\expandafter\ref{\@thisclaim}.\smallskip\par\relax}
   {\let\@oldclaim\@thisclaim}
   {\let\@thisclaim\@oldclaim}
\newcommand{\Eq}{Eq}
\DeclareMathOperator{\Bool}{Boole}
\DeclareMathOperator{\NNE}{\mathit{NNE}}
\DeclareMathOperator{\NEN}{\mathit{NEN}}
\DeclareMathOperator{\ENN}{\mathit{ENN}}
\DeclareMathOperator{\NNN}{\mathit{NNN}}
\DeclareMathOperator{\EEE}{\mathit{EEE}}
\DeclareMathOperator{\EEN}{\mathit{EEN}}
\DeclareMathOperator{\ENE}{\mathit{ENE}}
\DeclareMathOperator{\NEE}{\mathit{NEE}}
\DeclareMathOperator{\EE}{\mathit{EE}}
\DeclareMathOperator{\NN}{\mathit{NN}}
\DeclareMathOperator{\EN}{\mathit{EN}}
\DeclareMathOperator{\QN}{\mathit{QN}}
\DeclareMathOperator{\NPe}{\mathit{NP}}
\DeclareMathOperator{\NE}{\mathit{NE}}
\DeclareMathOperator{\Eeq}{\mathit{E{=}}}
\DeclareMathOperator{\eqE}{\mathit{{=}E}}
\DeclareMathOperator{\Neq}{\mathit{N{=}}}
\DeclareMathOperator{\eqN}{\mathit{{=}N}}
\DeclareMathOperator{\ENeq}{\mathit{EN}{=}}
\DeclareMathOperator{\eqEE}{\mathit{{=}EE}}
\DeclareMathOperator{\eqEN}{\mathit{{=}EN}}
\DeclareMathOperator{\eqNE}{\mathit{{=}NE}}
\DeclareMathOperator{\eqNN}{\mathit{{=}NN}}
\DeclareMathOperator{\eqeqE}{\mathit{{=}{=}E}}
\DeclareMathOperator{\eqeqN}{\mathit{{=}{=}N}}
\DeclareMathOperator{\eqeqeq}{\mathit{{=}{=}{=}}}
\DeclareMathOperator{\EEQ}{\mathit{E}{=}}
\DeclareMathOperator{\NEQ}{\mathit{N}{=}}
\DeclareMathOperator{\NEQEQ}{{\neq}{=}}
\DeclareMathOperator{\NEQNEQ}{\neq\neq}
\newcommand{\nin}{\notin}
\DeclareMathOperator{\Aut}{Aut}
\newcommand{\ignore}[1]{}
\newcommand{\CSP}{\text{\rm CSP}}
\newcommand{\PP}{\ensuremath{\mathrm{P}}}
\newcommand{\NP}{\ensuremath{\mathrm{NP}}}
\DeclareMathOperator{\inj}{inj}
\DeclareMathOperator{\Pol}{Pol}
\DeclareMathOperator{\End}{End}
\DeclareMathOperator{\Csp}{CSP}
\DeclareMathOperator{\tp}{tp}
\DeclareMathOperator{\maxi}{max}
\DeclareMathOperator{\mini}{min}
\newcommand{\To}{\rightarrow}
\newtheorem{thm}[theorem]{Theorem}
\newtheorem{lem}[theorem]{Lemma}
\newtheorem{prop}[theorem]{Proposition}
\newtheorem{cor}[theorem]{Corollary}
\newtheorem{defn}[theorem]{Definition}
\title{Constraint satisfaction problems for reducts of homogeneous graphs\footnotemark[1]}
\author{Manuel Bodirsky  \footnotemark[2]\ \footnotemark[6] \and Barnaby Martin  \footnotemark[3] \and Michael Pinsker \footnotemark[4]\  \footnotemark[7] \and
 Andr\'{a}s Pongr\'{a}cz \footnotemark[5]\ \footnotemark[8]
}
\begin{document}
\maketitle

\renewcommand{\thefootnote}{\fnsymbol{footnote}}

\footnotetext[1]{An extended abstract of this paper has appeared at the 43rd International Colloquium on Automata, Languages and Programming (ICALP) Track B, 2016.}
\footnotetext[2]{Institut f\"{u}r Algebra, TU Dresden, 01062 Dresden, Germany. (\texttt{Manuel.Bodirsky@tu-dresden.de}).}
\footnotetext[3]{Department of Computer Science, Durham University, South Road, Durham, UK.  (\texttt{barnabymartin@gmail.com}).}
\footnotetext[4]{Institut f\"{u}r Diskrete Mathematik und Geometrie, FG Algebra, TU Wien, Austria, and Department of Algebra, Charles University, Czech Republic. (\texttt{marula@gmx.at}).}
\footnotetext[5]{Department of Algebra and Number Theory,  University of Debrecen,
    4032 Debrecen, Egyetem square 1,
    Hungary.
(\texttt{pongracz.andras@science.unideb.hu}).}
\footnotetext[6]{The first and fourth author have received funding from the European Research Council under the European Community's Seventh Framework Programme (FP7/2007-2013 Grant Agreement no. 257039). The first author also received funding from the German Science Foundation (DFG, project number 622397) and from the European Research Council (Grant Agreement no. 681988, CSP-Infinity).}
    \footnotetext[7]{The third author has received funding from the  Austrian Science Fund (FWF) through  project No P27600, and from the Czech Science Foundation (grant No 18-20123S).}
\footnotetext[8]{The fourth author was supported by the Hungarian Scientific Research Fund (OTKA) grant no.~K109185, by the National Research, Development and Innovation Fund of Hungary, financed under the FK~124814 and PD~125160 funding schemes, the J\'anos Bolyai Research Scholarship of the Hungarian Academy of Sciences, the New National Excellence Program Bolyai+ of the Ministry of Human Capacities and by the European Social Fund (EFOP-3.6.2-16-2017-00015)}.

\begin{abstract}
For $n\geq 3$, let $(H_n, E)$ denote the $n$-th Henson graph, i.e., the unique countable homogeneous graph with exactly those finite graphs as induced subgraphs that do not embed the complete graph on $n$ vertices. 
We show that for all structures $\Gamma$ with domain
$H_n$ whose relations are first-order definable in $(H_n,E)$ 
the constraint satisfaction problem for $\Gamma$ 
 is either in $\PP$ or is $\NP$-complete. 
 
 We moreover show a similar complexity dichotomy for all structures whose relations are first-order definable in a homogeneous graph whose reflexive closure is an equivalence relation.
 
 Together with earlier results, in particular for the random graph,  this completes the complexity classification of constraint satisfaction problems of structures first-order definable in countably infinite homogeneous graphs: all such problems are either in $\PP$ or $\NP$-complete.
\end{abstract}

\section{Introduction}

\subsection{Constraint satisfaction problems} A \emph{constraint satisfaction problem} (CSP) is a computational problem in which the input consists of a finite set of variables and a finite set of \emph{constraints}, and where the question is whether there exists a mapping from the variables to some fixed domain such that all the constraints are satisfied. We can thus see the possible constraints as relations on that fixed domain, and in an instance of the CSP, we are asked to assign domain values to the variables such that certain specified tuples of variables become elements of certain specified relations. 

When the domain is finite, and arbitrary constraints are permitted, then the CSP is NP-complete. 
However, when only constraints from a restricted set of relations on the domain are allowed in the input, there might be a polynomial-time algorithm for the CSP.  
The set of relations that is allowed to formulate the constraints in the input is often called the \emph{constraint language}. The question which constraint
languages give rise to polynomial-time solvable CSPs
has been the topic of intensive research over the past years. It was conjectured by Feder and Vardi~\cite{FederVardi} that CSPs for constraint languages over finite domains have a complexity dichotomy: they are either in P or NP-complete. Over the years, the conjecture was proved for substantial classes (for example when the domain has at most three elements~\cite{Schaefer,Bulatov} or when the constraint language contains a single binary relation without sources and sinks ~\cite{HellNesetril,BartoKozikNiven}). Various methods, combinatorial (graph-theoretic), logical,  and universal-algebraic were  brought to bear on this classification project, with many remarkable consequences. A conjectured delineation for the dichotomy was given in the algebraic language in~\cite{JBK}, and finally the conjecture, and in particular this delineation, has recently been proven to be accurate~\cite{BulatovFVConjecture,ZhukFVConjecture}.

When the domain is infinite, the complexity of the CSP can be outside NP, and even undecidable~\cite{BodirskyNesetrilJLC}.  
But for natural classes of such CSPs there is often the potential for structured classifications, and this has proved to be the case for structures first-order definable over the order 
$({\mathbb Q},<)$ of the rationals~\cite{tcsps-journal} or over the integers with successor~\cite{dCSPs2}.
Another classification of this type
has been obtained for CSPs where the
constraint language is first-order definable over the random (Rado) graph~\cite{BodPin-Schaefer},
making use of structural Ramsey theory.
This paper was titled `Schaefer's theorem for graphs' and it can be seen as lifting the famous classification of Schaefer~\cite{Schaefer} from Boolean logic to logic over finite graphs, since the random graph is universal for the class of finite graphs.

\subsection{Homogeneous graphs and their reducts}
The notion of \emph{homogeneity} 
from model theory plays 
an important role
when applying techniques from
finite-domain constraint satisfaction 
to constraint satisfaction over infinite domains. A relational structure is 
\emph{homogeneous} if every isomorphism
between finite induced substructures can be extended 
to an automorphism of the entire structure. 
Homogeneous
structures are uniquely 
(up to isomorphism) given by
the class of finite structures 
that embed into them. 
The structure $(\mathbb Q,<)$
and the random graph are among the most prominent 
examples of homogeneous structures.
The class of structures that are first-order definable over a homogeneous structure with finite relational signature is a very large generalization of the class of all finite structures,
and CSPs for those structures have
been studied independently in many different areas of theoretical computer science, e.g. in temporal and spatial reasoning, phylogenetic analysis, computational linguistics, scheduling, graph homomorphisms, and many more; see~\cite{Bodirsky-HDR} for references. 

While homogeneous relational structures are abundant, there are remarkably few countably infinite homogeneous (undirected, irreflexive) \emph{graphs}; they have been classified by Lachlan and Woodrow~\cite{LachlanWoodrow}. 
Besides the random graph mentioned earlier, an example of such a graph is  
 the countable homogeneous \emph{universal triangle-free} graph, one of the
fundamental structures that appears in most textbooks in model theory. 
This graph is the up to isomorphism unique countable triangle-free graph $(H_3,E)$ with 
the property that for every finite independent set $X \subseteq H_3$ and
for every finite set $Y \subseteq H_3$ there exists a vertex
$x \in H_3 \setminus (X \cup Y)$ such that $x$ is adjacent to 
every vertex in $X$ and to no vertex in $Y$.

Further examples of homogeneous 
graphs are the graphs $(H_4,E)$, $(H_5,E)$, and so forth, which together with $(H_3,E)$ are called the \emph{Henson graphs}, and their complements. 
Here, $(H_n,E)$ for $n > 3$ is the generalization of the graph $(H_3,E)$ above from triangles to cliques of size $n$. Finally, the list of Lachlan and Woodrow contains only one more family of infinite graphs, namely the graphs $(C_n^s,E)$ whose reflexive closure $\Eq$ is an equivalence relation with $n$ classes of equal size $s$, where $1\leq n,s\leq \omega$ and either $n$ or $s$ equals $\omega$, as well as their complements. We remark that $(C_n^s,\Eq)$ is itself homogeneous and first-order interdefinable with $(C_n^s,E)$, and so we shall sometimes refer to the \emph{homogeneous equivalence relations}.

All countable homogeneous graphs, and even all structures which are first-order definable over homogeneous graphs, are \emph{$\omega$-categorical}, 
that is, all countable models of their first-order theory are isomorphic. 
Moreover,  
all countably infinite homogeneous graphs $\Gamma$ are
\emph{finitely bounded} in the sense
that the \emph{age} of $\Gamma$, i.e., 
the class of finite structures that embed into $\Gamma$, can be described by finitely many forbidden substructures. 
Finitely bounded homogeneous structures also share with finite structures the property of  having a finite description: up to isomorphism, they are uniquely given by the finite list of forbidden structures that describes their age.  
Recent work indicates the importance of finite boundedness for complexity classification~\cite{BPT-decidability-of-definability, BP-reductsRamsey,Bodirsky-Mottet,TwoDichotomyConjectures}, and it has been conjectured that all structures with a first-order definition in a finitely bounded homogeneous structure enjoy a complexity dichotomy, i.e., their CSP is either in P or NP-complete (cf.~\cite{BPP-projective-homomorphisms, wonderland, TwoDichotomyConjectures}). The structures first-order definable in homogeneous graphs 
therefore provide the most natural class on which to test further the methods developed in~\cite{BodPin-Schaefer} specifically for the random graph.



In this article we obtain a complete classification
of the computational complexity of CSPs where all constraints have a first-order definition in one of the Henson graphs. We moreover obtain such a classification for CSPs where all constraints have a first-order definition in a countably infinite homogeneous graph whose reflexive closure is an equivalence relation, expanding earlier results for the special cases of one single equivalence class (so-called \emph{equality constraints}~\cite{ecsps}) and infinitely many 
infinite classes~\cite{equiv-csps}. Together with the above-mentioned result on the random graph, this completes the classification of  CSPs for constraints with a first-order definition in any countably infinite homogeneous graph, by Lachlan and Woodrow's classification. Our result is in accordance with the delineations between tractability and hardness predicted in general for structures with a first-order definition in a finitely bounded homogeneous structure~\cite{BPP-projective-homomorphisms, wonderland, TwoDichotomyConjectures}.


Following an established convention (e.g.,~\cite{RandomReducts, BP-reductsRamsey}, and many more) we call a relational structure $\Gamma$ a \emph{reduct} of a structure $\Delta$ if it has the same domain as $\Delta$ and all relations of $\Gamma$ are first-order definable without parameters in $\Delta$. That is, for us a reduct of $\Delta$ is as the classical definition of a reduct with the difference that we first allow a first-order expansion of $\Delta$. With this terminology, the present article provides a complexity classification of the CSPs for all reducts of countably infinite homogeneous graphs. In other words, for every such reduct we determine the complexity of
deciding its \emph{primitive positive theory}, which consists of all sentences which are existentially quantified conjunctions of atomic formulas and which hold in the reduct. We remark that all reducts of such graphs can be defined by quantifier-free first-order formulas, by homogeneity and $\omega$-categoricity.

For reducts of $(H_n, E)$, the 
CSPs express computational problems where the task is to decide whether
there exists a finite graph without any clique of size $n$ that meets certain constraints.
An example of a reduct whose CSP can be solved in polynomial time is $(H_n, E, \{(x,y,u,v) \;|\; E(x,y) \Rightarrow E(u,v)\})$, where $n\geq 3$ is arbitrary. As it turns out, for every CSP of a reduct of a Henson graph which is solvable in polynomial time, the corresponding reduct over the random graph, i.e., the reduct whose relations are defined by the same quantifier-free formulas, is also polynomial-time solvable. On the other hand, the CSP of the reduct $(H_n, 
\{(x,y,u,v) \;|\; E(x,y) \vee E(u,v)\})$ is NP-complete for all $n\geq 3$, but the corresponding reduct over
the random graph can be decided in polynomial time. 

Similarly, for reducts of the graph $(C_n^s,E)$ whose reflexive closure is an equivalence relation with $n$ classes of size $s$, where $1\leq n,s\leq \omega$, the computational problem is to decide whether there exists an equivalence relation with $n$ classes of size $s$ that meets certain constraints. For example, consider the structure $(C_\omega^2;\Eq,A)$
where \begin{align*}
A := \big \{(x_1,y_1,x_2,y_2,x_3,y_3) \mid & \text{ if } \Eq(x_1,y_1), \Eq(x_2,y_2) \text{ and } \Eq(x_3,y_3) \text{ then there is} \\
& \text{ an odd number of } i \in \{1,2,3\} \text{ such that }
x_i \neq y_i \big \}.
\end{align*}
This structure is a reduct of $(C_\omega^2;E)$ and it follows from our results in Section~\ref{thm:C-low-omega-high-2-P} that its CSP can be solved in polynomial time.

\subsection{Results} Our first result is the complexity classification of the CSPs of all reducts of Henson graphs, showing in particular that a uniform approach to infinitely many `base structures' in the same language (namely, the $n$-th Henson graph for each $n\geq 3$) is, in principle, possible.

\begin{theorem}\label{thm:main}
Let $n\geq 3$, and let $\Gamma$ be a finite signature reduct of the $n$-th Henson graph $(H_n, E)$. Then $\CSP(\Gamma)$ is either in $\PP$ or $\NP$-complete.
\end{theorem}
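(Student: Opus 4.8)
The plan is to follow the universal-algebraic and Ramsey-theoretic strategy developed for the random graph in~\cite{BodPin-Schaefer}, adapting each step to the clique-free setting, where the decisive new feature is that edges can no longer be created freely. First I would pass from $\Gamma$ to its polymorphism clone $\Pol(\Gamma)$: since $(H_n,E)$ is $\omega$-categorical, the relations with a primitive positive definition in $\Gamma$ are exactly those preserved by $\Pol(\Gamma)$, and reducts that pp-interdefine one another have polynomial-time equivalent CSPs. I would then analyse the endomorphism monoid $\End(\Gamma)$ in order to reduce to a model-complete core and to dispose of the degenerate cases, which are readily seen to lie in $\PP$; in the remaining cases we may assume that the structure under study contains the edge relation $E$ together with its complement $N$.

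The heart of the argument is the analysis of $\Pol(\Gamma)$ via canonical functions. The ordered expansion $(H_n,E,<)$ is a Ramsey structure, since the class of finite ordered $K_n$-free graphs has the Ramsey property; combining this with a canonisation argument, every polymorphism of $\Gamma$ can be replaced --- without changing the clone it generates together with $\Aut(H_n,E)$ --- by a function that is \emph{canonical} with respect to $(H_n,E,<)$, that is, one whose behaviour depends only on the finitely many orbits of tuples under $\Aut(H_n,E,<)$. This turns the classification into a finite combinatorial problem: to determine which behaviours of canonical (in particular binary) operations can occur in $\Pol(\Gamma)$. For a pair of distinct arguments the relevant type over $(H_n,E,<)$ records only whether the two points form an edge or a non-edge and how they are ordered, so a binary canonical operation is described by how it combines these finitely many possibilities.

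From this classification I would read off the dichotomy. On the tractable side, if $\Pol(\Gamma)$ contains a canonical binary injection of a suitable type --- the analogue of the operations that witness membership in $\PP$ over the random graph --- I would give a polynomial-time algorithm, typically by reducing to a tractable finite-domain CSP or to an equality or order CSP and solving the latter by constraint propagation; the example $(H_n,E,\{(x,y,u,v)\mid E(x,y)\Rightarrow E(u,v)\})$ from the introduction is of this Horn-like tractable form. On the hard side, when no such operation is available, I would exhibit a primitive positive interpretation of a known NP-hard problem (equivalently, a uniformly continuous minion homomorphism from $\Pol(\Gamma)$ onto the clone of projections), which yields NP-completeness; the reduct defined by $E(x,y)\vee E(u,v)$ is the prototypical hard case.

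The main obstacle is exactly the clique-free constraint, which destroys the symmetry between $E$ and $N$ that is available over the random graph. Concretely, many canonical operations that exist over the random graph --- essentially all those that tend to produce edges --- cannot exist on $H_n$, because iterating them would force a copy of $K_n$. This has two consequences that require care. First, the inventory of available canonical polymorphisms shrinks, so the tractability proofs must be redone using only operations compatible with the $K_n$-free condition, and one must check that the relations arising from the edge-producing behaviours are genuinely NP-hard rather than tractable, exactly as signalled by the $E(x,y)\vee E(u,v)$ example. Second, the entire analysis must be carried out uniformly in $n\geq 3$, so that a single argument covers all Henson graphs at once; keeping the canonical-function classification independent of the clique size while still respecting the $n$-dependent forbidden configuration is the most delicate point of the proof.
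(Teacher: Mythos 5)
Your outline reproduces the broad strategy of the paper (reduce to a model-complete core, then analyse $\Pol(\Gamma)$ via canonical functions over the ordered Ramsey expansion $(H_n,E,\prec)$), but as a proof it has genuine gaps. First, a factual error: the degenerate case where $\Gamma$ has an endomorphism onto an independent set is \emph{not} ``readily seen to lie in $\PP$''. In that case $\Gamma$ is homomorphically equivalent to a reduct of $(H_n,=)$, and such equality CSPs are themselves either in $\PP$ or $\NP$-complete by a nontrivial prior classification; some of them are $\NP$-complete, so this branch contributes hard cases too.

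Second, and more seriously, your central step --- ``every polymorphism can be replaced, without changing the clone it generates together with $\Aut(H_n,E)$, by a canonical function'' --- is false as stated, and the proof cannot be organised as a direct classification of all canonical behaviours occurring in $\Pol(\Gamma)$. Canonisation only yields that $f$ \emph{generates} a canonical function agreeing with $f$ on a prescribed finite set; the canonical function may generate a strictly smaller clone, so information is lost and one cannot simply swap $f$ out. The argument therefore has to be anchored to a concrete witness: the paper fixes a specific $6$-ary relation $H$ (a $1$-in-$3$-SAT gadget built from $E$ and $N$) and proves that either $H$ is pp-definable in $\Gamma$ --- giving $\NP$-hardness by reduction from positive 1-in-3-SAT --- or some polymorphism violates $H$, in which case a multi-stage generation argument (arity reduction via the orbit count of $H$, extraction of a binary injection not of behaviour projection, then a Ramsey analysis with constants) produces a canonical binary injection of behaviour $\mini$ which is $N$-dominated; this function is an embedding of $(H_n,E)^2$ into $(H_n,E)$ and yields tractability. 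Your proposal names neither the hard relation nor the tractable behaviour, i.e.\ it omits the actual delineation of the dichotomy, which is where essentially all of the work lies. You do correctly identify that $K_n$-freeness kills the edge-producing behaviours (such as $\maxi$ and balanced binary injections), which is indeed why the classification is simpler than over the random graph; but that observation alone does not substitute for the missing generation arguments.
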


We then obtain a similar complexity dichotomy for reducts of homogeneous equivalence relations, expanding earlier results for special cases~\cite{equiv-csps, ecsps}.

\begin{theorem}\label{thm:equiv}
Let  $(C_n^s,E)$ be a graph whose reflexive closure $\Eq$ is an equivalence relation with $n$ classes of size $s$, where $1\leq n, s \leq \omega$ and one of $s$ or $n$ is $\omega$. Then for any finite signature reduct $\Gamma$ of $(C_n^s,E)$, the problem $\CSP(\Gamma)$ is either in $\PP$ or $\NP$-complete.
\end{theorem}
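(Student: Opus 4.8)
The plan is to follow the universal-algebraic strategy that underlies the random graph and Henson graph classifications, adapting it to the equivalence-relation setting. Since $(C_n^s,E)$ is $\omega$-categorical, the complexity of $\CSP(\Gamma)$ for a reduct $\Gamma$ depends only on its polymorphism clone $\Pol(\Gamma)$, and primitive positive definability is governed by the Galois connection between $\Pol$ and $\Inv$. I would first pass to a model-complete core and analyze the transformation monoid generated by $\End(\Gamma)$: using the classification of the closed supermonoids of $\Aut(C_n^s,E)$, one determines whether $\Gamma$ retracts onto a structurally simpler reduct (for instance one in which $E$ or $\Eq$ becomes trivial, reducing to equality constraints as in~\cite{ecsps}), or whether the full equivalence structure must be retained. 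This reduces the problem to a bounded list of base cases indexed by the finite parameter ($s$ finite, $n$ finite, or both infinite), the last of which is already handled in~\cite{equiv-csps}.

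The core tool for the remaining cases is structural Ramsey theory: expanding $(C_n^s,E)$ by a suitable generic linear order (convex with respect to the classes) yields a Ramsey structure, so every polymorphism can be assumed, after composition with automorphisms and passage to suitable subsequences, to be \emph{canonical} with respect to the order. The induced action on equivalence types then becomes a finite combinatorial object that can be analyzed exhaustively. For the hardness direction I would show that whenever no useful canonical polymorphism exists, $\Pol(\Gamma)$ admits a uniformly continuous minor-preserving map to the clone of projections, which yields $\NP$-hardness; equivalently, one pp-constructs a known $\NP$-hard finite template, such as positive not-all-equal $3$-satisfiability. For tractability I would exhibit concrete algorithms driven by the surviving canonical polymorphisms: when $n<\omega$ the \emph{which-class} coordinate takes finitely many values and the instance reduces to a tractable finite-domain CSP combined with an equality CSP inside each class; when $s<\omega$ the intra-class behavior carries the structure of a finite abelian group, and the relevant tractable cases reduce to solving systems of linear equations over a finite field by Gaussian elimination---this is exactly what makes the relation $A$ of the $C_\omega^2$ example polynomial-time solvable. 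The remaining tractable cases are handled by local consistency (bounded width) or by reductions to the already-classified equality and $C_\omega^\omega$ CSPs.

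The main obstacle I anticipate is twofold. First, the Ramsey canonization must correctly separate the inter-class behavior (permuting and merging classes) from the intra-class behavior (acting within a fixed class); polymorphisms that mix these two levels are the source of the most delicate case distinctions, and ruling out spurious mixed behaviors that would simultaneously break the hardness gadgets and the algorithms is where most of the technical work lies. Second, each tractability claim requires a proof of both soundness and completeness of the algorithm against the precise pp-closure of $\Gamma$, and the finite-parameter cases ($s<\omega$ or $n<\omega$) introduce linear-algebraic and finite-domain phenomena absent from the $C_\omega^\omega$ analysis of~\cite{equiv-csps}, so the borderline between the Gaussian-elimination-solvable cases and the $\NP$-hard ones must be pinned down exactly.
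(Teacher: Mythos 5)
Your overall strategy is the one the paper actually follows (model-complete cores, Ramsey canonization over a convex order, hardness via uniformly continuous maps to projections, tractability via Boolean linear equations), but there are two concrete gaps in the delineation that your plan would not survive without repair.

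First, your tractability picture for the finite-parameter cases is wrong where it is specific. For $3\leq n<\omega$ (finitely many infinite classes) and for $3\leq s<\omega$ (classes of finite size at least $3$) there are \emph{no} new tractable cases at all: once $\Gamma$ is a model-complete core with $\End(\Gamma)=\overline{\Aut(C_n^s,E)}$, the action of $\Pol(\Gamma)$ on the set of classes (respectively, of $\Pol(\Gamma,C)$ on a fixed class $C$) is a clone on a finite domain of size at least $3$ containing all permutations; by the Haddad--Rosenberg theorem any essential operation would force a constant operation, which is impossible since $N$ (respectively $E$) is preserved. Hence that action has only essentially unary operations and the CSP is NP-complete. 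There is no ``tractable finite-domain CSP combined with an equality CSP inside each class,'' and bounded width plays no role anywhere: the only tractable templates beyond equality reducts occur at $n=2$ or $s=2$, precisely because Post's lattice on a two-element set admits the minority/XOR clone, and they are all of linear-equation type (which is not of bounded width).

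Second, your decomposition into inter-class and intra-class actions silently assumes that $\Eq$ is pp-definable in $\Gamma$, which is obtained via $\exists z\,(E(x,z)\wedge E(z,y))$ and therefore needs classes of size at least $3$. In the case $n=\omega$, $s=2$ the relation $\Eq$ need not be invariant under $\Pol(\Gamma)$ even for a model-complete core, and the paper must treat this separately (Proposition~\ref{prop:al-jabr}): a polymorphism violating $\Eq$ is canonized and massaged into a binary injection of behaviour $\mini$ that is $N$-dominated, yielding an additional tractable family via Theorem~\ref{thm:maximal}. Your plan as stated has no branch for this case and would miss both the extra tractable class and the argument needed to produce the canonical binary injection. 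The remaining ingredients you list (Post's theorem driving the $n=2$ and $s=2$ minority cases, reduction to Boolean linear systems, the $C_\omega^\omega$ and equality classifications as black boxes) do match the paper's proof.
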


Together with the classification of countable homogeneous graphs, and the fact that the complexity of the CSPs of the reducts of the random graph have been classified~\cite{BodPin-Schaefer}, this completes the CSP classification of reducts of all countably infinite homogeneous graphs, confirming further instances of the open conjecture that CSPs of reducts of finitely bounded homogeneous structures are either in P or NP-complete~\cite{BPP-projective-homomorphisms,wonderland,TwoDichotomyConjectures}.

\begin{cor}\label{cor:homo}
Let $\Gamma$ be a finite signature reduct of a countably infinite homogeneous graph. Then $\CSP(\Gamma)$ is either in $\PP$ or $\NP$-complete.
\end{cor}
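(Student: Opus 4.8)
The plan is to reduce the statement to the three dichotomies established above and in~\cite{BodPin-Schaefer} by means of the Lachlan--Woodrow classification~\cite{LachlanWoodrow}. First I would recall that, up to isomorphism, every countably infinite homogeneous graph is one of the following: the random graph; a Henson graph $(H_n,E)$ for some $n\geq 3$; a graph $(C_n^s,E)$ whose reflexive closure is an equivalence relation with $n$ classes of size $s$, where $1\leq n,s\leq\omega$ and at least one of $n,s$ equals $\omega$; or the complement of one of these. Thus it suffices to treat each of the three base families and to argue that the complements require no extra work.

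The key point for the complements is that a graph $(V,E)$ and its complement $(V,\overline E)$ are first-order interdefinable, since $\overline E(x,y)\Leftrightarrow x\neq y\wedge\neg E(x,y)$ and, symmetrically, $E(x,y)\Leftrightarrow x\neq y\wedge\neg\overline E(x,y)$. Hence the two structures define exactly the same relations and therefore have precisely the same reducts. As $\CSP(\Gamma)$ depends only on $\Gamma$ itself, and not on which of two interdefinable structures we regard $\Gamma$ as a reduct of, the classification for reducts of a graph transfers verbatim to reducts of its complement. This disposes of every complement case at once.

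It then remains to treat a reduct $\Gamma$ of each base family. If the base structure is the random graph, then $\CSP(\Gamma)$ is in $\PP$ or $\NP$-complete by~\cite{BodPin-Schaefer}. If it is a Henson graph $(H_n,E)$ with $n\geq 3$, the dichotomy is Theorem~\ref{thm:main}. If it is a graph $(C_n^s,E)$ whose reflexive closure is an equivalence relation with the stated constraints on $n$ and $s$, the dichotomy is Theorem~\ref{thm:equiv}; the degenerate cases $n=1$ and $s=1$, which reduce to pure equality, are in particular subsumed and recover~\cite{ecsps}. Since the argument only assembles the classification together with the three component results, I do not anticipate a genuine obstacle; the one point deserving care is the verification that first-order interdefinability of a graph with its complement really does identify their reducts, and hence their CSPs, so that the four-way case split genuinely exhausts Lachlan--Woodrow's list.
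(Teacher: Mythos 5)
Your proposal is correct and is exactly the argument the paper intends: the corollary is treated as immediate from the Lachlan--Woodrow list combined with the random-graph classification of~\cite{BodPin-Schaefer}, Theorem~\ref{thm:main}, and Theorem~\ref{thm:equiv}, with complements handled by first-order interdefinability. You merely spell out the interdefinability step and the degenerate cases $n=1$, $s=1$ (which Theorem~\ref{thm:equiv} and~\cite{ecsps} already cover) more explicitly than the paper does.
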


We are going to provide more detailed versions of Theorems~\ref{thm:main} and~\ref{thm:equiv}, which describe in particular the  delineation between the tractable and the NP-complete cases algebraically, in Sections~\ref{sect:summary_Henson} and~\ref{sect:summary_equivalence}. We would like to emphasize that our proof does not assume or use the dichotomy for CSPs of finite structures, as opposed to some other dichotomy results for CSPs of infinite  structures such as~\cite{dCSPs2}.


\subsection{The strategy} The method we employ follows broadly the method invented in~\cite{BodPin-Schaefer} 
for the corresponding classification problem where the `base structure' is the random graph. The key component of this method
is the usage of Ramsey theory (in our case, a result of Ne\v{s}et\v{r}il and R\"odl~\cite{NesetrilRoedlPartite}) and the concept of \emph{canonical functions} introduced in~\cite{RandomMinOps}. There are, however, some interesting differences and novelties that appear in the present proof, as we now shortly outline.

\subsubsection{Henson graphs}  
When studying the proofs in~\cite{BodPin-Schaefer}, one might get the impression that the complexity of the method grows with the model-theoretic complexity of the base structure, and that for the random
graph we have really reached the limits of bearableness for applying 
the Ramsey method. 

However, quite surprisingly, when we step from
the random graph to the graphs $(H_n,E)$, which are in a sense more complicated structures from a model-theoretic point of view\footnote{For example, the random graph has a \emph{simple} theory~\cite{Tent-Ziegler}, whereas the Henson graphs are among the most basic examples of structures whose theory is \emph{not} simple.},
the classification and its proof become easier again. 
It is one of the contributions of the present article to
explain the reasons behind this effect. Essentially, certain \emph{behaviours} of canonical functions (cf.~Section~\ref{sect:prelims}) existing on the random graph cannot be realised in $(H_n, E)$. For example the behaviour `$\maxi$' (cf.~Section~\ref{sect:prelims}) plays no role for the present classification, but accounts over the random graph for the tractability of, inter alia, the 4-ary relation
defined by the formula $E(x,y) \vee E(u,v)$.

Remarkably, we are able to reuse results about canonical functions over the random graph, since the calculus for composing behaviours of canonical functions is the same for any other structure with a smaller type space, and in particular the Henson graphs. Via this meta-argument we can, on numerous occasions, make statements about canonical functions over the Henson graphs which were proven earlier for the random graph, ignoring completely the actual underlying structure; even more comfortably, we can \emph{a posteriori} rule out some possibilities in those statements because of the $K_n$-freeness of the Henson graphs.
Instances of this phenomenon appear in the analysis of canonical functions in Section~\ref{prop:getbinary}.


On the other hand, along with these simplifications,  
there are also new additional difficulties that appear 
when investigating reducts of $(H_n,E)$ and 
that were not present in the classification
of reducts of the random graph, which basically stem from the lower degree of symmetry of $(H_n,E)$ compared to the random graph. For example, in expansions of Henson graphs by finitely many constants,
not all orbits induce copies of Henson graphs; the fact that the analogous statement does hold for the random graph was used extensively in~\cite{BodPin-Schaefer}, for example in the rather technical proof of Proposition~7.18 of that paper.

\subsubsection{Equivalence relations} Similarly to the situation for the equivalence relation with infinitely many infinite classes studied in~\cite{equiv-csps}, there are two interesting sources of NP-hardness for the reducts $\Gamma$ of other homogeneous equivalence relations: namely, if the equivalence relation is invariant under the polymorphisms of $\Gamma$, then the structure obtained from $\Gamma$ by factoring by the equivalence relation might have an NP-hard CSP, implying NP-hardness for the CSP of $\Gamma$ itself; or, roughly, for a fixed equivalence class the restriction of $\Gamma$ to that class might have an NP-hard CSP, again implying NP-hardness of the CSP of $\Gamma$ (assuming that $\Gamma$ is a \emph{model-complete core}, see Sections~\ref{sect:polymorphisms} and~\ref{sect:polymorphisms_equivalence}). But whereas for the equivalence relation with infinitely many infinite classes both the factor structure and the restriction to a class are again infinite structures, for the other homogeneous equivalence relations one of the two is a finite structure. This obliges us to combine results about CSPs of finite structures with those of infinite structures. As it turns out, the two-element case is, not surprisingly, different from the other finite cases and, quite surprisingly, significantly more involved than the other cases. One particularity of this case is that tractability is, for some reducts, implied by a ternary \emph{non-injective} canonical function which we obtain by our Ramsey-analysis. Among all the classification results for $\omega$-categorical structures obtained so far,
this ternary function is the first example of a non-injective canonical function leading to a maximal tractable class. The occurrence of this phenomenon is of technical interest in the quest for a proof of the CSP dichotomy conjecture for reducts of finitely bounded homogeneous structures  via a reduction  to the finite CSP dichotomy.

\subsection{Overview} We organize the remainder of this article as follows. Basic notions and definitions, as well as the fundamental facts of the method we are going to use, are provided in Section~\ref{sect:prelims}. 

Sections~\ref{sect:polymorphisms} to~\ref{sect:summary_Henson} deal with the Henson graphs: Section~\ref{sect:polymorphisms} is complexity-free and investigates the structure of reducts of Henson graphs via polymorphisms and Ramsey theory. In Section~\ref{sect:CSP}, we provide hardness and tractability proofs for different classes of reducts. Section~\ref{sect:summary_Henson} contains the proof of Theorem~\ref{thm:main}, and  
 we discuss the complexity classification in more detail, formulating in particular a tractability criterion for CSPs of reducts of Henson graphs.

We then turn to homogeneous equivalence relations in Sections~\ref{sect:polymorphisms_equivalence} to~\ref{sect:summary_equivalence}. Similarly to the Henson graphs, the first section (Section~\ref{sect:polymorphisms_equivalence})  is complexity-free and investigates the structure of reducts of homogeneous equivalence relations via polymorphisms and Ramsey theory. Section~\ref{sect:CSP_equivalence} contains the algorithms proving tractability where it applies. Finally, Section~\ref{sect:summary_equivalence} provides the proof of Theorem~\ref{thm:equiv}, and describes in detail the delineation between the tractable and the NP-complete cases.

We finish this work with further research directions in Section~\ref{sect:final}.

\section{Preliminaries}\label{sect:prelims}

\subsection{General notational conventions} We use one single symbol, namely $E$, for the edge relation of all homogeneous graphs; since we never consider several such graphs at the same time, this should not cause confusion. Moreover, we use $E$ for the symbol representing the relation $E$, for example in logical formulas. In general, we shall not distinguish between relation symbols and the relations which they denote. The binary relation $N(x,y)$ is defined by the formula $\neg E(x,y)\wedge x\neq y$. 

When $E$ is the edge relation of a homogeneous graph whose reflexive closure is an equivalence relation, then we denote this equivalence relation by $\Eq$; so $\Eq(x,y)$ is defined by the formula $E(x,y) \vee x=y$.

When $t$ is an $n$-tuple, we refer to its entries by $t_1,\ldots,t_n$. When $f \colon A \rightarrow B$ is a function and $C \subseteq A$, we write $f[C]:=\{f(a) \; | \; a \in C\}$.

\subsection{Henson graphs} For $n \geq 2$, denote the clique on $n$ vertices by $K_n$. For $n\geq 3$, the graph $(H_n, E)$ is the up to isomorphism unique countable graph which is
\begin{itemize}
\item \emph{homogeneous}: any isomorphism between two finite induced subgraphs of $(H_n, E)$ can be extended to an automorphism of $(H_n, E)$, and 
\item \emph{universal for the class of $K_n$-free graphs}: 
$(H_n, E)$ contains all finite (in fact, all countable) $K_n$-free graphs as induced subgraphs.
\end{itemize}
The graph $(H_n, E)$ has the \emph{extension property}: for all disjoint finite $U, U'\subseteq H_n$ such that $U$ is not inducing any isomorphic copy of $K_{n-1}$ in $(H_n,E)$ there exists $v\in H_n$ such that $v$ is adjacent in $(H_n, E)$ to all members of $U$ and to none in $U'$. Up to isomorphism, there exists a unique countably infinite $K_n$-free graph with this extension property, and hence the property can be used as an alternative definition of $(H_n, E)$.

\subsection{Homogeneous equivalence relations} For $1\leq n,s \leq \omega$ the graph $(C_n^s,E)$ is the up to isomorphism unique countable graph whose reflexive closure is an equivalence relation $\Eq$ with $n$ classes $C_i$, where ${0\leq i<n}$, all of which have size $s$. Clearly, $(C_n^s,E)$ is homogeneous and universal in a similar sense as above.


\subsection{Constraint satisfaction problems} For a relational signature $\tau$, a first-order $\tau$-formula is called \emph{primitive positive} (or \emph{pp} for short) if
it is of the form $$\exists x_1,\dots,x_n \, (\psi_1 \wedge \dots \wedge \psi_m)$$ where the $\psi_i$ are \emph{atomic}, i.e., of the form $y_1=y_2$ or $R(y_1,\dots,y_k)$ for a $k$-ary relation symbol 
$R \in \tau$ and not necessarily distinct variables $y_i$. 

Let $\Gamma$ be a structure with a finite relational signature $\tau$.
The \emph{constraint satisfaction problem for $\Gamma$}, denoted by $\Csp(\Gamma)$, is the computational problem of deciding for a given primitive positive (pp-) $\tau$-sentence
$\phi$ whether $\phi$ is true in $\Gamma$.
The following lemma has been first stated in~\cite{JeavonsClosure} for finite domain structures $\Gamma$ only, but the proof there also works for arbitrary infinite structures. 

\begin{lemma}\label{lem:pp-reduce}
Let $\Gamma = (D, R_1,\dots,R_\ell)$ be a relational structure,
and let $R$ be a
relation that has a primitive positive definition in $\Gamma$, i.e., a definition via a pp formula.
Then $\Csp(\Gamma)$ and
$\Csp(D, R, R_1, \dots, R_\ell)$ are polynomial-time equivalent.
\end{lemma}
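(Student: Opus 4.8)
The plan is to prove the two directions of the polynomial-time equivalence separately. The easy direction is reducing $\Csp(\Gamma)$ to $\Csp(D, R, R_1, \dots, R_\ell)$: since $\Gamma$ is a reduct of $(D, R, R_1, \dots, R_\ell)$ obtained by dropping the relation $R$, every pp $\tau$-sentence $\phi$ is already a pp sentence in the larger signature, and it holds in $\Gamma$ if and only if it holds in $(D, R, R_1, \dots, R_\ell)$. So the identity map on instances is a (trivial, hence polynomial-time) reduction.

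For the nontrivial direction, the aim is to reduce $\Csp(D, R, R_1, \dots, R_\ell)$ to $\Csp(\Gamma)$. First I would fix a pp definition of $R$ in $\Gamma$, say
\[
R(y_1, \dots, y_k) \;\Longleftrightarrow\; \exists z_1, \dots, z_p \, \big( \chi_1 \wedge \dots \wedge \chi_q \big),
\]
where each $\chi_j$ is an atomic $\tau$-formula (an equality or a $\tau$-relation applied to variables among the $y_i$ and $z_j$). This definition is a fixed finite object, independent of the input. Given an input sentence $\psi$ for $\Csp(D, R, R_1, \dots, R_\ell)$, which is an existentially quantified conjunction of atomic formulas in the signature $\{R, R_1, \dots, R_\ell\}$, the plan is to replace each conjunct of the form $R(y_1, \dots, y_k)$ by a fresh copy of the matrix $\chi_1 \wedge \dots \wedge \chi_q$, using new existentially quantified variables $z_1, \dots, z_p$ for each occurrence, and substituting the argument variables $y_1, \dots, y_k$ of that conjunct for the corresponding free variables of the defining formula. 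Conjuncts that already use only $R_1, \dots, R_\ell$ (or equality) are left untouched. Prefixing all the newly introduced $z$-variables to the existential quantifier block produces a pp $\tau$-sentence $\psi'$ in the signature of $\Gamma$.

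The correctness claim is that $\psi$ holds in $(D, R, R_1, \dots, R_\ell)$ if and only if $\psi'$ holds in $\Gamma$, which I would verify by unwinding the semantics of the existential quantifiers: a satisfying assignment for $\psi$ yields one for $\psi'$ by choosing, for each replaced atom, witnesses $z_1, \dots, z_p$ from the pp definition of $R$, and conversely any satisfying assignment for $\psi'$ restricts to one for $\psi$ because the inserted matrices witness exactly the atoms $R(y_1, \dots, y_k)$. Finally, the reduction runs in polynomial time, since each of the (at most linearly many) occurrences of $R$ is expanded into a block of fixed size $q$ introducing a fixed number $p$ of variables, so $\psi'$ has size linear in that of $\psi$ and is computable by a single pass.

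The main obstacle is not conceptual but bookkeeping: one must take care to use genuinely fresh variables $z_1, \dots, z_p$ for \emph{each} occurrence of an $R$-atom, since reusing the same witness variables across distinct occurrences would impose spurious identifications and change the meaning of $\psi'$. I would make this precise by indexing the fresh variables by the occurrence. With the variable renaming handled correctly, the equivalence $\psi \leftrightarrow \psi'$ is straightforward, and the fixedness of the definition of $R$ is what keeps the blow-up bounded; the argument is uniform in the domain $D$ and in particular does not use finiteness, which is exactly why the proof of~\cite{JeavonsClosure} carries over to the infinite setting.
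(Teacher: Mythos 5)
Your proof is correct and is exactly the standard argument that the paper relies on: the paper gives no proof of its own but cites~\cite{JeavonsClosure}, whose reduction is precisely this substitution of each $R$-atom by a fresh copy of its pp definition (and your care with occurrence-indexed fresh witness variables, plus the observation that nothing uses finiteness of $D$, matches the paper's remark that the finite-domain proof carries over verbatim). Note also that since the paper's definition of pp-formulas already admits equality atoms, the expanded sentence $\psi'$ is a legitimate instance of $\Csp(\Gamma)$ even when the definition of $R$ uses equalities, so no further step is needed.
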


When a relation $R$ has a primitive positive definition in a structure $\Gamma$, then we also say that $\Gamma$ \emph{pp-defines} $R$. Lemma~\ref{lem:pp-reduce} enables the so-called \emph{universal-algebraic approach} to constraint satisfaction, as exposed in the following.

\subsection{The universal-algebraic approach}\label{subsect:ua}
We say that a $k$-ary function 
(also called \emph{operation})
$f \colon D^k \rightarrow D$ \emph{preserves} an $m$-ary relation
$R \subseteq D^m$ if for all $t_1,\dots,t_k \in R$ the tuple
$f(t_1,\dots,t_k)$, calculated componentwise, is also contained in $R$. 
If an operation $f$ does not
preserve a relation $R$, we say that $f$ \emph{violates} $R$. We say that a set of operations preserves a relation when all of its elements do.

If $f$ preserves all relations of a structure $\Gamma$, we say that $f$ is a \emph{polymorphism} of $\Gamma$,  and that $f$ \emph{preserves} $\Gamma$. We write $\Pol(\Gamma)$ for the set of all polymorphisms of $\Gamma$. 
The unary polymorphisms of $\Gamma$
are just the \emph{endomorphisms} of $\Gamma$, and denoted by $\End(\Gamma)$.

The set of all polymorphisms $\mathrm{Pol}(\Gamma)$ of a relational structure $\Gamma$ forms an algebraic object called a \emph{function clone} (see~\cite{Szendrei}, \cite{GoldsternPinsker}), which is
a set of finitary operations defined on a fixed domain that is closed
under composition and that contains all projections.
Moreover, $\mathrm{Pol}(\Gamma)$ is closed in the \emph{topology of pointwise convergence}, i.e., an $n$-ary function $f$ is contained in $\Pol(\Gamma)$ if and only if for all finite subsets $A$ of $\Gamma^n$ there exists an $n$-ary $g\in\Pol(\Gamma)$ which agrees with $f$ on $A$. We will write $\overline{F}$ for the closure of a set $F$ of functions on a fixed domain in this topology; so $\overline{\Pol(\Gamma)}=\Pol(\Gamma)$. This closure is sometimes referred to as \emph{local closure}, and closed sets as locally closed, but we will use the terminology \emph{topologically closed} throughout this work. For an arbitrary  set $F$ of functions on a fixed domain, when $\Gamma$ is the structure whose relations are precisely those which are preserved by all functions in $F$, then $\Pol(\Gamma)$ is the smallest topologically closed function clone containing $F$ (cf.~\cite{Szendrei}).

When $\Gamma$ is a countable and $\omega$-categorical structure, then we can characterize primitive positive definable relations via $\Pol(\Gamma)$, as follows. 

\begin{theorem}[from~\cite{BodirskyNesetrilJLC}]
\label{conf:thm:inv-pol}
Let $\Gamma$ be a countable $\omega$-categorical structure.
Then the relations preserved by $\Pol(\Gamma)$ 
are precisely those having a primitive positive definition in $\Gamma$.
\end{theorem}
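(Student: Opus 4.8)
The plan is to establish both directions of this Galois-type correspondence, the harder being that preservation by $\Pol(\Gamma)$ forces primitive positive definability. For the easy direction, suppose $R$ has a $\pp$-definition $\phi(x_1,\dots,x_m)$ in $\Gamma$ and let $f\in\Pol(\Gamma)$ be $k$-ary. Since $f$ preserves every relation of $\Gamma$ and preservation is inherited under conjunction and, crucially, under existential quantification (here one uses that $f$ is a genuine function, so that witnesses for the existential variables can be chosen coordinatewise from witnesses for the input tuples), a straightforward induction on the structure of $\phi$ shows that $f$ preserves $R$. I would record this as a self-contained induction, noting that equality atoms are trivially preserved.

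For the hard direction, assume $R$ is $m$-ary and preserved by every polymorphism of $\Gamma$; I would argue that $R$ is $\pp$-definable in three steps. First, since $\Aut(\Gamma)\subseteq\Pol(\Gamma)$, the relation $R$ is invariant under $\Aut(\Gamma)$; by $\omega$-categoricity (Ryll--Nardzewski) there are only finitely many orbits of $m$-tuples, every $\Aut(\Gamma)$-invariant $m$-ary relation is a union of such orbits, and hence $R$ is first-order definable. Second, I would pass to the primitive positive closure: let $\Phi$ be the set of all $\pp$-formulas $\phi(x_1,\dots,x_m)$ with $R\subseteq\phi^{\Gamma}$, and set $R^{*}:=\bigcap_{\phi\in\Phi}\phi^{\Gamma}$, so that $R\subseteq R^{*}$. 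Each $\phi^{\Gamma}$ is $\Aut(\Gamma)$-invariant, hence a union of orbits; since there are only finitely many orbits of $m$-tuples, the complement of $R^{*}$ is a finite union of orbits, each excluded by a single member of $\Phi$, so finitely many formulas of $\Phi$ already define $R^{*}$ and their conjunction is a single $\pp$-formula $\psi$ with $\psi^{\Gamma}=R^{*}$. It therefore suffices to prove $R^{*}=R$.

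The crux is the remaining inclusion $R^{*}\subseteq R$, and this is where preservation by all polymorphisms --- not merely automorphisms --- enters. Given $t\in R^{*}$, I would construct a polymorphism of $\Gamma$ together with tuples of $R$ whose image is $t$; preservation of $R$ then yields $t\in R$. Concretely, indexing the tuples of $R$ and forming the coordinate ``columns'', one seeks an operation sending these columns to $t$ componentwise while preserving $\Gamma$. The membership $t\in R^{*}$ is exactly the statement that, for every finite collection of constraints, the corresponding $\pp$-formula --- the canonical conjunctive query whose homomorphisms into $\Gamma$ witness a partial operation of the desired behaviour --- is satisfied by $t$; this produces, on each finite subset of the domain, a partial polymorphism of $\Gamma$ with the required action. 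Invoking the local closedness of $\Pol(\Gamma)$ recorded in Section~\ref{subsect:ua} --- a function locally approximable by polymorphisms is itself a polymorphism --- I would assemble these finite pieces into a genuine polymorphism $f$, so that $t\in R$. I expect this last step to be the main obstacle: one must set up the correspondence between $\pp$-formulas, homomorphisms of canonical databases into $\Gamma$, and partial polymorphisms precisely, handle the (possibly infinite) arity through a compactness and local-closure limit, and verify that $t\in R^{*}$ supplies exactly the finite approximations needed.
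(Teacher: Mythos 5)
The paper does not prove this statement: it is imported verbatim from the cited reference \cite{BodirskyNesetrilJLC} and used as a black box, so there is no in-paper proof to compare against. Your proposal reconstructs the standard argument from that reference in essentially the correct way: the easy direction by induction on the pp-formula (with the coordinatewise choice of existential witnesses being the only point of substance), and the hard direction via Ryll--Nardzewski (finitely many orbits of $m$-tuples, hence the intersection $R^{*}$ of all pp-definable supersets of $R$ is itself defined by a single pp-formula) followed by the construction of a polymorphism mapping tuples of $R$ onto a given $t\in R^{*}$.

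One step in your crux needs repair. You propose ``indexing the tuples of $R$ and forming the coordinate columns''; since $R$ is in general infinite, this would produce an operation of infinite arity, and polymorphisms are finitary --- local closure does not rescue an infinitary operation. The correct move, which you have already set up in your first step but do not deploy, is to fix representatives $t^{1},\dots,t^{k}$ of the finitely many orbits of $m$-tuples contained in $R$ and work with $k$-ary operations: the $k$-ary polymorphisms of $\Gamma$ are exactly the homomorphisms $\Gamma^{k}\to\Gamma$, every tuple of $R$ is of the form $h(t^{1},\dots,t^{k})$ for such an $h$ (take a projection composed with an automorphism), so the canonical queries of the finite induced substructures of $\Gamma^{k}$ containing the columns $a_{j}=(t^{1}_{j},\dots,t^{k}_{j})$ all belong to your family $\Phi$, and $t\in R^{*}$ then yields partial homomorphisms on every finite substructure sending $a_{j}\mapsto t_{j}$. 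Assembling these into a total $k$-ary polymorphism requires not just local closedness but a compactness (K\"onig's lemma) argument, which is where $\omega$-categoricity is used a second time: there are only finitely many choices up to automorphism at each finite stage. With that substitution your outline is the standard proof and goes through.
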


Theorem~\ref{conf:thm:inv-pol} and Lemma~\ref{lem:pp-reduce} imply that if two countable $\omega$-categorical structures $\Gamma, \Delta$ with finite relational signatures have the same clone of polymorphisms, then their CSPs are polynomial-time
equivalent. Moreover, if $\mathrm{Pol}(\Gamma)$ is contained in $\mathrm{Pol}(\Delta)$, then $\CSP(\Gamma)$ is, up to polynomial time, at least as hard as $\CSP(\Delta)$.


Note that the \emph{automorphisms} of a structure $\Gamma$
are just the bijective unary polymorphisms of $\Gamma$ whose inverse function is also a polymorphism of $\Gamma$; the set of all automorphisms of $\Gamma$
is denoted by $\Aut(\Gamma)$. For every reduct $\Gamma$ of a structure $\Delta$ we have that $\Pol(\Gamma)\supseteq \Aut(\Gamma)\supseteq \Aut(\Delta)$. In particular, this is the case for reducts of the homogeneous graphs $(H_n,E)$ and $(C_n^s,E)$. Conversely, it follows from the $\omega$-categoricity of homogeneous graphs $(D,E)$ (in our case, $D=H_n$ or $D=C_n^s$)  that every topologically closed function clone containing $\Aut(D,E)$ is the polymorphism clone of a reduct of $(D,E)$. 

When $(D,E)$ is a homogeneous graph, and $F$ is a set of functions and $g$ is a function on the domain $D$, then we say that $F$ \emph{generates} $g$ if $g$ is contained in the smallest topologically closed function clone which contains $F\cup \Aut(D, E)$. This is the same as saying that for every finite $S\subseteq D$, there exists a term function over $F\cup \Aut(D, E)$ which agrees with $g$ on $S$. By the discussion preceding Theorem~\ref{conf:thm:inv-pol}, this is equivalent to $g$ preserving all relations which are preserved by $F\cup \Aut(D, E)$.

We finish this section with a general lemma that we will refer to on numerous occasions; it allows us to restrict the arity of functions violating a relation. For a structure $\Gamma$ and a tuple $t \in \Gamma^k$, the \emph{orbit of $t$} in $\Gamma$ is the set
$\{ \alpha(t) \; | \; \alpha \in \Aut(\Gamma) \}$. We also call this the orbit of $t$ with respect to $\Aut(\Gamma)$.

\begin{lemma}[from~\cite{tcsps-journal}]\label{lem:arity-reduction}
    Let $\Gamma$ be a relational structure. Suppose that $R\subseteq \Gamma^k$ intersects at most $m$ orbits of $k$-tuples in $\Gamma$. If  $\Pol(\Gamma)$ contains a function violating $R$, then $\Pol(\Gamma)$ also contains an $m$-ary operation violating $R$.
\end{lemma}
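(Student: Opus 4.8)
The plan is to take a function $f\in\Pol(\Gamma)$ witnessing the violation of $R$ and to \emph{fold} it down to arity $m$ by collapsing those of its arguments that receive input tuples lying in the same orbit. First I would unpack the hypothesis: if the $n$-ary operation $f$ violates $R$, then there exist tuples $t_1,\dots,t_n\in R$ with $f(t_1,\dots,t_n)\notin R$, where $f$ is applied componentwise. Since $R$ meets at most $m$ orbits of $k$-tuples, there are orbits $O_1,\dots,O_p$ with $p\leq m$ such that $R\subseteq O_1\cup\dots\cup O_p$ and each $O_j$ meets $R$; for each $j$ I fix a representative $s_j\in R\cap O_j$. Each $t_i$ then lies in some $O_{j(i)}$, and since $t_i$ and $s_{j(i)}$ share an orbit there is an automorphism $\beta_i\in\Aut(\Gamma)$ with $\beta_i(s_{j(i)})=t_i$.

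Next I would define the $p$-ary operation $g(y_1,\dots,y_p):=f\bigl(\beta_1(y_{j(1)}),\dots,\beta_n(y_{j(n)})\bigr)$. Because $g$ is obtained from the polymorphism $f$, the automorphisms $\beta_i\in\Aut(\Gamma)\subseteq\Pol(\Gamma)$, and coordinate projections by composition, and $\Pol(\Gamma)$ is a function clone, the operation $g$ is again a polymorphism of $\Gamma$. Evaluating $g$ componentwise on the representatives gives $g(s_1,\dots,s_p)=f\bigl(\beta_1(s_{j(1)}),\dots,\beta_n(s_{j(n)})\bigr)=f(t_1,\dots,t_n)\notin R$, whereas $s_1,\dots,s_p\in R$; hence $g$ violates $R$. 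Finally, since $p\leq m$, I would pad $g$ with dummy arguments (again a clone composition with projections) to obtain an $m$-ary polymorphism of $\Gamma$ violating $R$, as required.

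I expect the only points needing care to be bookkeeping ones: verifying that each representative $s_j$ can be chosen \emph{inside} $R$, which is exactly what the intersection hypothesis ``$R$ meets $O_j$'' provides, and checking that the collapsing map $i\mapsto j(i)$ together with the automorphisms $\beta_i$ faithfully reproduces every original argument $t_i$ of $f$. There is no genuine obstacle here; the conceptual content is simply that the value of a polymorphism on tuples from $R$ depends, up to the action of $\Aut(\Gamma)$, only on the orbits of those tuples, so at most $m$ essential arguments suffice.
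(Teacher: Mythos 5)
Your proof is correct and is essentially the argument from the cited source (the paper itself imports this lemma from~\cite{tcsps-journal} without reproving it): one fixes orbit representatives in $R$, realises the witnessing tuples from them via automorphisms, and folds $f$ down to one argument per orbit using that $\Pol(\Gamma)$ is a clone containing $\Aut(\Gamma)$. The bookkeeping points you flag (choosing $s_j\in R\cap O_j$ and recovering each $t_i$ as $\beta_i(s_{j(i)})$) are handled correctly, so nothing is missing.
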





\subsection{Canonical functions}
It will turn out that the polymorphisms relevant for the CSP classification show regular behaviour with respect to the underlying homogeneous graph, in a sense that we are now going to define.

\begin{definition}
    Let $\Delta$ be a structure. The \emph{type} $\tp(a)$ of an $n$-tuple $a=(a_1,\ldots,a_n)$ of elements in $\Delta$ is the set of first-order formulas with
     free variables $x_1,\dots,x_n$ that hold for $a$ in $\Delta$. For structures $\Delta_1,\ldots,\Delta_k$ and $k$-tuples $a^1,\ldots,a^n\in\Delta_1\times\cdots\times\Delta_k$, the type of $(a^1,\ldots,a^n)$ in $\Delta_1\times\cdots\times\Delta_k$, denoted by $\tp(a^1,\ldots,a^n)$, is the $k$-tuple containing the types of $(a^1_i,\ldots,a^n_i)$ in $\Delta_i$ for each $1\leq i\leq k$.
\end{definition}

We bring to the reader's attention the well-known fact that in $\omega$-categorical structures, in particular in $(H_n, E)$ and $(C_n^k,E)$, two $n$-tuples have the same type if and only if their orbits coincide.

\begin{definition}\label{defn:arbitrarilyLarge}
Let $\Delta_1,\ldots,\Delta_k$ and $\Lambda$ be structures. A \emph{behaviour} $B$ between $\Delta_1,\ldots,\Delta_k$ and $\Lambda$ is a partial function from the types over $\Delta_1,\ldots,\Delta_k$ to the types over $\Lambda$. Pairs $(s,t)$ with $B(s)=t$ are also called \emph{type conditions}. We say that a function $f \colon\Delta_1\times\cdots\times\Delta_k\To\Lambda$ \emph{satisfies the behaviour $B$} if whenever $B(s)=t$ and $(a^1,\ldots,a^n)$ has type $s$ in $\Delta_1\times \cdots \times \Delta_k$, then the $n$-tuple $(f(a^1_1,\ldots,a^1_k),\ldots,f(a^n_1,\ldots,a^n_k))$ has type $t$ in $\Lambda$. A function $f \colon\Delta_1\times\cdots\times\Delta_k\To\Lambda$ is \emph{canonical} if it satisfies a behaviour which is a total function from the types over $\Delta_1\times\cdots\times \Delta_k$ to the types over $\Lambda$.
\end{definition}

We remark that since our structures are homogeneous and have only binary relations, the type of an $n$-tuple $a$ is determined by its binary subtypes, i.e., the types of the pairs $(a_i, a_j)$, where $1\leq i,j\leq n$. In other words, the type of $a$ is determined by which of its components  are equal, and between which of its components there is an edge.
Therefore, a function $f \colon (H_n,E)^k \To (H_n,E)$ or $f \colon (C_n^s,E)^k \To (C_n^s,E)$ is canonical iff it satisfies the condition of the definition for types of 2-tuples. 

To provide immediate examples for these notions, we now define some behaviours that will appear in our proof as well as in the precise CSP classification. For $m$-ary relations
$R_1,\dots,R_k$ over a set $D$,
we will in the following write $R_1\cdots R_k$ for the $m$-ary relation on $D^k$ defined as follows: $R_1\cdots R_k(x^1,\dots,x^m)$ holds for $k$-tuples $x^1,\dots,x^m \in D^k$ if and only if $R_i(x^1_i,\dots,x^m_i)$ holds for all $1\leq i \leq k$. For example, when $p,q\in D^3$ are triples of elements in a homogeneous graph $(D,E)$, then $\ENeq(p,q)$ holds if and only if $E(p_1,q_1)$, $N(p_2,q_2)$, and $p_3=q_3$  hold in $(D,E)$.
We start with behaviours of binary injective functions $f$ on homogeneous graphs.

\begin{definition}\label{defn:behaviours_binary}
    Let $(D,E)$ be a homogeneous graph. We say that a binary injective operation $f \colon D^2\To D$ is
    \begin{itemize}
        \item \emph{balanced in the first argument} if for all $u,v\in D^2$ we have that $\EEQ(u,v)$ implies $E(f(u),f(v))$ and $\NEQ(u,v)$ implies $N(f(u),f(v))$;         
        \item \emph{balanced in the second argument} if $(x,y) \mapsto f(y,x)$ is balanced in the first argument;
        \item \emph{balanced} if $f$ is balanced in both arguments;
        \item \emph{$E$-dominated ($N$-dominated) in the first argument} if for all $u,v \in D^2$ with $\NEQEQ(u,v)$
        we have that $E(f(u),f(v))$ ($N(f(u),f(v))$);
        \item \emph{$E$-dominated ($N$-dominated) in the second argument} if
        $(x,y) \mapsto f(y,x)$ is $E$-dominated ($N$-dominated) in the first argument;
        \item \emph{$E$-dominated ($N$-dominated)} if it is $E$-dominated ($N$-dominated) in both arguments;
        \item \emph{of behaviour $\mini$} if for all $u,v\in D^2$ with $\NEQNEQ(u,v)$ we have
        $E(f(u),f(v))$ if and only if $\EE(u,v)$;
        \item \emph{of behaviour $\maxi$} if for all $u,v\in D^2$ with $\NEQNEQ(u,v)$ we have
        $N(f(u),f(v))$ if and only if $\NN(u,v)$;
        \item \emph{of behaviour $p_1$} if for all $u,v \in D^2$ with $\NEQNEQ(u,v)$ we have
        $E(f(u),f(v))$ if and only if $E(u_1,v_1)$;
        \item \emph{of behaviour $p_2$} if $(x,y) \mapsto f(y,x)$ is of behaviour $p_1$;
        \item \emph{of behaviour projection} if it is of behaviour $p_1$ or $p_2$;
        \item \emph{of behaviour xnor if for all $u,v\in D^2$ with $\NEQNEQ(u,v)$ we have
        $E(f(u),f(v))$ if and only if $\EE(u,v)$ or $\NN(u,v)$.}
    \end{itemize}
\end{definition}
Each of these properties describes the set of all functions of a certain behaviour. We explain this for the first item defining functions which are balanced in the first argument, which can be expressed  by the behaviour consisting of the following two type conditions. Let $(u,v)$ be any pair of elements $u,v\in D^2$ such that $\EEQ(u,v)$, and let $s$ be the type of the pair $(u,v)$ in $(D,E)\times (D,E)$. Let $x,y\in D$ satisfy $E(x,y)$, and let $t$ be the type of $(x,y)$ in $(D,E)$. Then the first type condition is $(s,t)$. Now 
   let $s'$ be the type in $(D,E)\times (D,E)$ of any pair $(u,v)$, where $u,v\in D^2$ satisfy $\NEQ(u,v)$,
 and let $t'$ be the type in $(D,E)$ of any $x,y \in D$ with $N(x,y)$.  The second type condition is $(s',t')$.

To justify the less obvious names of some of the above behaviours, we would like to point out that a binary injection of behaviour $\mini$ is reminiscent of the
Boolean minimum function on $\{0,1\}$, 
where $E$ takes the role of $1$
and $N$ the role of $0$: for $u,v\in H_n^2$ with $\NEQNEQ(u,v)$, we have
$E(f(u),f(v))$ if $u,v$ are connected by an edge in both coordinates, and $N(f(u),f(v))$ otherwise. The names `$\maxi$' and `projection' can be explained similarly.\smallskip

\begin{definition}~\label{defn:behaviours_ternary}
    Let $(D,E)$ be a homogeneous graph. We say that a ternary injective operation $f \colon D^3\To D$ is of behaviour
     \begin{itemize}
       \item \emph{majority} if for all $u,v\in D^3$ with ${\neq}{\neq}{\neq}(u,v)$ we have that $E(f(u),f(v))$ if and only if $\EEE(u,v)$, $\EEN(u,v)$, $\ENE(u,v)$, or $\NEE(u,v)$;

        \item \emph{minority} if for all $u,v\in D^3$ with ${\neq}{\neq}{\neq}(u,v)$ we have $E(f(u),f(v))$ if and only if $\EEE(u,v)$, $\NNE(u,v)$, $\NEN(u,v)$, or $\ENN(u,v)$.
     \end{itemize}
\end{definition}
In this article, contrary to $\mini$ and minority, neither $\maxi$ nor majority will play a role but we introduce them for the sake of completeness since they occur in \cite{BodPin-Schaefer}.

When we want to explain a type condition over a homogeneous graph $(D,E)$, we are going to express it in the form $f(R_1,\ldots,R_k)=S$ for binary relations $R_1,\ldots,R_k$ and a binary relation $S$; the meaning is that whenever $p,q\in D^k$, then $R_1\cdots R_k(p,q)$  implies $S(f(p),f(q))$. The relations we use in this notation range among $\{E,N,\Eq,\neq,=\}$. Examples of type conditions expressed this way include $f(E,N)=N$ (meaning that $\EN(p,q)$ implies $N(f(p),f(q))$, for all $p,q\in D^2$), and $f(E,=)=E$. In the latter, note that the second $=$ has different semantic content from the first.
Similarly, the majority behaviour in Definition~\ref{defn:behaviours_ternary} can be expressed by writing $f(E,E,E)=f(E,E,N)=f(E,N,E)=f(N,E,E)=E$ and $f(N,N,N)=f(E,N,N)=f(N,E,N)=f(N,N,E)=N$. As another example, note that $E$-dominated in the first argument can be expressed as $f(\neq,=)=E$, or equivalently, as the conjunction of $f(E,=)=E$ and $f(N,=)=E$. 
Our notation
is justified by the fact that the type conditions satisfied by a function induce a partial function from types to types, and that in the case of homogeneous graphs, all that matters is the three types of pairs, given by the relations $E$, $N$, and $=$; the relation $\neq$ is the union of $E$ and $N$, and used as a shortcut.

\begin{definition}
Let $(D,E)$ be a homogeneous graph. We say a ternary canonical injection $f \colon D^3\To D$ is \emph{hyperplanely of behaviour projection} if the functions $(u,v) \mapsto f(c,u,v)$, $(u,v) \mapsto f(u,c,v)$, and $(u,v) \mapsto f(u,v,c)$ are of behaviour projection for all $c\in D$. Similarly other hyperplane behaviours, such as hyperplanely $E$-dominated, are defined.
\end{definition}

Note that hyperplane behaviours are defined by conditions for the type functions $f(=,\cdot,\cdot)$, $f(\cdot,=,\cdot)$, and $f(\cdot,\cdot,=)$. For example, hyperplanely $E$-dominated precisely means that $$f(=,=,\neq)=f(=,\neq,=)=f(\neq,=,=)=E \, .$$


\subsection{Achieving canonicity in Ramsey structures}\label{sect:Ramsey}
The next proposition, which is an instance of more general statements from~\cite{BP-reductsRamsey, BPT-decidability-of-definability}, provides us with  the main combinatorial tool for analyzing functions on Henson graphs. Equip $H_n$ with a total order $\prec$ in such a way that $(H_n, E,\prec)$ is homogeneous; up to isomorphism, there is only one such structure $(H_n, E,\prec)$, called the \emph{random ordered $K_n$-free graph}. The order $(H_n, \prec)$ is then isomorphic to the order $(\mathbb{Q}, <)$ of the rationals. By~\cite{NesetrilRoedlPartite}, $(H_n, E,\prec)$ is a \emph{Ramsey structure}, which implies the following proposition --  for more details, see the survey~\cite{BP-reductsRamsey}. 

\begin{prop}\label{prop:canfct}
Let $f\colon H_n^k \rightarrow H_n$, let $c_1,\ldots,c_r\in H_n$, 
and let $(H_n,E,\prec,c_1,\ldots,c_r)$ be the expansion of $(H_n,E,\prec)$ by the constants $c_1,\ldots,c_r$. Then
$$
\overline{\{\alpha\circ f\circ (\beta_1, \ldots, \beta_r)\;|\; \alpha\in \Aut(H_n, E, \prec),\; \beta_1, \ldots, \beta_r\in \Aut(H_n, E, \prec, c_1, \ldots, c_r)\}}
$$ 
contains a function $g$ such that
\begin{itemize}
\item $g$ is canonical as a function from $(H_n, E, \prec, c_1, \ldots, c_r)$ to $(H_n, E, \prec)$;
\item $g$ agrees with $f$ on $\{c_1,\ldots, c_r\}^k$.
\end{itemize}
In particular, $f$ generates a function $g$ with these properties.
\end{prop}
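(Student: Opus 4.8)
\textbf{The plan} is to derive Proposition~\ref{prop:canfct} from the established Ramsey property of the ordered expansion $(H_n,E,\prec)$, together with the standard ``canonisation from Ramsey'' machinery referenced as being an instance of the general statements in~\cite{BP-reductsRamsey, BPT-decidability-of-definability}. First I would observe that since $(H_n,E,\prec)$ is a homogeneous structure in a finite relational language and is Ramsey by~\cite{NesetrilRoedlPartite}, so is its expansion $(H_n,E,\prec,c_1,\ldots,c_r)$ by finitely many constants: adding constants preserves both homogeneity and the Ramsey property, because the orbits of the automorphism group merely refine, and a finite colouring argument transfers verbatim. This reduces the problem to the generic statement: if $\Delta$ is an ordered homogeneous Ramsey structure and $\Lambda$ is $\omega$-categorical, then for every $f\colon\Delta^k\to\Lambda$ the topological closure of the functions obtained by pre- and post-composing $f$ with automorphisms contains a function canonical from $\Delta$ to $\Lambda$.

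The core of the argument is a compactness (diagonalisation) construction. Since $\Pol$-closure and the generated clone coincide with local closure by the discussion preceding Theorem~\ref{conf:thm:inv-pol}, it suffices to produce, for every finite $S\subseteq H_n$, a single map of the form $\alpha\circ f\circ(\beta_1,\ldots,\beta_r)$ agreeing on $S^k$ with \emph{some} canonical behaviour, and then to extract a limit point $g$ that is globally canonical. The canonicity on finite pieces is exactly where the Ramsey property enters: for each finite configuration of types appearing among $k$-tuples built from the constants and points of an enlarging finite set, the Ramsey theorem lets one find, after applying a suitable self-embedding $\beta_i\in\Aut(H_n,E,\prec,c_1,\ldots,c_r)$, a monochromatic subconfiguration on which the ``colour'' $\tp(f(\cdot))$ depends only on the input type. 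Because $H_n$ carries a homogeneous order making it Ramsey, these colours are well-defined functions of the $\prec$-ordered type, and the point of composing with $\alpha\in\Aut(H_n,E,\prec)$ afterwards is to normalise the output back into a copy of the whole structure. Iterating this over an exhaustion $S_1\subseteq S_2\subseteq\cdots$ of $H_n$ and taking a pointwise-convergent subsequence (possible by $\omega$-categoricity of the target, which makes the type space finite in each arity) yields the canonical $g$ in the closure.

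The second bullet, that $g$ agrees with $f$ on $\{c_1,\ldots,c_r\}^k$, is arranged by building the constants into the construction from the outset. The maps $\beta_i$ fix each $c_j$ by definition, since they lie in $\Aut(H_n,E,\prec,c_1,\ldots,c_r)$; hence $f\circ(\beta_1,\ldots,\beta_r)$ already agrees with $f$ on $\{c_1,\ldots,c_r\}^k$. One must then ensure that the post-composing automorphisms $\alpha$ and the limit passage do not disturb these finitely many values. This is handled by treating $\{c_1,\ldots,c_r\}^k$ as an initial part of the exhaustion $S_1$ and only ever modifying $f$ by elements that fix the relevant finitely many images; concretely, one works inside the closure under $\Aut(H_n,E,\prec)$, which acts transitively enough to realise the desired types while pinning down any prescribed finite set of output values. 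The final sentence, that $f$ \emph{generates} such a $g$, is then immediate: every element of the displayed closure lies in the smallest topologically closed clone containing $f$ and $\Aut(H_n,E,\prec)\supseteq\Aut(H_n,E)$, which is precisely the generation relation defined in the preliminaries.

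\textbf{The main obstacle} I anticipate is not any single computation but the bookkeeping of the Ramsey extraction itself: one needs the Ramsey property in the correct ``multi-sorted'' or product form to canonise a $k$-ary function, i.e.\ to colour $k$-element-indexed configurations rather than single points, and to verify that the stabilising constants $c_1,\ldots,c_r$ can be carried through without destroying the Ramsey structure or the agreement condition. I would lean on the fact that this is explicitly an instance of~\cite{BP-reductsRamsey, BPT-decidability-of-definability}, citing the general canonisation lemma there and checking only that its hypotheses — ordered homogeneous Ramsey domain, $\omega$-categorical (indeed finite-type) target, finitely many added constants — are met in the present setting, rather than reproving the diagonalisation from scratch.
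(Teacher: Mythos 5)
Your proposal is correct and follows exactly the route the paper itself takes: the paper gives no independent proof of Proposition~\ref{prop:canfct}, but derives it as an instance of the general canonisation results of~\cite{BP-reductsRamsey, BPT-decidability-of-definability}, using that $(H_n,E,\prec)$ is Ramsey by~\cite{NesetrilRoedlPartite} and that the Ramsey property and homogeneity survive the addition of finitely many constants. Your additional sketch of the internal Ramsey-plus-compactness extraction is consistent with the standard argument in those references (modulo the minor imprecision that one extracts a convergent subsequence in the compact quotient by the automorphism action rather than a literally pointwise-convergent one), so there is nothing to correct.
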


Similarly, Ramsey theory allows us to produce canonical functions on $(C_n^s,E)$, expanded with  a certain linear order. 
Equip $C_n^s$ with a total order $\prec$ so that the equivalence classes of $(C_n^s,\Eq)$ are \emph{convex} with respect to $\prec$, i.e., whenever $\Eq(u,v)$ holds and $u\prec w\prec v$, then $\Eq(u,w)$. 
Moreover, in the case where the size of the classes $s=\omega$, we require the order $\prec$ to be isomorphic to the order of the rational numbers on each equivalence class, and in case where the number of classes $n=\omega$, we require the order to be isomorphic to the order of the rational numbers between the classes (note that we already required convexity, so that $\prec$ naturally induces a linear order between the classes). 

If the number of classes $n$ is finite and their size $s=\omega$ infinite, let $P_1,\dots,P_n$ 
denote unary predicates such that $P_i$ contains precisely the elements in the $i$-th equivalence class of $\Eq$ with respect to the order on the classes induced by $\prec$. 
The structure 
$(C_n^\omega,E,\prec,P_1,\dots,P_n)$ is homogeneous and a Ramsey structure, since its automorphism group is,  as a topological group, isomorphic to $\Aut(\mathbb{Q};<)^n$, and since being a Ramsey structure is a property of the automorphism group (as a topological group)~\cite{Topo-Dynamics}. Thus, by~\cite{BP-reductsRamsey, BPT-decidability-of-definability}, we have the following analogous statement to Proposition~\ref{prop:canfct} for this structure. In the statement, we may drop the mention of the auxiliary relations  $P_1,\dots,P_n$, since these are first-order definable in $(C_n^s,E,\prec)$ and since the types over first-order interdefinable structures coincide; in other words, the relations were only needed temporarily in order to achieve homogeneity, required in~\cite{BP-reductsRamsey, BPT-decidability-of-definability}, but 
 not for the Ramsey property.
\begin{prop}\label{prop:canfct-C-high-s-low-n}
Let $n\geq 1$ be finite. Let $f\colon {(C_n^\omega)}^k \rightarrow C_n^\omega$, and let $c_1,\ldots,c_r\in C_n^\omega$. Then
$$
\overline{\{\alpha\circ f\circ (\beta_1, \ldots, \beta_r)\;|\; \alpha\in \Aut(C_n^\omega, E, \prec),\; \beta_1, \ldots, \beta_r\in \Aut(C_n^\omega, E, \prec, c_1, \ldots, c_r)\}}
$$ 
contains a function $g$ such that
\begin{itemize}
\item $g$ is canonical as a function from $(C_n^\omega, E, \prec, c_1, \ldots, c_r)$ to $(C_n^\omega, E, \prec)$;
\item $g$ agrees with $f$ on $\{c_1,\ldots, c_r\}^k$.
\end{itemize}
In particular, $f$ generates a function $g$ with these properties.
\end{prop}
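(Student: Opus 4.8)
The plan is to derive Proposition~\ref{prop:canfct-C-high-s-low-n} as a direct instance of the general canonization machinery of~\cite{BP-reductsRamsey, BPT-decidability-of-definability}, exactly as Proposition~\ref{prop:canfct} was obtained for the Henson graphs. That machinery says: if $(\Delta, \prec)$ is a homogeneous ordered Ramsey structure with finite relational signature, and $f$ is any function on its domain together with finitely many constants $c_1, \ldots, c_r$, then the closure of the set of functions $\alpha \circ f \circ (\beta_1, \ldots, \beta_r)$, with $\alpha$ ranging over automorphisms of $(\Delta, \prec)$ and the $\beta_i$ over automorphisms of the expansion by constants, contains a function $g$ that is canonical from the expanded structure to $(\Delta, \prec)$ and agrees with $f$ on $\{c_1, \ldots, c_r\}^k$. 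So the entire task reduces to verifying that the hypotheses of this general statement are met by the structure in question.

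The key step is therefore to exhibit a \emph{homogeneous} ordered Ramsey structure on the domain $C_n^\omega$ whose automorphism group is the one we want to act by, namely $\Aut(C_n^\omega, E, \prec)$. First I would take $(C_n^\omega, E, \prec, P_1, \ldots, P_n)$, the expansion by the convex linear order $\prec$ and the unary predicates $P_i$ naming the $n$ equivalence classes, and check that this structure is homogeneous in its finite relational signature; this is where the predicates are needed, since without them an isomorphism between finite induced substructures that permutes class-representatives could fail to extend. Once homogeneity is in place, the Ramsey property follows from the observation (already recorded in the paragraph preceding the statement) that the topological automorphism group of $(C_n^\omega, E, \prec, P_1, \ldots, P_n)$ is isomorphic to $\Aut(\mathbb{Q}; <)^n$, together with the two cited facts that (i) $(\mathbb{Q};<)$ is Ramsey, so finite products/superpositions of ordered rationals are Ramsey via~\cite{Topo-Dynamics}, and (ii) being a Ramsey structure depends only on the automorphism group as a topological group~\cite{Topo-Dynamics}. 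This yields the canonization statement for the expanded structure.

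The final step is to remove the auxiliary predicates $P_1, \ldots, P_n$ from the statement. Here I would invoke that the $P_i$ are first-order definable in $(C_n^\omega, E, \prec)$ (each class is a maximal convex $\Eq$-block, and with $n$ finite one can pin down the $i$-th block first-order using $\prec$ and $\Eq$), and conversely $E$ and $\prec$ are definable from them; hence the two structures are first-order interdefinable, their automorphism groups coincide, and the notion of type — and therefore of canonicity from one structure to $(C_n^\omega, E, \prec)$ — is unchanged whether or not the $P_i$ are listed. Since $\Aut(C_n^\omega, E, \prec, P_1, \ldots, P_n) = \Aut(C_n^\omega, E, \prec)$ and likewise for the constant-expansions, the displayed closure and the conclusion transfer verbatim, giving exactly the asserted statement, and in particular $f$ generates such a $g$.

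The main obstacle is the homogeneity verification of $(C_n^\omega, E, \prec, P_1, \ldots, P_n)$ and the correct bookkeeping around the auxiliary predicates: one must make sure that naming the classes really does restore homogeneity (so that the Ramsey transfer via the automorphism group applies) while simultaneously arguing that these same predicates are harmless and eliminable in the conclusion because of first-order interdefinability. Everything else is a mechanical specialization of the cited general theorems, so the argument is short; the only genuine content is checking that the hypotheses of~\cite{BP-reductsRamsey, BPT-decidability-of-definability} and the group-theoretic Ramsey criterion of~\cite{Topo-Dynamics} are satisfied in this finite-class, infinite-size regime.
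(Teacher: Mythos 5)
Your proposal is correct and follows essentially the same route as the paper: the paper's justification (given in the paragraph preceding the statement) is exactly to pass to the homogeneous expansion $(C_n^\omega,E,\prec,P_1,\dots,P_n)$, obtain the Ramsey property from the fact that its topological automorphism group is $\Aut(\mathbb{Q};<)^n$ together with the group-theoretic characterization of the Ramsey property, apply the general canonization results of the cited works, and then drop the predicates $P_i$ by first-order interdefinability. No gaps.
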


If the class size $s$ is finite and their number $n=\omega$, we add $s$ unary predicates $Q_1,\dots,Q_s$ where $Q_i$ contains precisely the
$i$-th element for each equivalence class with respect to the order $\prec$.  
Then $(C_\omega^s,E,\prec,Q_1,\dots,Q_s)$ is homogeneous 
and Ramsey, since its automorphism group is isomorphic as a topological group to $\Aut(\mathbb{Q};<)$, so that we obtain an analogue of Propositions~\ref{prop:canfct} and~\ref{prop:canfct-C-high-s-low-n} also in this case. Again, we may drop the relations $Q_1,\dots,Q_n$, which are first-order definable in $(C_\omega^n,E,\prec)$, in the statement.
\begin{prop}\label{prop:canfct-C-high-n-low-s}
Let $s\geq 1$ be finite. Let $f\colon {(C_\omega^s)}^k \rightarrow C_\omega^s$, and let $c_1,\ldots,c_r\in C_\omega^s$. Then
$$
\overline{\{\alpha\circ f\circ (\beta_1, \ldots, \beta_r)\;|\; \alpha\in \Aut(C_\omega^s, E, \prec),\; \beta_1, \ldots, \beta_r\in \Aut(C_\omega^s, E, \prec, c_1, \ldots, c_r)\}}
$$ 
contains a function $g$ such that
\begin{itemize}
\item $g$ is canonical as a function from $(C_\omega^s, E, \prec, c_1, \ldots, c_r)$ to $(C_\omega^s, E, \prec)$;
\item $g$ agrees with $f$ on $\{c_1,\ldots, c_r\}^k$.
\end{itemize}
In particular, $f$ generates a function $g$ with these properties.
\end{prop}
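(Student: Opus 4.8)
The plan is to follow verbatim the pattern of the proofs of Propositions~\ref{prop:canfct} and~\ref{prop:canfct-C-high-s-low-n}: we reduce the statement to the general canonization theorem for Ramsey structures of~\cite{BP-reductsRamsey, BPT-decidability-of-definability}, the only nontrivial task being to exhibit a homogeneous \emph{ordered} expansion of $(C_\omega^s,E)$ which is Ramsey and whose order and auxiliary relations are first-order definable over $(C_\omega^s,E,\prec)$. Since $s$ is finite, the natural candidate is the expansion $\Delta:=(C_\omega^s,E,\prec,Q_1,\dots,Q_s)$ by the unary predicates $Q_i$ recording the position of an element within its $\Eq$-class. I would first observe that each $Q_i$ is first-order definable over $(C_\omega^s,E,\prec)$: since $\Eq$ is definable by $E(x,y)\vee x=y$, the formula stating that $x$ has exactly $i-1$ elements of its class below it and exactly $s-i$ above it (a finite conjunction, because $s<\omega$) defines $Q_i$. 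Hence $\Delta$ and $(C_\omega^s,E,\prec)$ are first-order interdefinable, so they have the same automorphism group and the same types; this is what will allow us, at the very end, to drop the $Q_i$ from the statement.

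Next I would verify that $\Delta$ is homogeneous. The reason the predicates $Q_i$ are needed is precisely that $(C_\omega^s,E,\prec)$ alone fails to be homogeneous --- a single vertex carries no information about its position inside its class, so the map sending the bottom element of one class to the top of another is a partial isomorphism that cannot extend to an automorphism. Recording the positions by the $Q_i$ repairs this, and a routine back-and-forth argument then shows that every isomorphism between finite induced substructures of $\Delta$ extends to an automorphism: the classes are densely ordered without endpoints and each carries $s$ rigidly labelled positions, so finite partial data can always be extended both in the order on classes and within classes.

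The key step is to establish that $\Delta$ is a Ramsey structure, and here I would argue through the automorphism group exactly as in the paragraph preceding Proposition~\ref{prop:canfct-C-high-s-low-n}. The map sending $\sigma\in\Aut(\Delta)$ to the permutation $\bar\sigma$ it induces on the set of $\Eq$-classes is a group homomorphism into $\Aut(\mathbb{Q};<)$, since $\prec$ orders the classes densely without endpoints and $\sigma$ preserves $\Eq$ and $\prec$. It is injective because $\sigma$ preserves each $Q_i$, so its action inside every class is forced by $\bar\sigma$; and it is surjective because any order-automorphism of the class-space lifts, position by position, to an automorphism of $\Delta$. As convergence in both groups is pointwise convergence on finite sets, and the relevant finite data match up under this correspondence, the map is a homeomorphism, so $\Aut(\Delta)\cong\Aut(\mathbb{Q};<)$ as topological groups. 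Since the Ramsey property of a homogeneous ordered structure depends only on the topological automorphism group --- equivalently, is the extreme amenability of that group --- and since $(\mathbb{Q};<)$ is Ramsey, it follows by~\cite{Topo-Dynamics} that $\Delta$ is Ramsey as well.

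With $\Delta$ shown to be homogeneous and Ramsey, the general result of~\cite{BP-reductsRamsey, BPT-decidability-of-definability} applies verbatim as in the two preceding propositions and yields, in the closure of $\{\alpha\circ f\circ(\beta_1,\dots,\beta_r)\}$ over the automorphisms of the relevant ordered (expanded) structures, a function $g$ that is canonical from $(\Delta,c_1,\dots,c_r)$ to $\Delta$ and agrees with $f$ on $\{c_1,\dots,c_r\}^k$. Finally I would remove the auxiliary predicates: because the $Q_i$ are first-order definable in $(C_\omega^s,E,\prec)$, we have $\Aut(\Delta)=\Aut(C_\omega^s,E,\prec)$ and the types over the two structures coincide, so canonicity from $(\Delta,c_1,\dots,c_r)$ to $\Delta$ is the same as canonicity from $(C_\omega^s,E,\prec,c_1,\dots,c_r)$ to $(C_\omega^s,E,\prec)$, which gives the statement. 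I expect the main obstacle to be the Ramsey step --- not because it is technically deep here, but because it requires the correct invocation of the topological-dynamics characterization of the Ramsey property together with the verification of the topological group isomorphism; the homogeneity and definability checks are routine.
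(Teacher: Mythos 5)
Your proposal is correct and follows essentially the same route as the paper, which (in the paragraph preceding the proposition) obtains the result by expanding with the predicates $Q_1,\dots,Q_s$, noting that the resulting structure is homogeneous and Ramsey because its automorphism group is topologically isomorphic to $\Aut(\mathbb{Q},<)$ (so that the Ramsey property transfers via~\cite{Topo-Dynamics}), invoking the general canonization theorem of~\cite{BP-reductsRamsey, BPT-decidability-of-definability}, and then dropping the $Q_i$ since they are first-order definable in $(C_\omega^s,E,\prec)$. You merely spell out in more detail the definability, homogeneity, and topological-isomorphism checks that the paper leaves implicit.
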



\section{Polymorphisms over Henson graphs}\label{sect:polymorphisms}

We investigate polymorphisms of reducts of $(H_n,E)$. We start with unary polymorphisms in Section~\ref{sect:unary}, obtaining that we can assume that the relations $E$ and $N$ are pp-definable in our reducts, since otherwise their $\CSP$ can be modeled by a reduct of equality and hence has already been classified in~\cite{ecsps}.

We then turn to binary polymorphisms in Section~\ref{sect:binary}, obtaining Lemma~\ref{lem:essbin} telling us that, excluding in addition just one degenerate case where all polymorphisms are essentially unary functions, we may further assume the existence of a binary injective polymorphism. 

Building on the results of those sections, we show in Section~\ref{subsect:H} via an analysis of ternary polymorphisms that for any reduct which pp-defines the relations $E$ and $N$, either the polymorphisms preserve a certain relation $H$ (and hence, $H$ is pp-definable in the reduct by Theorem~\ref{conf:thm:inv-pol}), or there is a polymorphism of behaviour $\mini$ (Proposition~\ref{prop:higherArity}).

\subsection{The unary case: model-complete cores}\label{sect:unary}

A countable $\omega$-categorical structure $\Delta$ is called a \emph{model-complete core} if $\Aut(\Delta)$ is dense in $\End(\Delta)$, or equivalently, every endomorphism of $\Delta$ is an elementary self-embedding, i.e., preserves all first-order formulas over $\Delta$. Every countable $\omega$-categorical structure $\Gamma$ is \emph{homomorphically equivalent} to an up to isomorphism unique $\omega$-categorical model-complete core $\Delta$, that is, there exists homomorphisms from $\Gamma$ into $\Delta$ and vice-versa~\cite{Cores-journal}. Since the CSPs of homomorphically equivalent structures are equal, it has proven fruitful in classification projects to always work with model-complete cores. The following proposition essentially calculates the model-complete cores of the reducts of Henson graphs.

\begin{proposition}\label{prop:redendo}
Let $\Gamma$ be a reduct of $(H_n, E)$. Then either $\End(\Gamma)$ contains a function whose image induces an independent set  or $\End(\Gamma)=\overline{\Aut(\Gamma)}=  \overline{\Aut(H_n, E)}$.
\end{proposition}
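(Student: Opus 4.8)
The plan is to separate the two alternatives according to whether or not every endomorphism of $\Gamma$ is a \emph{self-embedding} of $(H_n,E)$, that is, an injective map preserving both $E$ and $N$. First note that $\overline{\Aut(H_n,E)}\subseteq\End(\Gamma)$ always holds, since automorphisms of $(H_n,E)$ preserve every relation first-order definable in $(H_n,E)$ and hence preserve $\Gamma$, and $\End(\Gamma)$ is closed. Conversely, any self-embedding $e$ of $(H_n,E)$ restricts on each finite set to a partial isomorphism, which extends to an automorphism of $(H_n,E)$ by homogeneity; thus $e\in\overline{\Aut(H_n,E)}$. Consequently, if every endomorphism of $\Gamma$ is a self-embedding, then $\End(\Gamma)=\overline{\Aut(H_n,E)}$, and since $\Aut(H_n,E)\subseteq\Aut(\Gamma)\subseteq\End(\Gamma)$ the whole chain $\End(\Gamma)=\overline{\Aut(\Gamma)}=\overline{\Aut(H_n,E)}$ follows. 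This is the second alternative.

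It remains to treat the case where some $f\in\End(\Gamma)$ is not a self-embedding, and to produce from it an endomorphism whose image is independent. The heart of the argument is a classification of the possible behaviours of a \emph{canonical} unary operation $g\colon(H_n,E)\to(H_n,E)$. Such a $g$ is determined by the images of the $2$-types $E$ and $N$, each of which must be one of $E$, $N$, or $=$. I would use $K_n$-freeness to rule out all behaviours except the self-embedding $E\mapsto E$, $N\mapsto N$ and those producing an independent image: an infinite independent set has $K_n$-free image, so $N\mapsto E$ is impossible; a three-vertex path with non-adjacent endpoints shows that $E\mapsto{=}$ forces $N\mapsto{=}$, whence $g$ is constant; an induced copy of $C_5$ shows that $E\mapsto E$, $N\mapsto{=}$ is inconsistent; and whenever $E\mapsto N$ (with $N\mapsto E$ already excluded) every pair of distinct images is a non-edge, so the image is independent. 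Hence a canonical $g$ is either a self-embedding or has independent image.

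To reduce the given $f$ to this canonical setting while keeping it from being a self-embedding, I would fix a finite witness of the defect---a pair collapsed by $f$, an edge sent to a non-edge, or a non-edge sent to an edge---and feed its elements as the constants $c_1,\dots,c_r$ into Proposition~\ref{prop:canfct}. The canonical operation $g$ generated by $f$ then agrees with $f$ on $\{c_1,\dots,c_r\}$ and therefore inherits the defect. Once the defect is realized on an orbit that induces a copy of $(H_n,E)$---for instance the common non-neighbourhood of the constants, which has the extension property and is hence isomorphic to $(H_n,E)$---the classification forces $g$ to have independent image there. Precomposing $g$ with an embedding of $(H_n,E)$ onto that orbit, which itself lies in $\overline{\Aut(H_n,E)}\subseteq\End(\Gamma)$, yields an endomorphism of $\Gamma$ with independent image, the first alternative.

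The main obstacle is precisely this localization. Canonizing without constants can erase a purely local defect (for instance when $f$ coincides with an automorphism off a finite set), which is why the witnessing points must be frozen as constants; but then one argues inside an expansion of $(H_n,E)$ by constants, where---unlike for the random graph---not every orbit induces a copy of $(H_n,E)$, since common neighbourhoods induce smaller Henson graphs $H_m$ with $m<n$. The delicate work is thus to transfer the inherited defect onto an orbit that does induce a copy of $(H_n,E)$, or else to re-run the behaviour classification on these smaller Henson graphs; the $K_n$-freeness arguments above survive on each $H_m$ with $m\ge 3$, while the residual degenerate cases, where an orbit reduces to a clique or to a finite graph, have to be inspected separately.
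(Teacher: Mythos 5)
Your setup is sound up to and including the canonization step: the reduction of the second alternative to self-embeddings, the classification of parameter-free canonical unary behaviours on $(H_n,E)$, and the realization that the witnessing pair must be frozen as constants so that canonization does not erase the defect all match the paper's strategy. The gap is the step you phrase as ``once the defect is realized on an orbit that induces a copy of $(H_n,E)$'': nothing guarantees that it ever is. By construction the defect sits on the constants $u,v$ themselves (say an edge $(u,v)$ sent to a non-edge), and the canonical function $g$ may perfectly well restrict to a self-embedding on the common non-neighbourhood $U_{\overline{u}\,\overline{v}}$ and on every other orbit. The paper explicitly passes through exactly this case (``we may assume that $g$ preserves $E$ on $U_{\overline{u}\,\overline{v}}$''), and the bulk of its proof is then a chain of $K_n$-freeness arguments \emph{between} orbits---placing cliques of size $n-2$ inside $U_{\overline{u}\,\overline{v}}$ or $U_{u\overline{v}}$ and single vertices in other orbits---to force $g$ to preserve $N$ globally while still deleting the edge $(u,v)$. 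Only then does an iteration-plus-local-closure argument (move the edge to be deleted onto $(u,v)$, park the rest of the finite set in orbits where edges and non-edges are preserved, repeat) produce an endomorphism with independent image. Your classification of parameter-free behaviours cannot be applied to $g$, because $g$ is canonical only over the expansion by constants and its behaviour genuinely differs from orbit to orbit; restricting $g$ to a single orbit isomorphic to $(H_n,E)$ simply loses the defect.

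Your closing paragraph concedes that this ``localization'' is the delicate part, but that is precisely the content of the proposition: what you have actually proved is only that a \emph{globally} canonical non-embedding has independent image, which does not cover, for instance, an endomorphism that deletes a single edge and is an embedding everywhere else. Two further sub-cases also fall outside your scheme and need separate treatment, as in the paper: the case where the only defects are collapsed pairs on edges (handled by iteratively collapsing edges and a compactness argument, before any canonization), and the interaction between the constants and the orbits that do \emph{not} induce copies of $(H_n,E)$ (such as $U_{uv}$), where your behaviour classification on smaller Henson graphs would still have to be stitched back to the behaviour on the constants to conclude anything about $g$ as a whole.
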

\begin{proof}
Assume that $\End(\Gamma)\neq \overline{\Aut(H_n, E)}$. 
Then, since $\Gamma$ is $\omega$-categorical and by Theorem~\ref{conf:thm:inv-pol} and Lemma~\ref{lem:arity-reduction}, there exists an $f\in \End(\Gamma)$ which violates $E$ or $N$. 
If $f$ violated $N$ but not $E$, then there would be a copy of $K_n$ in the range of $f$, a contradiction.

Thus, we may assume that $f$ violates $E$, i.e., there exists $(u,v)\in E$ such that 
$(f(u), f(v))\in N$ or $f(u)= f(v)$. If for some such $(u,v)$ we have $f(u)=f(v)$, then one can generate by topological closure from $f$ a function whose image is an independent set. Since this is the first time we appeal to an argument with a flavour of topological closure, let us give it in longhand. First fix $u,v\in H_n$  such that $E(u,v)$ so that $f(u)=f(v)$. Given a subset $A$ of vertices containing $m\geq 1$ edges, we argue there is a $g$ generated by $f$ so that $g[A]$ contains fewer vertices than $A$. Indeed, take any $a,b\in A$ with $E(a,b)$, and an automorphism $\alpha\in\Aut(H_n,E)$ mapping $(a,b)$ to $(u,v)$, and use $g=f(\alpha(x),\alpha(y))$. Note that $g$ maps the edge $(a,b)$ to a single vertex, so that $g[A]$ is indeed smaller than $A$. By iterating this method, we can see that for every finite subset $A$ of $H_n$, there is a function $g$ generated by $f$ so that $g[A]$ is an independent set. The conclusion that then $f$ also generates a function which sends the entire domain $H_n$ onto an independent set is achieved via a typical compactness argument which appears in one form or another in most works on polymorphism clones of $\omega$-categorical structures; it uses topological closure together with $\omega$-categoricity. The modern and perhaps most elegant way to present it is to consider an equivalence relation $\sim$ on the set $F$ of all functions generated by $f$, defined by $g\sim g'$ if and only if $\overline{\{\alpha\circ g\;|\; \alpha\in \Aut(H_n,E)\}}=\overline{\{\alpha\circ g'\;|\; \alpha\in \Aut(H_n,E)\}}$. Then the factor space $F/_\sim$ is compact since $(H_n,E)$ is $\omega$-categorical. This has first been observed, in slightly different form, in~\cite{Topo-Birk}; we refer to~\cite{BodPin-CanonicalFunctions} for a proof of the variant we are using here. Let $(A_i)_{i\in\omega}$ be an increasing sequence of finite sets so that $\bigcup_{i \in \omega} A_i = H_n$. Fix a function $g_i$ generated by $f$ which sends $A_i$ onto an independent set. By compactness, a subsequence of $([g_i]_\sim\;|\; i\in\omega)$ converges in  $F/_\sim$ to a class $[g]_\sim$. This means that there are $\alpha_i\in\Aut(H_n,E)$, for $i\in\omega$, such that a subsequence of $(\alpha_i\circ g_i\;|\; i\in\omega)$ converges to $g$. But then $g$ maps $H_n$ onto an independent set.

%

%
Thus, we may assume that there exists $(u,v)\in E$ such that $(f(u), f(v))\in N$, and that no edges are collapsed to $=$ by $f$. If there existed $(u',v')\in N$ such that $f(u')= f(v')$, then picking an automorphism $\alpha\in\Aut(H_n,E)$ such that $(\alpha(f(u)),\alpha(f(v)))=(u',v')$, we would have that $f\circ\alpha\circ f$ collapses an edge to $=$. Having considered this situation above, we may hence assume that $f$ is injective.

By Proposition~\ref{prop:canfct}, the operation $f$ generates an injective canonical function $g\colon (H_n, E,\prec, u, v)\rightarrow (H_n, E,\prec)$ such that $f(u)=g(u)$ and $f(v)=g(v)$; in fact, since $f$ is unary, we can disregard the order $\prec$ and assume that $g$ is canonical as a function from $(H_n, E, u, v)$ to $(H_n, E)$~\cite[Proposition 3.7]{Pon11}.

Let 
\begin{align*}
U_{u v} & := \{x\in H_n \mid E(u, x)\wedge E(v, x)\}\;, \\ 
U_{u \overline{v}} & := \{x\in H_n \mid E(u, x)\wedge N(v, x)\}\;, \\
U_{\overline{u} v} & := \{x\in H_n \mid N(u, x)\wedge E(v, x)\}\;, \\
\text{ and }\;\; U_{\overline{u} \overline{v}} 
& := \{x\in H_n \mid N(u, x)\wedge N(v, x)\}. 
\end{align*}
As all four of these sets contain an independent set of size $n$, we cannot have $g(N)=E$ on any of them, as this would introduce a copy of $K_n$. 
Since no non-edges are collapsed to $=$ by our assumption above, we infer that $N$ is preserved by $g$ on all four sets.

If $g$ violates $E$ on $U_{\overline{u} \overline{v}}$, then, since $U_{\overline{u} \overline{v}}$ induces an isomorphic copy of $(H_n,E)$ therein, $g$ generates a function whose image is an independent set. Thus, we may assume that $g$ preserves $E$ on $U_{\overline{u} \overline{v}}$. 

Then $g$ preserves $N$ between $U_{\overline{u} \overline{v}}$ and any other orbit $X$ of $\Aut(H_n, E, u, v)$, as otherwise it would send non-edges to edges between these orbits, and 
 the image of the $n$-element induced subgraph of $(H_n, E)$ induced by any point in $X$ together with a copy of $K_{n-1}$ in $U_{\overline{u} \overline{v}}$ would be isomorphic to $K_n$. 

Assume that $g$ violates $E$ between $U_{\overline{u} \overline{v}}$ and another orbit $X$ of $\Aut(H_n, E, u, v)$. 
Let $A\subseteq H_n$ be finite with an edge $(x, y)$ in $A$. 
Then there exists an $\alpha\in \Aut(H_n, E)$ such that $\alpha(x)\in X$ and $\alpha[A\setminus \{x\}]\subseteq U_{\overline{u} \overline{v}}$. 
The function $(g\circ \alpha)\upharpoonright_{A}$ preserves $N$, and it maps $(x, y)$ to a non-edge. 
By an iterative application of this step we can systematically delete all edges of $A$. 
Hence, by topological closure, $g$ generates a function whose image is an independent set. 
Thus, we may assume that $g$ preserves $E$ between $U_{\overline{u} \overline{v}}$ and any other orbit of $\Aut(H_n, E, u, v)$. 

Let $X$ and $Y$ be infinite orbits of $\Aut(H_n, E, u, v)$, and assume that $g$ violates $N$ between $X$ and $Y$. 
There exist vertices $x\in X$ and $y\in Y$, and a copy of $K_{n-2}$ in $U_{\overline{u} \overline{v}}$ such that $(x,y)$ is the only non-edge in the graph induced by these $n$ vertices. 
Then, by the above, the image of this $n$-element set under $g$ induces a copy of $K_n$, a contradiction. 
Hence, we may assume that $g$ preserves $N$ on $H_n\setminus \{u, v\}$.

If $g$ violates $E$ on $H_n\setminus \{u, v\}$, then we can systematically delete the edges of any finite subgraph of $(H_n, E)$ whilst preserving the non-edges, and conclude that $g$ generates a function whose image is an independent set. 
Thus, we may assume that $g$ preserves $E$ on $H_n\setminus \{u, v\}$.

Assume that $g$ violates $E$ between $u$ and $U_{u \overline{v}}$. 
Given any finite $A\subseteq H_n$ with a vertex $x\in A$, there exists a $\beta\in \Aut(H_n, E)$ such that $\beta(x)=u$ and $\beta[A\setminus \{x\}]\subseteq U_{u \overline{v}}\cup U_{\overline{u} \overline{v}}$. 
Since, as observed earlier, $g$ preserves $N$ between $U_{\overline{u} \overline{v}}$ and any other orbit of $\Aut(H_n, E, u, v)$, including the orbits  $U_{u \overline{v}}$ and $\{u\}$, we conclude that $(g\circ \beta)\upharpoonright_A$ preserves $N$, and it maps edges from $x$ to non-edges. Thus, we can systematically delete the edges of $A$, and consequently, $g$ generates a function whose image is an independent set. Hence, we may assume that $g$ preserves $E$ between $u$ and $U_{u \overline{v}}$.

There exists a vertex $x\in U_{\overline{u} v}$ and a copy of $K_{n-2}$ in $U_{u \overline{v}}$ such that $(x,u)$ is the only non-edge in the graph induced by these $n-1$ vertices together with $u$. 
Thus, if $g$ violates $N$ between $\{u\}$ and $U_{\overline{u} v}$, then the image of this $n$-element set under $g$ induces a copy of $K_n$, a contradiction. Hence, $g$ preserves $N$ between $u$ and $U_{\overline{u} v}$.

By symmetry, we may assume that $g$ preserves $N$ between $v$ and $U_{u \overline{v}}$. 
Thus, $g$ preserves $N$. 
As $g$ deletes the edge between $u$ and $v$, we can systematically delete the edges of any finite subgraph of $(H_n, E)$. 
Hence, $g$ generates a function whose image is an independent set. 
\end{proof}

In the first case of Proposition~\ref{prop:redendo}, the model-complete core of the reduct is in fact a reduct of equality. Since the CSPs of reducts of equality have been classified~\cite{ecsps}, we do not have to consider any further reducts with an endomorphism whose image induces an independent set.

\begin{lemma}\label{lem:emptyendo}
Let $\Gamma$ be a reduct of $(H_n, E)$, and assume that $\End(\Gamma)$ contains a function whose image is an independent set. Then $\Gamma$ is homomorphically equivalent to a reduct of $(H_n, =)$.
\end{lemma}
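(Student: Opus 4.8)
The plan is to replace the given endomorphism by a \emph{canonical} one and then read off its shape from its action on $2$-types; canonicity will force a clean dichotomy (constant or injective) that an arbitrary endomorphism with independent image need not enjoy. Concretely, let $e\in\End(\Gamma)$ have independent image. First I would apply Proposition~\ref{prop:canfct} with no constants ($r=0$) to obtain a function $g$ generated by $e$ which is canonical from $(H_n,E,\prec)$ to $(H_n,E,\prec)$; since $e$ is unary the order is irrelevant and one may take $g$ canonical from $(H_n,E)$ to $(H_n,E)$ by \cite[Proposition~3.7]{Pon11}. As $g$ lies in the closure of $\{\alpha\circ e\circ\beta\mid\alpha,\beta\in\Aut(H_n,E)\}$ and each such composite has an independent image (automorphisms preserve $N$), the fact that $g$ agrees locally on each pair of points with some such composite shows that $g[H_n]$ is again an independent set; moreover $g\in\End(\Gamma)$, since $\End(\Gamma)$ is a closed monoid containing $\Aut(H_n,E)$.

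The crux is the analysis of the $2$-type behaviour of $g$, and this is the step I expect to carry the real content. A canonical unary function on $(H_n,E)$ is determined by the images of the two nontrivial $2$-types, i.e.\ by the pair $(g(E),g(N))\in\{E,N,{=}\}^2$, and independence of the image rules out $g(E)=E$ and $g(N)=E$. This leaves four candidate behaviours, and I would eliminate the two \emph{mixed} ones, $(N,{=})$ and $({=},N)$, by exhibiting short configurations supplied by the extension property: a path $x,y,z$ with $E(x,y)$, $E(y,z)$, $N(x,z)$ treats the case $({=},N)$, and an edge $x,z$ together with a common non-neighbour $y$ treats the case $(N,{=})$; in each case the behaviour would force both $g(x)=g(z)$ and $g(x)\neq g(z)$, a contradiction. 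There remain exactly two possibilities: $(g(E),g(N))=({=},{=})$, so that $g$ is constant, or $(g(E),g(N))=(N,N)$, so that $g$ is injective and hence has an \emph{infinite} independent image. The point is that canonicity excludes precisely the finite-but-nontrivial images (such as a surjection onto a two-element independent set), which would otherwise obstruct homomorphic equivalence with an equality reduct.

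With this dichotomy the conclusion is routine. If $g$ is injective, set $I:=g[H_n]$ and let $\Delta:=\Gamma[I]$ be the induced substructure; then $g\colon\Gamma\to\Delta$ is a surjective homomorphism and the inclusion $\Delta\hookrightarrow\Gamma$ is a homomorphism, so $\Gamma$ and $\Delta$ are homomorphically equivalent. Since $I$ is independent, $E$ restricts to the empty relation on $I$, whence $(I,E\upharpoonright_I)$ is interdefinable with $(I,=)$ and every relation of $\Delta$, being the restriction of a relation first-order definable in $(H_n,E)$, is first-order definable in $(I,=)$; as $I$ is countably infinite, $(I,=)\cong(H_n,=)$ and $\Delta$ is isomorphic to a reduct of $(H_n,=)$. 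If instead $g$ is constant with image $\{p\}$, then $\Gamma$ is homomorphically equivalent to the one-element structure $\Gamma[\{p\}]$, which is in turn homomorphically equivalent, via the evident constant and inclusion maps, to the reduct of $(H_n,=)$ interpreting each relation as the full or the empty relation according to whether the corresponding constant tuple lies in the relation of $\Gamma$. In both cases $\Gamma$ is homomorphically equivalent to a reduct of $(H_n,=)$, as required.
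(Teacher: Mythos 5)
Your argument is correct, but it is far heavier than what the paper intends: the paper's proof of this lemma is literally the single word ``Trivial'', and the intended argument needs neither the Ramsey canonization machinery nor your constant-versus-injective dichotomy. Since $(H_n,E)$ is homogeneous and $\omega$-categorical, every relation $R$ of $\Gamma$ is defined by a quantifier-free formula $\phi_R$ over $\{E,=\}$ (as the paper remarks in the introduction); let $\Gamma'$ be the reduct of $(H_n,=)$ whose relations are defined by the formulas obtained from the $\phi_R$ by replacing every $E$-atom with $\false$. If $e\in\End(\Gamma)$ has independent image, then $e$ is a homomorphism $\Gamma\to\Gamma'$, because the $e$-image of a tuple of $R^{\Gamma}$ satisfies $\phi_R$ while falsifying all of its $E$-atoms; conversely, any injection of $H_n$ onto an independent set is a homomorphism $\Gamma'\to\Gamma$, since it preserves equality types and lands where $\phi_R$ and its $E$-free version agree. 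This works uniformly whether the image of $e$ is infinite, finite, or a singleton --- exactly the case distinction your canonicity detour is designed to control. What your route buys is strictly more structural information than the lemma asks for, namely that $e$ generates a canonical endomorphism which is either constant or an embedding of $(H_n,N)$; that refinement is correct and occasionally useful, but not needed here. One small inaccuracy in your motivation: a finite-but-nontrivial independent image does not ``obstruct homomorphic equivalence with an equality reduct'' --- the direct argument above handles it without any upgrading --- though this side remark does not affect the validity of your proof.
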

\begin{proof}
Trivial.
\end{proof}

In the second case of Proposition~\ref{prop:redendo}, it turns out that all polymorphisms preserve the relations $E$, $N$, and $\neq$, by the following lemma and Theorem~\ref{conf:thm:inv-pol}.

\begin{lemma}\label{lem:neq-pp}
Let $\Gamma$ be a reduct of $(H_n, E)$. Then the following are equivalent:
\begin{itemize}
\item[(1)] $\End(\Gamma)= \overline{\Aut(H_n, E)}$.
\item[(2)] $E$ and $N$ have primitive positive definitions in $\Gamma$. 
\item[(3)] $E$, $N$, and $\neq$ have primitive positive definitions in $\Gamma$. 
\end{itemize}
\end{lemma}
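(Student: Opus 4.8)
The plan is to establish the cycle of implications $(1)\Rightarrow(2)\Rightarrow(3)\Rightarrow(1)$. Two standing facts drive everything. First, since $(H_n,E)$ is homogeneous, $\overline{\Aut(H_n,E)}$ coincides with the monoid $\Emb(H_n,E)$ of self-embeddings of $(H_n,E)$: a map agreeing on every finite set with an automorphism preserves $E$, $N$, and $\neq$, while conversely a self-embedding restricts on each finite set to a partial isomorphism, which extends to an automorphism by homogeneity. Second, a self-map of $H_n$ preserving both $E$ and $N$ is automatically injective (distinct arguments stand in $E$ or in $N$, and either relation is preserved) and reflects edges (if two images form an edge but their preimages do not, then by injectivity the preimages stand in $N$, which is preserved, a contradiction); hence it is a self-embedding. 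These facts settle $(3)\Rightarrow(1)$ immediately: by Theorem~\ref{conf:thm:inv-pol}, which applies since reducts of $(H_n,E)$ are $\omega$-categorical, every polymorphism of $\Gamma$, in particular every endomorphism, preserves the pp-definable relations $E$ and $N$; so $\End(\Gamma)\subseteq\Emb(H_n,E)=\overline{\Aut(H_n,E)}$, and since the reverse inclusion holds for every reduct we obtain $(1)$.

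The heart of the proof is $(1)\Rightarrow(2)$, which I would argue by contraposition. Suppose $E$ is not pp-definable in $\Gamma$ (the case of $N$ is identical). Then by Theorem~\ref{conf:thm:inv-pol} some polymorphism of $\Gamma$ violates $E$. The key observation is that $E$, viewed as a set of $2$-tuples, is a \emph{single orbit} under $\Aut(H_n,E)$, and hence meets only one orbit of $2$-tuples under the possibly larger group $\Aut(\Gamma)$. Thus Lemma~\ref{lem:arity-reduction} applies with $m=1$ and yields a \emph{unary} polymorphism, i.e.\ an endomorphism of $\Gamma$, that violates $E$. By the second standing fact this endomorphism is not a self-embedding of $(H_n,E)$, so $\End(\Gamma)\neq\overline{\Aut(H_n,E)}$, contradicting $(1)$. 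This single-orbit reduction to arity one is the step I expect to carry the most weight: it lets us control \emph{all} polymorphisms through the already-understood unary case, bypassing any direct higher-arity or Ramsey analysis, so that in particular the full strength of Proposition~\ref{prop:redendo} is not needed here.

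It remains to prove $(2)\Rightarrow(3)$, for which it suffices to pp-define $\neq$ from $E$ and $N$, since $(3)\Rightarrow(2)$ is trivial. I claim
\[
\neq(x,y)\;\equiv\;\exists z\,\bigl(E(x,z)\wedge N(y,z)\bigr).
\]
If $x=y$ the right-hand side is unsatisfiable, as $E(x,z)$ and $N(x,z)$ are mutually exclusive. If $x\neq y$, apply the extension property of $(H_n,E)$ with $U=\{x\}$ and $U'=\{y\}$: since $U$ is a single vertex it induces no copy of $K_{n-1}$ (recall $n\geq 3$), so there is a vertex $z\notin\{x,y\}$ adjacent to $x$ and not to $y$, whence $E(x,z)$ and $N(y,z)$ hold. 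As $E$ and $N$ are pp-definable in $\Gamma$, this formula pp-defines $\neq$ in $\Gamma$, giving $(3)$.

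The only delicate verification in the whole argument is that the witness $z$ above can be chosen distinct from $y$; this is precisely what the extension property guarantees, and it is the sole place where the combinatorics of the Henson graph enters. Everything else reduces to bookkeeping around Theorem~\ref{conf:thm:inv-pol}, Lemma~\ref{lem:arity-reduction}, and the identification of $\overline{\Aut(H_n,E)}$ with the self-embedding monoid $\Emb(H_n,E)$.
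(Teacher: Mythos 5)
Your proof is correct and follows essentially the same route as the paper: (1)$\Rightarrow$(2) via Theorem~\ref{conf:thm:inv-pol} and the single-orbit arity reduction of Lemma~\ref{lem:arity-reduction}, (2)$\Rightarrow$(3) via the same formula $\exists z\,(E(x,z)\wedge N(y,z))$, and (3)$\Rightarrow$(1) via homogeneity. You merely spell out details the paper leaves implicit (the identification of $\overline{\Aut(H_n,E)}$ with the self-embedding monoid, and the extension-property check for the witness $z$), all of which are accurate.
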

\begin{proof}
Since $E$ and $N$ are orbits
of pairs with respect to $\Aut(H_n,E)$, the implication from (1) to (2) is an immediate consequence of Theorem~\ref{conf:thm:inv-pol} and Lemma~\ref{lem:arity-reduction}. For the implication from (2) to (3), it is enough to observe that the primitive positive formula
$\exists z (E(x,z) \wedge N(y,z))$ 
defines $x \neq y$.  Finally, the implication from (3) to (1) follows from the homogeneity of $(H_n,E)$.
\end{proof}

Before moving on to binary polymorphisms, we observe the following corollary of Proposition~\ref{prop:redendo}, 
first mentioned in~\cite{RandomReducts}.

\begin{cor}
For every $n\geq 3$, the permutation group $\Aut(H_n,E)$ is a maximal closed subgroup of the full symmetric group on $H_n$, i.e., every closed subgroup of the full symmetric group containing $\Aut(H_n,E)$ either equals $\Aut(H_n,E)$ or the full symmetric group.
\end{cor}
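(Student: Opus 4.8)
The plan is to translate the group-theoretic statement into the language of reducts and endomorphisms, so that Proposition~\ref{prop:redendo} applies directly. Given a closed subgroup $G$ of the full symmetric group with $\Aut(H_n,E)\subseteq G$, I would let $\Gamma$ be the reduct of $(H_n,E)$ whose relations are exactly the first-order definable relations preserved by $G$. By the $\omega$-categoricity of $(H_n,E)$ and the Galois correspondence recalled before Theorem~\ref{conf:thm:inv-pol}, the closedness of $G$ gives $\Aut(\Gamma)=G$; and since $G$ consists of permutations, $G\subseteq\End(\Gamma)$. The whole statement then reduces to showing that $\End(\Gamma)$ leaves only the two extreme possibilities for $G$.

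I would then apply Proposition~\ref{prop:redendo} to $\Gamma$. In its second alternative, $\End(\Gamma)=\overline{\Aut(H_n,E)}=\Aut(H_n,E)$ (the latter being closed), so from $\Aut(H_n,E)\subseteq G\subseteq\End(\Gamma)$ we immediately get $G=\Aut(H_n,E)$. It remains to treat the first alternative, where $\End(\Gamma)$ contains a function $e$ whose image induces an independent set; here the goal is $G=\text{Sym}(H_n)$. This alternative can occur only when $G\supsetneq\Aut(H_n,E)$, in which case $G$ contains a permutation $g$ violating $E$. Applying Proposition~\ref{prop:redendo} to $(H_n,E)$ itself shows it falls into the second alternative (a function with independent image cannot preserve $E$, so the first alternative is impossible for $(H_n,E)$); hence $(H_n,E)$ is a model-complete core and every bijective endomorphism of it is an automorphism. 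Consequently $g$ maps some edge to a non-edge, and applying the same observation to $g^{-1}\in G$, also some non-edge to an edge.

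To finish, I would show that every relation $R$ preserved by $G$ is preserved by all of $\text{Sym}(H_n)$, which by Theorem~\ref{conf:thm:inv-pol} and the definition of $\Gamma$ yields $\text{Sym}(H_n)\subseteq\Aut(\Gamma)=G$, hence equality. Since $R$ is a union of orbits of $\Aut(H_n,E)$, it is determined by the equality-and-edge patterns of its tuples (types over $(H_n,E)$ being determined by binary subtypes), so it suffices to prove that membership of a tuple in $R$ is insensitive to flipping a single pair between an edge and a non-edge. The endomorphism $e$ lets me relax edges to non-edges inside $R$, by composing $e$ with automorphisms that place a given tuple suitably, while the automorphism $g$ of $\Gamma$ lets me tighten non-edges to edges; combining the two directions gives full insensitivity.

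The main obstacle is precisely this last step: realizing a modification of exactly one binary subtype while leaving all other pairs of a tuple unchanged, in both the edge-removing and the edge-adding directions. This requires a back-and-forth argument built on the extension property of $(H_n,E)$ together with the topological closure of $\End(\Gamma)$, and it is here that the asymmetry between $E$ and $N$ in the Henson graph (whose complement is not $K_n$-free) makes the situation more delicate than for the random graph; the independent-image endomorphism furnished by Proposition~\ref{prop:redendo} is exactly what supplies the edge-removing direction that complement symmetry would otherwise provide for free. A secondary point to handle is that $e$ need not be injective, so the placement automorphisms must be chosen to keep the relevant coordinates distinct before the argument goes through.
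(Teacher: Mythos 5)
Your overall route is the paper's: pass to a reduct $\Gamma$ with $\Aut(\Gamma)=G$, observe $G\subseteq\End(\Gamma)$, and apply Proposition~\ref{prop:redendo}. The second alternative is handled correctly in substance, though note that $\overline{\Aut(H_n,E)}$ (closure in the space of \emph{all} unary functions) is not $\Aut(H_n,E)$ itself but the monoid of elementary self-embeddings, which contains non-surjective maps; your conclusion $G=\Aut(H_n,E)$ still follows because a bijective elementary self-embedding is an automorphism.

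The genuine gap is in the first alternative. Your plan is to show that every $G$-invariant relation is insensitive to flipping a single binary subtype, and you correctly identify the edge-adding direction as the obstacle --- but you then leave it as a ``back-and-forth argument built on the extension property,'' starting only from the existence of \emph{one} permutation $g\in G$ mapping some non-edge to an edge. Upgrading such a $g$ to a map that changes exactly one prescribed pair while fixing the types of all other pairs of a tuple is essentially as hard as the canonical-function analysis in the proof of Proposition~\ref{prop:redendo} itself, and nothing in your sketch shows it can be done; as stated, this step would not go through. The missing idea that makes this case genuinely ``easy'' (as the paper asserts) is that $\End(\Gamma)=\overline{G}$ by the Galois correspondence, so the independent-image endomorphism $e$ is a pointwise limit of permutations in $G$; in particular $e$ is automatically injective (disposing of your ``secondary point''), and for every finite tuple $a$ of distinct elements there is $g_a\in G$ agreeing with $e$ on $a$. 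Given two $k$-tuples $a,b$ of distinct elements, $g_a(a)$ and $g_b(b)$ are both independent $k$-tuples of distinct elements, hence lie in one orbit of $\Aut(H_n,E)\subseteq G$, and so $g_b^{-1}\alpha g_a\in G$ maps $a$ to $b$ for a suitable $\alpha$. Thus $G$ is $k$-transitive for all $k$, i.e.\ dense in the full symmetric group, and being closed it equals it. This replaces your single-subtype-flipping programme entirely and avoids the edge-adding problem.
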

\begin{proof}
Let $G\supseteq \Aut(H_n,E)$ be a closed subgroup of the full symmetric group on $H_n$. 
Its closure $\overline{G}$ in the set of all unary functions on $H_n$ is a closed transformation monoid, i.e., a topologically closed monoid of unary functions, and hence the monoid of endomorphisms of a reduct of $(H_n,E)$ (cf.~for example~\cite{RandomMinOps}). By Proposition~\ref{prop:redendo}, $\overline{G}$  
 either contains a function $e$ whose image induces an independent set, or it equals $\overline{\Aut(H_n,E)}$. In the latter case, $G=\Aut(H_n,E)$, and in the first case
 we prove that $G$ equals the full symmetric group. Since $G$ is closed in the full symmetric group, it suffices to prove that for every $k \geq 1$ and
 all $s,t \in H_n^k$ there exists an element of $G$ which sends $s$ to $t$. Since $e \in \overline{G}$, there exists a $\beta \in G$ such that 
$e$ and $\beta$ agree on the tuples $s$ and $t$, and consequently $\beta$ sends the two tuples into independent sets. By the homogeneity of $(H_n,E)$, we have that $\beta(s)$ and $\beta(t)$ lie in the same orbit of $G$, and hence so do $s$ and $t$.
\end{proof}

We remark that the automorphism group of the random graph has five closed supergroups~\cite{RandomReducts}, which leads to more cases in the corresponding CSP classification in~\cite{BodPin-Schaefer}.

\subsection{Higher arities: generating injective polymorphisms}\label{sect:binary}

We investigate at least binary functions preserving $E$ and $N$ (and hence, by Theorem~\ref{conf:thm:inv-pol}, also $\neq$, since this relation is pp-definable from $E$ and $N$
by Lemma~\ref{lem:neq-pp}); our goal in this section is to show that they generate injections. Every unary function gives rise to a binary function by adding a dummy variable; the following definition rules out such ``improper" higher-arity functions.

\begin{defn}
A finitary operation $f(x_1,\ldots,x_n)$ on a set is \emph{essential} if it depends on more than one of its variables $x_i$.
\end{defn}

\begin{lemma}\label{lem:essbin}
Let $f\colon H_n^2\rightarrow H_n$ be a binary essential function that preserves $E$ and $N$. Then $f$ generates a binary injection.
\end{lemma}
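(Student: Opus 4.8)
The plan is to reduce injectivity to a statement about one-variable sections, extract from $f$ a \emph{canonical} function that still depends on both arguments, and then exploit the connectedness of both the edge and the non-edge graph on $H_n$ to rule out any collapsing. First I would record that $f$ preserves $\neq$: this relation is pp-definable from $E$ and $N$ by Lemma~\ref{lem:neq-pp}, so by Theorem~\ref{conf:thm:inv-pol} every function preserving $E$ and $N$ — hence every function generated by $f$ together with $\Aut(H_n,E)$ — preserves $\neq$. Consequently $f(a_1,a_2)\neq f(b_1,b_2)$ whenever $a_1\neq b_1$ and $a_2\neq b_2$, so a binary $g$ generated by $f$ fails to be injective only if it collapses a \emph{line}, i.e.\ $g(c,x)=g(c,y)$ or $g(x,c)=g(y,c)$ for some $c$ and $x\neq y$. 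It therefore suffices to generate a $g$ all of whose unary sections $g(c,\cdot)$ and $g(\cdot,c)$ are injective.

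The engine of the proof is the following observation about a function $g$ that is canonical from $(H_n,E,\prec)$ to $(H_n,E,\prec)$ and preserves $E$ and $N$: if $g$ \emph{depends on its second argument}, meaning $g(c,\cdot)$ is non-constant for some $c$, then $g(c',\cdot)$ is injective for \emph{every} $c'$. Indeed, the behaviour $g(=,E)={=}$ forces $g(c',x)=g(c',y)$ for all $c'$ and all $x,y$ with $E(x,y)$; since $(H_n,E)$ is connected (any two vertices have a common neighbour by the extension property, so $E$ has diameter at most $2$), each section $g(c',\cdot)$ would then be constant, contradicting the dependence. The same contradiction arises from $g(=,N)={=}$, because the non-edge graph $N$ is likewise connected: applying the extension property with $U=\emptyset$, any two vertices have a common non-neighbour. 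Hence $g(=,E),g(=,N)\in\{E,N\}$, which together with the preservation of $\neq$ makes every section $g(c',\cdot)$ injective; the symmetric statement holds for the first argument. Thus any canonical $g$ preserving $E$ and $N$ and depending on \emph{both} arguments is injective.

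It remains to produce from the essential $f$ a canonical function that still depends on both arguments. Essentiality yields finite witnesses: points $a\neq a'$, $b$ with $f(a,b)\neq f(a',b)$ and points $c$, $d\neq d'$ with $f(c,d)\neq f(c,d')$. I would apply Proposition~\ref{prop:canfct} to $f$ with constants $c_1,\dots,c_r$ chosen to include all of $a,a',b,c,d,d'$, obtaining a canonical $g$ agreeing with $f$ on $\{c_1,\dots,c_r\}^2$; in particular $g(a,b)\neq g(a',b)$ and $g(c,d)\neq g(c,d')$, so $g$ depends on both arguments. Pinning the witnesses as constants is necessary, since dependence on a coordinate is \emph{not} preserved under the topological limit of Proposition~\ref{prop:canfct} — otherwise the witnesses could drift off to infinity and vanish in the limit.

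The main obstacle is to reconcile this with the connectivity engine: the function $g$ is now canonical with respect to the \emph{expanded} structure $(H_n,E,\prec,c_1,\dots,c_r)$, which is no longer homogeneous, and — in contrast to the random graph — the orbits of $\Aut(H_n,E,\prec,c_1,\dots,c_r)$ need not induce copies of $(H_n,E)$, so the behaviours $g(=,E)$, $g(=,N)$ are not globally defined and the clean argument above does not apply verbatim. I expect this to be the technical heart of the proof. The way around it is to localize: since the constants are finite, there is still an orbit $O$ of the expanded automorphism group inducing a copy of $(H_n,E)$, on which both the edge and the non-edge graph are connected, and one runs the collapsing-versus-constancy dichotomy inside $O$. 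One then transfers the resulting injectivity to all of $H_n$ — using that $f$ and $\Aut(H_n,E)$ generate automorphisms carrying arbitrary finite configurations into $O$ — or, equivalently, argues that $g$ already generates a constant-free canonical function retaining the dependence on both arguments, to which the engine of the second paragraph applies directly.
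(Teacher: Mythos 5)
Your strategy is genuinely different from the paper's: the paper follows \cite{RandomMinOps} and reduces the lemma, via the criterion of Lemma~42 there, to showing that two disequalities can be simultaneously added to any satisfiable primitive positive formula; this is then proved by a direct combinatorial propagation of ``v-good'' points through $(H_n,E)$ using the extension property and the fact that $(H_n,E)$ and its complement have diameter~$2$, with no canonization at all. Your first three paragraphs are correct: preservation of $\neq$ does reduce injectivity to injectivity of the unary sections, and your ``engine'' is sound --- for a function canonical over the \emph{unexpanded} ordered Henson graph, the type conditions $g({=},E)={=}$ or $g({=},N)={=}$ force all second-argument sections to be constant by connectedness of $E$ and of $N$ (both of diameter~$2$), so a canonical function depending on both arguments is injective.

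The gap is in your last paragraph, and it is exactly where you locate ``the technical heart.'' After canonizing with the constants $a,a',b,c,d,d'$ pinned, you restrict to an orbit $O$ of $\Aut(H_n,E,\prec,c_1,\dots,c_r)$ that induces a copy of $(H_n,E,\prec)$. But the witnesses of essentiality all lie among the constants, hence \emph{outside} $O$, so nothing guarantees that $g$ restricted to $O^2$ still depends on both arguments; it may perfectly well satisfy $g({=},E)=g({=},N)={=}$ on $O$ and be essentially unary there while remaining essential globally near the constants. Your proposed ``transfer'' of injectivity from $O$ to $H_n$ via embeddings into $O$ only works if injectivity on $O^2$ has already been established, which is precisely what fails in this case; and the alternative assertion that $g$ ``already generates a constant-free canonical function retaining the dependence on both arguments'' is the whole content of the lemma in disguise and is not justified. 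To close the argument one would have to analyze the behaviour of $g$ between $O$ and the remaining orbits (including the singleton orbits of the constants) and show how to propagate the essentiality witnesses back into $O$ --- this is a substantial case analysis, and it is the step that the paper's v-good propagation argument is designed to replace.
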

\begin{proof}
Let $\Delta$ be the structure with domain $H_n$ and whose relations are those preserved by $\{f\}\cup\Aut(H_n,E)$; 
in particular, $E$, $N$, and $\neq$ are relations of $\Delta$. It is sufficient to show that $\Pol(\Delta)$ contains a binary injection (see Section~\ref{subsect:ua}).

We follow the strategy of the proof of \cite[Theorem 38]{RandomMinOps}. 
By \cite[Lemma 42]{RandomMinOps} it is enough to show that for all primitive positive formulas $\phi$ over $\Delta$ we have that whenever $\phi\wedge x\neq y$ and $\phi\wedge s\neq t$ are satisfiable in $\Delta$, then the formula $\phi\wedge x\neq y\wedge s\neq t$ is also satisfiable in $\Delta$. 
Still following the proof of \cite[Proposition 38]{RandomMinOps} it is enough to show the following claim.

\textbf{Claim.} Given two $4$-tuples $a = (x, y, z, z)$ and $b = (p, p, q, r)$ in $H_n^4$ such that $x\neq y$ and $q\neq r$, there exist $4$-tuples $a'$ and $b'$ such that $\tp(a)=\tp(a')$ and $\tp(b)=\tp(b')$ in $(H_n,E)$ and such that $f(a', b')$ is a $4$-tuple whose first two coordinates are different and whose last two coordinates are different. 

\textit{Proof of Claim.} We may assume that $x\neq z$ and $p\neq q$. 
We may also assume that $f$ itself is not a binary injection. 

In the following, we say that a point $(u,v)\in H_n^2$ is \emph{good} if $f(u,v)\neq f(u,w)$ for all $v\neq w$. 
Assume without loss of generality that there exist $u_1\neq u_2, v\in H_n$ such that $f(u_1, v)= f(u_2, v)$.
In particular, as $f$ preserves $\neq$, the points $(u_1, v)$ and $(u_2, v)$ are good. 
First fix any values $z',q'$ such that $(z',q')$ is good. 
We may assume that for any $x', y', p'\in H_n$ with $\tp(x', y', z')=\tp(x, y, z)$ and $\tp(p', q')= \tp(p, q)$ we have $f(x', p')= f(y', p')$, otherwise the tuples $a'= (x', y', z', z')$ and $b'= (p', p', q', r')$ are appropriate with any $r'\in H_n$ with $\tp(p', q', r')= \tp(p, q, r)$. 
Hence, as $f$ preserves $\neq$, all the points $(x',p')$ with $\tp(x', z') = \tp(x, z)$ and $\tp(p', q') = \tp(p, q)$ are good. 
So we obtained that whenever the point $(s,t)$ is good, and $s_0,t_0\in H_n$ are such that $\tp(s, s_0) = \tp(x, z)$ and $\tp(t, t_0) = \tp(p, q)$, then $(s_0, t_0)$ is also good, or otherwise we are done. 
We show that whatever the types $Q_1= \tp(x, z)$ and $Q_2= \tp(p, q)$ are, we can reach any point $(s_4, t_4)$ in $H_n^2$ from a given good point $(s_0, t_0)$ by at most four such steps. 
To see this, note that $Q_1$ and $Q_2$ are different from $=$ by assumption. 
Now let $s_1, s_2, s_3, t_1, t_2, t_3$ be such that 
\begin{itemize}
\item $s_0, s_1, s_2, s_3, s_4$ are pairwise different except that $s_0=s_4$ is possible, and
\item $t_0, t_1, t_2, t_3, t_4$ are pairwise different except that $t_0=t_4$ is possible, and
\item $(s_0, s_1), (s_1, s_2), (s_2, s_3), (s_3, s_4)\in Q_1$ and all other pairs $(s_i, s_j)$ are in $N$ except that $s_0=s_4$ is possible, and
\item $(t_0, t_1), (t_1, t_2), (t_2, t_3), (t_3, t_4)\in Q_2$ and all other pairs $(t_i, t_j)$ are in $N$ except that $t_0=t_4$ is possible.
\end{itemize}
These rules are not in contradiction with the extension property of $(H_n,E)$, thus such vertices exist, and we can propagate the good property from $(s_0, t_0)$ to $(s_4, t_4)$. 
Hence, every point is good, 
or we are done. 
If $f(u_1, v)= f(u_2, v)$ for all $u_1, u_2, v\in H_n$ with $\tp(u_1, u_2)= \tp(x, y)$, then $f$ would be essentially unary, since $(H_n, E)$ and its complement have diameter $2$. 
As $f$ is a binary essential function, we can choose $x', y', p'\in H_n$ such that $\tp(x', y')=\tp(x, y)$ and $f(x', p')\neq f(y', p')$. 
By choosing points $z', q', r'\in H_n$ such that $\tp(x', y', z') = \tp(x, y, z)$ and $\tp(p', q', r') = \tp(p, q, r)$ the tuples $a'= (x', y', z', z')$ and $b'= (p', p', q', r')$ are appropriate.
\end{proof}

The following lemma allows us to drop the restriction to binary essential functions.

\begin{lemma}\label{lem:essgen}
Let $k\geq 2$. Every essential function $f\colon H_n^k\rightarrow H_n$ that preserves $E$ and $N$ generates a binary injection.
\end{lemma}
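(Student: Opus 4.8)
The plan is to reduce to the binary case already settled in Lemma~\ref{lem:essbin}. Note first that every function generated by $f$ preserves $E$ and $N$: both $f$ and all members of $\Aut(H_n,E)$ preserve these relations, and preservation of a relation is inherited by compositions and by the topological closure. In particular, since $\neq$ is pp-definable from $E$ and $N$ by Lemma~\ref{lem:neq-pp}, every generated function also preserves $\neq$. Hence it suffices to show that $f$ generates a \emph{binary essential} function $g$: such a $g$ automatically preserves $E$ and $N$, so Lemma~\ref{lem:essbin} then upgrades it to a binary injection, which is therefore generated by $f$ as well.

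I would obtain such a binary essential function by successively lowering the arity while keeping essentiality, proceeding by induction on $k$, with base case $k=2$ being Lemma~\ref{lem:essbin} itself. The candidates for the lower-arity functions are the minors of $f$, that is, the functions obtained by identifying two of its variables (possibly after precomposing individual coordinates with automorphisms); these are generated by $f$ together with $\Aut(H_n,E)$. When $k\geq 4$ the purely combinatorial lemma of Swierczkowski guarantees that some identification of two variables of $f$ is again essential, so the arity can be dropped to $3$ without using anything about $H_n$. The decisive step is therefore to pass from a \emph{ternary} essential function to a binary essential one.

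The ternary case is the main obstacle, and it is here that the hypothesis that the domain is $H_n$ is genuinely needed: over the two-element set the ternary addition $x\oplus y\oplus z$ is essential yet all of its binary minors are projections, so no argument using identification of variables alone can work. The structural input I would use is that $f$ preserves $\neq$: consequently every binary minor of $f$ preserves $\neq$ as well, and a binary function preserving $\neq$ is either essential (in which case we are done) or injective in a single coordinate. The plan is then to assume, towards a contradiction, that all binary minors of $f$ --- including those obtained after precomposing coordinates with automorphisms --- are injective in a single coordinate, and to derive that $f$ cannot depend on two of its arguments. Concretely, one fixes witnesses $a,a'$ differing in exactly the first coordinate with $f(a)\neq f(a')$, and witnesses $b,b'$ differing in exactly the second coordinate with $f(b)\neq f(b')$, and uses the extension property of $H_n$ to realise, over the two active coordinates, elements of prescribed $2$-types in the remaining coordinate; tracking how these $2$-types are forced through the injective-in-one-coordinate minors collapses the dependence of $f$ on one of the two coordinates, the desired contradiction. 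The heart of the proof lies in this type bookkeeping via the extension property, which is exactly the ingredient absent in the Boolean counterexample; as an alternative one could first make $f$ canonical by Proposition~\ref{prop:canfct} and read off a binary essential behaviour from its hyperplane behaviours, but the minor-based route keeps the argument elementary.
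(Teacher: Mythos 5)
Your high-level plan coincides with the paper's: produce a binary essential operation generated by $f$ and feed it to Lemma~\ref{lem:essbin}. The paper disposes of the arity reduction by citing Lemma~40 of~\cite{RandomMinOps} and observing that its proof transfers verbatim to the Henson graphs; your attempt to supply that reduction yourself contains a genuine error. For $k\geq 4$ you appeal to ``the purely combinatorial lemma of \'{S}wierczkowski'' to claim that some identification of two variables of an essential operation is again essential, ``without using anything about $H_n$''. That is not what \'{S}wierczkowski's lemma says: it says that if \emph{all} identification minors of an at least $4$-ary operation are projections, then they are all the \emph{same} projection --- i.e.\ the operation is a projection or a semiprojection. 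Semiprojections are exactly the counterexamples to your claim: on any infinite set, the $4$-ary operation returning $x_1$ whenever two arguments coincide and $x_2$ on injective tuples is essential, yet every identification minor is the first projection. So the purely combinatorial arity drop does not exist; any repair must use the preservation of $E$ and $N$ (one can show that a proper semiprojection preserving $E$ and $N$ would force two distinct vertices of $H_n$ with identical neighbourhoods, contradicting the extension property --- but that argument, and its extension from projections to arbitrary essentially unary, injective minors, is precisely what your write-up omits).

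Second, the ternary step, which you yourself call the heart of the proof, is only announced, not carried out: ``tracking how these $2$-types are forced through the injective-in-one-coordinate minors collapses the dependence of $f$'' is a plan, not an argument, and it is exactly where the work lies --- compare the ``v-good'' propagation in the paper's proof of Lemma~\ref{lem:essbin}, which is the analogous bookkeeping one arity lower and occupies a full page. Your fallback of canonising $f$ via Proposition~\ref{prop:canfct} does not obviously help either, since canonisation only guarantees agreement with $f$ on the chosen constants and need not preserve essentiality. As it stands, the proposal identifies the correct reduction target but does not prove the lemma.
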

\begin{proof}
By~\cite[Lemma 40]{RandomMinOps}, every essential operation generates a binary essential operation over the random graph; the very same proof works for the Henson graphs. Therefore, we may assume that $f$ itself is binary. 
The assertion now follows from Lemma~\ref{lem:essbin}.
\end{proof}





\ignore{
\begin{lemma}
Let $f\colon (H_n, E)^r\rightarrow (H_n, E)$ be an essential function that preserves $E$, $N$, and $\neq$. 
Then $f$ generates $g_{\del}$ or $f$ generates $g_{\pr1}$ and $g_{\pr2}$.
\end{lemma}
\begin{proof}
By Lemma~\ref{lem:essgen} we may assume that $r=2$ and $f$ is injective.
By Proposition~\ref{prop:canfct} we have that $f$ generates a binary injection $g$ that is canonical as a $(H_n, E, <)^2\rightarrow (H_n, E, <)$ function. 
By copying the proof of \cite[Proposition 52]{RandomMinOps}, we obtain that all such functions generate a binary injection that maps the same types to two pairs if they only differ in the order. 
Thus, we may assume that $g$ is canonical as a $(H_n, E)^2\rightarrow (H_n, E)$ function. 
By the defining axioms of $(H_n, E)$ and the fact that $g$ preserves  $E$, $N$, and $\neq$, the function $g$ maps pairs of type $(=, N)$, $(N, =)$ and $(N,N)$ to $N$ and those of type $(E, E)$ to $E$. 
Moreover, $g$ cannot map any of the following pairs of types to $E$ simultaneously: $(=, E)$ and $(E, =)$; $(E, N)$ and $(N, E)$; $(=, E)$ and $(E, N)$; $(E, =)$ and $(N, E)$. 
This leaves us with seven possible behaviours listed in table~\ref{tbl:canbin}. 

However, $g_{\mini 1}\circ (\pi_2^2, g_{\mini 1})$ is equivalent to $g_{\pr 2}$ (and dually $g_{\mini 2}\circ (g_{\mini 2}, \pi_1^2)$ is equivalent to $g_{\pr 1}$), and $g_{\dom 1}\circ (g_{\dom 1}, \pi_1^2)$ and $g_{\dom 2}\circ (\pi_2^2, g_{\dom 2})$ are equivalent to  $g_{\del}$. 
Moreover, $g_{\pr 1}\circ (\pi_2^2, \pi_1^2)$ is equivalent to $g_{\pr 2}$ (and dually $g_{\pr 2}\circ (\pi_2^2, \pi_1^2)$ is equivalent to $g_{\pr 1}$). 
\end{proof}
}


\ignore{
\begin{lemma}\label{lem:min}
Let $f \colon H^2_n \to H_n$ be a function of behaviour $\mini$ that preserves $E$ and $N$. Then $f$ generates
a binary function of behaviour $\mini$ that is $N$-dominated. 
\end{lemma}
\begin{proof}
By Proposition~\ref{prop:canfct} we have that $f$ generates a binary injection $g$ that is canonical as a function 
$(H_n, E, \prec)^2\rightarrow (H_n, E, \prec)$; from the first (and stronger) statement, and since composing functions of a certain behaviour with automorphisms yields functions of the same behaviour, we conclude that $g$ can be assumed to have behaviour $\mini$. 

We now refer to the proof of Theorem~57 in~\cite{RandomMinOps}, observing that the calculus for canonical functions on the Henson graphs is the same as the calculus on the random graph. More precisely, when we compose canonical functions, then we obtain a canonical function, and its behaviour can be calculated by composing the respective behaviours of the composing functions; this is independent of whether the underlying graph is the random graph or a Henson graph. By that theorem, $g$ generates an
operation of behaviour $\mini$ which is $N$-dominated, or one of behaviour $\mini$ which is balanced. However, binary balanced injections that preserve $E$ do not exist over $(H_n,E)$, as they would introduce a copy of $K_n$.
To see this, let $x_1,\dots,x_{n-1} \in H_n$ be pairwise
adjacent vertices in $H_n$. Then 
$g(x_1,x_1),\dots,g(x_{n-1},x_{n-1})$ are pairwise adjacent since $g$ preserves $E$. For the same reason, $E(g(x_i,x_i),g(x_1,x_{n-1}))$ if $i$ is distinct from $1$ and from $n-1$. Finally, if $g$ is balanced then $E(g(x_1,x_1),g(x_1,x_{n-1}))$ and 
$E(g(x_{n-1},x_{n-1}),g(x_1,x_{n-1}))$. 
This is in contradiction to the assumption that $(H_n,E)$ is $K_n$-free. We conclude that $g$, and hence also $f$, generates an
operation of behaviour $\mini$ which is $N$-dominated.
\end{proof}

\begin{lemma}\label{lem:binary}
Let $k\geq 2$, and let $f\colon H_n^k\rightarrow H_n$ be an essential function that preserves $E$, $N$, and $\neq$. 
Then $f$ generates one of the following binary canonical injections:
\begin{itemize}
\item of behaviour $\mini$ and $N$-dominated
\item of behaviour $p_1$, balanced in the first, and $N$-dominated in the second argument.
\end{itemize}
\end{lemma}
\begin{proof}
By Lemma~\ref{lem:essgen} we may assume that $k=2$ and that $f$ is injective.
By Proposition~\ref{prop:canfct} we have that $f$ generates a binary injection $g$ that is canonical as a $(H_n, E, \prec)^2\rightarrow (H_n, E, \prec)$ function. We can now refer to Theorem 24 in \cite{BodPin-Schaefer} (itself from \cite{RandomMinOps}) since the calculus for canonical functions on the Henson graphs is the same as the calculus on the random graph, and conclude that $f$ generates a function of one of the following behaviours.

\begin{enumerate}
\item a canonical injection of behaviour $p_1$ which is balanced;
\item a canonical injection of behaviour \textrm{max} which is balanced;
\item a canonical injection of behaviour $p_1$ which is $E$-dominated;
\item a canonical injection of behaviour \textrm{max} which is $E$-dominated;
\item a canonical injection of behaviour $p_1$ which is balanced in the first and $E$-dominated in the second argument;
\item a canonical injection of behaviour $\mini$ which is balanced;
\item a canonical injection of behaviour $p_1$ which is $N$-dominated;
\item a canonical injection of behaviour $\mini$ which is $N$-dominated;
\item a canonical injection of behaviour $p_1$ which is balanced in the first and $N$-dominated in the second argument.
\end{enumerate}
For the $K_n$-free graphs, none of the behaviours $\maxi$, $E$-dominated, or balanced in both arguments can occur since they would introduce a $K_n$. So we are left with items $(7)$ and $(8)$, proving the lemma.
\end{proof}
	
We conclude this section by summarizing the results we have so far.

\begin{proposition}\label{prop:binary}
Let $\Gamma$ be a reduct of $(H_n, E)$, where $n\geq 3$. 
Then either 
\begin{itemize}
\item[(1)] $\Gamma$ is homomorphically equivalent to a reduct of $(H_n, =)$, or
\item[(2)] $\End(\Gamma)=\overline{\Aut(H_n,E)}$, and $\Gamma$ pp-defines $E$, $N$, and $\neq$.
\end{itemize}

In the latter case we have that either
\begin{itemize}
\item[(2a)] every function in $\Pol(\Gamma)$ is essentially unary, or
\item[(2b)] $\Pol(\Gamma)$ contains one of the two binary canonical injections of Lemma~\ref{lem:binary}.
\end{itemize}
\end{proposition}

Note that if item (1) holds then $\CSP(\Gamma)$ is either in $\PP$ or $\NP$-complete \cite{ecsps}, and if item (2a) holds then $\CSP(\Gamma)$ is $\NP$-complete (Theorem~10 in~\cite{Maximal}). In case (2b), when $\Pol(\Gamma)$ contains a binary canonical injection of behaviour $\mini$ which is hyperplanely $N$-constant then $\CSP(\Gamma)$ is in $\PP$, as we will show in Section~\ref{subsect:tractabilityOfMin}. It thus remains to further consider the second case of Lemma~\ref{lem:binary}. This is the content of the following section.
}

\subsection{The relation $H$}\label{subsect:H}

Let us investigate the case in which $\Gamma$, a reduct of $(H_n,E)$,  pp-defines $E$ and $N$ (and hence, $\neq$). The following relation characterizes the NP-complete cases in this situation.

\begin{definition}\label{defn:H}
We define a 6-ary relation 
$H(x_1,y_1,x_2,y_2,x_3,y_3)$ on $H_n$ by
\begin{align}
& \bigwedge_{i,j \in \{1,2,3\}, i \neq j, u \in \{x_i,y_i\}, v \in \{x_j,y_j\}} N(u,v) \nonumber \\
\wedge & \; \big((E(x_1,y_1) \wedge N(x_2,y_2) \wedge N(x_3,y_3))
\nonumber \\ 
& \vee \; (N(x_1,y_1) \wedge E(x_2,y_2) \wedge N(x_3,y_3)) \nonumber\\  
& \vee \; (N(x_1,y_1) \wedge N(x_2,y_2) \wedge E(x_3,y_3)) \big)\; . \nonumber
\end{align}
\end{definition}

Our goal for this section is to prove the following proposition, which states that if $\Gamma$ is a reduct of $(H_n, E)$ with $E$ and $N$ primitive positive definable in $\Gamma$, then either $H$ has a primitive positive definition in $\Gamma$, in which case $\Csp(\Gamma)$ is NP-complete, or $\Pol(\Gamma)$ has a certain canonical polymorphism which will imply tractability of the CSP. NP-completeness and tractability for those cases will be shown in Section~\ref{sect:CSP}.

\begin{proposition}\label{prop:higherArity}
Let $\Gamma$ be a reduct of $(H_n, E)$ with $E$ and $N$ primitive positive definable in $\Gamma$. Then at least one of the following holds:
    \begin{enumerate}
\item[(a)] There is a primitive positive definition of $H$ in $\Gamma$.
\item[(b)] $\Pol(\Gamma)$ contains a canonical binary injection of behaviour $\mini$.
\end{enumerate}
\end{proposition}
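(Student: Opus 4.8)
The plan is to prove the implication ``if (a) fails then (b) holds'' by producing the desired polymorphism from a single witness of non-preservation. First I would invoke Theorem~\ref{conf:thm:inv-pol}: if $H$ has no primitive positive definition in $\Gamma$, then some polymorphism of $\Gamma$ violates $H$. The relation $H$ is a union of exactly three orbits of $6$-tuples with respect to $\Aut(H_n,E)$, since in each of the three disjuncts of Definition~\ref{defn:H} all six entries are pairwise distinct and all their pairwise types are fixed, so each disjunct is a single orbit. Hence Lemma~\ref{lem:arity-reduction} lets me assume the violating polymorphism $f$ is ternary. Since $E$ and $N$ are primitive positive definable in $\Gamma$, the function $f$ preserves $E$, $N$, and, by Lemma~\ref{lem:neq-pp}, also $\neq$.

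The observation that drives the whole argument is that the violation of $H$ by $f$ is \emph{Boolean} in nature. Fix witnesses $a,b,c\in H$ with $f(a,b,c)\notin H$. Because $f$ preserves $N$, every cross-pair of $f(a,b,c)$ remains a non-edge, and because $f$ preserves $\neq$, each of the three within-pairs of $f(a,b,c)$ is a genuine pair (an $E$ or an $N$). Thus the only possible cause of failure is that the number of within-pairs of $f(a,b,c)$ forming an \emph{edge} is $0$, $2$, or $3$ rather than exactly $1$: identifying $E$ with $1$ and $N$ with $0$, the map $f$ fails to preserve the Boolean ``exactly-one-in-three'' relation on the three distinguished pairs.

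I would then pass to a canonical function. Applying Proposition~\ref{prop:canfct} with the (at most eighteen) entries of $a,b,c$ as the constants $c_1,\dots,c_r$ yields a function $g$ generated by $f$ that is canonical from $(H_n,E,\prec,c_1,\dots,c_r)$ to $(H_n,E,\prec)$ and agrees with $f$ on $\{c_1,\dots,c_r\}^3$; in particular $g$ again violates $H$ through the same witnesses. Now I would read off the behaviour of $g$ and run the case analysis on the finitely many possible canonical behaviours, importing the composition calculus for canonical functions from \cite{RandomMinOps,BodPin-Schaefer}, which applies verbatim since the type space of $(H_n,E,\prec)$ is a reduct of that of the random graph. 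The decisive simplification over the random graph is that $K_n$-freeness kills most behaviours outright: a ternary injection of behaviour $\maxi$, of behaviour \emph{majority}, or of behaviour \emph{minority} cannot exist on $(H_n,E)$, since in each case one covers the edges of a $K_n$ by three induced $K_n$-free graphs whose $g$-combination is complete. For \emph{minority}, for instance, taking $A=K_n$ minus two edges $p,q$, then $B=K_n$ minus $p$, and $C=K_n$ minus $q$, gives every pair an odd number of edges among the three graphs, hence a forbidden complete image.

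Once these behaviours are excluded, I would argue that the surviving essential behaviours of $g$ split into two kinds: those that are ``projection-like'' and preserve $H$ (a projection $\pi_i$ trivially maps $H^3$ into $H$), and those from which one extracts a binary function of behaviour $\mini$, the only remaining essential binary behaviour once $\maxi$ and the balanced behaviours are likewise ruled out by $K_n$-freeness. Since $g$ violates $H$ it cannot be of the first kind, so it generates an essential function of behaviour $\mini$; Lemma~\ref{lem:essgen} together with a final application of Proposition~\ref{prop:canfct} and the same $K_n$-freeness exclusions then upgrade this to a canonical binary injection of behaviour $\mini$ in $\Pol(\Gamma)$, yielding (b). I expect the main obstacle to be precisely the dividing line in this last step: the $\neq\neq\neq$-behaviour of $g$ may look like an honest projection while $g$ nevertheless violates $H$ through its \emph{hyperplane} behaviours, because the columns of the witnesses may contain coordinates with $=$; and analogously a ``patchwork'' function behaving as different projections on different types can fail to preserve $H$. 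Disentangling these hyperplane and patchwork cases, and showing that whenever no single projection is respected throughout one can squeeze a $\mini$ out of a hyperplane, is where the real combinatorial work lies and where the $K_n$-freeness exclusions must be applied with care.
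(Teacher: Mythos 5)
Your setup is sound and matches the paper's: reduce to a ternary violating polymorphism via Lemma~\ref{lem:arity-reduction} (three orbits of $6$-tuples), note that preservation of $E$, $N$ and $\neq$ forces the violation to occur in the three within-pairs, and observe that $K_n$-freeness forbids ternary injections of behaviour majority or minority and binary injections of behaviour $\maxi$ or balanced (your $K_n$-minus-edges construction for minority is correct, and is essentially the paper's Lemma~\ref{lem:nominority}, which in fact only needs two of the minority type conditions). But the proof has a genuine gap exactly where you flag it: the claim that ``the surviving essential behaviours split into projection-like ones (which preserve $H$) and ones yielding $\mini$'' is the entire content of the paper's Propositions~\ref{prop:getbinary} and~\ref{prop:nonProjGeneratesMin}, and you leave it as ``where the real combinatorial work lies.'' Moreover, the case analysis you propose to run cannot be carried out as stated: applying Proposition~\ref{prop:canfct} with the eighteen witness entries as constants yields a function canonical only over $(H_n,E,\prec,c_1,\dots,c_{18})$, whose type space of pairs is large; the behaviours majority, minority, $\mini$ and projection are defined relative to $(H_n,E)$ alone, so $g$ need not satisfy any single one of them globally, and there is no short list of ``finitely many possible canonical behaviours'' to enumerate. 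That mixing of behaviours across orbits is precisely the patchwork problem, and naming it is not the same as resolving it.

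For comparison, the paper resolves it in two stages. First (Proposition~\ref{prop:getbinary}), after restoring injectivity of the ternary $f$ by composing with a binary injection of behaviour $p_1$, it tests the specific binary specializations $(x,y)\mapsto f(x,y,\alpha(x))$, $(x,y)\mapsto f(x,\alpha(x),y)$, $(x,y)\mapsto f(y,x,\alpha(x))$; if all of these are of behaviour projection, then on the set $S$ of points non-adjacent to all six witness rows, $f$ satisfies type conditions determined by the witnesses, and transporting these to all of $H_n^3$ via self-embeddings into $S$ forces either two minority-type conditions or full majority --- both impossible by Lemma~\ref{lem:nominority} --- unless exactly one within-pair of $f(x^1,x^2,x^3)$ is an edge, contradicting the violation of $H$. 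Second (Proposition~\ref{prop:nonProjGeneratesMin}), a binary injection that is not of behaviour projection is made canonical over an expansion by constants, analyzed on the order-generic orbit $O$ via the $(\prec,\prec)$/$(\prec,\succ)$ behaviours, and a lengthy iterative local-closure construction propagates a single deviation from $p_1$ into a globally $\mini$-behaved injection. Your final sentence, ``Lemma~\ref{lem:essgen} together with a final application of Proposition~\ref{prop:canfct} and the same $K_n$-freeness exclusions then upgrade this,'' compresses all of that into a step that does not follow from the cited results; as written, the argument is a plan rather than a proof.
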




\ignore{
\subsubsection{First arity reduction: down to ternary} 

Let us assume that $\Gamma$ is a reduct of $(H_n, E)$ with $E$ and $N$ primitive positive definable in $\Gamma$ such that there is no primitive positive definition of $H$ in $\Gamma$. It follows of course that $\Gamma$ has a polymorphism that violates $H$. Our first goal is to prove that we can assume this polymorphism to be a ternary injection.


\begin{lemma}\label{lem:Ternary}
 Let $f\colon H_n^k\To H_n$ be an operation which preserves $E$ and $N$ and violates $H$. Then $f$ generates a ternary injection which shares the same properties.
\end{lemma}
\begin{proof}
Since the relation $H$ consists of three orbits of 6-tuples with respect to $(H_n, E)$, Lemma~\ref{lem:arity-reduction} shows that $f$ generates a ternary function that violates $H$,
    and hence we can assume that $f$ itself is at most ternary.
  Then $f$ must certainly be essential, since essentially unary operations that
    preserve $E$ and $N$ also preserve $H$. Applying Proposition~\ref{prop:binary}, we get that $f$ generates a binary canonical injection $g$ of type $\mini$ or $p_1$. In the case of $\mini$ we are done, since the ternary injection $g(g(x,y),z)$ violates $H$. Now consider the case where $g$ is of type $p_1$. Then $$h(x,y,z):= g(g(g(f(x,y,z),x),y),z)$$ 
 is injective and violates $H$ -- the latter can easily be verified combining the facts that $f$ violates $H$, $g$ is of type $p_1$, and all tuples in $H$ have pairwise distinct entries.
\end{proof}
}

\subsubsection{Arity reduction: down to binary} With the ultimate goal of producing a binary canonical polymorphism of behaviour $\mini$, we now show that under the assumption that $\Gamma$ has a polymorphism preserving $E$ and $N$ yet violating $H$, it also has a binary polymorphism which is not of behaviour projection. We begin by ruling out some ternary behaviours which do play a role on the random graph.
\begin{lemma}\label{lem:nominority}
On $(H_n, E)$, there are no ternary functions of behaviour majority or satisfying the type conditions $f(N,N,E)=f(E,N,N)=E$.
\end{lemma}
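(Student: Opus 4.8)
The plan is to prove both non-existence statements by contradiction, in each case extracting from the assumed function a copy of $K_n$ inside $(H_n,E)$, which is impossible by $K_n$-freeness. The unifying observation is that a ternary function with a prescribed behaviour is pinned down, on pairs of input triples whose coordinate-wise relations fall into a ``forced-$E$'' pattern, to return an edge. So I will exhibit $n$ triples $t^1,\dots,t^n\in H_n^3$ such that every pair $(t^i,t^j)$ realizes one of the patterns that the behaviour maps to $E$; then $f(t^1),\dots,f(t^n)$ are pairwise adjacent, contradicting $K_n$-freeness.

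To build such triples I will choose, for each coordinate $c\in\{1,2,3\}$, a finite $K_n$-free graph $G_c$ on the common vertex set $\{1,\dots,n\}$, and realize it inside the $c$-th copy of $H_n$ using that $(H_n,E)$ is universal for the class of $K_n$-free graphs. Embedding a graph with distinct vertices guarantees that non-adjacent vertices are related by $N$ rather than by $=$, so all three coordinates are automatically distinct and, in particular, the prefix ${\neq}{\neq}{\neq}$ required by the majority behaviour is satisfied on every pair. Setting $t^i:=(a^i,b^i,c^i)$, where $a$, $b$, $c$ realize $G_1$, $G_2$, $G_3$ respectively in the three independent copies of $H_n$, the coordinate-$c$ relation between $t^i$ and $t^j$ is $E$ if $\{i,j\}$ is an edge of $G_c$ and $N$ otherwise. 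Thus everything reduces to choosing $K_n$-free graphs $G_1,G_2,G_3$ so that every pair $\{i,j\}$ yields a forced-$E$ pattern.

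For the majority behaviour the forced-$E$ patterns are exactly those with at least two $E$'s, so I need every pair to be an edge in at least two of the $G_c$, equivalently a non-edge in at most one. Taking $G_c:=K_n$ minus a single edge $e_c$, with $e_1,e_2,e_3$ pairwise distinct (possible since $\binom{n}{2}\geq 3$ for $n\geq 3$), each $G_c$ has clique number $n-1$ and so is $K_n$-free, and every pair is a non-edge of at most one $G_c$. Hence every pair realizes a pattern with at least two $E$'s and all coordinates distinct, forcing $E(f(t^i),f(t^j))$ and producing the forbidden $K_n$. For the type conditions $f(N,N,E)=f(E,N,N)=E$ the forced-$E$ patterns are precisely $\NNE$ and $\ENN$; both have $N$ in the middle coordinate and complementary entries in coordinates $1$ and $3$. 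So I take $G_2$ to be the empty graph (an independent set, trivially $K_n$-free), $G_1$ any graph on $\{1,\dots,n\}$ that is neither complete nor empty — say a single edge — and $G_3:=\overline{G_1}$, its complement in $K_n$. For $n\geq 3$ neither $G_1$ nor $G_3$ is the complete graph, so both are $K_n$-free and embed in $H_n$. Then coordinate $2$ is $N$ on every pair while exactly one of coordinates $1,3$ is $E$, so each pair realizes $\ENN$ or $\NNE$, again forcing $E(f(t^i),f(t^j))$ and the forbidden $K_n$.

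The only delicate point — which I expect to be the main, if minor, obstacle — is the bookkeeping ensuring that the asserted relations hold simultaneously between all pairs and, in particular, that non-edges are realized by $N$ and not by equality; this is handled uniformly by embedding graphs on distinct vertices into the three independent copies of $H_n$ via universality, and by checking the tight case $n=3$, where the edge budget $\binom{n}{2}=3$ exactly supplies three distinct edges and where a single-edge graph on three vertices is indeed neither empty nor complete.
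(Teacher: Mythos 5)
Your proof is correct and takes essentially the same approach as the paper's: in each case one exhibits $n$ triples in $H_n^3$ whose pairwise coordinate types all fall into the forced-$E$ patterns of the given behaviour, so that their images under $f$ induce a copy of $K_n$, contradicting $K_n$-freeness. The concrete witnesses differ slightly (for the majority case the paper uses the three cyclic shifts of one triple together with diagonal points $(x_i,x_i,x_i)$, rather than your three copies of $K_n$ minus an edge), but the argument is the same.
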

\begin{proof}
These could introduce a $K_n$ in the $K_n$-free graph $(H_n,E)$, in the following fashions.

Suppose $f$ has behaviour majority, and choose $x_1,\ldots,x_{n-1}\in H_n$ inducing a copy of $K_{n-1}$, as well as a distinct $x_0\in H_n$ adjacent to $x_1$ and no other $x_i$. Then $\{f(x_0,x_1,x_2),f(x_1,x_2,x_0),f(x_2,x_0,x_1)\}$ induces $K_3$, and is adjacent to any element in $\{f(x_i,x_i,x_i)\;|\; 2< i\leq {n-1}\}$ since $E$ is preserved, so that altogether we obtain a copy of $K_n$.

Suppose now $f$ satisfies the type conditions $f(N,N,E)=f(E,N,N)=E$, and choose elements $x_1,\ldots,x_{n-1}\in H_n^3$ such that $\NNE(x_i,x_j)$ holds for distinct $1\leq i,j\leq {n-1}$. Pick furthermore $x_0\in H_{n}^3$ with $\ENN(x_0,x_i)$ for all $1\leq i \leq n-1$. Then $\{f(x_0),\ldots,f(x_{n-1})\}$ induces a $K_n$.
\end{proof}

\begin{proposition}\label{prop:getbinary}
	Let $f\colon H_n^k\To H_n$ be an operation that preserves $E$ and $N$ and violates $H$. Then $f$ generates a binary injection which is not of behaviour projection.
\end{proposition}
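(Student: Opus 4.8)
The plan is to reduce the arity of $f$ step by step, using the tools already established, until I reach a binary injection that is not of behaviour projection. Since $H$ consists of exactly three orbits of $6$-tuples with respect to $\Aut(H_n,E)$, Lemma~\ref{lem:arity-reduction} lets me first assume that $f$ is ternary. Indeed, if $\Pol(\Gamma)$ contains any operation violating $H$, it contains a ternary one, and this operation is essential because essentially unary functions preserving $E$ and $N$ automatically preserve $H$. By Lemma~\ref{lem:essgen}, $f$ then generates a binary injection; the task is to show it generates such an injection which is \emph{not} merely a projection in behaviour.

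The key idea is to exploit the fact that $f$ violates $H$ while preserving $E$ and $N$. I would apply Proposition~\ref{prop:canfct} to $f$ (after composing suitably so as to fix the witnessing tuples as constants) to obtain a canonical ternary injection $g$ over $(H_n,E,\prec,c_1,\ldots,c_r)$ which still violates $H$ and agrees with $f$ on the relevant arguments. Reusing the composition calculus for canonical functions over the random graph --- valid here since $(H_n,E)$ has a smaller type space, exactly the meta-argument emphasised in the introduction --- I would determine the possible ternary behaviours of $g$. The crucial point is that the behaviours \emph{majority} and the type condition $f(N,N,E)=f(E,N,N)=E$ are \emph{excluded} by Lemma~\ref{lem:nominority} because $(H_n,E)$ is $K_n$-free. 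This $K_n$-freeness is precisely the \emph{a posteriori} pruning mentioned in the introduction, and it removes exactly the behaviours that would otherwise block the argument.

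From the surviving ternary behaviours I would then extract a binary function by identifying or fixing one of the three variables. The heart of the matter is that if \emph{every} binary function generated by $g$ were of behaviour projection, then $g$ itself would be forced to preserve $H$: a behaviour built entirely from projections on each pair of coordinates respects the orbit structure of $H$, since all tuples in $H$ have pairwise distinct entries and $H$ is invariant under projections together with the edge/non-edge data. This would contradict the assumption that $g$ (hence $f$) violates $H$. So some binary function generated by $g$ must fail to be of behaviour projection, and by Lemma~\ref{lem:essbin} it can be taken injective.

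The main obstacle I anticipate is the careful bookkeeping in the case analysis of ternary behaviours: after ruling out the $K_n$-introducing behaviours, I must verify that among the remaining possibilities, the assumption that all binary restrictions are projections genuinely forces preservation of $H$. This requires checking each of the three defining disjuncts of $H$ against the behaviour of $g$ on the three pairs of coordinates, using that the off-diagonal pairs all lie in $N$. The delicate part is handling the mixed behaviours (those that are projection-like in some coordinates but dominated or $\mini$-like in others) and confirming that only a genuinely non-projection binary behaviour can witness the violation of $H$; this is where the interplay between the $E$/$N$ preservation, injectivity, and the specific three-orbit structure of $H$ must be exploited most carefully.
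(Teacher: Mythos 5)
Your setup is right and matches the paper: the arity reduction to a ternary essential operation via Lemma~\ref{lem:arity-reduction}, the use of Lemma~\ref{lem:essgen} to get a binary injection, and the role of Lemma~\ref{lem:nominority} as the $K_n$-freeness pruning are all exactly the ingredients the paper uses. But the heart of your argument rests on the claim that if every binary function generated by the ternary operation is of behaviour projection, then the operation preserves $H$, justified by saying that ``a behaviour built entirely from projections on each pair of coordinates respects the orbit structure of $H$.'' This is not a valid inference, and it is precisely the gap that the paper's proof is designed to close. The behaviour of a ternary function on triples of pairs whose coordinates are pairwise distinct (the types $\ENN$, $\NEN$, $\NNE$, etc., which are the only ones relevant to $H$) is \emph{not} determined by the behaviours of its binary specializations: for instance, a function sending all of $\ENN,\NEN,\NNE$ to $N$ (zero edges in the image, hence violating $H$) is \emph{a priori} compatible with every binary specialization being a projection. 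Ruling this out requires the additional argument that such behaviour would force the majority conditions $f(\NEE)=f(\ENE)=f(\EEN)=E$, which Lemma~\ref{lem:nominority} forbids --- and deriving those complementary conditions from the projection hypothesis is exactly the non-trivial content of the paper's case analysis.

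Concretely, the paper does \emph{not} canonize the ternary function (the behaviour taxonomy of a ternary canonical function over $(H_n,E,\prec,c_1,\ldots,c_r)$ with constants and order would be an enormous case analysis, which your appeal to the ``composition calculus'' does not supply). Instead, it first makes $f$ itself injective by composing with the generated binary injection of behaviour $p_1$, then fixes witnesses $x^1,x^2,x^3\in H$ with $f(x^1,x^2,x^3)\nin H$, introduces the set $S$ of triples $N$-related to all six witness rows, and shows via explicit automorphisms that \emph{either} some function $(x,y)\mapsto f(x,y,\alpha(y))$ already fails to be of behaviour projection, \emph{or} $f$ satisfies type conditions on $S$ tied to the edges/non-edges among $(f(x_1),f(x_2))$, $(f(x_3),f(x_4))$, $(f(x_5),f(x_6))$. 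Self-embeddings into $S$ then globalize these conditions, and the trichotomy on the number of $E$'s among those three image pairs (two or more gives two minority conditions, zero gives majority, both killed by Lemma~\ref{lem:nominority}; exactly one contradicts $f(x^1,x^2,x^3)\nin H$) finishes the proof. You correctly identify this as ``the delicate part,'' but the justification you offer for it is wrong in substance, so as written the proposal does not prove the proposition.
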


\begin{proof}
Since $H$ consists of three orbits of $6$-tuples in $(H_n,E)$, we may assume that $f$ is ternary, by Lemma~\ref{lem:arity-reduction}. Moreover, since $f$ preserves $E$ and $N$, it can only violate $H$ if it is essential. Thus, by Lemma~\ref{lem:essgen}, $f$ generates a binary injection $g$. If $g$ is not of behaviour projection, then we have proved the proposition. Otherwise, assume without loss of generality that it is of behaviour $p_1$. Consider the ternary function  $g(g(g(f(x,y,z),x),y),z)$. This function is injective, since $g$ is. Moreover, it violates $H$: if $x^1,x^2,x^3 \in H$ are so that $t:=f(x^1,x^2,x^3) \notin H$, then $t$ has pairwise distinct entries since $f$ preserves $\neq$. Hence, because $g$ is of behaviour $p_1$, $t':=g(t,x^1)$ has the same type as $t$, and so do $t'':=g(t',x^2)$ and $t''':=g(t'',x^3)$, proving the claim. By substituting $f$ by this function, we can therefore in the following assume that $f$ is itself injective.

We now prove the proposition by showing that a function of the form $(x,y)\mapsto f(x,y,\alpha(x))$, or $(x,y)\mapsto f(x,\alpha(x),y)$, or $(x,y)\mapsto f(y,x,\alpha(x))$, where $\alpha\in\Aut(H_n, E)$, is not of type projection.

Fix $x^1,x^2,x^3 \in H$ such that $f(x^1,x^2,x^3) \notin H$.
In the following, we will write $x_i := (x^1_i,x^2_i,x^3_i)$ for $1\leq i\leq 6$. So $(f(x_1),\dots,f(x_6)) \notin H$. If there exists $\alpha\in\Aut(H_n, E)$ such that $\alpha(x^i) = x^j$ for $1\leq i \neq j \leq 3$,
then our claim follows: for example, if $i=1$ and $j=3$, then the function $(x,y)\mapsto f(x,y,\alpha(x))$ violates $H$, and hence cannot be of behaviour projection.

We assume henceforth that there is no such automorphism $\alpha$. 
In this situation, by permuting arguments of $f$ if necessary, we can  
assume without loss of generality that 
\begin{align*}
	\ENN(x_1,x_2),\, \NEN(x_3,x_4),\,\text{and } \NNE(x_5,x_6).
\end{align*}
We set $$S := \{ y \in H_n^3 \; | \; \NNN(x_i,y) \text{ for all } 1\leq i \leq 6 \} \; .$$
Consider the binary relations $Q_1Q_2Q_3$ on $H_n^3$, where $Q_i\in\{E,N\}$ for $1\leq i\leq 3$.
We show that either our claim above proving the proposition holds, or for each such relation $Q_1Q_2Q_3$, whether $E(f(u),f(v))$ or $N(f(u),f(v))$ holds for $u,v \in S$ with $Q_1Q_2Q_3(u,v)$ does not depend on $u,v$; that is, whenever $u,v,u',v'\in S$ satisfy $Q_1Q_2Q_3(u,v)$ and $Q_1Q_2Q_3(u',v')$, then $E(f(u),f(v))$ if and only if $E(f(u'),f(v'))$. Note that this is another way of saying that $f$ satisfies some type conditions on $S$. We go through all possibilities of $Q_1Q_2Q_3$. 
\begin{enumerate}
\item[(1)] $Q_1Q_2Q_3=\ENN$. Let $\alpha \in \Aut(H_n, E)$ be such that $(x^2_1,x^2_2,u_2,v_2)$ is mapped
to $(x^3_1,x^3_2,u_3,v_3)$; such an automorphism exists since 
$$
\NNN(x_1, u), \NNN(x_1, v),
\NNN(x_2, u), \NNN(x_2, v)
$$ 
hold, and since $(x^2_1,x^2_2)$ has the same type as
$(x^3_1,x^3_2)$, and $(u_2,v_2)$ has the same type as $(u_3,v_3)$.
We are done if the operation $g$ defined by $g(x,y):=f(x,y,\alpha(y))$ is not of type projection. Otherwise, $E(g(u_1,u_2),g(v_1,v_2))$ iff $E(g(x_1^1,x_1^2),g(x_2^1,x_2^2))$. Combining this with the equations $(f(u),f(v))=(g(u_1,u_2),g(v_1,v_2))$ and 
$(g(x_1^1,x_1^2),g(x_2^1,x_2^2))=(f(x_1),f(x_2))$, we get that $E(f(u),f(v))$ iff $E(f(x_1),f(x_2))$, and so our claim holds for this case.
\item[(2)] $Q_1Q_2Q_3=\NEN$ or $Q_1Q_2Q_3=\NNE$. These cases are analogous to the previous case.
\item[(3)] $Q_1Q_2Q_3=\NEE$. Let $\alpha$ be defined as in the first case. 
Reasoning as above, if the operation defined by $f(x,y,\alpha(y))$ is of type projection, then one gets that $E(f(u),f(v))$ iff $N(f(x_1),f(x_2))$.
\item[(4)] $Q_1Q_2Q_3=\ENE$ or $Q_1Q_2Q_3=\EEN$. These cases are analogous to the previous case.
\item[(5)] $Q_1Q_2Q_3= \EEE$ or $Q_1Q_2Q_3=\NNN$. Trivial since $f$ preserves $E$ and $N$.
\end{enumerate}

\ignore{
We now make another case distinction, based on the fact that $(f(x_1),\dots,f(x_6)) \notin H$. 
\begin{enumerate}
\item[(1)] Suppose that $E(f(x_1),f(x_2)), E(f(x_3),f(x_4)),E(f(x_5),f(x_6))$. 
Then by the above $f$ is of behaviour minority on $S$, a contradiction since $S$ induces a copy of $(H_n,E)^3$ and because of Lemma~\ref{lem:nominority}.
\item[(2)] Suppose that $N(f(x_1),f(x_2)),N(f(x_3),f(x_4)),N(f(x_5),f(x_6))$. 
Then $f$ has behaviour  majority on $S$, again contradicting Lemma~\ref{lem:nominority}.
\item Suppose that $E(f(x_1),f(x_2)),E(f(x_3),f(x_4)),N(f(x_5),f(x_6))$. 
Let $e$ be an endomorphism of $(H_n, E,N)$ such that for all $w \in H_n$, all $1\leq j\leq 3$, and all $1\leq i \leq 6$ 
we have that $N(x_i^j,e(w))$. 
Then $(u_1,u_2,e(f(u_1,u_2,u_3))) \in S$ for all 
$(u_1,u_2,u_3) \in S$. 
Hence, by the above, 
the ternary operation defined by $f(x,y,e(f(x,y,z)))$ is of type behaviour on $S$, a contradiction.
\item Suppose that $E(f(x_1),f(x_2))$, $N(f(x_3),f(x_4))$, $E(f(x_5),f(x_6))$, or 
$N(f(x_1),$ $f(x_2))$, $E(f(x_3),f(x_4))$, $E(f(x_5),f(x_6))$. 
These cases are analogous to the previous case.
\end{enumerate}
Each of the cases leads to a contradiction, hence proving the proposition.
}
Now we show that $f$ actually cannot satisfy the type conditions above on $S$. First note that setting  $h(x,y,z):=f(e^1(x),e^2(y),e^3(z))$ for self-embeddings $e_1,e_2,e_3$ of $(H_n,E)$ such that $(e_1,e_2,e_3)(u)\in S$ for all $u\in H_n^3$, we obtain a function $h$ which satisfies the same type conditions everywhere; such embeddings exist since by its definition, the projection of $S$ onto any coordinate has an induced copy of $(H_n,E)$. Then, if $\{(f(x_1),f(x_2)),(f(x_3),f(x_4)),(f(x_5),f(x_6))\}$ has $E$ twice or more, by~(1) and~(2) we get that $h$ satisfies two type conditions from the minority behaviour, say $h(N,N,E)=E$ and $h(E,N,N)=E$, contradicting Lemma~\ref{lem:nominority}. If $\{(f(x_1),f(x_2)),(f(x_3),f(x_4)),(f(x_5),f(x_6))\}$  has $E$ no times, then by~(3) and~(4) $h$ is of behaviour majority, again contradicting Lemma~\ref{lem:nominority}. Thus, the set must have precisely one $E$, contradicting $f(x^1,x^2,x^3)\nin H$.
\end{proof}

\subsubsection{Producing min}\label{sect:producingmin} By Proposition~\ref{prop:getbinary}, it remains to show the following to obtain a proof of Proposition~\ref{prop:higherArity}.

\begin{proposition}\label{prop:nonProjGeneratesMin}
    Let $f \colon H_n^2 \rightarrow H_n$ be a binary injection preserving $E$ and $N$ that is not of behaviour projection. Then $f$ generates a binary canonical injection of behaviour $\mini$.
\end{proposition}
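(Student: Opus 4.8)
The plan is to reduce $f$ to a canonical function via Ramsey theory and then to invoke the classification of binary canonical injections already established over the random graph, discarding the behaviours that cannot occur on a $K_n$-free graph. First I would apply Proposition~\ref{prop:canfct} (with no constants) to obtain, in the closure of $\{\alpha\circ f\circ(\beta_1,\beta_2)\}$, a binary injection $g$ that is canonical as a function $(H_n,E,\prec)^2\To(H_n,E,\prec)$; since $g$ is generated by $f$ it still preserves $E$, $N$, and $\neq$. As in the proof of \cite[Proposition~52]{RandomMinOps}, one may further generate from $g$ a binary injection that is oblivious to the order, i.e.\ canonical as a map $(H_n,E)^2\To(H_n,E)$. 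Crucially, the calculus governing how behaviours of canonical functions compose is identical on $(H_n,E)$ and on the random graph, so the list of behaviours such a function can generate is exactly the one from \cite[Theorem~24]{BodPin-Schaefer} (itself from \cite{RandomMinOps}).

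Next I would prune that list using $K_n$-freeness. Exactly as in Lemma~\ref{lem:nominority} and in the clique arguments used for the unary case, the behaviours $\maxi$, $E$-dominated, and balanced each force an induced copy of $K_n$ in the image of a suitable independent-plus-clique configuration, and hence cannot be realised by a function preserving $E$ on $(H_n,E)$. What survives on the $\NEQNEQ$-part is therefore only $\mini$, $p_1$, or $p_2$ (each necessarily $N$-dominated on the parts where one coordinate is $=$). If the canonical function produced is already of behaviour $\mini$, we are done; so the remaining task is the case where every canonical function we extract is a projection, say of behaviour $p_1$ after possibly swapping the two arguments.

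The heart of the argument is then to convert the projection into $\mini$ using the hypothesis that $f$ itself is \emph{not} of behaviour projection. Unravelling Definition~\ref{defn:behaviours_binary}, failure of $p_1$ and of $p_2$ means that $f$ maps some $\NEQNEQ$-pair of type $\EN$ or $\NE$ to the ``wrong'' relation; concretely one obtains witnessing pairs on which $f$ sends an edge-carrying coordinate pattern to a non-edge. I would feed these witnesses into Proposition~\ref{prop:canfct} as constants $c_1,\ldots,c_r$, producing a canonical function (over the expanded structure) that realises $N$ on a generic $\EN$-pair and on a generic $\NE$-pair simultaneously; composing this with the $p_1$-function $g$ and with automorphisms to symmetrise the two coordinates yields a function that is canonical as $(H_n,E)^2\To(H_n,E)$ and maps both $\EN$ and $\NE$ to $N$, while still mapping $\EE$ to $E$ and $\NN$ to $N$ — that is, a function of behaviour $\mini$.

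The step I expect to be the main obstacle is precisely this last one: guaranteeing that the deviation witnessing non-projectivity survives canonicalization and can be combined \emph{across both coordinates} rather than collapsing back to a projection, since projections compose only to projections (note that $\mini$ is genuinely the meet $p_1\wedge p_2$ and cannot be obtained from $p_1$ and $p_2$ by substitution alone). This bookkeeping is the analogue of the composition calculus in \cite[Theorem~57]{RandomMinOps}; reassuringly, the $K_n$-free setting actually makes it easier than over the random graph, because the troublesome $\maxi$ branch — which there coexists with $\mini$ — is absent from the outset.
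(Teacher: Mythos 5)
Your overall strategy matches the paper's: canonicalize via Ramsey theory, prune the random-graph list of binary canonical behaviours using $K_n$-freeness so that only $\mini$, $p_1$, $p_2$ survive on the $\NEQNEQ$-part, and then use the hypothesis that $f$ is not of behaviour projection to upgrade a projection to $\mini$. However, the step you yourself flag as ``the main obstacle'' is exactly where the real work lies, and your proposal does not supply it. After introducing the witnesses of non-projectivity as constants $c_1,\ldots,c_m$ and canonicalizing, the resulting function $t$ is canonical only over $(H_n,E,\prec,c_1,\ldots,c_m)$; its deviation from projection behaviour lives on pairs involving the constants, not on ``generic'' $\EN$- or $\NE$-pairs. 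Asserting that one can compose $t$ with the $p_1$-function $g$ and with automorphisms to ``symmetrise'' and obtain a function mapping both $\EN$ and $\NE$ to $N$ is not an argument: as you note yourself, substitution of projections into projections yields only projections, so the min-like deviation must be \emph{propagated} to all pairs, not composed away. The paper does this by a lengthy explicit construction: it works inside the orbit $O$ of points unrelated to and below all constants (which induces a copy of $(H_n,E,\prec)$), shows via clique arguments that $t$ cannot behave like $p_2$ between $O^2$ and outside points, isolates a single pair $(u,v)$ with $\EN(u,v)$ and $N(t(u),t(v))$, and then builds, for each target pair $q_0$, a self-map of the plane $H_n^2$ that sends $q_0$ onto that special configuration while preserving the types of all other pairs; iterating over an enumeration of all pairs and passing to a limit by local closure produces the binary injection of behaviour $\mini$. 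None of this bookkeeping is present in your proposal.

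A secondary gap: your claim that failure of both $p_1$ and $p_2$ yields ``witnessing pairs on which $f$ sends an edge-carrying coordinate pattern to a non-edge'' is not automatic. A non-canonical injection could, for instance, map all $\EN$-pairs to $E$ and only \emph{some} $\NE$-pairs to $E$; this is neither $p_1$ nor $p_2$, yet every witness sends an edge-carrying pattern to an edge. Ruling such configurations out is again done in the paper by clique arguments (picking $q_1,\ldots,q_{n-1}\in O^2$ forming a clique in one coordinate and an independent set in the other, so that a $p_2$-like deviation would produce a $K_n$), and these arguments depend on having the single function $t$ canonical over the expansion by constants rather than two separately canonicalized functions as in your set-up.
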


In the remainder of this section we will prove this proposition by a Ramsey theoretic analysis of $f$, which requires the following definitions and facts from~\cite{RandomMinOps} concerning behaviours with respect to the homogeneous expansion of the graphs $(H_n,E)$ by the total order $\prec$ from Section~\ref{sect:Ramsey}. At this point, it might be appropriate to remark that canonicity of functions on $H_n$, and even the notion of behaviour, does depend on which underlying structure we have in mind, in particular, whether or not we consider the order $\prec$ (which we almost managed to ignore so far).
Let us define the following behaviours for functions from $(H_n, E,\prec)^2$ to $(H_n, E)$; we write $\succ$ for the relation $\{(a,b) \; | \; b \prec a\}$. 

\begin{definition}
    Let $f \colon H_n^2 \rightarrow H_n$ be injective. If for all $u,v\in H_n^2$ with $u_1\prec v_1$ and $u_2\prec v_2$ 
    \begin{itemize}
        \item $E(f(u),f(v))$ if and only if $\EE(u,v)$, then \emph{$f$ behaves like $\mini$ on input $(\prec,\prec)$}.
        \item $E(f(u),f(v))$ if and only if $E(u_1,v_1)$, then \emph{$f$ behaves like $p_1$ on input $(\prec,\prec)$}.
        \item $E(f(u),f(v))$ if and only if $E(u_2,v_2)$, then \emph{$f$ behaves like $p_2$ on input $(\prec,\prec)$}.
    \end{itemize}
    Analogously, we define behaviours on input $(\prec,\succ)$ using pairs $u,v\in H_n^2$ with $u_1 \prec v_1$ and $u_2\succ v_2$.
\end{definition}

\begin{proposition}\label{prop:binaryBehaviourOnInputBLaBla}
Let $f \colon H_n^2 \rightarrow H_n$ be an injection which is canonical as a function from $(H_n, E,\prec)^2$ to $(H_n, E,\prec)$ and suppose $f$ preserves $E$ and $N$. Then it behaves like $\mini$, $p_1$ or $p_2$ on input $(\prec ,\prec )$ (and similarly on input $(\prec ,\succ)$).
\end{proposition}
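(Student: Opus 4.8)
The plan is to use canonicity to reduce the behaviour of $f$ on input $(\prec,\prec)$ to a single Boolean choice, and then to eliminate the one bad choice using $K_n$-freeness. First I would observe that, since $f$ is canonical from $(H_n,E,\prec)^2$ to $(H_n,E,\prec)$, the type of $(f(u),f(v))$ depends only on the type of $(u,v)$. Restricting to pairs $u,v$ with $u_1\prec v_1$ and $u_2\prec v_2$, the only data left in the type of $(u,v)$ are the two coordinate types, each of which (the order being fixed to $\prec$) is either $E$ or $N$. Thus the $E/N$-part of the output is governed by a map $b\colon\{E,N\}^2\to\{E,N\}$, the output never being $=$ since $f$ is injective and $u\neq v$. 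Preservation of $E$ and $N$ forces $b(E,E)=E$ and $b(N,N)=N$, so the behaviour on input $(\prec,\prec)$ is determined by the pair $(b(E,N),b(N,E))$, leaving exactly four possibilities: $(N,N)$ is $\mini$, $(E,N)$ is $p_1$, $(N,E)$ is $p_2$, and $(E,E)$ is the disjunctive behaviour in which $E(f(u),f(v))$ holds whenever $E(u_1,v_1)$ or $E(u_2,v_2)$.

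The crux is to rule out the disjunctive case, which I would do by building a copy of $K_n$ in the image. Fix two $K_n$-free graphs $G_1,G_2$ on vertex set $\{1,\dots,n\}$ whose edge sets together cover all $\binom{n}{2}$ pairs; concretely, let $G_1$ be $K_n$ with the single edge $\{1,2\}$ deleted and let $G_2$ consist of exactly the edge $\{1,2\}$. Both are $K_n$-free for $n\geq 3$, one being missing an edge and the other being a single edge. By universality of $(H_n,E,\prec)$ for finite ordered $K_n$-free graphs, I can realize $G_1$ as vertices $w^1_1\prec\cdots\prec w^1_n$ and $G_2$ as vertices $w^2_1\prec\cdots\prec w^2_n$ in $H_n$, and set $w_i:=(w^1_i,w^2_i)$. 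For each $i<j$ the pair $(w_i,w_j)$ lies in input $(\prec,\prec)$, and $\{i,j\}$ is an edge of $G_1$ or of $G_2$, i.e. $E(w^1_i,w^1_j)$ or $E(w^2_i,w^2_j)$ holds; the disjunctive behaviour then yields $E(f(w_i),f(w_j))$. Hence $f(w_1),\dots,f(w_n)$ are pairwise adjacent and induce a copy of $K_n$ in $(H_n,E)$, contradicting $K_n$-freeness. This leaves only $\mini$, $p_1$, and $p_2$ on input $(\prec,\prec)$.

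Finally, the input $(\prec,\succ)$ is handled by the identical argument, the only modification being that the second coordinates are realized in decreasing order $w^2_1\succ\cdots\succ w^2_n$, which is equally possible since the edge structure and the linear order of each coordinate can be prescribed independently. I expect the only genuine obstacle to be the bookkeeping in the first paragraph, namely verifying that canonicity really does collapse the behaviour to the single map $b$ and that injectivity excludes the value $=$; once this is in place, the elimination of the disjunctive behaviour is a short and robust $K_n$-freeness argument, essentially the same one used elsewhere in this section to exclude balanced binary injections preserving $E$.
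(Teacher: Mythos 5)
Your proposal is correct and follows essentially the same route as the paper, whose proof is only a one-sentence sketch (``enumerate all possible types of pairs and recall that $(H_n,E)$ is $K_n$-free''): you enumerate the four behaviours compatible with canonicity, injectivity, and preservation of $E$ and $N$, and eliminate the disjunctive (``max'') one by realizing two $K_n$-free graphs covering all pairs in the two coordinates to produce a forbidden $K_n$ in the image. The details you supply — in particular the explicit choice of $G_1=K_n$ minus an edge and $G_2$ a single edge, and the remark that the $(\prec,\succ)$ case only requires reversing the order in the second coordinate — are exactly the bookkeeping the paper leaves implicit, and they check out.
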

\begin{proof}
    By definition of the term canonical; one only needs to enumerate all possible types of pairs $(u,v)$, where $u,v \in H_n^2$, and recall that $(H_n, E)$ does not contain any clique of size $n$, which makes some behaviours impossible to be realized by $f$.
\end{proof}

\begin{definition}
    If an injection $f \colon H_n^2 \rightarrow H_n$ behaves like $X$ on input $(\prec ,\prec )$ and like $Y$ on input $(\prec ,\succ )$, where $X,Y\in\{\mini, p_1,p_2\}$, then we say that $f$ is of \emph{behaviour $X / Y$}.
\end{definition}




In the following lemmas, we
show that every injective canonical binary function which
behaves differently on input $(\prec ,\prec )$ 
and on input $(\prec,\succ)$ generates a function 
which behaves the same way on both inputs, allowing us to ignore the order again.

\begin{lemma}\label{lem:mixtyp:minp}
    Suppose that $f \colon H_n^2 \rightarrow H_n$ is injective and canonical from $(H_n, E,\prec)^2$ to $(H_n, E,\prec)$, and suppose that it is of type $\mini / p_i$ or of type $p_i / \mini$, where $i\in\{1,2\}$. Then $f$ generates a binary injection of type $\mini$.
\end{lemma}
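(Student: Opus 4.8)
The plan is to reduce to a single representative mixed type and then collapse it to $\mini$ by composing $f$ with order-reversing automorphisms of $(H_n,E)$, mirroring the corresponding argument for the random graph in~\cite{RandomMinOps}. First I would observe that the four cases are symmetric. Swapping the two arguments of $f$, i.e.\ passing to the injection $(x,y)\mapsto f(y,x)$, which is again generated by $f$, interchanges the roles of $p_1$ and $p_2$ and so turns a function of type $\mini/p_1$ into one of type $\mini/p_2$; and pre-composing with an automorphism that reverses $\prec$ on the relevant finite sets interchanges the inputs $(\prec,\prec)$ and $(\prec,\succ)$, so that $\mini/p_i$ and $p_i/\mini$ are interchanged as well. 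Hence it suffices to treat a function $f$ of type $\mini/p_1$ and to produce from it a canonical binary injection behaving like $\mini$ on \emph{both} inputs $(\prec,\prec)$ and $(\prec,\succ)$.

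The key point is that $\Aut(H_n,E)$ does not preserve $\prec$: for every finite $S\subseteq H_n$ the reverse-$\prec$-ordered copy of the subgraph induced on $S$ is again a finite $K_n$-free ordered graph and hence embeds into $(H_n,E)$, so by homogeneity of $(H_n,E)$ there is an automorphism reversing $\prec$ on $S$. This lets $f$ ``see'' its $\mini$-behaviour on the opposite order as well, since feeding the second argument through such an order-reversing automorphism exchanges, on the relevant pairs, the inputs on which $f$ acts like $\mini$ and like $p_1$. I would then follow the composition used in~\cite{RandomMinOps}, forming a bounded iterate of $f$ pre-composed with suitable members of $\Aut(H_n,E)$ whose effect, computed through the behaviour-composition calculus, is to intersect the edge information of the two coordinates on the opposite-ordered pairs while leaving the already-correct same-ordered pairs untouched. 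Since this calculus depends only on the (binary) type space and is therefore literally the same for $H_n$ as for the random graph, the computation transfers verbatim. Passing to a canonical representative via Proposition~\ref{prop:canfct} (together with the usual compactness argument, as in the proof of Proposition~\ref{prop:redendo}) and invoking Proposition~\ref{prop:binaryBehaviourOnInputBLaBla} to restrict the possibilities, one obtains a canonical binary injection whose behaviour on each input lies in $\{\mini,p_1,p_2\}$; the construction pins it down to $\mini$ on both, and the $K_n$-freeness of $(H_n,E)$ anyway excludes the $\maxi$- and balanced-type behaviours.

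The main obstacle is bookkeeping the order behaviour of $f$ on its own outputs: naive attempts — such as using the companion of type $\mini/p_2$ obtained by swapping arguments and then combining the two results to force $E$ in both coordinates — fail because we do not a priori control the $\prec$-order of the output pairs, and it is exactly this order that determines whether the outer application of $f$ behaves like $\mini$ or like a projection. Overcoming this is precisely what the calculus of~\cite{RandomMinOps} accomplishes, by tracking the order behaviour alongside the edge behaviour through each composition; appealing to it here is legitimate because the smaller type space of $H_n$ makes the present calculus a restriction of the one already established there.
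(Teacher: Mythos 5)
Your proposal is correct and follows essentially the same route as the paper: the paper's entire proof is a one-line appeal to the corresponding argument for the random graph in~\cite{BodPin-Schaefer} (ultimately~\cite{RandomMinOps}), justified exactly by your observation that the calculus for composing behaviours of canonical functions depends only on the binary type space and hence transfers verbatim to $(H_n,E)$. Your additional remarks on symmetry between the four cases and on order-reversing automorphisms merely unpack what that cited argument does.
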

\begin{proof}
Since the calculus for behaviours on the Henson graphs is the same as that on the random graph, the same proof as in \cite{BodPin-Schaefer} works.
\end{proof}

\begin{lemma}\label{lem:p1p2impossible}
No binary injection $f \colon H_n^2 \rightarrow H_n$ can have behaviour $p_1/p_2$.
\end{lemma}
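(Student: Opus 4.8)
The plan is to assume, for contradiction, that $f$ is a binary injection of behaviour $p_1/p_2$, and then to produce $n$ points of $H_n^2$ whose $f$-images induce a copy of $K_n$; since the range of $f$ lies in $H_n$, which is $K_n$-free, this yields the desired contradiction. It is useful to first restate the behaviour operationally. Given $u,v\in H_n^2$ with $u_1\prec v_1$ and all coordinates distinct, the edge relation between $f(u)$ and $f(v)$ is read off the \emph{first} coordinate (that is, $E(f(u),f(v))$ iff $E(u_1,v_1)$) when $u_2\prec v_2$, and off the \emph{second} coordinate (that is, $E(f(u),f(v))$ iff $E(u_2,v_2)$) when $u_2\succ v_2$. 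Thus the image edge between two points is governed by whichever coordinate one likes, depending on whether the two coordinate-orders agree or disagree on that pair. The idea is to choose the points so that every pair becomes an image edge, routing the pairs on which the orders agree into the first coordinate and the (few) pairs on which they disagree into the second.

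Concretely, I would fix first coordinates $a_1\prec a_2\prec\cdots\prec a_n$ in $H_n$ realizing the ordered graph that is $K_n$ minus the single edge $\{1,2\}$ (so that $E(a_i,a_j)$ holds for all $i<j$ except $\{i,j\}=\{1,2\}$), and second coordinates realizing the order $b_2\prec b_1\prec b_3\prec\cdots\prec b_n$ together with the single edge $E(b_1,b_2)$ and no other edges. Both of these finite ordered graphs are $K_n$-free for $n\geq 3$ --- a graph on $n$ vertices fails to be $K_n$-free only when it is complete, and here the first graph omits the edge $\{1,2\}$ while the second contains only that edge --- so by the universality of the random ordered $K_n$-free graph $(H_n,E,\prec)$ each can be realized with the prescribed order, independently in the two coordinates. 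Setting $w^i:=(a_i,b_i)$ provides the required $n$ points, with all coordinates distinct so that the strict-inequality hypotheses of the behaviour apply.

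It then remains to verify that $f(w^1),\dots,f(w^n)$ is a clique. For a pair $i<j$ we have $a_i\prec a_j$, so $w^i$ and $w^j$ play the roles of $u$ and $v$ above; the pair is \emph{reversed} (orders disagree) exactly when $b_i\succ b_j$, which by the chosen second order happens only for $\{i,j\}=\{1,2\}$. For that single pair the image edge is governed by the second coordinate, where $E(b_1,b_2)$ holds; for every other pair the orders agree and the image edge is governed by the first coordinate, where the edge is present since the first-coordinate graph omits only $\{1,2\}$. Hence all $\binom{n}{2}$ image pairs are edges, producing a $K_n$ inside $H_n$, a contradiction. I expect the only genuinely delicate point to be the bookkeeping that the \emph{reversed} pairs and the \emph{consistent} pairs are \emph{each} $K_n$-free as coordinate graphs: this is precisely what the $p_1/p_2$ switching buys, and it is easy to secure here because a single inversion already separates the two classes while leaving both strictly below completeness.
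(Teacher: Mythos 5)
Your proof is correct and takes the same route as the paper, which disposes of this lemma in a single sentence ("such a behaviour would introduce a $K_n$ in a $K_n$-free graph"); you have simply supplied the explicit witness configuration, and your two coordinate graphs ($K_n$ minus one edge, resp.\ a single edge, with the order on the second coordinate inverting exactly the missing pair) are both realizable in $(H_n,E,\prec)$ and do produce a $K_n$ in the image. The only quibble is the closing remark: what must be $K_n$-free is each coordinate graph as a whole, not the "reversed" and "consistent" pair classes separately, but your actual verification earlier handles exactly the right thing.
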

\begin{proof}
Such a behaviour would introduce a $K_n$ in a $K_n$-free graph.
\end{proof}

Having ruled out some behaviours without constants, we finally introduce constants to the language to prove Proposition~\ref{prop:nonProjGeneratesMin}. 
\begin{proof}[of Proposition~\ref{prop:nonProjGeneratesMin}]
   Fix a finite set $C:=\{c_1,\ldots, c_m\}\subseteq H_n$ on which the
fact that $f$ is not of behaviour projection is witnessed. Invoking Proposition~\ref{prop:canfct}, we
may henceforth assume that $f$ is canonical as a function from $(H_n,
E,\prec, c_1,\ldots,c_m)^2$ to $(H_n, E,\prec)$. We are going to show that $f$ generates a binary injection $g$ of behaviour $\mini$. Then another application of Proposition~\ref{prop:canfct} to $g$ yields a canonical function $g'$; this function is still of behaviour $\mini$  because any function of the form $\alpha(g(\beta(x),\gamma(y))$ is of type $\mini$, for automorphisms $\alpha,\beta,\gamma$ of $(H_n,E)$, and $g'$ is generated from operations of this type by  topological closure.

To obtain $g$, consider in the structure ${(H_n, E,\prec, c_1,\ldots,c_m)}$ the orbit
$$
O:=\{a\in H_n\; |\; N(a,c_i) \text{ and } a\prec c_i \text{ for all }
1\leq i\leq m\}.
$$
Then $O$ induces a structure isomorphic to $(H_n, E,\prec)$, as it
satisfies the extension property for totally ordered $K_n$-free
graphs: the same extensions can be realized in $O$ as in $(H_n,
E,\prec)$. Therefore, by
Lemma~\ref{prop:binaryBehaviourOnInputBLaBla}, $f$ has one of the
three mentioned behaviours on input $(\prec,\prec)$ and on input $(\prec,\succ)$. By Lemmas~\ref{lem:mixtyp:minp}
and~\ref{lem:p1p2impossible}, we may assume that
$f$ behaves like a projection on $O$, for any other combination of behaviours implies that it generates a binary injection of behaviour $\mini$. 

Assume without loss of generality that $f$ behaves like $p_1$ on $O$. Let $u\in O^2$ and $v\in (H_n\setminus \{c_1,\ldots, c_m\})^2$ satisfy
$\neq\neq(u,v)$; we claim that $f$ behaves like $p_1$ or like $\mini$
on $\{u,v\}$. Otherwise we must have $\NE(u,v)$ and $E(f(u),f(v))$. Pick  $q_1,\ldots, q_{n-1}\in O^2$ forming a
clique in the first coordinate, an independent set in the second
coordinate, and such that the type of $(q_i,v)$ equals the type of
$(u,v)$ in $(H_n,E,\prec,c_1,\ldots,c_n)$. Then by canonicity, the image of $\{q_1,\ldots,q_{n-1},v\}$
under $f$ forms a clique of size $n$, a contradiction.




Suppose next that there exist $u\in O^2$ and $v\in (H_n\setminus C)^2$ with $\neq\neq(u,v)$ such that $f$ does not behave like $p_1$ (and hence, by the above, behaves like $\mini$) on $\{u,v\}$. This means that $\EN(u,v)$ and $N(f(u),f(v))$. We use topological closure to show that $f$ generates a binary injection which behaves like $\mini$.
To this end, set $$S:=\{p\in H_n^2\;|\; \tp(p,v)=\tp(u,v) \text{ in } (H_n,E,\prec,c_1,\ldots,c_n)\}\subseteq O^2\; .$$
Now let $q_0\in H_n^2$ be arbitrary. Pick a self-embedding $e$ of $(H_n,E)$ whose range is contained in $O$. Then the function $r\colon H_n^2\To H_n^2$ defined by $(x,y)\mapsto (f(e(x),e(y)),f(e(y),e(x)))$ has the property that $\EN(p,q)$ implies $\EN(r(p),r(q))$ and $\NE(p,q)$ implies $\NE(r(p),r(q))$, for all $p,q\in H_n^2$, since $f$ behaves like $p_1$ on $O$. Moreover, since $f$ is injective, we have that $p\neq q$ implies $\NEQNEQ(r(p),r(q))$. By the latter property, there exist self-embeddings $e_1,e_2$ of $(H_n,E)$ such that for the function $r'\colon H_n^2\To H_n^2$ defined by $r':=(e_1,e_2)\circ r$ we have that $r'(q_0)=v$, that $r'(p)\in O^2$ for all $p\in H_n^2\setminus \{q_0\}$,  and that $r'(p)\in S$ for all $p\in H_n^2$ with $\EN(p,q_0)$.  Then the function $h\colon H_n^2\To H_n^2$ defined by $h(x,y):=(f(r'(x,y)),y)$ has the property that $\NN(h(p),h(q_0))$ holds for all $p\in H_n^2$ with $\EN(p,q_0)$, since $f$ behaves like $\mini$ between $S$ and $v$. Moreover, $\NE(h(p),h(q_0))$ holds for all $p\in H_n^2$ with $\NE(p,q_0)$, since $f$ behaves like $p_1$ or like $\mini$ between $O^2$ and $v$.  Finally, for any $p,p'\in H_n^2$ distinct from $q_0$ we have that $\EN(p,p')$ implies $\EN(h(p),h(p'))$ and $\NE(p,p')$ implies $\NE(h(p),h(p'))$, since $f$ behaves like $p_1$ on $O$. Similarly, one can construct a function $h'$ on $H_n^2$ which preserves $\EN$ and $\NE$ between any $p,p'\in H_n^2$ distinct from $q_0$,  and such that $\NE(p,q_0)$ implies $\NN(h'(p),h'(q_0))$. 
Iterating such functions for different choices of $q_0$, we obtain functions $r_A\colon H_n^2\To H_n^2$ for every finite subset $A\subseteq H_n^2$ such that $\EN(p,p')$ or $\NE(p,p')$ implies $\NN(r_A(p),r_A(p'))$  for all $p,p'\in A$. By topological closure (cf.~Proposition~\ref{prop:redendo}), one then gets a function $r\colon H_n^2\To H_n^2$ which has this property everywhere, and then $f(r)$ is the desired binary injection of behaviour $\mini$.

So we assume henceforth that $f$ behaves like $p_1$ on $\{u,v\}$ for
all $u\in O^2$ and all $v\in (H_n\setminus C)^2$ with $\NEQNEQ(u,v)$. 
We
then claim that $f$ must behave like $p_1$ or like
$\mini$ on $\{u,v\}$ for all $u, v\in (H_n\setminus C)^2$ with $\NEQNEQ(u,v)$. Otherwise, we must have $\NE(u,v)$ and $E(f(u),f(v))$.  Pick  $q_1,\ldots, q_{n-2}\in O^2$
forming a clique in the first coordinate, an independent set in the
second coordinate, and adjacent to $u$ and $v$ in the first coordinate. Applying $f$ we get a clique of size $n$, a
contradiction.


If there exist $u,v\in (H_n\setminus C)^2$ with $\EN(u,v)$ and $N(f(u),f(v))$, then by precomposing $f$ with a self-embedding $e$ of $(H_n,E)$ whose range equals $H_n\setminus C$, we may moreover assume that $f$ behaves like $p_1$ or like $\mini$ on $\{u',v'\}$, for all $u',v'\in H_n^2$. A standard iterating argument, similar to the one above (or the one given in detail in the proof of Proposition~\ref{prop:redendo}), then shows that $f$ generates a binary injection $g$ of type
$\mini$.

We thus henceforth assume that $f$ behaves like $p_1$ on $\{u,v\}$ for
all $u,v\in (H_n\setminus C)^2$. We next claim that $f$ must behave like $p_1$ or like
$\mini$ on $\{u,v\}$ for all $u\in H_n^2$ and all $v\in (H_n\setminus C)^2$ with $\NEQNEQ(u,v)$. Otherwise, we must have $\NE(u,v)$ and $E(f(u),f(v))$.  Pick  $q_1,\ldots, q_{n-2}\in H_n^2$
forming a clique in the first coordinate, an independent set in the
second coordinate, adjacent to $u$ in both coordinates, and adjacent to $v$ in precisely the first coordinate. Applying $f$ we get a clique of size $n$, a
contradiction.

A similar argument as two paragraphs above now shows that if there exist $u,v\in H_n^2$ with $\EN(u,v)$ and $N(f(u),f(v))$, then $f$ generates a binary injection $g$ of type $\mini$.

It thus remains to consider the case where $f$ behaves like $p_1$ on $\{u,v\}$ whenever $u\in H_n^2$ and $v\in (H_n\setminus C)^2$, and where there exist $u,v\in H_n^2$ such that $\NE(u,v)$ and $E(f(u),f(v))$. Pick $q_1,\ldots, q_{n-2}\in (H_n\setminus C)^2$ which are adjacent to $u$ and $v$ in the first coordinate and not the second, which form a clique in the first coordinate, and which form an independent set in the second coordinate. The image of the set $\{u,v,q_1,\ldots,q_{n-2}\}$ under $f$ then is a clique of size $n$, so that this case cannot occur.

\end{proof}

\section{CSPs over Henson graphs}\label{sect:CSP}
\subsection{Hardness of $H$}\label{subsect:hardnessOfH}

We now show that any reduct of $(H_n,E)$ which has $H$ among its relations, and hence by Lemma~\ref{lem:pp-reduce} every reduct which pp-defines $H$, has an NP-hard CSP. We first show hardness directly by reduction from positive 1-in-3-SAT; then, we provide another proof via \emph{h1 clone homomorphisms} which gives further insight into the mathematical structure of such reducts, and draws connections to the general dichotomy conjecture for reducts of finitely bounded homogeneous structures.

\subsubsection{Reduction from positive 1-in-3-SAT} We start by showing hardness directly, which however does not tell us anything about the structure of the polymorphism clones of reducts which pp-define $H$.

\begin{proposition}\label{prop:Hhardnew}
$\Csp(H_n, H)$ is NP-hard.
\label{prop:new-henson-hardness}
\end{proposition}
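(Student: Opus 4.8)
The plan is to reduce from positive 1-in-3-SAT, which is NP-complete by Schaefer's theorem~\cite{Schaefer}. The key observation is that $H$ is exactly a 1-in-3 constraint in disguise: if we read a pair $(x_i,y_i)$ of vertices as a Boolean variable that is \emph{true} when $E(x_i,y_i)$ and \emph{false} when $N(x_i,y_i)$, then $H(x_1,y_1,x_2,y_2,x_3,y_3)$ asserts precisely that exactly one of the three pairs is an edge, subject to the side condition that the three pairs sit in pairwise non-adjacent blocks (and in particular that all six vertices are distinct). This is the feature that I would exploit.

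Concretely, given an instance of positive 1-in-3-SAT with variables $v_1,\dots,v_k$ and clauses $C_1,\dots,C_m$, where we may assume that the three literals of each clause are distinct variables, I would introduce for each variable $v_i$ a pair of CSP-variables $p_i,q_i$, and for each clause $C_j=(v_a,v_b,v_c)$ the constraint $H(p_a,q_a,p_b,q_b,p_c,q_c)$. Since $H$ is invariant under permuting its three blocks and under swapping $x_i$ with $y_i$, the positions chosen are immaterial, and the transformation is plainly computable in polynomial time. It remains to verify both directions of the reduction.

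For soundness, suppose the 1-in-3-SAT instance has a satisfying assignment. I would build a finite graph $G$ on the vertex set $\{p_i,q_i \mid 1\leq i\leq k\}$ by placing an edge between $p_i$ and $q_i$ exactly when $v_i$ is true, and declaring every remaining pair a non-edge. Then $G$ is a disjoint union of single edges and isolated vertices, hence $K_n$-free for every $n\geq 3$, so $G$ embeds into $(H_n,E)$ by universality. The induced assignment satisfies each constraint $H(p_a,q_a,p_b,q_b,p_c,q_c)$: the three variables of a clause are distinct, so the six vertices are distinct and all cross-block pairs are non-edges, and exactly one of $v_a,v_b,v_c$ is true, giving exactly one edge block. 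For completeness, suppose the CSP instance is satisfiable via some assignment of the variables into $H_n$. By definition of $H$ the two entries of each block are always distinct, since $N$ incorporates distinctness, so each pair $(p_i,q_i)$ is either an edge or a non-edge; setting $v_i$ true iff $E(p_i,q_i)$ yields a well-defined Boolean assignment, because the same vertex pair represents $v_i$ across all clauses. Each constraint forces exactly one of its blocks to be an edge, i.e.\ exactly one of the clause's variables to be true, so the 1-in-3-SAT instance is satisfied.

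I expect no serious obstacle here, since the reduction is essentially a syntactic dictionary between $H$ and the 1-in-3 predicate. The only point that genuinely needs attention is that the cross-block non-adjacency demanded by $H$ might a priori conflict with reusing a variable in many clauses; this is precisely what is resolved by the all-non-edges-across-distinct-variables graph of the soundness direction, which simultaneously meets every accumulated non-adjacency requirement while staying $K_n$-free. The assumption that clause variables are distinct is essential (a repeated variable in a clause would demand $N(p_a,p_a)$ and make the instance trivially unsatisfiable), but positive 1-in-3-SAT restricted to clauses with three distinct variables remains NP-complete, so this costs nothing.
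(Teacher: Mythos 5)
Your reduction is exactly the one the paper uses: encode each Boolean variable as a pair of vertices, each positive 1-in-3 clause as an $H$-constraint on the three pairs, and realize a satisfying assignment as a $K_n$-free graph consisting of single edges and isolated vertices. The paper states this in three lines and leaves the verification to the reader; your write-up supplies the same verification (including the harmless caveat about repeated variables within a clause), so the two proofs coincide in substance.
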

\begin{proof}
We reduce positive 1-in-3-SAT to $\Csp(H_n, H)$. Each variable $v$ in an instance $\phi$ of the former becomes two variables $v,v'$ in the corresponding instance $\psi$ of the latter. Each clause $(u,v,w)$ from $\phi$ becomes a tuple $H(u,u',v,v',w,w')$ in $\psi$. It is easy to see that $\phi$ is a yes-instance of 1-in-3-SAT if and only if $\psi$ is a yes-instance of  $\Csp(H_n, H)$, and the result follows.
\end{proof}

\subsubsection{Clone homomorphisms}

We will now show another way to prove NP-hardness of $\CSP(H_n,H)$ via a structural property of $\Pol(H_n,H)$, using general results from~\cite{wonderland} (a strengthening of the structural hardness proof in~\cite{Topo-Birk}). This will allow us to show that the dichotomy for the Henson graphs is in line with the dichotomy conjecture, for CSPs of reducts of finitely bounded homogeneous structures, from~\cite{wonderland} (and the earlier dichotomy conjecture for the same class, due to Bodirsky and Pinsker (cf.~\cite{BPP-projective-homomorphisms}), which has recently been proved equivalent~\cite{TwoDichotomyConjectures}.

\begin{defn}\label{defn:clonehomo}
Let $\Gamma$ be a structure. A \emph{projective clone homomorphism} of $\Gamma$ (or $\Pol(\Gamma)$) is a mapping from $\Pol(\Gamma)$ onto its projections which
\begin{itemize}
\item preserves arities;
\item fixes each projection;
\item preserves composition.
\end{itemize}
 A \emph{projective strong h1 clone homomorphism} of $\Gamma$ is a mapping as above, where the third condition is weakened to preservation of composition of any function in $\Pol(\Gamma)$ with projections only.
\end{defn}

Recall that $\Pol(\Gamma)$ is equipped with the topology of pointwise convergence, for any structure $\Gamma$.

\begin{thm}[from \cite{wonderland}]\label{thm:wonderland}
Let $\Gamma$ be a countable $\omega$-categorical structure in a finite relational language which has a uniformly continuous projective strong h1 clone homomorphism. Then $\CSP(\Gamma)$ is NP-hard.
\end{thm}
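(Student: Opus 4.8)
The plan is to reduce a fixed finite NP-hard template to $\CSP(\Gamma)$, exploiting that the codomain of the clone homomorphism is the clone of projections. Recall first that the abstract clone of projections $\mathscr P$ (on any set of size at least $2$) equals $\Pol(\mathbb T)$ for some finite relational structure $\mathbb T$ whose CSP is NP-complete; concretely one may take $\mathbb T$ to be the two-element structure carrying the positive $1$-in-$3$ relation $\{(1,0,0),(0,1,0),(0,0,1)\}$, since by Schaefer's dichotomy~\cite{Schaefer} its CSP is NP-complete and, by Post's lattice, the only operations preserving this relation are projections. Thus a projective strong h1 clone homomorphism of $\Gamma$ is, after composing with this identification, exactly a uniformly continuous \emph{minor-preserving} map $\xi\colon\Pol(\Gamma)\to\Pol(\mathbb T)$: it preserves arities, fixes projections, and respects composition with projections, which is precisely the requirement that $\xi$ send each minor $f(x_{\sigma(1)},\dots,x_{\sigma(n)})$ of an operation $f$ to the corresponding minor of $\xi(f)$.

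The heart of the argument is a topological Birkhoff-type correspondence: the existence of such a map forces $\Gamma$ to \emph{pp-construct} $\mathbb T$, that is, to pp-interpret a structure homomorphically equivalent to $\mathbb T$. This is where both hypotheses are used. By uniform continuity there is a finite substructure $F\subseteq\Gamma$ such that $\xi(f)$ depends only on the restriction $f\upharpoonright F^{n}$; by $\omega$-categoricity there are, for each arity, only finitely many orbits of tuples over the parameters $F$, so the infinitary datum $\xi$ is in fact governed by a finite combinatorial object. From this one reads off a pp-interpretation: the domain is a pp-definable subset of a fixed power $\Gamma^{d}$, and the coordinate map assigns to each admissible tuple the projection index prescribed by $\xi$. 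The key point is that the preimages of the relations of $\mathbb T$ under this coordinate map are invariant under $\Pol(\Gamma)$ — this invariance is exactly what the minor-preserving property of $\xi$ guarantees — and hence are primitive positive definable in $\Gamma$ by Theorem~\ref{conf:thm:inv-pol}. Homomorphic equivalence is invoked to pass to the model-complete core, where the reflection becomes a genuine interpretation.

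Finally I would transfer the complexity. A pp-construction of $\mathbb T$ in $\Gamma$ yields a polynomial-time (in fact log-space) many-one reduction from $\CSP(\mathbb T)$ to $\CSP(\Gamma)$: an instance is translated by replacing each variable with a block of $d$ variables ranging over $\Gamma$ and each constraint with the corresponding pp-formula supplied by the interpretation, which is sound and complete by Lemma~\ref{lem:pp-reduce} together with the correctness of the interpretation. Since $\CSP(\mathbb T)$ is NP-complete, $\CSP(\Gamma)$ is NP-hard, as claimed.

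I expect the genuine obstacle to lie in the second paragraph, namely producing an honest pp-interpretation — with finitely many relations and a first-order, indeed primitive positive, coordinate map — out of the merely abstract map $\xi$. For non-continuous clone homomorphisms this step simply fails, so the argument must crucially exploit uniform continuity in order to descend to a finite witness, and $\omega$-categoricity in order to keep the number of orbits, and hence of interpreting formulas, finite. Managing the bookkeeping of reflections and the passage to the model-complete core, while verifying that the pulled-back relations are genuinely invariant under \emph{all} of $\Pol(\Gamma)$, is the technically delicate part; everything else is a direct application of the Galois connection of Theorem~\ref{conf:thm:inv-pol} and the reduction lemma, Lemma~\ref{lem:pp-reduce}.
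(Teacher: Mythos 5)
The paper does not prove Theorem~\ref{thm:wonderland}; it is imported verbatim from \cite{wonderland}, so there is no internal argument to measure your proposal against. Judged against the proof in that cited source, your overall route is the right one: the clone of projections is, as an abstract clone, $\Pol(\mathbb{T})$ for $\mathbb{T}$ the positive 1-in-3-SAT template; a projective strong h1 clone homomorphism is exactly a uniformly continuous minor-preserving map $\xi\colon\Pol(\Gamma)\to\Pol(\mathbb{T})$; uniform continuity (finite witness set) together with $\omega$-categoricity (finitely many orbits) is what makes the datum finite; the intermediate conclusion is that $\Gamma$ pp-constructs $\mathbb{T}$; and pp-constructions yield polynomial-time reductions, so NP-hardness of $\CSP(\mathbb{T})$ transfers via Lemma~\ref{lem:pp-reduce} and \cite{Schaefer}. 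All of that is faithful to \cite{wonderland}.

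The gap is the one you yourself flag, and the justification you offer for it does not work as stated. You assert that the $\Pol(\Gamma)$-invariance of the preimages of the relations of $\mathbb{T}$ under your coordinate map ``is exactly what the minor-preserving property of $\xi$ guarantees.'' It is not: a minor-preserving map respects only composition with projections, not arbitrary composition, and this is precisely the difference between clone homomorphisms and strong h1 clone homomorphisms in Definition~\ref{defn:clonehomo}. If $\xi$ were a genuine clone homomorphism one could argue via invariant relations and Theorem~\ref{conf:thm:inv-pol} roughly as you suggest; for a merely minor-preserving $\xi$ the pulled-back relations need not be preserved by $\Pol(\Gamma)$, and indeed $\Gamma$ need not pp-interpret $\mathbb{T}$ at all --- only pp-construct it. The proof in \cite{wonderland} therefore routes through \emph{reflections}: from the finite witness set $F$ with $|F|=d$ one extracts a reflection of the $d$-th power of $\Pol(\Gamma)$ landing inside the projections, and a separate structural theorem converts ``$\Pol(\mathbb{T})$ contains a reflection of a finite pp-power of $\Pol(\Gamma)$'' into ``$\mathbb{T}$ is pp-constructible from $\Gamma$.'' The homomorphic-equivalence step lives in that reflection, not in a passage to the model-complete core as your second paragraph suggests. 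Your final paragraph (the reduction itself) is fine once the pp-construction is actually in hand, but as written the central step remains an assertion rather than a proof.
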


\begin{proposition}\label{prop:h-hard}
The structure $(H_n, H)$ has a uniformly continuous projective strong h1 clone homomorphism. Consequently,   $\Csp(H_n, H)$ is NP-hard.
\end{proposition}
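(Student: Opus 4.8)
The plan is to invoke Theorem~\ref{thm:wonderland}: it suffices to construct a uniformly continuous strong h1 clone homomorphism from $\Pol(H_n,H)$ onto its projections. The guiding observation is that $H$ is, up to the homogeneous ``padding'' by the cross-non-edges, nothing but the Boolean $1$-in-$3$ relation. As already recorded, $H$ splits into exactly three orbits $O_1,O_2,O_3$ of $6$-tuples, according to which of the three blocks $(x_i,y_i)$ carries the edge, and these three orbits correspond to the three tuples of $R_{1/3}:=\{(1,0,0),(0,1,0),(0,0,1)\}\subseteq\{0,1\}^3$ under the encoding ``edge $\mapsto 1$, non-edge $\mapsto 0$''. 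The classical fact underlying the NP-completeness of positive $1$-in-$3$-SAT is that the only operations on $\{0,1\}$ preserving $R_{1/3}$ are the projections; I would like to transport this rigidity from the Boolean structure to $\Pol(H_n,H)$.

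Concretely, for $f\in\Pol(H_n,H)$ of arity $k$ I would read off a projection as follows. Since $f$ preserves $H$, applying $f$ coordinatewise to representatives $t^1\in O_{v^1},\dots,t^k\in O_{v^k}$ (with $v^1,\dots,v^k\in\{1,2,3\}$) yields a tuple in $H$, hence in some orbit $O_w$. The first step is to show that $w$ depends only on the pattern $(v^1,\dots,v^k)$ and not on the chosen representatives, so that $f$ induces a well-defined operation $\bar f$ on the three orbits; the second step is to observe that $\bar f$, read through the above encoding, preserves $R_{1/3}$ and is therefore a projection $\pi_{c(f)}$. Setting $\xi(f):=\pi_{c(f)}^{(k)}$ then defines a map that fixes projections, preserves arities, and --- because taking a minor of $f$ induces the corresponding minor of $\bar f$ --- preserves composition with projections; uniform continuity is immediate since $c(f)$ is determined by the behaviour of $f$ on finitely many orbit representatives.

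For the two steps above I would argue via the Ramsey machinery together with $K_n$-freeness. To settle well-definedness and the projection property simultaneously, I would first use Proposition~\ref{prop:canfct} to pass to canonical functions generated by $f$, for which the induced orbit action is automatically well defined, and then classify the admissible canonical behaviours: the point is that any behaviour whose orbit action is \emph{not} a projection would force, on the three ``parallel'' copies of $(H_n,E)$ sitting inside an $H$-configuration, a majority- or minority-like pattern, which is excluded in the $K_n$-free graph exactly as in Lemma~\ref{lem:nominority}. The only surviving behaviours are the projection-like ones, pinning down $c(f)$.

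The main obstacle is precisely the passage from canonical functions back to arbitrary polymorphisms: a priori a non-canonical $f$ might send two representatives of the same orbit to different orbits, so that $\bar f$ is ill defined on $\Pol(H_n,H)$ itself rather than merely on a canonical core. I expect to resolve this either by showing directly that the $K_n$-freeness constraints already force the output orbit to be independent of the chosen representatives for \emph{every} $f$ (so that $\xi$ is a genuine clone homomorphism into the clone of projections), or --- more robustly --- by deriving $\xi$ from a pp-construction of the Boolean structure $(\{0,1\};R_{1/3})$ in $(H_n,H)$, which is implicit in the reduction of Proposition~\ref{prop:Hhardnew} (each Boolean variable encoded by a pair via $E$ versus $N$, each clause by an instance of $H$); such a pp-construction induces, on the full clone $\Pol(H_n,H)$, a uniformly continuous map that preserves arities and composition with projections and fixes projections, that is, exactly the required strong h1 clone homomorphism. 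Either way, Theorem~\ref{thm:wonderland} then yields NP-hardness of $\Csp(H_n,H)$.
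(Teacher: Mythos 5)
Your overall architecture is the same as the paper's: read off a projection from each polymorphism via its action on the three orbits of $H$, which encode the Boolean $1$-in-$3$ relation, and use the fact that only projections preserve that relation. You also correctly locate the crux: well-definedness of the induced orbit action for an \emph{arbitrary} (non-canonical) $f\in\Pol(H_n,H)$. But that is precisely the step you do not prove, and neither of your two proposed escape routes is carried out or, as sketched, would work. Route (i) points at the wrong mechanism: the paper's well-definedness argument does not use $K_n$-freeness or anything like Lemma~\ref{lem:nominority} at all. It uses the ``padding'' of $H$ by cross-non-edges that you yourself noticed: given two triples of representatives $(a^1,a^2,a^3)$ and $(b^1,b^2,b^3)$ of the same type on which a ternary $f$ allegedly produces different output types, one picks a third triple $(c^1,c^2,c^3)$ of the same type whose entries are non-adjacent to everything in sight; one of the two original outputs, say $f(a^1,a^2,a^3)$, must then differ from $f(c^1,c^2,c^3)$ on some block, and splicing that block of the $c^i$ into the $a^i$ yields tuples $d^i\in H$ (legal exactly because all cross-pairs are forced non-edges) with $f(d^1,d^2,d^3)$ having zero or two edge-blocks, hence $f(d^1,d^2,d^3)\notin H$ --- a contradiction. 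This hybrid-tuple argument is the whole content of the proof, and it is absent from your proposal; the canonicity/Ramsey detour you describe cannot substitute for it, since the h1 homomorphism must be defined on all of $\Pol(H_n,H)$, not on a canonical core.

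Route (ii) also has an unaddressed obstruction: a Karp reduction such as the one in Proposition~\ref{prop:Hhardnew} does not automatically yield a pp-construction, and the natural candidate (the second pp-power of $(H_n,H)$ with domain $H_n^2$ and relation $H$) is \emph{not} homomorphically equivalent to $(\{0,1\};R_{1/3})$: mapping $0$ and $1$ to fixed pairs sends the tuple $(1,0,0)$ to a $6$-tuple in which two blocks coincide, violating the cross-non-edge clauses of $H$. One would have to pp-define a relaxed version of $H$ without those clauses (which seems to require an xnor-type relation on pairs that is not obviously pp-definable), so this route needs genuine additional work. In short: right strategy, correct Boolean endgame, but the pivotal well-definedness lemma is missing, and the paper closes it with a short splicing argument that exploits the specific shape of $H$ rather than the $K_n$-freeness of the ambient graph.
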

\begin{proof}
Note that $H$ consists of three orbits of $6$-tuples with respect to $\Aut(H_n,E)$. Let $a^1,a^2,a^3\in H$ be representatives of those three orbits. By reshuffling the $a^i$ we may assume that $\ENN(a_1,a_2)$, $\NEN(a_3,a_4)$, $\NNE(a_5,a_6)$ (where $a_i$ denotes the $i$-th row of the matrix $(a^1,a^2,a^3)$, for $1\leq i\leq 6$).

We claim that whenever $f\in\Pol(H_n,H)$ is ternary, and $b^1,b^2,b^3\in H$ are so that $\tp(b^1,b^2,b^3)=\tp(a^1,a^2,a^3)$, then $\tp(f(b^1,b^2,b^3))=\tp(f(a^1,a^2,a^3))$ in $(H_n,E)$. To see this, let $c^1,c^2,c^3\in H$ be so that $\tp(c^1,c^2,c^3)=\tp(b^1,b^2,b^3)$, and such that no entry of any $c^i$ is adjacent to any component of any $b^j$ or $a^j$. Suppose that $f(b^1,b^2,b^3)$ and $f(a^1,a^2,a^3)$ do not have the same type, then one of them, say $f(a^1,a^2,a^3)$, does not have the same type as $f(c^1,c^2,c^3)$. Without loss of generality, this is witnessed on the first two components of the 6-tuples $f(c^1,c^2,c^3)$ and $f(a^1,a^2,a^3)$. For $1\leq i\leq 3$, consider the $6$-tuple $d^i:=(c_1^i,c_2^i,a_3^i,\ldots,a_6^i)$, i.e., in $a^i$ we replace the first two components by the components from $c^i$. Then $d^i\in H$, but $f(d^1,d^2,d^3)\nin H$, a contradiction.

Let $f\in\Pol(H_n,H)$. Then precisely one out of $(f(a_1),f(a_2))$, $(f(a_3),f(a_4))$, and $(f(a_5),f(a_6))$ is contained in $E$. If this is the case for the first pair, then it follows from the claim above that $f$ satisfies the three type conditions $f(E,N,N)=E$ and $f(N,E,N)=f(N,N,E)=N$; in the other two cases we obtain similar type conditions.

Let $\xi$ be the mapping which sends every ternary $f\in\Pol(H_n,H)$ to the ternary projection which is consistent with the type conditions satisfied by $f$ (in the case considered above, the projection onto the first coordinate). Then $\xi$ clearly preserves arities and projections. Moreover, let $f\in\Pol(H_n,H)$ and say without loss of generality that $f(E,N,N)=E$, so that $\xi(f)$ is the first projection. Then whenever $g_1,g_2,g_3\in\Pol(H_n,H)$ are ternary projections, we have $\xi(f(g_1,g_2,g_3))=g_1$; this is easy to verify by checking the behaviour of $f(g_1,g_2,g_3)$ on a suitable triple of the form $(E,N,N),(N,E,N)$, or $(N,N,E)$. Hence, $\xi$ satisfies the definition of a strong projective h1 clone homomorphism for ternary functions.  It has been observed (see e.g.~\cite{Topo}) that $\xi$ then uniquely extends to a strong projective h1 clone homomorphism of the entire clone $\Pol(H_n,H)$. 
Since the value of every $f$ under $\xi$ can be seen on any test matrix $(a^1,a^2,a^3)$ as above, we have that $\xi$ is uniformly continuous, and so is its extension (the latter follows from the proof in~\cite{Topo}).
\end{proof}

\ignore{
\begin{proof}
    The proof is a reduction from positive 1-in-3-3SAT (one of the hard
    problems in Schaefer's classification; also see~\cite{GareyJohnson}).
    Let $\Phi$ be an instance of positive 1-in-3-3SAT, that is, a
    set of clauses, each having three positive literals.
    We create from $\Phi$ an instance $\Psi$ of $\Csp(H_n, H)$ as follows.
    For each variable $x$ in $\Phi$ we have a pair $u_x,v_x$ of
    variables in $\Psi$. When $\{x,y,z\}$ is a clause in $\Phi$, then
    we add the conjunct $H(u_x,v_x,u_y,v_y,u_z,v_z)$ to $\Psi$. Finally, we existentially quantify all variables of the conjunction in order to obtain a sentence.
    Clearly, $\Psi$ can be computed from $\Phi$ in linear time.

    Suppose now that $\Phi$ is satisfiable, i.e., there exists a mapping $s$ from the variables of $\Phi$ to $\{0,1\}$ such that in each clause exactly one of the literals is set to $1$; we claim that $(H_n, H)$ satisfies $\Psi$. To show this, let $F$ be the graph
    whose vertices are the variables of $\Psi$, and that has an
    edge between $u_x$ and $v_x$ if $x$ is set to 1 under the mapping $s$, and that has no other edges. By universality of $(H_n, E)$ we may assume that $F$ is a subgraph of it, since clearly $F$ does not contain any cliques of size greater than two. It is then enough to show that $F$ satisfies the conjunction of $\Psi$ in order to show that $(H_n, H)$ satisfies $\Psi$.
    Indeed, let $H(u_x,v_x,u_y,v_y,u_z,v_z)$ be a clause from $\Psi$. By definition of $F$, the conjunction in the first line of the definition of $H$ is clearly
    satisfied; moreover, from the disjunction in the remaining lines
    of the definition of $H$ exactly one disjunct will be true,
    since in the corresponding clause $\{x,y,z\}$ of $\Phi$ exactly
    one of the values $s(x),s(y),s(z)$ equals $1$.
    This argument can easily be inverted to see that every
    solution to $\Psi$ can be used to define a solution to $\Phi$ (in which for a variable $x$ of $\Phi$ one sets $s(x)$ to $1$ iff in the solution to $\Psi$ there is an edge between $u_x$ and $v_x$).
\end{proof}
}
\subsection{Tractability of min}\label{subsect:tractabilityOfMin}

We now show that if a reduct $\Gamma$ of $(H_n,E)$ with finite relational signature has 
a polymorphism which is of behaviour $\mini$, 
then $\Csp(\Gamma)$ is in P.
We are going to apply Theorem~\ref{thm:maximal} below for the structure $\Delta := (H_n,E)$. 
In the theorem, $\hat \Delta$ denotes the
expansion of $\Delta$ by the inequality relation $\neq$ and 
by the complement $\hat R$
of each relation $R$ in $\Delta$.

\begin{theorem}[Proposition~14 in~\cite{Maximal}]\label{thm:maximal}
Let $\Delta$ be an $\omega$-categorical structure, 
and let $\Gamma$ be a reduct of $\Delta$. If $\Gamma$ has a polymorphism $e$ which is
an embedding of $\Delta^2$ into $\Delta$,
and if $\Csp(\hat \Delta)$ is in P, then $\Csp(\Gamma)$
is in P as well. 
\end{theorem}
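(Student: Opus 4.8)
The plan is to reduce $\Csp(\Gamma)$ to the satisfiability of a \emph{disjunctive} version of $\Csp(\hat\Delta)$, and then to solve the latter in polynomial time by a propagation algorithm whose correctness rests on the embedding $e$. First I would rewrite instances. Since $\Gamma$ is a reduct of the $\omega$-categorical structure $\Delta$, every relation $R$ of $\Gamma$ is first-order, and by homogeneity quantifier-free, definable in $\Delta$. The relations of $\hat\Delta$ -- namely each relation $S$ of $\Delta$, its complement $\hat S$, equality, and $\neq$ -- are closed under complementation, so the quantifier-free definition of $R$ can be put into conjunctive normal form in which every literal is an atomic $\hat\Delta$-formula. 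As $\Gamma$ has finite signature, this yields for each relation symbol $R$ a fixed CNF $\bigwedge_\ell C_\ell(\bar x)$ whose clauses $C_\ell$ are disjunctions of atomic $\hat\Delta$-formulas; substituting variables turns any instance $\phi$ of $\Csp(\Gamma)$ into an equivalent conjunction $\Phi$ of such clauses, of size linear in $\phi$. Since $\Gamma$ and $\hat\Delta$ share the domain, $\phi$ holds in $\Gamma$ if and only if $\Phi$ is satisfiable in $\hat\Delta$.

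The heart of the argument is the combination behaviour of $e$. Because $e$ is an embedding of $\Delta^2$ into $\Delta$, it preserves and reflects every relation of $\Delta$; equivalently, for each relation $S$ of $\Delta$ one has $e(\bar a,\bar b)\in S$ if and only if $\bar a\in S$ and $\bar b\in S$, whereas on the complement this reads $e(\bar a,\bar b)\in\hat S$ if and only if $\bar a\in\hat S$ or $\bar b\in\hat S$, and moreover $e$ is injective. Iterating $e$ I would produce, for each $p$, a $p$-ary injection $e_p$ generated by $e$ and hence lying in $\Pol(\Gamma)$, enjoying the same ``conjunctive on $S$, disjunctive on $\hat S$'' behaviour. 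Thus $e_p$ combines witnesses: applied coordinatewise to assignments $s_1,\dots,s_p$, it satisfies every atom over the signature of $\Delta$ that \emph{all} of them satisfy, and every complement atom ($\hat S$ or $\neq$) that \emph{at least one} of them satisfies.

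With this in hand I would run the following algorithm on $\Phi$: maintain the set $U$ of atoms coming from unit clauses; repeatedly delete from each clause every disjunct $D$ for which $U\wedge D$ is unsatisfiable in $\hat\Delta$ (decided by the assumed polynomial-time algorithm for $\Csp(\hat\Delta)$), promoting a clause to $U$ as soon as it becomes a unit and reporting unsatisfiability as soon as a clause becomes empty; finally accept if and only if $U$ is satisfiable in $\hat\Delta$. Each round deletes a disjunct, so only linearly many rounds and oracle calls occur, giving a polynomial running time. Soundness is immediate. For completeness one combines, via $e_p$, a solution of $U$ together with one solution per surviving clause, so that the disjunctive behaviour of $e_p$ on complement atoms makes the combination satisfy each such clause while its conjunctive behaviour preserves all of $U$.

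The main obstacle is exactly this completeness direction, and specifically the asymmetry of $e_p$: complement atoms are combined disjunctively, but atoms over the signature of $\Delta$ only conjunctively, so a clause all of whose surviving disjuncts are such ``positive'' atoms cannot be satisfied by a plain combination. I expect to resolve this as in the Boolean Horn case, and here the hypothesis $e\in\Pol(\Gamma)$ enters: since every relation of $\Gamma$ is $e$-invariant and $e$ is min-like, one shows that each such relation admits a CNF over $\hat\Delta$ in which every clause contains at most one atom over the signature of $\Delta$, the remaining disjuncts being complement or $\neq$ atoms. On such Horn-like instances the deletion procedure is complete: a clause collapses to its single positive atom only once $U$ already entails it, so at the fixpoint every surviving non-unit clause still has a complement disjunct consistent with $U$, and choosing these as witnesses lets the $e_p$-combination satisfy all clauses at once. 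Establishing this normal form, and hence completeness of the propagation, is the step I would expect to require the most care.
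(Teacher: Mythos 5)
The paper does not actually prove this statement; it imports it verbatim as Proposition~14 of the cited reference, so there is no in-paper proof to compare against. Your plan reconstructs, essentially correctly, the argument from that reference: rewrite each constraint as a quantifier-free CNF over the atoms of $\hat\Delta$, observe that the embedding $e$ combines atoms of $\Delta$ (and equalities) conjunctively and complement atoms (and disequalities) disjunctively, pass to a Horn normal form, and run unit propagation with the $\Csp(\hat\Delta)$ oracle, certifying completeness by combining one witness per surviving clause with an iterated version of $e$. The propagation algorithm and its soundness/completeness analysis are correct as you describe them.

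The one step you assert but do not establish is precisely the crux of the whole theorem: that every relation of $\Gamma$ preserved by $e$ admits a CNF definition over $\hat\Delta$ in which each clause contains at most one ``positive'' literal, meaning an atom $S(\bar x)$ with $S$ a relation of $\Delta$ \emph{or an equality} (equalities combine conjunctively under the injective $e$, so they must be counted on the positive side -- your write-up leaves their status ambiguous). The claim is true, and the standard argument is short: take a quantifier-free CNF definition of $R$ with a minimal number of literals. If some clause $C$ had two positive literals $\phi_1,\phi_2$, minimality would yield tuples $\bar a,\bar b\in R$ such that $\bar a$ satisfies $\phi_1$ and no other literal of $C$, and $\bar b$ satisfies $\phi_2$ and no other literal of $C$ (otherwise the literal could be deleted without changing the defined relation). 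Then $e(\bar a,\bar b)\in R$ -- this is where $e\in\Pol(\Gamma)$ enters -- yet it falsifies every literal of $C$: the positive ones because not both arguments satisfy them, the negative ones because neither argument does. This contradiction should be written out; without it the reduction is incomplete. A smaller point: your appeal to ``homogeneity'' for quantifier-free definability is not licensed by the stated hypotheses, which assume only $\omega$-categoricity; the argument really requires that the relations of $\Gamma$ be quantifier-free definable over $\Delta$, which does hold in the paper's application ($\Delta=(H_n,E)$ is homogeneous in a finite relational language) but should be flagged as an assumption rather than derived from nothing.
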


\begin{prop}\label{prop:mintractable}
Let $\Gamma$ be a reduct of $(H_n,E)$ which has a polymorphism of behaviour $\mini$. Then $\CSP(\Gamma)$ is in $P$.
\end{prop}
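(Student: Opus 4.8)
The plan is to apply Theorem~\ref{thm:maximal} with $\Delta := (H_n,E)$, which is $\omega$-categorical and of which $\Gamma$ is a reduct. To do so I must produce two ingredients: a polymorphism of $\Gamma$ that is an embedding of $(H_n,E)^2$ into $(H_n,E)$, and a proof that $\Csp(\hat\Delta)$ is in $\mathrm{P}$, where $\hat\Delta=(H_n,E,\hat E,\neq)$ is the expansion of $(H_n,E)$ by $\neq$ and by the complement $\hat E$ of $E$.

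For the first ingredient, I would start from the given polymorphism $f$ of behaviour $\mini$ (which preserves both $E$ and $N$, since on pairs differing in both coordinates it realises an edge exactly when both coordinates carry edges) and show that it generates a binary injection $e$ which is of behaviour $\mini$ and in addition $N$-dominated in the sense of Definition~\ref{defn:behaviours_binary}. First I apply Proposition~\ref{prop:canfct} to replace $f$ by a function that is canonical as a map from $(H_n,E,\prec)^2$ to $(H_n,E,\prec)$ and still has behaviour $\mini$ on input $\neq\neq$. Invoking the meta-principle that the calculus for composing behaviours of canonical functions over $H_n$ coincides with the one over the random graph, so that the relevant computations of~\cite{RandomMinOps} apply verbatim, such a canonical min injection generates a min injection that is either balanced or $N$-dominated. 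The balanced case is impossible over $(H_n,E)$: a balanced injection preserving $E$ would map a $K_{n-1}$ appearing in the first coordinate, together with suitable further vertices, to a copy of $K_n$, contradicting $K_n$-freeness. Hence $f$ generates an $N$-dominated min injection $e$, and since $f\in\Pol(\Gamma)$, $\Aut(H_n,E)\subseteq\Pol(\Gamma)$, and $\Pol(\Gamma)$ is a closed clone, we get $e\in\Pol(\Gamma)$.

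The key observation is then that an $N$-dominated min injection is \emph{precisely} an embedding of $(H_n,E)^2$ into $(H_n,E)$: it is injective, on pairs differing in both coordinates it realises an edge iff both coordinates carry edges (behaviour $\mini$), and on pairs agreeing in some coordinate it realises a non-edge ($N$-domination). This matches exactly the product edge relation, under which $(a,b)$ and $(a',b')$ are adjacent iff $E(a,a')$ and $E(b,b')$; consistency is guaranteed a posteriori by the fact that $(H_n,E)^2$ is itself $K_n$-free and hence embeds into $(H_n,E)$ by universality. For the second ingredient I would give a direct polynomial algorithm for $\Csp(\hat\Delta)$ exploiting that $n$ is fixed. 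Given an instance on a variable set $V$, reject if some pair carries both an $E$- and an $\hat E$-constraint, or if $E$ or $\neq$ is imposed on a single variable; then form the graph $G$ whose edges are the $E$-constraints and reject if $G$ contains a copy of $K_n$, which can be tested in time $O(|V|^n)$, polynomial for fixed $n$. Otherwise accept: assigning the variables pairwise distinct values realising exactly the $K_n$-free graph $G$ inside $H_n$ satisfies every constraint, as each $\hat E$-pair is then a non-edge by the first check. Every rejection is forced, since a $K_n$ in $G$ demands $n$ pairwise adjacent, hence distinct, vertices, impossible in the $K_n$-free graph $H_n$. Thus $\Csp(\hat\Delta)\in\mathrm{P}$.

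With both ingredients in hand, Theorem~\ref{thm:maximal} yields $\Csp(\Gamma)\in\mathrm{P}$. I expect the main obstacle to be the first ingredient: upgrading the bare behaviour $\mini$, which is only constrained on input $\neq\neq$, to a genuine embedding requires pinning down the diagonal behaviour, and this is exactly where the Ramsey-theoretic canonicity of Proposition~\ref{prop:canfct}, the imported composition calculus of~\cite{RandomMinOps}, and the clique-exclusion arguments special to $K_n$-free graphs must be combined; the tractability of $\Csp(\hat\Delta)$, by contrast, is elementary once one notes that detecting $K_n$ is polynomial for fixed $n$.
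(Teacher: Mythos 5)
Your proposal is correct and follows essentially the same route as the paper's proof: both apply Theorem~\ref{thm:maximal} to $\Delta=(H_n,E)$, verify that $\Csp(H_n,E,\hat E,\neq)$ is in \PP\ by the same fixed-$n$ clique-detection argument, and upgrade the given polymorphism of behaviour $\mini$ to an $N$-dominated injection of behaviour $\mini$, which is exactly an embedding of $(H_n,E)^2$ into $(H_n,E)$. The only divergence is in that upgrade: the paper argues directly that a canonical injection $f$ of behaviour $\mini$ must satisfy $f(N,{=})=f({=},N)=N$ and at least one of $f(E,{=})=N$ or $f({=},E)=N$ (since the alternatives would create a $K_n$), and then takes $g(x,y):=f(f(x,y),f(y,x))$, whereas you import the balanced-or-$N$-dominated dichotomy for canonical $\mini$ injections from~\cite{RandomMinOps} and exclude the balanced case by $K_n$-freeness --- both arguments are sound and rest on the same two ingredients, namely canonicity via Proposition~\ref{prop:canfct} and clique exclusion.
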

\begin{proof}
To apply Theorem~\ref{thm:maximal} to $\Delta=(H_n,E)$, we first show that the CSP for
 $\hat \Delta = (H_n,E,\hat E,\neq)$ can be solved
 in polynomial time. 
Given an input primitive positive formula, we identify all variables $x,y$ such that $x=y$ is a constraint in the input. Then the formula is satisfiable if and only if for all
 variables $x_1,\dots,x_n$ we have
 \begin{itemize}
 \item $E(x_i,x_j)$ is not in the input for some distinct $i,j \in \{1,\dots,n\}$ (in particular, the statement for $x_1= \dots = x_n$ implies that the input does not contain constraints of the form $E(x,x)$), 
\item there are no constraints of the form $x_1 \neq x_1$,  and
\item there are no constraints of the form $E(x_1,x_2)$ and $\hat E(x_1,x_2)$. 
\end{itemize}
Since $n$ is fixed, it is clear that these conditions can be checked in polynomial time. 

Now let $f\in\Pol(\Gamma)$ be a canonical binary injection of  behaviour $\mini$. Each of the type conditions $f(N,{=})=E$ and $f({=},N)=E$ is impossible, because they would introduce a $K_n$. Further, $f(E,{=})=N$ or $f({=},E)=N$, for the same reason. But then $g(x,y):=f(f(x,y),f(y,x))$ is of behaviour $\mini$ and $N$-dominated, and therefore an embedding from $(H_n,E)^2$ into $(H_n,E)$. Hence, $\CSP(\Gamma)$ is in P by Theorem~\ref{thm:maximal}.
\end{proof}

\section{Summary for the Henson graphs}\label{sect:summary_Henson}
\subsection{Proof of the complexity dichotomy}
We are ready to assemble our results to prove the dichotomy for the CSPs of reducts of Henson graphs. 

\begin{proof}[of Theorem~\ref{thm:main}]
Let $\Gamma$ be a reduct of $(H_n,E)$. If $\End(\Gamma)$ contains a
function whose image is an independent set,
then $\Csp(\Gamma)$ equals the CSP
for a reduct of $(H_n,=)$ by Lemma~\ref{lem:emptyendo}, and such CSPs are either in P or NP-complete~\cite{ecsps}. 
Otherwise, $\End(\Gamma) = \overline{\Aut(H_n,E)}$ by Proposition~\ref{prop:redendo}. 
Lemma~\ref{lem:neq-pp} 
shows that $E$, $N$,
and $\neq$ are pp-definable
in $\Gamma$. 

If also the relation $H$ is pp-definable in $\Gamma$, then
$\Csp(\Gamma)$ is NP-hard by Proposition~\ref{prop:h-hard} (or Proposition~\ref{prop:Hhardnew}); it is in NP since $\Gamma$ is a reduct of $(H_n,E)$, which is a finitely bounded homogeneous structure. 

So let us assume that $H$ is not pp-definable in $\Gamma$; then Proposition~\ref{prop:higherArity} shows that $\Pol(\Gamma)$
contains a canonical binary injection $f$ of  behaviour $\mini$. Hence, $\CSP(\Gamma)$ is in P by Proposition~\ref{prop:mintractable}.
\end{proof}

\subsection{Discussion} We can restate Theorem~\ref{thm:main} in a more detailed fashion as follows.
\begin{thm}\label{thm:main2}
Let $\Gamma$ be a reduct of a Henson graph $(H_n,E)$. Then one of the following holds.
\begin{itemize}
\item[(1)] $\Gamma$ has an endomorphism inducing an independent set, and is homomorphically equivalent to a reduct of $(H_n,=)$.
\item[(2)] $\Pol(\Gamma)$ has a uniformly continuous projective clone homomorphism.
\item[(3)] $\Pol(\Gamma)$ contains a binary canonical injection which is of behaviour $\mini$ and $N$-dominated.
\end{itemize}
Items~(2) and~(3) cannot simultaneously hold, and when $\Gamma$ has a finite relational signature, then $(2)$ implies NP-completeness and (3) implies tractability of its CSP.
\end{thm}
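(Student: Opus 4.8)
The plan is to run a single case analysis fed by the structural results of Sections~\ref{sect:polymorphisms} and~\ref{subsect:H}, and then to settle the ``mutually exclusive'' clause and the complexity addenda separately. First I would apply Proposition~\ref{prop:redendo}: if $\End(\Gamma)$ contains a function whose image induces an independent set, then Lemma~\ref{lem:emptyendo} yields item~(1). Otherwise $\End(\Gamma)=\overline{\Aut(H_n,E)}$, so by Lemma~\ref{lem:neq-pp} the relations $E$, $N$, and $\neq$ are pp-definable in $\Gamma$, and Proposition~\ref{prop:higherArity} leaves exactly two possibilities: either $H$ is pp-definable in $\Gamma$, which I will convert into item~(2), or $\Pol(\Gamma)$ contains a canonical binary injection of behaviour $\mini$, which I will convert into item~(3).

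For the step from $\mini$ to item~(3), I would take the canonical binary injection $f$ of behaviour $\mini$ and form $g(x,y):=f(f(x,y),f(y,x))$, as in the proof of Proposition~\ref{prop:mintractable}. Using $K_n$-freeness one rules out $f(N,{=})=E$ and $f({=},N)=E$ and shows that at least one of $f(E,{=})=N$, $f({=},E)=N$ holds (a balanced $\mini$ would build a $K_n$). A routine type-chase then shows $g$ is an injection of behaviour $\mini$ that is moreover $N$-dominated, and since $g$ is a composition of functions canonical with respect to $(H_n,E)$ it is again canonical; this is item~(3).

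For the step from pp-definability of $H$ to item~(2), note that $\Pol(\Gamma)\subseteq\Pol(H_n,H)$, so it suffices to build a uniformly continuous projective clone homomorphism on $\Pol(H_n,H)$ and restrict it (the restriction is again a clone homomorphism, still onto the projections, which lie in $\Pol(\Gamma)$). The relation $H$ encodes the Boolean $1$-in-$3$ relation $R_{1/3}$, each vertex pair playing the role of a variable with $E$ meaning $1$ and $N$ meaning $0$. Extending the orbit-wise analysis of Proposition~\ref{prop:h-hard} from ternary operations to all arities, every polymorphism induces a well-defined operation on the three orbits of $H$; because these orbits correspond to $R_{1/3}$ and only projections preserve $R_{1/3}$, this induced operation is forced to be a projection, and mapping each polymorphism to it gives a map that preserves arities, fixes projections, and (since the induced action is itself a clone homomorphism onto $\Pol(\{0,1\};R_{1/3})$, the clone of projections) preserves all composition; uniform continuity holds because the value on any $f$ is read off a fixed finite test matrix. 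The hard part is exactly this: upgrading the strong h1 statement actually proved in Proposition~\ref{prop:h-hard} to the full composition property in arbitrary arity, which rests on the rigidity of $1$-in-$3$.

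Finally I would dispatch the two addenda. For mutual exclusivity, assume both~(2) and~(3) hold; let $\xi$ be the clone homomorphism from~(2) and let $h$ be the binary canonical injection of behaviour $\mini$ that is $N$-dominated from~(3), with, say, $\xi(h)=\pi_1^2$. Since $\mini$ together with $N$-domination is a fully symmetric behaviour, $h':=h\circ(\pi_2^2,\pi_1^2)$ has the same total behaviour as $h$; as both are injective this forces $h'\in\overline{\{\alpha\circ h\mid\alpha\in\Aut(H_n,E)\}}$. From $\xi(\alpha\circ h)=\xi(h)$ and continuity of $\xi$ into the discrete set of projections one gets $\xi(h')=\xi(h)=\pi_1^2$, whereas the homomorphism property gives $\xi(h')=\pi_1^2\circ(\pi_2^2,\pi_1^2)=\pi_2^2$, a contradiction. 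For the complexity statements, when $\Gamma$ has finite relational signature, item~(2) is in particular a uniformly continuous strong h1 clone homomorphism, so $\CSP(\Gamma)$ is NP-hard by Theorem~\ref{thm:wonderland} and lies in NP because $\Gamma$ is a reduct of the finitely bounded homogeneous graph $(H_n,E)$; and item~(3) supplies a polymorphism of behaviour $\mini$, so $\CSP(\Gamma)$ is in P by Proposition~\ref{prop:mintractable}.
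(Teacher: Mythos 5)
Your proposal follows essentially the same route as the paper's own proof: the trichotomy is read off the proof of Theorem~\ref{thm:main} (Proposition~\ref{prop:redendo} and Lemma~\ref{lem:emptyendo} for item~(1), Lemma~\ref{lem:neq-pp} and Proposition~\ref{prop:higherArity} for the split between~(2) and~(3), with the $N$-dominated $\mini$ injection extracted exactly as in the proof of Proposition~\ref{prop:mintractable}), and item~(2) rests on the same observation the paper makes, namely that the strong h1 clone homomorphism of Proposition~\ref{prop:h-hard} is in fact a clone homomorphism. Your exclusivity argument is also the paper's: the $N$-dominated $\mini$ behaviour is symmetric under swapping arguments, which yields $h(x,y)=\alpha h(y,x)$ for some $\alpha\in\overline{\Aut(H_n,E)}$, an identity no projection can satisfy under a (uniformly continuous) projective clone homomorphism.
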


The first statement of Theorem~\ref{thm:main2} follows directly from the proof of Theorem~\ref{thm:main}, with the additional observation that the strong h1 clone homomorphism  defined in Proposition~\ref{prop:h-hard} is in fact a clone homomorphism. When~(3) holds for a reduct, then~(2) cannot hold, because~(3) implies the existence of $f(x,y)\in\Pol(\Gamma)$ and $\alpha\in\overline{\Aut(\Gamma)}$ satisfying the equation $f(x,y)=\alpha f(y,x)$, an  equation impossible to satisfy by projections. In fact, by further analyzing case~(1), using what is known about reducts of equality, one can easily show that it also implies either~(2) or~(3), so that we have the following.

\begin{cor}
For every reduct  $\Gamma$ of a Henson graph $(H_n,E)$, precisely one of the following holds.
\begin{itemize}
\item $\Pol(\Gamma)$ has a uniformly continuous projective clone homomorphism.
\item $\Pol(\Gamma)$ contains $f(x,y)\in\Pol(\Gamma)$ and $\alpha\in\overline{\Aut(\Gamma)}$ such that $f(x,y)=\alpha f(y,x)$.
\end{itemize}
When $\Gamma$ has a finite relational signature, then the first case implies NP-completeness and the second case implies tractability of its CSP.
\end{cor}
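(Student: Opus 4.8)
The plan is to read the Corollary off from Theorem~\ref{thm:main2}, which already partitions the reducts of $(H_n,E)$ into the three cases (1), (2), (3), and then to settle the two remaining points: that case~(1) is subsumed by~(2) or~(3), and that the two items of the Corollary are genuinely exclusive. Note first that item~(2) of Theorem~\ref{thm:main2} \emph{is} the first item of the Corollary. For the implication from~(3) to the second item, observe that the behaviour $\mini$ together with $N$-domination is symmetric in the two arguments, so that $(x,y)\mapsto f(x,y)$ and $(x,y)\mapsto f(y,x)$ satisfy the same behaviour; by homogeneity and $\omega$-categoricity these two canonical injections are interchangeable by an element of $\overline{\Aut(H_n,E)}\subseteq\overline{\Aut(\Gamma)}$ via the usual compactness argument (as in the proof of Proposition~\ref{prop:redendo}), yielding $\alpha$ with $f(x,y)=\alpha f(y,x)$.

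For mutual exclusivity, suppose $\Pol(\Gamma)$ had both a uniformly continuous projective clone homomorphism $\xi$ and operations $f,\alpha$ as in the second item, where $\alpha\in\overline{\Aut(\Gamma)}\subseteq\Pol(\Gamma)$ is unary. Reading the identity as $f=\alpha\circ f\circ(\pi^2_2,\pi^2_1)$ in the clone and applying $\xi$, which fixes projections, respects arities, and preserves composition, we obtain $\xi(\alpha)=\id$ (the only unary projection) and $\xi(f)=\pi^2_i$ for some $i\in\{1,2\}$, whence $\pi^2_i=\pi^2_i\circ(\pi^2_2,\pi^2_1)=\pi^2_{3-i}$, a contradiction. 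Hence at most one item holds; combined with exhaustiveness this gives ``precisely one'', and the complexity statement then follows because the first item is~(2) and thus forces NP-completeness by Proposition~\ref{prop:h-hard} and Theorem~\ref{thm:wonderland}, while its failure leaves~(3) and hence tractability by Proposition~\ref{prop:mintractable}.

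It remains to fold case~(1) into~(2) or~(3). Here $\Gamma$ is homomorphically equivalent to a reduct of $(H_n,=)$, so its model-complete core $\Delta$ is a reduct of pure equality, for which the CSP and the structure of $\Pol(\Delta)$ are classified in~\cite{ecsps}: either $\Delta$ admits a uniformly continuous projective clone homomorphism, or $\Delta$ is preserved by a constant operation or by a binary injection. In the first subcase the same structural property transfers to $\Gamma$ along the homomorphic equivalence, giving item~(2). In the constant subcase the constant operation pulls back to a constant polymorphism $c$ of $\Gamma$, and then $f:=c$ with $\alpha:=\id$ satisfies the second item trivially. In the binary-injection subcase one uses that, over a pure set, every binary injection is behaviourally symmetric (it identifies two inputs exactly when they coincide), so that on $\Delta$ the second item holds for any such injection, and one aims to produce the corresponding witness on $\Gamma$ directly: a binary injective polymorphism $f$ of $\Gamma$ and an $\alpha\in\overline{\Aut(\Gamma)}$ with $f(x,y)=\alpha f(y,x)$, defining $\alpha$ on the image of $(x,y)\mapsto f(y,x)$ by $\alpha(f(y,x)):=f(x,y)$ and extending it (as one does for $\Gamma=(H_n,\neq)$).

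The main obstacle is precisely this last transfer: homomorphic equivalence preserves the CSP and, suitably interpreted, the existence of uniformly continuous projective (strong h1) clone homomorphisms, but it does \emph{not} preserve automorphisms, so the identity $f(x,y)=\alpha f(y,x)$ with $\alpha\in\overline{\Aut(\Gamma)}$ must be established on $\Gamma$ itself rather than merely on $\Delta$. I would handle this by working inside $\Gamma$ from the outset, realizing the equality reduct on the independent image of a collapsing endomorphism, extracting a binary injective (or constant) polymorphism of $\Gamma$ there, and lifting the quasi-symmetry to all of $H_n$ by the compactness argument of Proposition~\ref{prop:redendo}; the delicate point is to guarantee that the resulting $\alpha$ lies in $\overline{\Aut(\Gamma)}$ and not merely in the monoid of injections, which I expect to require distinguishing whether $\Gamma$ possesses a constant endomorphism (dispatched by the constant subcase) or only a non-constant collapsing one (where $\Aut(\Gamma)$ is large enough to supply $\alpha$).
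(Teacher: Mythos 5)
Your overall route is the paper's: the corollary is read off from Theorem~\ref{thm:main2}, exclusivity is obtained by applying the putative clone homomorphism to the identity $f=\alpha\circ f\circ(\pi^2_2,\pi^2_1)$ and noting that projections cannot satisfy it, and case~(1) is to be folded into the two items via the classification of reducts of equality. Your derivation of the second item from item~(3) of Theorem~\ref{thm:main2} (the behaviour ``$\mini$ and $N$-dominated'' is symmetric in its arguments, so $f(y,x)\mapsto f(x,y)$ is a partial isomorphism of $(H_n,E)$ between the images and extends by homogeneity and compactness to some $\alpha\in\overline{\Aut(H_n,E)}\subseteq\overline{\Aut(\Gamma)}$) and your exclusivity computation are correct and are exactly what the paper intends; note that the paper itself dispatches case~(1) with a single sentence (``using what is known about reducts of equality, one can easily show\dots''), so on that point you are elaborating something the paper leaves implicit.

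The genuine gap is in your first subcase of case~(1). You assert that if the equality core $\Delta$ of $\Gamma$ admits a uniformly continuous \emph{projective clone homomorphism}, then ``the same structural property transfers to $\Gamma$ along the homomorphic equivalence.'' Homomorphic equivalence transfers uniformly continuous \emph{h1} clone homomorphisms (composition with projections only), which is why Theorem~\ref{thm:wonderland} is stated for those; it does not in general transfer full clone homomorphisms, because the natural map $f\mapsto e\circ f\circ(\iota,\dots,\iota)$ between the polymorphism clones fails to preserve arbitrary composition. Since the first item of the corollary claims a full projective clone homomorphism of $\Pol(\Gamma)$ itself, this step needs an actual argument (for instance, showing that in the hard equality subcase every polymorphism of $\Gamma$ is essentially unary and injective, so that the homomorphism can be defined directly on $\Pol(\Gamma)$), and as written your proposal does not supply one. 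By contrast, the difficulty you flag at length --- producing $\alpha\in\overline{\Aut(\Gamma)}$ for the binary-injection subcase --- resolves cleanly: taking $f(x,y):=g(e(x),e(y))$ for a collapsing endomorphism $e$ and a binary injection $g$ of the core, the partial map $f(y,x)\mapsto f(x,y)$ is a bijection between independent subsets of $H_n$, hence a partial isomorphism of $(H_n,E)$, and lies in $\overline{\Aut(H_n,E)}\subseteq\overline{\Aut(\Gamma)}$ by the compactness argument of Proposition~\ref{prop:redendo}. So the part you worry about most is fine, and the part you pass over in one clause is where the proof is actually incomplete.
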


\section{Polymorphisms over homogeneous equivalence relations}\label{sect:polymorphisms_equivalence}

We now investigate polymorphisms of reducts of the graphs $(C_n^s,E)$, for $2\leq n,s\leq\omega$, with precisely one of $n,s$ equal to $\omega$. Recall from Section~\ref{sect:prelims} that we write $\Eq$ for the reflexive closure of $E$, that $\Eq$ is an equivalence relation with $n$ classes of size $s$, and that we denote its equivalence classes by $C_i$ for $0\leq i<n$.

Similarly to the case of the Henson graphs, we start with unary polymorphisms in Section~\ref{sect:eq:unary}, reducing the problem to model-complete cores. 

We then turn to higher-arity polymorphisms; here, the organization somewhat differs from the case of the Henson graphs. The role of the NP-hard relation $H$ from the Henson graphs is now taken by the two sources of NP-hardness mentioned in the introduction: the first source being that factoring by the equivalence relation $\Eq$ yields a structure with an NP-hard problem, and the second source being that restriction to some equivalence class yields a structure with an NP-hard problem. In Section~\ref{sect:eq:other}, we show that in fact, one of the two sources always applies for model-complete cores when $2<n<\omega$ or  $2<s<\omega$. Consequently, only the higher-arity polymorphisms of the reducts of $(C_2^\omega,E)$ and $(C_\omega^2,E)$ require deeper investigation using Ramsey theory; this will be dealt with in Sections~\ref{sect:eq:2infinity} and~\ref{sect:eq:infinity2}, respectively. 

\subsection{The unary case: model-complete cores}\label{sect:eq:unary} \

\begin{prop}\label{prop:EndEq1}
Let $\Gamma$ be a reduct of $(C_n^s, E)$, where $1\leq n,s\leq \omega$, and at least one of $n,s$ equals $\omega$. 
Then either $\End(\Gamma)=\overline{\Aut(\Gamma)}=\overline{\Aut(C_n^s,E)}$, or $\End(\Gamma)$ contains an endomorphism onto a clique or an independent set.
\end{prop}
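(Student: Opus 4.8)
The plan is to follow closely the strategy of Proposition~\ref{prop:redendo}: if $\End(\Gamma)$ properly contains the self-embeddings of $(C_n^s,E)$, produce an endomorphism violating $E$ or $N$, canonicalize it by Ramsey theory after naming a witnessing pair as constants, and then use transitivity of $\Eq$ together with the extension property and a compactness (local-closure) argument to collapse the whole domain either into a single $\Eq$-class, yielding a clique, or onto a transversal of the classes, yielding an independent set.

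\emph{Reduction to a violating endomorphism.} Since $(C_n^s,E)$ is homogeneous, $\overline{\Aut(C_n^s,E)}$ is exactly its monoid $\Emb(C_n^s,E)$ of self-embeddings, i.e. the injections preserving (equivalently, preserving and reflecting) both $E$ and $N$. As $\Gamma$ is a reduct we always have $\End(\Gamma)\supseteq\overline{\Aut(C_n^s,E)}$; hence if $\End(\Gamma)\neq\overline{\Aut(C_n^s,E)}$ there is an $f\in\End(\Gamma)$ that is not a self-embedding, i.e. that violates $E$ or $N$. Thus $f$ either \emph{splits} a class (some $E$-pair is mapped into $N$ or collapsed) or \emph{merges} two classes (some $N$-pair is mapped into $E$ or collapsed).

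\emph{Canonicalization and the behaviour table.} Fix a pair $u,v$ witnessing the violation and apply the appropriate Ramsey proposition --- Proposition~\ref{prop:canfct-C-high-s-low-n} when $s=\omega$, Proposition~\ref{prop:canfct-C-high-n-low-s} when $n=\omega$ --- with $u,v$ as constants, producing a function $g$ generated by $f$, agreeing with $f$ on $\{u,v\}$ (so still violating), and canonical from $(C_n^s,E,\prec,u,v)$ to $(C_n^s,E,\prec)$; as in the Henson case the order may be dropped for a unary map. The conceptual heart is that a \emph{genuinely} canonical unary self-map $h$ of $(C_n^s,E)$ is determined by the pair $(h(E),h(N))\in\{E,N,{=}\}^2$ (with $h({=})={=}$), and that transitivity of $\Eq$ eliminates every possibility except the embedding $(E,N)$, the class-collapsing behaviours $(E,E)$, $({=},E)$, $({=},{=})$, whose image is a clique, and the transversal behaviours $(N,N)$, $({=},N)$, whose image is an independent set. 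So any canonical non-embedding is itself an endomorphism onto a clique or an independent set; for instance $(N,E)$ is impossible because picking $a,b$ in one class and $c$ in another would force $g(a),g(b)$ into different classes yet both $\Eq$-related to $g(c)$.

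\emph{Main obstacle: orbit analysis and propagation.} As in Proposition~\ref{prop:redendo}, the difficulty is that once $u,v$ are named, not every orbit of $\Aut(C_n^s,E,u,v)$ induces a copy of $(C_n^s,E)$: the named classes induce cliques, while the union of the remaining classes induces a smaller homogeneous equivalence relation on which the behaviour table applies directly. The real work is therefore to transport a single violation --- a split of, or a merge of, the named classes --- to \emph{all} classes, which I would carry out orbit by orbit, realizing the required adjacency patterns via the extension property of $(C_n^s,E)$ and deriving the needed regularity (or a contradiction from an inconsistent behaviour) from transitivity of $\Eq$, in place of the $K_n$-freeness obstruction used for the Henson graphs. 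The same compactness/local-closure argument detailed in Proposition~\ref{prop:redendo} then upgrades ``every finite subset can be mapped into a single class (respectively, onto a transversal)'' to the existence of an endomorphism of $\Gamma$ whose image is a clique (respectively, an independent set). The degenerate cases $n=1$ or $s=1$, where $(C_n^s,E)$ is a clique or an independent set and the structure is interdefinable with pure equality, are handled separately and reduce to the equality classification.
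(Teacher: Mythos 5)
Your skeleton matches the paper's: extract an endomorphism violating $E$ or $N$, canonicalize it over an expansion by a witnessing pair of constants via Propositions~\ref{prop:canfct-C-high-s-low-n} and~\ref{prop:canfct-C-high-n-low-s}, and finish by local closure; and your behaviour table for a function canonical over $(C_n^s,E)$ itself (without constants) is correct. But the part you label ``the real work''---transporting the violation from the named orbits to the whole domain---is exactly where the proof lives, and the tools you propose for it would not carry it out. The extension property of $(C_n^s,E)$ is far weaker than that of a Henson graph: any vertex adjacent to $x$ lies in the class of $x$ and is therefore non-adjacent to everything outside that class, so you cannot ``realize the required adjacency patterns'' the way the propagation steps of Proposition~\ref{prop:redendo} do. What the paper uses instead are pigeonhole arguments on whichever parameter is finite. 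For $1<n<\omega$, $s=\omega$: the order-refined orbits $C_0^-,C_0^+,C_1^-,C_1^+,C_2,\dots,C_{n-1}$ each map into a single class and $N$ is preserved between orbits from distinct classes, so with only $n$ classes available $\Eq$ must be preserved on $C_0^-\cup C_0^+$ and on $C_1^-\cup C_1^+$; then a self-embedding avoiding $v$ lets one merge classes one at a time. For $s<\omega$, $n=\omega$: an infinite independent set is mapped into a single \emph{finite} class, forcing non-injectivity and hence, by iteration, collapse onto one class. Neither argument is ``transitivity of $\Eq$ plus the extension property.''

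Three further concrete problems. Your reduction to ``a smaller homogeneous equivalence relation on which the behaviour table applies directly'' fails for $n\in\{2,3\}$, where deleting the two named classes leaves at most one class, so the table says nothing there. You do not address the case $n=s=\omega$, where neither pigeonhole is available; the paper simply invokes the earlier classification~\cite{equiv-csps} for that case. And the paper does \emph{not} drop the order $\prec$ for the unary canonical function here (unlike the Henson case, where it cites an external result for that specific structure): the order-refined orbits $C_i^{\pm}$ are used essentially in the finite-$n$ case, so your claim that the order may be dropped ``as in the Henson case'' is both unjustified for $(C_n^s,E)$ and at odds with how the argument must actually be run.
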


\begin{proof}
Assume that $\End(\Gamma)\neq \overline{\Aut(C_n^s,E)}$, so there is an endomorphism $f$ of $\Gamma$ violating either $E$ or $N$. \smallskip

\textbf{Case 0.} If $n=1$ or $s=1$ then the statement is trivial.\smallskip

\textbf{Case 1.} If $n=s=\omega$, so $\Eq$ has infinitely many infinite classes, we can refer to \cite{equiv-csps}.\smallskip

\textbf{Case 2.} Assume that $1<n<\omega$ and $s=\omega$.

Suppose that $f$ violates $\Eq$ and preserves $N$; then clearly, iterating applications of automorphisms of $(C_n^\omega,E)$ and $f$, we could send any finite subset of $C_n^\omega$ to an independent set in $(C_n^\omega,E)$, contradicting that the number of equivalence classes is the fixed finite number $n$. 

If $f$ preserves both $\Eq$ and $N$, then there exist $a,b$ with $E(a,b)$ and $f(a)=f(b)$. Via a standard iterative argument using topological closure, one then sees that $f$ generates a function whose range is an independent set.

Therefore, it remains to consider the case where $f$ violates $N$. Fix $u,v\in C_n^\omega$ with $N(u,v)$ and $\Eq(f(u),f(v))$.
Without loss of generality we may assume $u\in C_0$ and $v\in C_1$. By Proposition~\ref{prop:canfct-C-high-s-low-n}, we may assume that $f$ is canonical as a function from $(C_n^\omega, E, \prec, u,v)$ to $(C_n^\omega, E, \prec)$. Clearly, $f$ must preserve $\Eq$ on each class $C_i$ with $i>1$, as otherwise canonicity would imply the existence of an infinite independent set in $(C_n^\omega,E)$. For the same reason, $f$ preserves $\Eq$ on each of the four sets
\begin{align*}
C_0^- & :=\{a\in C_0\;|\;a\prec u\}\;, \\ 
C_0^+ & :=\{a\in C_0\;|\;u\prec a\}\;, \\
C_1^- & :=\{a\in C_1\;|\;a\prec v\}\;, \\ 
\text{ and }\; C_1^+ & :=\{a\in C_1\;|\;v\prec a\}.
\end{align*}
If $N$ is not preserved between two sets among $S:=\{C_0^-,C_0^+,C_1^-,C_1^+,C_2,C_3,\ldots\}$, then we pick these two sets along with $n-2$ further sets from $S$ belonging to distinct equivalence classes. The union of this collection induces a copy of $(C_n^\omega,E)$ on which $f$ preserves $\Eq$ but not $N$, and a standard iterative argument shows that $f$ generates a function whose range is contained in a single equivalence class. Hence, we may assume that $N$ is preserved between any two sets in $S$. Since $n$ is finite, this is only possible if $\Eq$ is preserved on $C_0^-\cup C_0^+$ and on $C_1^-\cup C_1^+$. By composing $f$ with an automorphism of $(C_n^\omega,E)$, we may thus assume that $f[C_i^-\cup C_i^+]\subseteq C_i$ for $i\in\{0,1\}$ and that $f$ preserves the classes $C_i$ for $i>1$. Either $f(u)\nin C_0$ or $f(v)\nin C_1$. Assume without loss of generality that $f(u)\in C_i$ where $i>0$. Let $e$ be a self-embedding of $(C_n^\omega,E)$ with range $C_n^\omega\setminus\{v\}$. Then $f\circ e$ preserves all equivalence classes except for the element $u$, which it moves from $C_0$ to $C_i$. Iterating applications of $f\circ e$ and automorphisms, and using topological closure, we obtain a function which joins $C_0$ and $C_i$. By further iteration, we obtain a function which joins all classes.

\smallskip

\textbf{Case 3.} Assume that $s<\omega$ and $n=\omega$.

Suppose that $f$ violates $N$ and preserves $\Eq$; then, by topological closure, $f$ generates a mapping onto a clique. If it preserves both $\Eq$ and $N$, then as above, $f$ generates a function whose range is an independent set.

Therefore, we may assume that $f$ violates $\Eq$. Fix $u,v\in C_\omega^s$ with $E(u,v)$ such that $N(f(u),f(v))$. By Proposition~\ref{prop:canfct-C-high-n-low-s}, we may assume that $f$ is canonical as a function from $(C_\omega^s, E, \prec, u,v)$ to $(C_\omega^s, E, \prec)$. If $f$ preserves $N$, then by topological closure $f$ generates a function whose range induces an independent set. Otherwise, there exist $a,b\in C_\omega^s$ with $N(a,b)$ and $\Eq(f(a),f(b))$. Without loss of generality, $a$ is not contained in the class of $u$ and $v$. Then $\{a'\in C_\omega^s\;|\; \tp(a',b)=\tp(a,b) \text{ in } (C_\omega^s, E, \prec, u, v)\}$ contains an infinite independent set $S$. By canonicity, we have $\Eq(f(a'),f(b))$ for all $a'\in S$, so that $S$ is mapped into a single class. Since this class is finite, there exist $a',a''\in S$ with $f(a')=f(a'')$, and so by topological closure, we can generate a function from $f$ whose range is contained in a single equivalence class.
\end{proof}

If the second case of
Proposition~\ref{prop:EndEq1} applies to a reduct $\Gamma$ of $(C_n^s,E)$, then $\Gamma$ is homomorphically equivalent to a reduct of equality,
and its CSP is understood. In the following sections, we investigate essential polymorphisms of reducts $\Gamma$ of $(C_n^s,E)$ satisfying $\End(\Gamma)=\overline{\Aut(\Gamma)}=\overline{\Aut(C_n^s,E)}$. In particular, such reducts are model-complete cores. The following proposition implies that in the situation where $2<s$ the equivalence relation $\Eq$ is invariant under  $\Pol(\Gamma)$.

\begin{prop}\label{prop:eqpreserved}
Let $\Gamma$ be a reduct of $(C_n^s, E)$, where $1\leq n\leq \omega$ and $2<s \leq \omega$. 
If $\End(\Gamma)=\overline{\Aut(C_n^s,E)}$, then $E$, $N$, and $\Eq$ are preserved by $\Pol(\Gamma)$. 
\end{prop}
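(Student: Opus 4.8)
The plan is to prove that every polymorphism of $\Gamma$ preserves $E$, $N$, and $\Eq$, working throughout with the arity reduction of Lemma~\ref{lem:arity-reduction}, which lets me replace a polymorphism violating one of these relations by a low-arity witness. The hypothesis $\End(\Gamma)=\overline{\Aut(C_n^s,E)}$ supplies the base case for unary functions, and $s\geq 3$ will only be needed for the last relation.

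First I treat $E$ and $N$. Each of these is a single orbit of pairs under $\Aut(C_n^s,E)$. Hence, if some polymorphism violated $E$, then Lemma~\ref{lem:arity-reduction} (with $m=1$) would produce a unary polymorphism, i.e. an endomorphism, violating $E$; this is impossible since by assumption $\End(\Gamma)=\overline{\Aut(C_n^s,E)}$ and every function in $\overline{\Aut(C_n^s,E)}$ preserves $E$. The same argument applies verbatim to $N$. Thus every polymorphism of $\Gamma$ preserves both $E$ and $N$.

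Next I turn to $\Eq$, which is the union of $E$ with the diagonal and therefore meets exactly two orbits of pairs. By Lemma~\ref{lem:arity-reduction} it suffices to show that no \emph{binary} polymorphism violates $\Eq$. So let $f$ be binary; by the previous paragraph $f$ preserves $E$ and $N$. Given $\Eq(u_1,u_2)$ and $\Eq(v_1,v_2)$, I want $\Eq(f(u_1,v_1),f(u_2,v_2))$, and I split according to the two input pairs. If both pairs lie on the diagonal the images coincide; if both pairs lie in $E$ then preservation of $E$ yields an edge; so the only nontrivial case, up to interchanging the two arguments, is $E(u_1,u_2)$ together with $v_1=v_2=:c$, where I must establish $\Eq(f(u_1,c),f(u_2,c))$.

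This final case is the crux, and the only place where $s\geq 3$ is used. Suppose toward a contradiction that $N(f(u_1,c),f(u_2,c))$. Since the equivalence class of $u_1,u_2$ has size $s\geq 3$, I may choose a third element $z$ in that class, distinct from $u_1$ and $u_2$, so that $E(z,u_1)$ and $E(z,u_2)$; and I choose $c'$ with $E(c,c')$, which exists because $s\geq 2$. Applying preservation of $E$ to the input tuples $(z,c')$ and $(u_1,c)$ — whose first coordinates form the edge $(z,u_1)$ and whose second coordinates form the edge $(c',c)$ — gives $E(f(z,c'),f(u_1,c))$, and likewise $E(f(z,c'),f(u_2,c))$. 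Hence $f(u_1,c)$ and $f(u_2,c)$ both lie in the equivalence class of $f(z,c')$, so $\Eq(f(u_1,c),f(u_2,c))$, contradicting $N(f(u_1,c),f(u_2,c))$. The symmetric case $u_1=u_2$, $E(v_1,v_2)$ is handled identically with a triangle in the second argument. Therefore $f$ preserves $\Eq$, and by arity reduction so does every polymorphism of $\Gamma$. I expect the triangle step to be the main obstacle: a binary $f$ preserving $E$ and $N$ is only required to send pairs of edges to edges, and nothing in that condition alone prevents collapsing one class onto several classes when a second coordinate is held fixed — it is precisely the availability of a third vertex in every class that forbids this. The argument genuinely fails for $s=2$ (single-edge classes), consistent with the fact that $(C_\omega^2,E)$ must be treated separately in Section~\ref{sect:eq:infinity2}.
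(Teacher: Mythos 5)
Your proof is correct and is in substance the paper's argument: the paper handles $E$ and $N$ by the same arity-reduction-to-unary step, and then disposes of $\Eq$ in one line by observing that $\Eq(x,y)$ is pp-defined by $\exists z\,(E(x,z)\wedge E(z,y))$ when $s\geq 3$, so it is automatically preserved by all polymorphisms. Your triangle step is exactly the unfolding of that pp-definition for a binary polymorphism --- your witness $f(z,c')$ is the image of the existential witness $z$ --- so the two proofs coincide apart from packaging.
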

\begin{proof}
By Lemma~\ref{lem:arity-reduction}, the condition $\End(\Gamma)=\overline{\Aut(C_n^s,E)}$ implies that all polymorphisms of $\Gamma$ preserve $E$ and $N$, and hence also $\Eq$ since $\Eq(x,y)$ has the primitive positive definition $\exists z\; (E(x,z)\wedge E(z,y))$. Note that we need that the classes contain at least three elements for this definition to work.
\end{proof}

If $s=1$, then $\Eq$ is pp-definable as equality, but if $s=2$ then $\Eq$ is not in general pp-definable; this will account for an additional non-trivial (tractable) case in our analysis.

Since in the situation of Proposition~\ref{prop:eqpreserved}, $\Eq$ is an equivalence relation which is invariant under $\Pol(\Gamma)$, it follows that $\Pol(\Gamma)$ acts naturally on the equivalence classes of $\Eq$: for $f(x_1,\ldots,x_n)\in\Pol(\Gamma)$ and classes $C_{i_1},\ldots,C_{i_n}$ of $\Eq$, the class $f(C_{i_1},\ldots,C_{i_n})$ is then defined as the equivalence class of $f(c_{i_1},\ldots,c_{i_n})$, where $c_{i_1}\in C_{i_1},\ldots,c_{i_n}\in C_{i_n}$ are arbitrary.

Moreover, if we fix any class $C$ of $\Eq$ and expand the structure $\Gamma$ by the predicate $C$ to a structure $(\Gamma,C)$, then $\Pol(\Gamma, C)$ acts naturally on $C$ via restriction of its functions. Since $\Aut(C_n^s,E)$ can flip any two equivalence classes, all such actions are isomorphic, i.e., for any two classes $C,C'$ there exists a bijection $i\colon C\To C'$ such that 
\begin{align*}
\Pol(\Gamma,C') & =\{(x_1,\dots,x_n) \mapsto i(f(i^{-1}(x_1),\ldots,i^{-1}(x_n))) \mid f\in\Pol(\Gamma,C)\} \\
\Pol(\Gamma,C) & =\{(x_1,\dots,x_n) \mapsto i^{-1}(f(i(x_1),\ldots,i(x_n))) \mid f\in\Pol(\Gamma,C')\}
\end{align*}
(in fact any bijection $i$ works, since any permutation on $C$ extends to an automorphism of $(C_n^s,E)$ which fixes the elements of $C'$ pointwise). It is for this reason that in the following, it will not matter if we make statements about all such actions, or a single action.


In the following sections, we analyze these two types of actions.

\subsection{The case $2<n<\omega$ or $2<s<\omega$}\label{sect:eq:other} It turns out that in these cases, one of the two types of actions always yields hardness of the CSP. We are going to use the following fact about function clones on a finite domain.

\begin{prop}[from~\cite{HaddadRosenberg}]\label{prop:finite3}
Every function clone on a finite domain of at least three elements which contains all permutations as well as an essential function contains a unary constant function.
\end{prop}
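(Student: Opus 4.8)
The plan is to prove the contrapositive in a strengthened form: if $\mathcal{C}$ is a function clone on a finite set $D$ with $|D|=k\geq 3$ containing every permutation of $D$ but no constant operation, then every operation of $\mathcal{C}$ is essentially unary. The whole argument reduces to the unary part of $\mathcal{C}$, which forms a transformation monoid $M$ with $\operatorname{Sym}(D)\subseteq M$. The first key step is a purely monoid-theoretic fact, call it Lemma A: the submonoid of $D^D$ generated by $\operatorname{Sym}(D)$ together with any single non-bijective map contains a constant. I would prove this by rank reduction: writing $\operatorname{rk}(g):=|g[D]|$, from any $g$ with $2\leq\operatorname{rk}(g)\leq k-1$ one finds $\sigma\in\operatorname{Sym}(D)$ with $\operatorname{rk}(g\sigma g)<\operatorname{rk}(g)$, since $g$ has fewer than $k$ kernel classes, so some class has at least two elements and $\sigma$ can push two elements of $g[D]$ into a common class of $\ker g$. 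Iterating from rank $k-1$ down to $1$ yields a constant. Consequently it suffices to exhibit one non-bijective unary operation in $\mathcal{C}$, so from now on I assume toward a contradiction that $M=\operatorname{Sym}(D)$, i.e.\ every unary operation of $\mathcal{C}$ is a bijection.

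Next I would reduce the arity of the witnessing essential operation. Let $f\in\mathcal{C}$ be essential of minimal arity $n$. Every operation obtained from $f$ by identifying two variables lies in $\mathcal{C}$ and, by minimality, is not essential; it also cannot be constant, since a constant operation in $\mathcal{C}$ would immediately finish the proof. Hence every identification minor of $f$ is essentially unary. By the classical lemma of \v{S}wierczkowski, an essential operation of arity $\geq 4$ always admits an identification minor depending on all its remaining variables, so $n\in\{2,3\}$.

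The binary case $n=2$ is the clean endgame. Since $\pi_0(x):=f(x,x)$ is a bijection by assumption, after replacing $f$ by $\pi_0^{-1}\circ f\in\mathcal{C}$ we may assume $f(x,x)=x$. For a transposition $\rho=(a\,b)$ the unary map $x\mapsto f(x,\rho(x))$ fixes every element outside $\{a,b\}$ and must be a bijection, which forces $\{f(a,b),f(b,a)\}=\{a,b\}$; thus $f$ is conservative, with $f(a,b)\in\{a,b\}$ and $f(a,b)\neq f(b,a)$ for all $a\neq b$. Now $|D|\geq 3$ enters: for distinct $a,b,c$ the $3$-cycle $\rho=(a\,b\,c)$ produces the unary map sending $a,b,c$ to $f(a,b),f(b,c),f(c,a)$ and fixing the rest, and a short check of the eight conservative possibilities shows it is injective exactly when $f$ acts as the \emph{same} projection on all three pairs $\{a,b\},\{b,c\},\{c,a\}$. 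Since injectivity is forced on every triangle, a fixed pair of a given projection type propagates its type to all pairs, so $f$ is a single projection, contradicting essentiality.

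The remaining case $n=3$ with all binary minors essentially unary is the main obstacle, and is precisely where the hypothesis $|D|\geq 3$ must do genuine work: over a two-element domain the ternary minority operation together with negation is an essential operation plus all permutations generating no constant (the affine clone contains none), so any correct argument \emph{must} fail for $k=2$ and succeed for $k\geq 3$. Here the diagonal substitution terms $x\mapsto f(\sigma x,\tau x,\upsilon x)$ are forced into a ``minority-like'' pattern, and the plan is to analyze these unary terms on a three-element subset to produce a non-injective one, contradicting $M=\operatorname{Sym}(D)$; the crucial point is that on at least three elements a minority-type operation cannot keep all such diagonal terms bijective, whereas on two elements the affine structure conspires exactly to do so. Carrying out this finite analysis cleanly, ideally absorbing it into the conservativity-and-triangle mechanism of the binary case, is the hard part of the proof.
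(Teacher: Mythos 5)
The paper does not actually prove this proposition: it is imported wholesale from Haddad and Rosenberg, so there is no internal argument to measure yours against, and your attempt has to stand on its own. Parts of it do. The rank-reduction Lemma~A is correct, so the problem genuinely reduces to exhibiting one non-injective unary term; and your binary endgame (normalise $f(x,x)=x$, force $\{f(a,b),f(b,a)\}=\{a,b\}$ via transpositions, then use $3$-cycles to show the projection type is constant on every triangle and hence, by connectedness of pairs through triangles, globally) is complete, correct, and is exactly where $|D|\geq 3$ enters for arity two.

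There are, however, two genuine gaps. First, the arity reduction rests on a misquotation of {\'S}wierczkowski's lemma. That lemma says that if every identification minor of an at least quaternary $f$ is a projection, then all these minors are the \emph{same} projection; it does not say that $f$ is a projection, nor that an essential operation of arity $\geq 4$ has an essential identification minor. Semiprojections refute your paraphrase outright: for $4\leq k\leq |D|$, a $k$-ary operation that returns $x_1$ on every non-injective tuple but differs from the projection onto $x_1$ on some injective tuple is essential, yet every one of its identification minors is the projection onto the first variable. Such operations are consistent with your standing hypotheses (all unary terms bijective, all minors essentially unary), so your case split $n\in\{2,3\}$ silently discards a whole family that any correct proof must dispose of --- this is precisely why semiprojections appear as a separate type in Rosenberg's classification of minimal clones. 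Second, the ternary case is a plan, not a proof. After normalising $f(x,x,x)=x$ your own setup forces each of the three identification minors to be a genuine projection, which leaves the majority, minority, Pixley-type and ternary-semiprojection patterns, and for each of these you must actually produce permutations $\sigma,\tau$ making $x\mapsto f(x,\sigma x,\tau x)$ non-injective on some three-element subset. You correctly observe that the affine clone on $\{0,1\}$ forces any such analysis to use $|D|\geq 3$, but identifying where the work must happen is not the same as doing it. As written, your argument establishes the proposition only when the clone's minimal-arity essential operation is binary.
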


We can immediately apply this fact to the action of $\Pol(\Gamma)$ on the equivalence classes, when there are more than two, but finitely many classes.

\begin{prop}\label{prop:n>2}
Let $\Gamma$ be a reduct of $(C_n^\omega, E)$, where $2<n<\omega$, such that $\End(\Gamma)=\overline{\Aut(C_n^\omega,E)}$. Then the action of $\Pol(\Gamma)$ on the equivalence classes of $\Eq$ has no essential and no constant operation.
\end{prop}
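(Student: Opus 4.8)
The plan is to combine the fact that $\Aut(C_n^\omega,E)$ induces the full symmetric group on the $n$ equivalence classes with the Haddad--Rosenberg dichotomy (Proposition~\ref{prop:finite3}), while ruling out constants separately using the model-complete core hypothesis. First I would record that by Proposition~\ref{prop:eqpreserved} (applicable since $s=\omega\geq 3$), the relation $\Eq$ is preserved by $\Pol(\Gamma)$, so the action of $\Pol(\Gamma)$ on the set $\{C_0,\ldots,C_{n-1}\}$ of equivalence classes is well defined and forms a function clone on a finite domain of size $n\geq 3$. Since all classes have the same infinite size, any permutation $\pi$ of $\{0,\ldots,n-1\}$ is realized by assembling arbitrary bijections $C_i\to C_{\pi(i)}$ into a single map, which preserves $E$, $N$, and $\Eq$ and is hence an automorphism of $(C_n^\omega,E)$; therefore the action contains every permutation of the classes.

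Next I would rule out constant operations in the action. Suppose the action contained a $k$-ary constant operation taking the value $C_j$; then it is induced by some $f\in\Pol(\Gamma)$ whose image lies inside the single class $C_j$. The diagonal $g(x):=f(x,\ldots,x)$ is then an endomorphism of $\Gamma$ with $g[C_n^\omega]\subseteq C_j$. But $\End(\Gamma)=\overline{\Aut(C_n^\omega,E)}$, so $g$ agrees with some automorphism on any fixed pair of points and hence preserves $N$. Choosing $u,v$ in distinct classes with $N(u,v)$ would give $N(g(u),g(v))$, contradicting that $g(u),g(v)$ both lie in $C_j$, where distinct elements are $E$-related and equal elements are not $N$-related. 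Thus no constant operation occurs in the action.

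Finally, essential operations are excluded: if the action contained one, then since it contains all permutations and its domain has $n\geq 3$ elements, Proposition~\ref{prop:finite3} would force a unary constant operation into the action, contradicting the previous paragraph. I expect the only genuinely conceptual step to be the constant case, where one must correctly exploit that endomorphisms of a model-complete core are elementary self-embeddings and in particular preserve $N$; the remaining steps are a direct application of the cited finite-clone result and the homogeneity of $(C_n^\omega,E)$.
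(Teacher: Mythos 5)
Your proposal is correct and follows essentially the same route as the paper: the paper's proof is precisely "the action has no constant operation because $N$ is preserved, hence no essential operation by Proposition~\ref{prop:finite3}," and your argument fills in the same two steps (your detour through the diagonal endomorphism to rule out constants is a harmless elaboration of the fact that polymorphisms preserve $N$ directly). The only detail you make explicit that the paper leaves implicit is that the induced clone on the classes contains all permutations, which is indeed needed to invoke the Haddad--Rosenberg result.
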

\begin{proof}
The action has no constant operation because $N$ is preserved. Therefore, it cannot have an essential operation either, by Proposition~\ref{prop:finite3}.
\end{proof}

Similarly, we can apply the same fact to the action of $\Pol(\Gamma,C)$ on any equivalence class $C$ on $C_\omega^s$ if this class is finite and has more than two elements.

\begin{prop}\label{prop:s>2}
Let $\Gamma$ be a reduct of $(C_\omega^s, E)$, where $2<s<\omega$, such that $\End(\Gamma)=\overline{\Aut(C_\omega^s,E)}$. Then for any equivalence class $C$ of $\Eq$, the action of $\Pol(\Gamma,C)$ on $C$ has no essential and no constant operation.
\end{prop}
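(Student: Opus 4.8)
The plan is to mirror the proof of Proposition~\ref{prop:n>2}, exploiting that for a fixed class $C$ the action of $\Pol(\Gamma,C)$ on $C$ is a function clone on the \emph{finite} domain $C$ with $|C|=s\geq 3$, so that Proposition~\ref{prop:finite3} becomes applicable. Since the functions of $\Pol(\Gamma,C)$ preserve the predicate $C$, they map $C^n$ into $C$ and the action is well-defined. The two facts I need to establish are that this action contains all permutations of $C$, and that it contains no unary constant operation.

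For the permutations, I would use that any permutation $\pi$ of $C$ extends to an automorphism of $(C_\omega^s,E)$ fixing every element outside $C$ — this is precisely the fact recorded in the paragraph preceding Proposition~\ref{prop:n>2}. Such an automorphism lies in $\Aut(C_\omega^s,E)\subseteq\Aut(\Gamma)$ and fixes $C$ setwise, hence belongs to $\Pol(\Gamma,C)$, and its restriction to $C$ equals $\pi$. Thus every permutation of $C$ appears in the action.

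For the absence of a constant operation I would argue from preservation of $E$. By the standing assumption $\End(\Gamma)=\overline{\Aut(C_\omega^s,E)}$ together with Proposition~\ref{prop:eqpreserved} (which applies since $s\geq 3$), every polymorphism of $\Gamma$ preserves $E$; the same holds for $\Pol(\Gamma,C)$, since it is contained in $\Pol(\Gamma)$ (expanding $\Gamma$ by the predicate $C$ only shrinks the polymorphism clone). Were the action to contain a unary constant operation, it would be the restriction $g\!\upharpoonright_C$ of some unary $g\in\Pol(\Gamma,C)$ collapsing all of $C$ onto a single point $c_0\in C$. But any two distinct elements of $C$ are joined by an edge, whereas $(c_0,c_0)\notin E$ by irreflexivity of $E$; hence such a $g$ violates $E$, a contradiction.

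Combining these, the action is a clone on a domain of at least three elements that contains all permutations but no constant operation, so by the contrapositive of Proposition~\ref{prop:finite3} it can contain no essential operation either, which completes the proof. I expect no genuine obstacle here: the argument is routine once the setup is in place. The only point requiring care, and the sole deviation from Proposition~\ref{prop:n>2}, is that one must invoke preservation of $E$ (rather than $N$) to rule out constants, reflecting that within a single class all distinct pairs are edges.
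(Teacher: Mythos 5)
Your proposal is correct and follows essentially the same route as the paper: the paper's own proof simply observes that the action has no constant operation because $E$ is preserved (every distinct pair within $C$ is an edge while $E$ is irreflexive) and then invokes Proposition~\ref{prop:finite3} to exclude essential operations. The additional details you supply — that all permutations of $C$ arise from automorphisms fixing the class setwise, and that $\Pol(\Gamma,C)\subseteq\Pol(\Gamma)$ so preservation of $E$ carries over — are exactly the facts the paper relies on implicitly (recorded in the discussion preceding Proposition~\ref{prop:n>2} and in Proposition~\ref{prop:eqpreserved}).
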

\begin{proof}
The action has no constant operation because $E$ is preserved. Therefore, it cannot have an essential operation either, by Proposition~\ref{prop:finite3}.
\end{proof}\smallskip

\subsection{The case of two infinite classes: $n=2$ and $s=\omega$}~\label{sect:eq:2infinity} The following proposition states that either one of the two sources of hardness applies, or $\Pol(\Gamma)$ contains a ternary canonical function with a certain behaviour.

\begin{prop}\label{prop:2omega}
Let $\Gamma$ be a reduct of $(C_2^\omega, E)$ such that $\End(\Gamma)=\overline{\Aut(C_2^\omega,E)}$.
Then one of the following holds:
\begin{itemize} 
\item the action of $\Pol(\Gamma)$ on the equivalence classes of $\Eq$ has no essential function;
\item the action of $\Pol(\Gamma,C)$ on some (or any) class $C$ has no essential function;
\item $\Pol(\Gamma)$ contains a canonical ternary injection of behaviour minority which is hyperplanely of behaviour  balanced xnor.
\end{itemize}
\end{prop}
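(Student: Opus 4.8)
The plan is to assume that the first two alternatives fail and to produce the desired ternary function. By Proposition~\ref{prop:eqpreserved} the relations $E$, $N$ and $\Eq$ are preserved by $\Pol(\Gamma)$, so $\Pol(\Gamma)$ acts both on the two equivalence classes of $\Eq$ (the \emph{quotient action}) and, after expanding by a class predicate, on each single class $C$ (the \emph{restricted action}). First I would record two structural features of these actions. The quotient action lives on the two-element set of classes; since $\Aut(C_2^\omega,E)$ contains a permutation swapping the two classes, this action always contains the negation $\neg$, and since it preserves $N$ — which on the two classes is exactly the relation $\neq$ — each of its operations is \emph{self-dual}. The restricted action on a class $C$ is the polymorphism clone of a reduct of pure equality, because the structure induced on $C$ carries only the relations $=$ and $\neq$ and because $\Aut(C_2^\omega,E)$ acts as the full symmetric group on $C$.

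Next I would extract the two operations needed. If the first alternative fails, the quotient action contains an essential operation; being a self-dual Boolean clone with an essential operation, it contains a majority or a minority operation by Post's classification. In the majority case one invokes the identity $\langle\mathrm{maj},\neg\rangle=D$ in Post's lattice: a self-dual clone containing $\neg$ is contained in neither of the two maximal subclones $D_1=D\cap T_0\cap T_1$ and $D_2=D\cap M$ (as $\neg$ lies in neither), hence equals $D$, and therefore already contains the minority $x\oplus y\oplus z$. Thus in either case $\Pol(\Gamma)$ has a ternary operation $m$ whose quotient action is $x\oplus y\oplus z$. If the second alternative fails, the restricted action contains an essential operation; by the classification of reducts of equality~\cite{ecsps} (or by the analogues of Lemmas~\ref{lem:essbin} and~\ref{lem:essgen}), and since $E$ being preserved rules out a constant in the restricted action, this action contains a binary injection. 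As the only binary self-dual Boolean operations are the projections and negated projections, the quotient action of such an injection is essentially unary, so composing with the class-swapping automorphism yields a binary $b\in\Pol(\Gamma)$ that is injective on each class and acts as a projection on the classes.

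I would then combine these into a single injection: the operation $g(x,y,z):=b(b(b(m(x,y,z),x),y),z)$ is injective (because $b$ is, by the argument of Proposition~\ref{prop:getbinary}) and, because $b$ acts as a projection on the classes, its quotient action is still $x\oplus y\oplus z$. The point is that having quotient action $x\oplus y\oplus z$ is exactly \emph{behaviour minority} in the sense of Definition~\ref{defn:behaviours_ternary}: two outputs lie in the same class iff an odd number of the three input pairs do. Moreover the same computation shows that any function with minority quotient action automatically satisfies the \emph{class-level} content of the condition ``hyperplanely of behaviour balanced xnor'', since fixing one argument to a constant turns $x\oplus y\oplus z$ into $a\oplus b$ or $\neg(a\oplus b)$, which is xnor at the class level. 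It therefore remains to adjust the \emph{within-class} ($E$ versus $=$) behaviour of $g$ to be balanced. This is where the Ramsey analysis enters: applying Proposition~\ref{prop:canfct-C-high-s-low-n} (with $n=2$) I would canonize $g$ to a ternary injection canonical from $(C_2^\omega,E,\prec)^3$ to $(C_2^\omega,E,\prec)$, enumerate the within-class behaviours compatible with minority quotient action, and use $b$ together with standard iterating and local-closure arguments (as in Proposition~\ref{prop:redendo}) to force the hyperplanes into the balanced xnor pattern.

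The main obstacle I anticipate is this last step. As in the analysis of Section~\ref{subsect:H}, canonization introduces the order $\prec$, so one must treat behaviours on inputs $(\prec,\prec)$ and $(\prec,\succ)$ separately and show that the competing within-class behaviours can be eliminated; the ternary arity and the simultaneous presence of a coarse (class) and a fine (within-class) level make this enumeration heavier than in the binary Henson case. Securing full injectivity and the balanced within-class behaviour at the same time, while keeping the minority quotient action intact, is the delicate technical point of the argument.
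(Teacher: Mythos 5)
Your setup matches the paper's: assuming the first two alternatives fail, you extract (i) a ternary $m\in\Pol(\Gamma)$ acting as $x+y+z$ modulo $2$ on the two classes (your route through the self-dual fragment of Post's lattice is a valid alternative to the paper's direct appeal to Proposition~\ref{prop:Post}, which already lists only ``constant or ternary addition mod $2$''), and (ii) a binary $b\in\Pol(\Gamma)$ injective on each class. Up to this point the argument is essentially the paper's.

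The genuine gap is in the final step, which you explicitly leave as a plan rather than a proof. Your strategy is to build a ternary injection $g=b(b(b(m(x,y,z),x),y),z)$ with minority quotient action, canonize \emph{it} via Proposition~\ref{prop:canfct-C-high-s-low-n}, and then ``force the hyperplanes into the balanced xnor pattern'' by unspecified iteration and local-closure arguments. This is precisely the delicate part, and it is not carried out: you would have to enumerate all within-class ($E$ versus $=$) behaviours of a canonical ternary function compatible with the minority quotient action and show that each non-balanced-xnor case can be repaired without destroying injectivity or the quotient action, and nothing in the proposal indicates how. (There is also a smaller unaddressed point: global injectivity of $b$ --- needed for the Proposition~\ref{prop:getbinary}-style injectivization trick --- does not follow from injectivity on each class alone; in the paper it falls out only after the full canonical behaviour analysis of the binary function.) The paper avoids the ternary canonization entirely: it canonizes the \emph{binary} function $f$, shows (using that $f$ is injective on classes, preserves $E$, $N$, $\Eq$, and that there are only two classes) that $f$ is a binary injection of behaviour $p_1$, balanced in the first and $E$-dominated in the second argument, and then defines $h(x,y,z):=f(g(x,y,z),q(x,y,z))$ for $g$ the mod-$2$ function and $q$ any ternary injection. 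The required canonical behaviour of $h$ --- minority, hyperplanely balanced xnor --- is then verified by a finite table of type conditions, with no further Ramsey analysis or iteration on ternary functions. To complete your argument you would either need to supply the missing ternary enumeration-and-repair step in full, or switch to the paper's composition, which reduces the whole difficulty to the binary behaviour analysis.
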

To prove the proposition, we need to recall a special case of Post's classical result about function clones acting on a two-element set. Comparing this statement with Proposition~\ref{prop:finite3} sheds light on why the case of this section is more involved than the cases of the preceding section. 

\begin{proposition}[Post~\cite{Post}]\label{prop:Post}
Every function clone with domain $\{0,1\}$ containing both permutations of $\{0,1\}$ as well as an essential function contains a unary constant operation or the ternary addition modulo 2.
\end{proposition}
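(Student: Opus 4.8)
The plan is to assume that the first two alternatives fail, so that both the action of $\Pol(\Gamma)$ on the two classes of $\Eq$ and the action of $\Pol(\Gamma,C)$ on a fixed class $C$ contain an essential function, and from this to manufacture the ternary function of the third alternative. Throughout I use that $\Eq$ is preserved by $\Pol(\Gamma)$ (Proposition~\ref{prop:eqpreserved}, applicable since $s=\omega\geq 3$), so that $\Pol(\Gamma)$ genuinely acts on the two-element set of classes as well as on each class; both actions contain \emph{all} permutations, since $\Aut(C_2^\omega,E)=\Aut(\Gamma)$ realizes every permutation of the two classes and every permutation within a class.

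First I would apply Post's theorem to the class action. It is a function clone on the two-element set $\{C_0,C_1\}$ that contains both permutations and, by assumption, an essential function. It cannot contain a constant operation: a constant in the action would arise from a polymorphism whose image lies in a single class, which, evaluated on the constant tuples over a pair $x,y$ with $N(x,y)$, would violate $N$. Hence Proposition~\ref{prop:Post} yields that the class action contains ternary addition modulo $2$, so there is a ternary $p\in\Pol(\Gamma)$ whose class action is $(i,j,k)\mapsto i\oplus j\oplus k$. Identifying $C_0$ with $0$ and $C_1$ with $1$, a direct computation shows that this is exactly the class-level content of behaviour minority: for $u,v\in(C_2^\omega)^3$ with $\NEQNEQNEQ(u,v)$ we have $N(p(u),p(v))$ if and only if an odd number of coordinates satisfy $N(u_i,v_i)$, equivalently $E(p(u),p(v))$ if and only if $\EEE(u,v)$, $\NNE(u,v)$, $\NEN(u,v)$, or $\ENN(u,v)$ (Definition~\ref{defn:behaviours_ternary}).

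Next I would convert the failure of the second alternative into injectivity. An essential function in the action on $C$ is, as a function on all of $C_2^\omega$, an essential operation preserving $E$ and $N$; by the analogues for $(C_2^\omega,E)$ of the generation results for binary injections (Lemmas~\ref{lem:essbin} and~\ref{lem:essgen}) it generates a binary injection $b\in\Pol(\Gamma)$. Using $b$ I would injectivize $p$: after replacing $b$ by a canonical binary injection (Proposition~\ref{prop:canfct-C-high-s-low-n}) and arranging that it behaves like a projection, the composition $b(b(b(p(x,y,z),x),y),z)$ is an injective ternary operation whose class action is still $i\oplus j\oplus k$, since a projection-behaved $b$ does not alter the class of its first argument. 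A further application of Proposition~\ref{prop:canfct-C-high-s-low-n} produces a canonical ternary injection $f$ with the same class action, because composing with automorphisms of $(C_2^\omega,E)$ leaves the class action unchanged. The two required behaviours of $f$ are then read off directly from its XOR class action together with injectivity and canonicity: on $\NEQNEQNEQ$-tuples the class action forces the minority pattern above, while for a hyperplane $g(y,z):=f(c,y,z)$ the fixed coordinate contributes $0$ to the XOR, giving $N(g(u),g(v))$ precisely when an odd number of the two free coordinates are $N$; this is the xnor pattern on $\neq\neq$ and the balanced pattern on inputs involving $=$, with injectivity upgrading every "same class" outcome from $=$ to $E$. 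Thus $f$ is of behaviour minority and hyperplanely of behaviour balanced xnor.

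The main obstacle is the injectivization step, where the class action and injectivity must be reconciled. If the only binary injection one can extract has XOR (balanced xnor) class action rather than a projection behaviour, then the naive composition $b(b(b(p,x),y),z)$ telescopes the XOR contributions of $x,y,z$ and collapses $f$ onto a single class, destroying the minority behaviour. Securing a binary injection of \emph{projection} behaviour in $\Pol(\Gamma)$ — or otherwise injectivizing $p$ without disturbing its XOR class action — therefore requires a dedicated Ramsey classification of binary canonical injections on $(C_2^\omega,E)$, ruling out the case in which only dominated or xnor behaviours occur; it is precisely here that both essentiality hypotheses, and the infinitude of each class (which makes within-class separation possible), are genuinely used. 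Once this step is established, the remaining verifications are the routine type-condition computations sketched above.
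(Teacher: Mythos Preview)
Your proposal does not address the stated proposition. Proposition~\ref{prop:Post} is Post's classical theorem about clones on $\{0,1\}$; the paper cites it from~\cite{Post} and gives no proof. What you have written is an attempt at Proposition~\ref{prop:2omega}, the trichotomy for reducts of $(C_2^\omega,E)$.

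Read as a proof of Proposition~\ref{prop:2omega}, your argument has a genuine gap exactly where you flag it, and the paper's route is different. You appeal to ``analogues for $(C_2^\omega,E)$ of Lemmas~\ref{lem:essbin} and~\ref{lem:essgen}''; those lemmas are specific to Henson graphs and no such analogues are proved. The paper instead applies Proposition~\ref{prop:BK} to the action of $\Pol(\Gamma,C)$ on a single infinite class to get, for each $i\in\{0,1\}$, a binary $f_i$ injective on $C_i$, and then combines $f_0,f_1$ into one binary $f$ injective on both classes. Canonizing $f$ and using that there are only two classes forces (up to symmetry) $f(N,\cdot)=N$, whence $f$ is of behaviour $p_1$, balanced in the first and $E$-dominated in the second argument. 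The desired $h$ is then simply $h(x,y,z):=f\bigl(g(x,y,z),q(x,y,z)\bigr)$ for any ternary injection $q$; a short type table verifies the minority and hyperplanely balanced xnor behaviour. Your composition $b(b(b(p,x),y),z)$ and the ensuing need to rule out xnor-behaved binary injections are avoided entirely: the paper puts $f$ only in the outer position, where its projection-type class action is harmless, and lets the inner $g$ supply the XOR.
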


We moreover require the following result on polymorphism clones on a countable set.

\begin{proposition}[from~\cite{ecsps}]\label{prop:BK}
Every polymorphism clone on a countably infinite set which contains all permutations as well as an essential operation contains a binary injection.
\end{proposition}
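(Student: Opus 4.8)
The plan is to mirror the strategy of Lemma~\ref{lem:essbin}, simplified by the fact that the base structure is pure equality rather than a Henson graph, so that no clique-freeness obstruction arises. Write $\mathscr{C}$ for the given clone and $D$ for its countable domain. Since $\mathscr{C}$ contains $\mathrm{Sym}(D)=\Aut(D,=)$ and is locally closed, and since $(D,=)$ is $\omega$-categorical and homogeneous, $\mathscr{C}=\Pol(\Gamma)$ for some reduct $\Gamma$ of $(D,=)$ by Theorem~\ref{conf:thm:inv-pol}; it therefore suffices to produce a binary injection in $\mathscr{C}$. First I would reduce to a binary essential operation: the proof of \cite[Lemma 40]{RandomMinOps}, reproduced in the excerpt as Lemma~\ref{lem:essgen}, uses only that the base structure is homogeneous with all permutations available, hence applies verbatim to $(D,=)$, so the given essential operation generates a binary essential operation $f$.

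Next I would invoke the generation criterion of \cite[Lemma 42]{RandomMinOps} used inside the proof of Lemma~\ref{lem:essbin}: $f$ together with $\Aut(D,=)$ generates a binary injection provided that for every pp-formula $\phi$ over $\Gamma$, satisfiability of $\phi\wedge x\neq y$ and of $\phi\wedge s\neq t$ implies satisfiability of $\phi\wedge x\neq y\wedge s\neq t$. As in the proof of Lemma~\ref{lem:essbin}, this reduces to the following claim: given $4$-tuples $a=(x,y,z,z)$ and $b=(p,p,q,r)$ over $D$ with $x\neq y$ and $q\neq r$, there exist $a',b'$ with $\tp(a')=\tp(a)$ and $\tp(b')=\tp(b)$ in $(D,=)$ such that $f(a',b')$ has distinct first two and distinct last two coordinates. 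Here a type over pure equality is just an equality pattern, so the claim asks for a single application of $f$ that simultaneously separates a pair in its first argument and a pair in its second argument.

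To prove the claim I would exploit essentiality together with the homogeneity of $(D,=)$. Call a pair $(s,t)\in D^2$ \emph{good in the second argument} if $f(s,t)\neq f(s,t')$ for all $t'\neq t$, and define good in the first argument symmetrically. Because $\Aut(D,=)=\mathrm{Sym}(D)$ acts transitively on the tuples of any fixed equality pattern, the predicates ``good in the $i$-th argument'' are constant along generic (pairwise distinct) inputs: either every generic pair is good in the second argument, or none is, and likewise for the first. Since $f$ is essential it depends on its second argument, which after pre- and post-composition with permutations yields a \emph{generic} witness of separation, forcing every generic pair to be good in the second argument; symmetrically every generic pair is good in the first. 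Choosing $a',b'$ with pairwise distinct entries realizing the required equality patterns (possible since over pure equality every finite equality pattern is realizable, the trivial analogue of the extension property) then gives $f(x',p')\neq f(y',p')$ and $f(z',q')\neq f(z',r')$, establishing the claim; the criterion then delivers, via the compactness and local-closure argument spelled out in Proposition~\ref{prop:redendo}, a genuine binary injection in $\mathscr{C}$.

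The main obstacle is the step asserting that essentiality yields \emph{generic} coordinate-wise separation: the witness for dependence on an argument supplied directly by essentiality may live on a non-generic (diagonal or partially collapsed) configuration, so one must argue that such a witness can always be transported to, or regenerated at, a pairwise-distinct configuration before the homogeneity propagation applies. In the pure-equality setting this transport is considerably easier than the ``$v$-good'' propagation of Lemma~\ref{lem:essbin}, since there are no forbidden cliques and every inequality pattern extends freely; if a direct transport proves awkward, an alternative is to first canonize $f$ as a function from $(D,=,\prec)^2$ to $(D,=,\prec)$ using the analogue of Proposition~\ref{prop:canfct} for the Ramsey structure $(\mathbb{Q},<)$, after which essentiality immediately forces injectivity in each coordinate and the finitely many surviving canonical behaviours are either injective or generate an injection by composition and local closure.
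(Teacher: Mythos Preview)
The paper does not prove this proposition; it is quoted from \cite{ecsps} without argument, so there is no proof in the paper to compare against. Your overall strategy---reduce to a binary essential $f$ and then invoke the criterion of \cite[Lemma~42]{RandomMinOps} via the 4-tuple claim, as in Lemma~\ref{lem:essbin}---is a reasonable plan, but the execution contains a genuine gap.

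The central error is the assertion that ``the predicates `good in the $i$-th argument' are constant along generic (pairwise distinct) inputs'' because $\mathrm{Sym}(D)$ acts transitively on tuples of a fixed equality pattern. Transitivity of $\Aut(D,=)$ tells you that any two generic pairs are related by an automorphism $\alpha$, but it says nothing about how $f$ behaves at $\alpha$-related points: $f$ is not assumed canonical, so there is no reason for goodness at $(s,t)$ to imply goodness at $(\alpha(s),\alpha(t))$. Goodness is a property of the specific function $f$ at a point, not an orbit invariant. The Henson-graph argument you are imitating does \emph{not} proceed by orbit transitivity; it proceeds by \emph{propagation}: assuming the 4-tuple claim fails for all choices of $a',b'$, one deduces that v-goodness at $(s,t)$ forces v-goodness at specific nearby points $(s_0,t_0)$, and then chains such steps to reach every point. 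That propagation is indeed shorter over pure equality (two steps suffice), but you have replaced it with a non-sequitur rather than carried it out; the ``main obstacle'' you identify (non-generic witnesses) is not the actual failure point.

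There is a second issue. The propagation argument in Lemma~\ref{lem:essbin} relies on $f$ preserving $\neq$: this is exactly what lets one conclude that $(u_1,v)$ is v-good from $f(u_1,v)=f(u_2,v)$, since otherwise one finds two points differing in both coordinates with equal $f$-value. Over $(H_n,E)$ preservation of $\neq$ is part of the hypothesis, but Proposition~\ref{prop:BK} makes no such assumption, and nothing in your outline handles the case where the essential operation collapses distinct inputs. Your canonization fallback does not close this gap either: canonizing without constants may destroy essentiality (the canonical function generated could be a projection), and you have not argued why, with constants added, the resulting finitely many behaviours must include or generate an injection.
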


We now combine these two results to a proof of Proposition~\ref{prop:2omega}.

\begin{proof}[of Proposition~\ref{prop:2omega}] 
Recall that the equivalence classes of $\Eq$ are denoted by $C_0$ and $C_1$, and that $E$, $N$, and $\Eq$ are preserved by the functions of $\Pol(\Gamma)$, by Proposition~\ref{prop:eqpreserved}.
Suppose that the first statement of the proposition does not hold. Then by Proposition~\ref{prop:Post}, the action of $\Pol(\Gamma)$  on $\{C_0,C_1\}$ contains a unary constant operation, or a function which behaves like  ternary addition modulo 2. The first case  is impossible since the unary functions in $\Pol(\Gamma)$ preserve $N$, so the latter case holds and $\Pol(\Gamma)$ contains a ternary function $g$ which acts like $x+y+z$ modulo $2$ on the classes. 

Suppose now in addition that the second statement of the proposition does not hold either, and fix some equivalence class $C$. Since the action of $\Pol(\Gamma,C)$ on $C$ contains all permutations of $C$, by Proposition~\ref{prop:BK} it also contains a binary injection. Therefore $\Pol(\Gamma)$ contains for each $i\in\{0,1\}$ a binary function $f_i$ whose restriction to $C_i$ is an injection on this set.

We claim that there is a single function $f\in\Pol(\Gamma)$ which has this property for both $C_0$ and $C_1$. Note that since $N$ is preserved by $f_0$, it maps $C_1$ into itself. If $f_0$ is essential on $C_1$, then Proposition~\ref{prop:BK} implies that  together with all permutations which fix the classes, it generates a function which is injective on $C_1$; this function is then injective on both classes $C_0, C_1$. So assume that $f_0$ is not essential on $C_1$, say without loss of generality that it depends only on the first coordinate (and injectively so, since it preserves $E$). Then $f_0(f_1(x,y),f_0(x,y))$ preserves both classes and is injective on each of them.

By Proposition~\ref{prop:canfct}, we may assume that $f$ is canonical as a function from $(C_2^\omega,E,\prec)\times (C_2^\omega,E,\prec)$ to $(C_2^\omega,E,\prec)$. We claim  that $f$ is also canonical as a function from $(C_2^\omega,E)\times(C_2^\omega,E)$ to $(C_2^\omega,E)$. To prove this, it suffices to show that if $u,v,u',v'\in C_2^\omega\times C_2^\omega$ are so that $(u,v)$ and $(u',v')$ have the same type in $(C_2^\omega,E)\times (C_2^\omega,E)$, then $(f(u),f(v))$ and $(f({u'}),f(v'))$ have the same type in $(C_2^\omega,E)$. 
There exist $u'',v''\in C_2^\omega\times C_2^\omega$ such that $(u',v')$ and $(u'',v'')$ have the same type in $(C_2^\omega,E,\prec)\times (C_2^\omega,E,\prec)$ and such that ${\Eq}{\Eq}(u,u'')$ and ${\Eq}{\Eq}(v,v'')$; by the canonicity of $f$ as a function from $(C_2^\omega,E,\prec)\times (C_2^\omega,E,\prec)$ to $(C_2^\omega,E,\prec)$, it suffices to show that $(f(u),f(v))$ and $(f({u''}),f(v''))$ have the same type in $(C_2^\omega,E)$. Since $\Eq$ is preserved, we have  $\Eq(f(u),f(u''))$ and $\Eq(f(v),f(v''))$, and so $\Eq(f(u),f(v))$ implies $\Eq(f(u''),f(v'')))$ and vice-versa, by the transitivity of $\Eq$. Failure of canonicity can therefore only happen if $\Eq(f(u),f(v))$ and $\Eq(f(u''),f(v'')))$, and precisely one of $f(u)=f(v)$ and $f(u'')=f(v'')$ holds, say without loss of generality the former. But then picking any $v'''\in C_2^\omega\times C_2^\omega$ distinct from $v$ such that ${\Eq}{\Eq}(v,v''')$ and such that the type of $(u,v)$ equals the type of $(u,v''')$ in $(C_2^\omega,E,\prec)\times (C_2^\omega,E,\prec)$ shows that $f(v)=f(u)=f(v''')$ by canonicity, contradicting the fact that $f$ is injective on each equivalence class.

We analyze the behaviour of the canonical function $f\colon (C_2^\omega,E)\times (C_2^\omega,E) \To(C_2^\omega,E)$. Because $E$ and $N$ are preserved, we have $f(E,E)=E$ and $f(N,N)=N$. Moreover, because $f$ is injective on the classes, and because $\Eq$ is preserved, we have $f(=,E)=f(E,=)=E$.

We next claim that either $f(\cdot ,N)=N$ or $f(N,\cdot)=N$. Otherwise, there exist $Q,P\in\{E,=\}$ such that $f(Q,N)\neq N$ and $f(N,P)\neq N$. Pick $u,v,w\in (C_2^\omega)^2$ such that $\QN(u,v), \NPe(v,w)$, and $\NN(u,w)$. Then $\Eq(f(u),f(w))$ and $N(f(u),f(w))$, a contradiction.

Assume henceforth without loss of generality that $f(N,\cdot)=N$. 
Then $f(P,N)\neq N$ for $P\neq N$, because there are only two equivalence classes. 
Moreover, $f(E,N)= {=}$ or $f(=,N)= {=}$ would imply that $f$ is not injective on the classes, so we have $f(E,N) = f(=,N) = E$.

Summarizing, $f$ is a binary injection of behaviour  $p_1$, balanced in the first argument, and $E$-dominated in the second argument.

Let $q\in\Pol(\Gamma)$ be any ternary injection (for example, $(x,y,z)\mapsto f(x,f(y,z))$), and set $h(x,y,z):=f(g(x,y,z), q(x,y,z))$. 
We now show that $h$ is canonical by establishing all type conditions satisfied by it. To this end, we use the behaviour of $f$ and the fact that $g$ acts like $x+y+z$ modulo $2$ on the classes. The latter fact implies that $g$ satisfies certain type conditions as well, as is easily verified: $g(\Eq,\Eq,N)=g(\Eq,N,\Eq)=g(N,\Eq,\Eq)=N$, $g(\Eq,\Eq,\Eq)=\Eq$, and moreover $g(\Eq,N,N)=\Eq$, $g(N,\Eq,N)=\Eq$, and $g(N,N,\Eq)=\Eq$. In the following table, $u,v,w\in (C_2^\omega)^2$ are three pairs for which $\eqeqeq(u,v,w)$ does not hold, and according to the type of $(u,v,w)$ in $(C_2^\omega,E)\times (C_2^\omega,E)$ the type of $h(u,v,w)$ in $(C_2^\omega,E)$  is computed. By the symmetry of the type conditions of $g$ listed above, and since of $q$ we only use injectivity so that ${\neq}(q(u,v,w))$ holds, the value of a triple of types does not change if its components are permuted. Therefore, we only list all possibilities of types for $(u,v,w)$ up to permutations.
\[
\begin{array}{ccc}
\mbox{$\tp(u,v,w)$} & \mbox{$\tp(g(u,v,w), q(u,v,w))$} &  \mbox{$\tp(h(u,v,w))$} \\
\EEE & (E,\neq) & E\\
\NNN & (N,\neq) & N\\
\EEN & (N,\neq)& N\\
\ENN & (\Eq,\neq) & E\\
\eqEE & (\Eq,\neq) & E \\
\eqNN & (\Eq,\neq) & E \\
\eqEN & (N,\neq) & N \\
\eqeqE & (\Eq,\neq) & E \\
\eqeqN & (N,\neq) & N \\
\end{array}
\]
So $h$ acts like a minority which is hyperplanely of behaviour balanced xnor.
\ignore{
{\color{magenta}
Set $q(x,y,z):=f(x,f(y,z))$, and $h(x,y,z):= g(q(x,y,z),q(y,z,x),q(z,x,y))$. We now establish some type conditions satisfied by these functions, using the behaviour of $f$ and the fact that $g$ acts like $x+y+z$ modulo $2$ on the classes. This fact implies that $g$ satisfies some type conditions as well, as is easily verified: $g(\Eq,\Eq,N)=g(\Eq,N,\Eq)=g(N,\Eq,\Eq)=N$, and moreover $g(\Eq,N,N)=\Eq$, $g(N,\Eq,N)=\Eq$, and $g(\Eq,N,N)=\Eq$. In the following table, $u,v,w\in (C_2^\omega)^2$ are three pairs, and according to the type of $(u,v,w)$ in $(C_2^\omega,E)\times (C_2^\omega,E)$ the type of $h(u,v,w)$ in $(C_2^\omega,E)$  is computed. By the symmetry of the function $g$ in its action on the classes, and by the cyclically symmetric construction of $q$, the result of a triple types does not change if its components are permuted cyclically, so that we need not list all possibilities.
\[
\begin{array}{ccc}
\mbox{$\tp(u,v,w)$} & \mbox{$\tp(q(u,v,w), q(v,w,u), q(w,u,v))$} &  \mbox{$\tp(h(u,v,w))$} \\
\EEE & \EEE & E\\
\NNN & \NNN & N\\
\EEN & \EEN& N\\
\ENN & \ENN & E\\
\eqEE & \EEE & E \\
\eqNN & \ENN & E \\
\eqEN & \EEN & N \\
\eqNE & \ENE & N \\
\eqeqE & \EEE & E \\
\eqeqN & \EEN & N \\
\end{array}
\]
So $h$ acts like a minority which is hyperplanely of behaviour balanced xnor.
}
}




\end{proof}\smallskip

\subsection{The case of infinitely many classes of size two: $n=\omega$ and $s=2$}~\label{sect:eq:infinity2} Recall that in this situation, Proposition~\ref{prop:eqpreserved} does not apply, and $\Eq$ might not be pp-definable in a reduct $\Gamma$ of  $(C^2_\omega,E)$, even if $\Gamma$ is a model-complete core. We first show that if this happens, then $\Pol(\Gamma)$ contains a certain binary canonical function (Proposition~\ref{prop:al-jabr}). We then show, in Proposition~\ref{prop:omega2}, that if $\Eq$ does have a primitive positive definition in $\Gamma$, then either one of the two sources of hardness applies, or $\Pol(\Gamma)$ contains a ternary function of a certain behaviour.

\begin{proposition}
Let $\Gamma$ be a reduct of $(C_\omega^2, E)$ such that $\End(\Gamma)=\overline{\Aut(C_\omega^2,E)}$, and such that $\Eq$ is not pp-definable. Then $\Gamma$ enjoys a binary canonical polymorphism of behaviour $\mini$ which is $N$-dominated.
\label{prop:al-jabr}
\end{proposition}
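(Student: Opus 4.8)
The plan is to reduce first to a binary polymorphism violating $\Eq$, and then to exploit the rigidity of the infinite matching $(C_\omega^2,E)$ — a graph in which every vertex has a unique $E$-neighbour and which is $K_3$-free — to pin down the behaviour of such a polymorphism. First I would produce a binary $f\in\Pol(\Gamma)$ preserving $E$ and $N$ and violating $\Eq$. Since $\Eq$ is not pp-definable in the $\omega$-categorical $\Gamma$, Theorem~\ref{conf:thm:inv-pol} yields a polymorphism violating $\Eq$; as $\Eq$ meets exactly two orbits of pairs (the diagonal and $E$), Lemma~\ref{lem:arity-reduction} gives a binary such $f$. Every polymorphism preserves $E$ and $N$, for otherwise arity reduction would produce an endomorphism violating $E$ or $N$, against $\End(\Gamma)=\overline{\Aut(C_\omega^2,E)}$. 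Because $f$ preserves $E$, the input type with both coordinates an $E$-pair cannot witness a violation, so (after possibly swapping the two arguments) the violation occurs hyperplane-wise: there are $c,d,d'$ with $E(d,d')$ while $f(c,d)$ and $f(c,d')$ lie in distinct classes.

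The key structural fact I would record next is that every \emph{canonical} binary injection $g$ preserving $E$ and $N$ is automatically of behaviour $\mini$ on the inputs with both coordinates distinct, and automatically violates $\Eq$. Indeed $g(E,E)=E$ and $g(N,N)=N$ are forced; if $g(E,N)=E$ held, then for a pair of partners $x,x'$ and \emph{any} two distinct $y,y'$ the points $g(x,y)$ and $g(x',y')$ would be partners, so fixing three pairwise distinct $y_1,y_2,y_3$ the unique partner $g(x',y_1)$ of both $g(x,y_2)$ and $g(x,y_3)$ would force $g(x,y_2)=g(x,y_3)$, contradicting injectivity; hence $g(E,N)=N$, and symmetrically $g(N,E)=N$ (the argument never refers to the order $\prec$). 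The same $K_3$-freeness gives $g(=,N)=g(N,=)=N$ and forbids $g(=,E)=g(E,=)=E$ simultaneously, so at least one of $g(=,E),g(E,=)$ equals $N$, whence $g$ violates $\Eq$.

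The crux of the proof is therefore to show that $f$ generates a binary injection preserving $E$ and $N$. Here the plan is to make $f$ canonical with constants recording the edge-splitting above (Proposition~\ref{prop:canfct-C-high-n-low-s}), and then to remove all identifications by a local-closure construction on the plane $(C_\omega^2)^2$: the splitting coordinate is used to separate any two points that a non-injective behaviour would merge, iterating over increasing finite subsets and passing to the limit by the usual compactness argument (cf.\ the proof of Proposition~\ref{prop:redendo} and the plane constructions in the proof of Proposition~\ref{prop:nonProjGeneratesMin}), with $K_3$-freeness excluding the triangle-creating behaviours throughout. \emph{This is the step I expect to be the main obstacle}: the matching has a very restrictive extension property (a vertex has a unique neighbour), so the binary-injection argument of Lemma~\ref{lem:essbin} does not transfer verbatim, and one must verify that the single split provided by the $\Eq$-violation genuinely suffices to separate every merged pair while never creating a triangle.

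Once a binary injection $g$ is in hand, I would finish quickly. Canonizing $g$ via Proposition~\ref{prop:canfct-C-high-n-low-s} yields a \emph{canonical} binary injection (injectivity being preserved, since it amounts to preservation of $\neq$ on the plane, a closed condition), and by the rigidity above this canonical injection is $\mini$ on the distinct inputs with $g(=,N)=g(N,=)=N$. The symmetrization $h(x,y):=g(g(x,y),g(y,x))$ is then again a canonical binary injection which is $\mini$ on the distinct inputs, and on the remaining inputs one computes $h(=,E)=g(g(=,E),g(E,=))$ and $h(E,=)=g(g(E,=),g(=,E))$, where the inner pair of relations is one of $(N,N),(E,N),(N,E)$ — never $(E,E)$ — so that $h(=,E)=h(E,=)=N$, while $h(=,N)=h(N,=)=g(N,N)=N$. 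Hence $h$ is a binary canonical injection of behaviour $\mini$ which is $N$-dominated, as required.
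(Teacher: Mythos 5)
Your outer strategy is sound and your endgame is correct: the observation that any \emph{canonical} binary injection on $(C_\omega^2,E)$ preserving $E$ and $N$ is forced (by the unique-neighbour property) to satisfy $g(E,N)=g(N,E)=g(=,N)=g(N,=)=N$ and to have at least one of $g(=,E)$, $g(E,=)$ equal to $N$ is correct, robust to the order $\prec$, and the symmetrization $h(x,y):=g(g(x,y),g(y,x))$ does then yield a canonical binary injection of behaviour $\mini$ which is $N$-dominated. The initial reduction to a binary $f$ violating $\Eq$ on a hyperplane is also fine.

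However, there is a genuine gap exactly where you flag it: the claim that $f$ generates a binary \emph{injection}. Everything in your argument funnels through this claim, and your sketch for it (``remove all identifications by a local-closure construction on the plane, using the splitting coordinate to separate merged pairs'') is not a proof. Neither Lemma~\ref{lem:essbin} nor Proposition~\ref{prop:BK} applies here --- the first is specific to Henson graphs and its proof uses their extension property and diameter~$2$, and the second requires all permutations in the clone --- so a genuinely new argument is needed, and it is not clear how a single split of one square lets you separate, say, the two points of a class in the \emph{second} coordinate while controlling which relations are created. The paper's proof of Proposition~\ref{prop:al-jabr} avoids this intermediate step entirely: it never first produces an injection, but instead runs a case analysis on which squares $C_i\times C_j$ the canonical function splits (all squares in $S\times S$, or none), uses the auxiliary plane functions $(x,y)\mapsto(f(x,y),f(y,x))$ and $(x,y)\mapsto(x,f(x,y))$ to first manufacture a function with $f(N,\cdot)=N$ that still splits a square, and only then iterates compositions and local closure to force $N$ on all $\Eeq$-pairs; injectivity of the resulting function is a by-product, not a stepping stone. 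So the content of your missing step is essentially the entire technical body of the paper's proof, and as written your proposal does not supply it.
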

\begin{proof}
By Theorem~\ref{conf:thm:inv-pol}, $\Gamma$ has a polymorphism $f$ which violates $\Eq$. By the assumption, all endomorphisms preserve $E$ and $N$, and hence, by Lemma~\ref{lem:arity-reduction}, so does $f$. By the same lemma, because $\Eq$ consists of two orbits with respect to the action of the automorphism group of $(C_\omega^2,E)$ on pairs, we may assume that $f$ is binary.


We refer to sets of the form $C \times D$, where $C, D$ are equivalence classes of $\Eq$, as \emph{squares}.  Note that each square is the disjoint union of precisely two edges in the product graph $({C_\omega^2},E)^2$, and that each of these edges is mapped by $f$ to an edge in $({C_\omega^2},E)$, since $f$ preserves $E$. We say that $f$ \emph{splits} a square if it does not map this square into a single class; in this case, it necessarily maps it into two classes, by the previous observation. 

By composing $f$ with automorphisms from the inside, we may assume that $f$ violates $\Eq$ on a square of the form $C\times C$. Writing $C=\{u,v\}$, we may invoke Proposition~\ref{prop:canfct-C-high-n-low-s} and assume that $f$ is canonical when viewed as a function  from $({C_\omega^2},E,\prec,u,v)\times ({C_\omega^2},E,\prec,u,v)$ to $({C_\omega^2},E,\prec)$. We set  $S:= C_\omega^2\setminus C$ and $S':=\{x\in S\;|\; x\prec u\wedge x\prec v\}$.

We now distinguish two cases to show the following.\smallskip

{\bf Claim.} $f$ generates a binary function $f'$ which still splits $C\times C$ and satisfies either  $f'(N,\cdot)=N$ or $f'(\cdot, N)=N$.\smallskip 

{\bf Case 1:}
We first assume that $f$ splits a square within $(S')^2$. Then, by canonicity, it splits all squares within $(S')^2$. In that case, the function $f(e(x),e(y))$, where $e$ is a self-embedding from $({C_\omega^2},E,\prec)$  onto the structure induced therein by $S'$, is canonical whilst splitting all squares. Replacing $f$ by this function, we henceforth assume $f$ to split all squares. The constants $u,v$ which were introduced to witness the occurrence of a splitting will not be of importance to us anymore in the further discussion of this case.


%

The function $g$ on $({C_\omega^2})^2$ sending every pair $(x,y)$ to the pair $(f(x,y),f(y,x))$ is canonical when viewed as a function 
$$
({C_\omega^2},E,\prec)\times ({C_\omega^2},E,\prec)\To(({C_\omega^2})^2,\EE,\EN,\NE,\NN,\Eeq,\eqE,\Neq,\eqN, {\prec\prec})\; ,
$$
by the canonicity of $f$. In the following, we analyse the behaviour of $g$.

We start by observing that every square consists of an \emph{upward edge} and a \emph{downward edge} in $(C_\omega^2,E,\prec)^2$, the orientation being induced by the order $\prec$: by the upward edge $(p,q)\in \EE$ we refer to the one on which the order $\prec$ agrees in both coordinates between $p$ and $q$, and by the downward one we refer to the other edge in the square (on which $\prec$ disagrees between the coordinates). Let $U$ be the set of points contained in an  upward edge, and $V$ the set of points contained in a downward edge, so that $({C_\omega^2})^2$ is the disjoint union of $U$ and $V$. We are going to verify the following properties of $g$:
\begin{itemize}
\item[(i)] $g[U]\subseteq U$ and $g[V]\subseteq V$.
\item[(ii)] $\Eeq(p,q)$, $\eqE(p,q)$, and $\NN(p,q)$ all imply $\NN(g(p),g(q))$, for all $p,q\in ({C_\omega^2})^2$.
\item[(iii)] $\NN(g(p),g(q))$ for all $p\in U$ and all $q\in V$.
\item[(iv)] On $U$ as well as on $V$, either $f(N,\cdot)=N$ or $f(\cdot,N)=N$ holds.
\end{itemize}
Of Property~(i), we give the argument that $g[U]\subseteq U$; proving $g[V]\subseteq V$ is similar. Let $p=(p_1,p_2), q=(q_1,q_2)\in ({C_\omega^2})^2$ be so that $(p,q)$ forms an upward edge, and say that $p_1\prec q_1$ and $p_2\prec q_2$. 
If $f(p_1,p_2)\prec f(q_1,q_2)$, then by canonicity also $f(p_2,p_1)\prec f(q_2,q_1)$, and so $(g(p),g(q))$ is related by $\prec$ in both coordinates. Since $f$ preserves $E$, $(g(p),g(q))$ is also related by $E$ in both coordinates, and hence $(g(p),g(q))$ forms an upward edge. If $f(q_1,q_2)\prec f(p_1,p_2)$, then a similar argument shows that $(g(p),g(q))$ forms an upward edge.

Property~(ii) follows since $f$ preserves $N$ and because $f$ splits all squares.

For~(iii), suppose that $\NN(g(p),g(q))$ does not hold for some $p\in U$ and $q\in V$. We cannot have $\EE(p,q)$ since $p$ is contained in an upward and $q$ in a downward edge, so by~(ii), $p$ and $q$ must be related by $N$ in one coordinate. Say we have $\Neq(p,q)$; the other situations are handled similarly. Pick $q'\in V$ distinct from $q$ such that the types of $(p,q)$ and $(p,q')$ in $({C_\omega^2},E,\prec)\times ({C_\omega^2},E,\prec)$ coincide. Then, by canonicity, we have that $g(p),g(q)$ are equivalent with respect to $\Eq$ in the same coordinate as $g(p),g(q')$; hence, so are $g(q),g(q')$, by the transitivity of $\Eq$. By canonicity, we then know that for the unique $q''\in V$ with $\Eeq(p,q'')$, we have that $g(q)$ and $g(q'')$ are equivalent in that very same coordinate, since the types of $(q,q')$ and either $(q,q'')$ or $(q'',q)$ agree. Again by transitivity, $g(p),g(q'')$ are then equivalent in that coordinate, contradicting~(ii).

To see Property~(iv), suppose that both $f(N,\cdot)=N$ and $f(\cdot,N)=N$ do not hold on $U$. Then there exist $p,q,p',q' \in U$ such that $p,q$ are related by $N$ in the first coordinate, $p',q'$ are related by $N$ in the second coordinate, and $\Eq(f(p),f(q))$ and $\Eq(f(p'),f(q'))$ hold. But then we could pick $q''\in U$ such that $\tp(p,q'')=\tp(p',q')$ in $({C_\omega^2},E,\prec)\times ({C_\omega^2},E,\prec)$; any such $q''$ necessarily satisfies $\NN(q,q'')$. By canonicity we would have $\Eq(f(p),f(q''))$, and hence, by transitivity, this would imply $\Eq(f(q),f(q''))$, a contradiction since $f$ preserves $N$.

Now suppose that $f(N,\cdot)=N$ on both $U$ and $V$. Then the function $f'(x,y):=f(g(x,y))=f(f(x,y),f(y,x))$ has the same property  by~(iii), and moreover it splits all squares, so we are done. If  $f(\cdot,N)=N$ on both $U$ and $V$, then by symmetry $f'(x,y):=f(f(y,x),f(x,y))$ has the same property everywhere and splits all squares. It remains to consider the case where, say, $f(N,\cdot)=N$ on $U$ and $f(\cdot,N)=N$ on $V$. 
The function $f'(x,y):=f\circ g$ satisfies $f'(N,\cdot)=N$. To see this, let $p,q\in ({C_\omega^2})^2$ be related by $N$ in the first coordinate. If $p,q\in U$, then $g(p), g(q)$ are related by $N$ in the first coordinate, and because $g[U]\subseteq U$, we have $N(f(g(p)), f(g(q)))$. When $p\in U$ and $q\in V$, then $\NN(g(p),g(q))$ by~(iii), and so $N(f'(p), f'(q))$ since $f$ preserves $N$. Finally, if $p,q \in V$, then $g(p),g(q)$ are related by $N$ in the second coordinate, and using $g[V]\subseteq V$, we see that $N(f'(p), f'(q))$. Since $f'$ moreover splits all squares by~(ii), we are done.\smallskip

{\bf Case 2:} 
Assume now that $f$ does not split any square within $(S')^2$. We claim that $f(N,\cdot)=N$ or $f(\cdot,N)=N$ on $(S')^2$: otherwise, there would exist $p,q,p',q' \in (S')^2$ such that $p,q$ are related by $N$ in the first coordinate, $p',q'$ are related by $N$ in the second coordinate, and $\Eq(f(p),f(q))$ and $\Eq(f(p'),f(q'))$ hold. But then we could pick $q''\in (S')^2$ such that $\tp(p,q'')=\tp(p',q')$ in $({C_\omega^2},E,\prec,u,v)\times ({C_\omega^2},E,\prec,u,v)$; any such $q''$ necessarily satisfies $\NN(q,q'')$. By canonicity we would have $\Eq(f(p),f(q''))$, and hence, by transitivity, this would imply $\Eq(f(q),f(q''))$, a contradiction since $f$ preserves $N$. We assume without loss of generality that $f(N,\cdot)=N$ on $(S')^2$.

The function $f(e(x),e(y))$, where $e$ is a self-embedding from $({C_\omega^2},E,u,v)$  onto the structure induced therein by $S'\cup\{u,v\}$, still splits $C\times C$, splits no square within $S\times S$, and satisfies $f(N,\cdot)=N$ on $S^2$.
Invoking Proposition~\ref{prop:canfct-C-high-n-low-s} again, we may assume that that function is moreover canonical as a function from $({C_\omega^2},E,\prec,u,v)\times ({C_\omega^2},E,\prec,u,v)$ to $({C_\omega^2},E,\prec)$. Replacing $f$ by this function, we may therefore henceforth assume that $f$ itself enjoys the listed properties.

Note that the function $f(e(x),e(y))$ as in the preceding paragraph does not distinguish between elements of $S'$ and those of $S\setminus S'$, since $e$ sends the entire set $S$ into $S'$ before $f$ is applied. In particular, it has the property that for any $p\in C\times C$ and any $q\in (C_\omega^2)^2$, the type of $(f(p),f(q))$ in $({C_\omega^2},E)$ only depends on the type of $(p,q)$ in $({C_\omega^2},E,u,v)\times ({C_\omega^2},E,u,v)$, and not on the more precise type of $(p,q)$ in  $({C_\omega^2},E,u,v,\prec)\times ({C_\omega^2},E,u,v,\prec)$ (which does distinguish between $S'$ and $S\setminus S'$).

We now distinguish two subcases to show that $f$ generates a binary function $f'$ which splits $C\times C$ and such that $f'(N,\cdot)=N$ everywhere, thus proving the claim.

{Case 2.1:} If $f(N,\cdot)=N$ on $S \times C$, then
by canonicity and the remark above, one easily concludes $N(f(p),f(q))$ for all $p\in C\times C$ and all $q\in S \times C$, so that altogether $f(N,\cdot)=N$ everywhere. Hence, setting $f':=f$ we have achieved our goal.

{Case 2.2:} If $f(N,\cdot)=N$ does not hold on $S \times C$, then  there exists $c\in S\times C$ such that $N(f(c),f(q))$ for any $q\in 
S^2$. To see this, we can 
pick any $c\in S\times C$ so that there exists $q'\in S\times C$ related to $c$ by $N$ in the first coordinate and such that $\Eq(f(c),f(q'))$. Then, if there existed $q\in S^2$ with $\Eq(f(c),f(q))$, we would have $\Eq(f(q),f(q'))$; replacing $q'$ by $q''\in S\times C$ such that $\tp(c,q')=\tp(c,q'')$ in $({C_\omega^2},E,\prec,u,v)$ and such that $q',q''$ are related by $N$ in both coordinates, this would yield a contradiction to the preservation of $N$.

We are going to check the following properties of the function $g$ on $({C_\omega^2})^2$ defined by $(x,y)\mapsto (x,f(x,y))$.
\begin{itemize}
\item[(i)] Whenever $p,q\in ({C_\omega^2})^2$ are related by $N$ in the first coordinate, then so are $g(p), g(q)$.
\item[(ii)] If $p\in ({C_\omega^2})^2$, and $q\in S^2$ is related to $p$ by $N$ in the first coordinate, then $\NN(g(p),g(q))$.
\item[(iii)] Writing $a:=(u,u)$ and $b:=(v,u)$, we have $\Eeq(a,b)$ and $\EN(g(a),g(b))$.
\end{itemize}
Property~(i) is obvious from the definition of $g$. Property~(ii) is clear if $p\in {C_\omega^2}\times C$, since in that case $\NN(p,q)$ and since $f$ preserves $N$. If $p\in S^2$, then it follows from the fact that $f(N,\cdot)=N$ on $S$. Finally, consider the case where $p\in C \times S$. If we had $\Eq(f(p),f(q))$, then picking $q'\in S^2$ such that $\Neq(q,q')$ and such that $\tp(p,q)=\tp(p,q')$ in $({C_\omega^2},E,\prec)\times ({C_\omega^2},E,\prec)$, we would get $\Eq(f(p),f(q'))$ by canonicity, and so $\Eq(f(q),f(q'))$, contradicting that $f(N,\cdot)=N$ on $S$. Property~(iii) just restates that $f$ splits $C\times C$.

Let $e_1,e_2$ be self-embeddings of $({C_\omega^2},E)$ such that the range of $(e_1,e_2)\circ g$
is contained in $S\times {C_\omega^2}$ and such that $(e_1,e_2)\circ g(a)=c$. Using that assumption, $g':=g\circ (e_1,e_2)\circ g$ clearly also satisfies~(i) and~(ii). Moreover, since $(e_1,e_2)\circ g(a)=c$, and since $\EN((e_1,e_2)\circ g(a),(e_1,e_2)\circ g(b))$, we have $(e_1,e_2)\circ g(b)\in S^2$; this implies $\EN(g'(a),g'(b))$, since $N(f(c),f(q))$ for all $q\in S^2$. Hence, $g'$ still satisfies~(iii).

We then pick a pair $(e_1',e_2')$ of self-embeddings of $({C_\omega^2},E)$ with $(e_1',e_2')\circ g'(b)=c$, and consequently $(e_1',e_2')\circ g'(a)\in S^2$. Then $g'':=g\circ (e_1',e_2')\circ g'=g\circ (e_1',e_2')\circ g\circ (e_1,e_2)\circ g$ has the property that whenever $p,q\in ({C_\omega^2})^2$ are related by $N$ in the first coordinate, then $\NN(g''(p),g''(q))$; this is because every point went
through $S^2$ in one of the applications of $g$, and because of~(ii). Moreover, we have $\EN(g''(a),g''(b))$.

Setting $f'$ to be the projection of $g''$ onto the second coordinate then completes the proof.


{\bf Wrap-up.} Replacing $f$ by $f'$ from the claim, we thus henceforth assume that $f(N,\cdot)=N$. For the function $h$ on $({C_\omega^2})^2$ defined by $(x,y)\mapsto (f(x,y),f(y,x))$, we are going to prove the following properties.
\begin{itemize}
\item[(i)] If $p,q\in ({C_\omega^2})^2$ are related by $N$ in some coordinate, then $h(p), h(q)$
are related by $N$ in the same coordinate.
\item[(ii)] There are $p',q'\in ({C_\omega^2})^2$ with $\Eeq(p',q')$ such that $h(p'), h(q')$ are related
by $N$ in the first coordinate. 
\item[(iii)] There are $p'',q''\in ({C_\omega^2})^2$ with $\EN(p'',q'')$ such that $\NN(h(p''), h(q''))$.
\item[(iv)] There are $p''',q''' \in ({C_\omega^2})^2$ with $\eqN(p'',q'')$ such that $\NN(h(p'''), h(q'''))$.
\end{itemize}
Property~(i) is obvious because $f(N,\cdot)=N$, and~(ii) follows because $f$ splits a square. To see~(iii),  we first observe that there exist $p,q\in ({C_\omega^2})^2$ with equal first coordinate and such that $h(p),h(q)$ are related by $N$ in the first coordinate: simply pick $p,p'$ with $\eqE(p,p')$ within the square that is split; then $\NN(h(p),h(p'))$, and so for any $q\in ({C_\omega^2})^2$ with $\eqN(p,q)$ and $\eqN(p',q)$ we have that $h(q)$ must be related by $N$ in the first coordinate to either $h(p)$ or $h(p')$, showing the observation. Now fix $p,q$ with this property, and pick $v\in ({C_\omega^2})^2$ with
$\EN(p,v)$ and $\EN(q,v)$. Then $h(v)$ is related to $h(p)$ and $h(q)$ by $N$ in the
second coordinate by~(i), but also  necessarily to one of them in the first coordinate, showing~(iii). The proof of~(iv) is similar.


Using these properties, we first construct, by composition and topological closure, a function $h'$ on $({C_\omega^2})^2$ which yields $\NN(h'(p),h'(q))$ for all $p,q\in ({C_\omega^2})^2$ which are related by $N$ in at least one coordinate. To do this, let $\{(p_i,q_i)\;|\; i>0\}$ be an enumeration of all pairs in $({C_\omega^2})^2$ which are related by $N$ in at least one coordinate. We proceed inductively, constructing functions $h_0,h_1,\ldots$ with the property that: $\NN(h_n(p_j),h_n(q_j))$, for all $0<j\leq n$, and $h_n(p_j)$ and $h_n(q_j)$ are related by $N$ in at least one coordinate, for all $j>n$. For the base case, we set $h_0:=h$ (note that the first conjunct of the inductive hypothesis acts here on an empty set of pairs). 
Suppose we have already constructed $h_n$. Then $h_n(p_{n+1})$ and $h_n(q_{n+1})$ are related by $N$ in at least one coordinate. If $\NN(h_n(p_{n+1}),h_n(q_{n+1}))$, then we set $h_{n+1}:=h_n$. If $\EN(h_n(p_{n+1}),h_n(q_{n+1}))$, then let $(\alpha,\beta)$ be a pair of automorphisms of $({C_\omega^2},E)$ such that $(\alpha,\beta)(h_n(p_{n+1}))=p''$ (from~(iii)), and $(\alpha,\beta)(h_n(q_{n+1}))=q''$. Setting $h_{n+1}:=h\circ (\alpha,\beta)\circ h_n$ then yields the desired property for $(p_{n+1},q_{n+1})$. If $\NE(h_n(p_{n+1}),h_n(q_{n+1}))$, then $\EN(\pi\circ h_n(p_{n+1}),\pi\circ h_n(q_{n+1}))$, where $\pi\colon ({C_\omega^2})^2\to ({C_\omega^2})^2$ is defined by $(x,y)\mapsto (y,x)$ (i.e., $\pi$ is a pair of projections); we can thus proceed as before. The cases $\eqN(h_n(p_{n+1}),h_n(q_{n+1}))$ and $\Neq(h_n(p_{n+1}),h_n(q_{n+1}))$ are treated similarly, using~(iv) instead of~(iii). By topological closure, we obtain the function $h'$. 

Setting  $h'':=h'\circ h$, we retain the defining property of $h'$ by~(i), but moreover have $\NN(h''(p'),h''(q'))$, for the pair $(p',q')$ from~(ii).

The function $g_0:=f\circ h''$ then satisfies $g_0(N,\cdot)=g_0(\cdot,N)=N$, and moreover satisfies $N(g(p'),g(q'))$, since $\NN(h''(p'),h''(q'))$ and since $f$ preserves $N$.

Let $\{(p_i,q_i)\;|\; i\geq 0\}$ be an enumeration of all pairs in $({C_\omega^2})^2$ related by $\Eeq$, where $(p_0,q_0)=(p',q')$. As above, we obtain, by composition and topological closure, for every $i\geq 0$ a function $g_i$ which satisfies $g_i(N,\cdot)=g_i(\cdot,N)=N$ and such that $N(g_i(p_i),g_i(q_i))$. Setting $t_0:=g_0$, and $t_{n+1}:=f(t_n(x,y),g_{n+1}(x,y))$ for all $n\geq 0$, we obtain binary functions $t_0,t_1,\ldots$ satisfying $t_i(N,\cdot)=t_i(\cdot,N)=N$ and with the property that $N(t_i(p_j),t_i(q_j))$ for all $j\leq i$. By topological closure, we obtain a binary function $t$ satisfying $t(N,\cdot)=t(\cdot,N)=N$ and  $N(t(p),t(q))$ for all $p,q\in ({C_\omega^2})^2$ with $\Eeq(p,q)$. This function clearly has behaviour $\mini$ and is $N$-dominated in the first argument; since it preserves $E$, these properties also imply that it is $N$-dominated in the second argument.
\end{proof}

We now turn to the case where $\Eq$ is pp-definable in a reduct $\Gamma$, so that $\Pol(\Gamma)$ acts on its equivalence classes.

\begin{prop}\label{prop:omega2}
Let $\Gamma$ be a reduct of $(C_\omega^2, E)$ such that $\End(\Gamma)=\overline{\Aut(C_\omega^2,E)}$ and such that $\Eq$ is pp-definable. 
Then one of the following holds:
\begin{itemize} 
\item the action of $\Pol(\Gamma)$ on the equivalence classes of $\Eq$ has no essential function;
\item the action of $\Pol(\Gamma,C)$ on some (or any) equivalence class of $C$ has no essential function;
\item $\Pol(\Gamma)$ contains a ternary canonical function $h$ such that $h(N,\cdot,\cdot)=h(\cdot,N,\cdot)=h(\cdot,\cdot,N)=N$ which behaves like a minority on $\{E,=\}$ (so $h(E,=,=)=E$ etc.).
\end{itemize}
\end{prop}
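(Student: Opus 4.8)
The plan is to mirror the proof of Proposition~\ref{prop:2omega}, but with the two sources of hardness interchanged: here the equivalence classes form a countably infinite set, so the action of $\Pol(\Gamma)$ on the classes is governed by Proposition~\ref{prop:BK}, whereas each class has only two elements, so the action on a single class is governed by Post's Proposition~\ref{prop:Post}. Throughout I would use that every polymorphism preserves $E$ and $N$ (since $\End(\Gamma)=\overline{\Aut(C_\omega^2,E)}$, by Lemma~\ref{lem:arity-reduction}) and preserves $\Eq$ by hypothesis, so that $\Pol(\Gamma)$ acts both on the set of classes and, after fixing a class $C$, on $C$ itself; moreover $\Aut(C_\omega^2,E)$ induces the full symmetric group on the classes and induces both permutations of each two-element class. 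Assume that neither of the first two statements of the proposition holds, so that each action in question has an essential operation.

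From the failure of the first statement, the action on the classes contains all permutations together with an essential operation, so Proposition~\ref{prop:BK} yields a binary $f\in\Pol(\Gamma)$ whose action on the classes is injective. Since $f$ preserves $E$, its behaviour on a single class $\{0,1\}$ is a \emph{balanced} binary Boolean operation, and a short enumeration shows that every such operation is a projection, possibly composed with the flip of the two class elements (which we may absorb into an automorphism); after possibly exchanging the arguments of $f$, I may therefore assume that $f$ behaves like $p_1$ on each class. By Proposition~\ref{prop:canfct-C-high-n-low-s} I may also assume $f$ is canonical, and one checks that canonization preserves both the class-injectivity (the type conditions $f(N,\cdot)=f(\cdot,N)=N$) and the within-class behaviour $p_1$. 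From the failure of the second statement, the action of $\Pol(\Gamma,C)$ on some, hence on every, class $C$ contains both permutations and an essential operation, so Proposition~\ref{prop:Post} provides an operation acting as addition modulo $2$ on $C$, the constant alternative being excluded because $E$ is preserved. The crux of this step is to upgrade this to a single canonical $g\in\Pol(\Gamma)$ acting as addition modulo $2$ \emph{within every class at once}.

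Granting such a $g$, the conclusion follows by a composition in the spirit of Proposition~\ref{prop:2omega}. I would set $w(x,y,z):=f(f(x,y),z)$, whose action on the classes is a ternary injection, and put $h(x,y,z):=f(g(x,y,z),w(x,y,z))$; as a composition of canonical functions, $h$ is canonical. Because $f$ behaves like $p_1$ on each class, the within-class behaviour of $h$ equals that of $g$, namely minority on $\{E,=\}$ (so $h(E,=,=)=E$, etc.); and because the action of $f$ on the classes is a binary injection while that of $w$ is a ternary injection, the action of $h$ on the classes is injective, which is precisely the condition $h(N,\cdot,\cdot)=h(\cdot,N,\cdot)=h(\cdot,\cdot,N)=N$. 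A short table of type conditions, exactly as in Proposition~\ref{prop:2omega}, confirms both behaviours; note that the resulting $h$ is non-injective, its within-class minority collapsing triples to a single class, in agreement with the non-injective canonical function advertised for this case.

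The main obstacle is the middle step, the uniform realization of addition modulo $2$ within all of the infinitely many classes simultaneously. This difficulty is genuinely new compared with the case $n=2$, $s=\omega$, where the modular addition already lives at the level of just two classes and is thus uniform for free. Here Post's theorem only furnishes, for each individual class $C$, an operation that is addition modulo $2$ on $C$; canonizing it with the two elements of $C$ fixed as constants (Proposition~\ref{prop:canfct-C-high-n-low-s}) preserves its values on $C$ but a priori leaves its within-class behaviour on the remaining, generic classes equal to some other self-dual Boolean operation (a projection, majority, or minority). The plan to overcome this is to exploit that addition modulo $2$ on a two-element set is invariant under relabelling of its elements, and to use the class-level binary injection $f$ to transport the modular operation from one class to every class, assembling the results by a local-closure (compactness) argument of the type used in the proofs of Propositions~\ref{prop:redendo} and~\ref{prop:al-jabr}. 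I expect this transport-and-assemble argument, rather than the final composition and its routine type-condition table, to carry the technical weight of the proof.
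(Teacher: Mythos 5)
Your skeleton matches the paper's: assume both actions have essential operations, get a binary $f$ acting injectively on the classes from Proposition~\ref{prop:BK}, observe that $f$ is essentially unary (behaviour $p_1$) within each class, get addition modulo $2$ on a single class from Proposition~\ref{prop:Post}, propagate it to all classes, and compose. You have also correctly located the main difficulty in the propagation step. Nevertheless there are two genuine gaps. First, the propagation itself: your plan of ``transporting'' the mod-$2$ operation to each class by automorphisms and then ``assembling by compactness'' hides the real work. Compactness only reduces the problem to finding, for each $n$, a single polymorphism that is $x+y+z$ modulo $2$ on $C_0,\dots,C_n$ simultaneously, and merging the individually transported operations into one such function is exactly the nontrivial step. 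The paper's Lemma~\ref{lem:propagating+} does this by induction on $n$, with a case distinction on whether the current function is essential on $C_{n+1}$: in the inessential case it uses the specific composition $g_n(g_n'(x,y,z),g_n'(y,z,x),g_n'(z,x,y))$ with a ``role-swapped'' copy $g_n'$, and in the essential case it reapplies Post's theorem to the restriction to $C_{n+1}$ and then falls back on the first case. None of this is supplied or replaced by your transport-and-assemble sketch.

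Second, your final composition $h(x,y,z):=f(g(x,y,z),w(x,y,z))$ applies $g$ directly to arguments $x,y,z$ that may lie in three \emph{different} classes. The propagated $g$ is only controlled on the cube $C^3$ of a single class; on products $C_i\times C_j\times C_k$ of distinct classes its behaviour on $\{E,=\}$-typed pairs is unknown, so the claimed minority behaviour of $h$ on $\{E,=\}$ does not follow. You cannot repair this by simply canonizing $g$, because canonization composes $g$ from the inside with automorphisms that move a single class-cube into a product of distinct classes, so the mod-$2$ behaviour need not survive. The paper circumvents precisely this by first forming $q(x,y,z):=f(x,f(y,z))$ and $t=(q(x,y,z),q(y,z,x),q(z,x,y))$ and composing with unary maps $(\alpha,\beta,\gamma)$ that collapse every product $C_i\times C_j\times C_k$ into the cube of a single class before $g$ is applied; only then is $h:=g\circ(\alpha,\beta,\gamma)\circ t$ guaranteed to behave like a minority on $\{E,=\}$ while inheriting $h(N,\cdot,\cdot)=h(\cdot,N,\cdot)=h(\cdot,\cdot,N)=N$ from $q$. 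A smaller point in the same vein: before canonization the projection direction of $f$ within a class could vary from class to class, so you should canonize first and then read off the uniform within-class behaviour, as the paper does, rather than fixing $p_1$ classwise and hoping canonization preserves it.
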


To prove the proposition, we are again going to make use of Propositions~\ref{prop:Post} and~\ref{prop:BK}, and the following lemma. We are going to say that a ternary function $f$ on $C_\omega^2$ behaves like $x+y+z$ modulo 2 on an equivalence class $C=\{0,1\}$ of $\Eq$ if the restriction of $f$ to $C$ is of the form $\alpha\circ g_C$, where $\alpha\in\Aut(C_\omega^2,E)$ and $g_C$ is the ternary function on $C$ defined by $g_C(x,y,z)=x+y+z$ modulo 2. Note that this property can be expressed in terms of type conditions satisfied on $C$: namely, $f$ behaves like $x+y+z$ modulo 2 on $C$ if and only if it satisfies $f(E,E,E)=E$, $f(E,E,=)=f(E,=,E)=f(=,E,E)={=}$, and $f(E,=,=)=f(=,=,E)=f(=,E,=)={E}$ on $C$. In other words, $f$ behaves like a minority on the types $\{E,=\}$.

\begin{lem}\label{lem:propagating+}
Let $\Gamma$ be a reduct of $(C_\omega^2, E)$ such that $\End(\Gamma)=\overline{\Aut(C_\omega^2,E)}$, $\Eq$ is pp-definable, and $\Pol(\Gamma)$ contains a ternary function which behaves like $x+y+z$ modulo 2 on some equivalence class. Then $\Pol(\Gamma)$ contains a ternary function which behaves like $x+y+z$ modulo 2 on all equivalence classes.
\end{lem}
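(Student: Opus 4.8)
The plan is to reduce the statement to a purely finite, two-element combinatorial problem and to solve the latter by exploiting that $x+y+z \bmod 2$ is the affine (XOR) operation. Throughout I write the equivalence classes as $C_0,C_1,\dots$, each of size two, and use that every automorphism of $(C_\omega^2,E)$ is the composition of a permutation of the classes with an independent flip of the two elements inside each class; in particular $\Aut(C_\omega^2,E)$ contains all coordinatewise-independent flips and realises every permutation of the classes. Recall also that $E,N$ are preserved since $\End(\Gamma)=\overline{\Aut(C_\omega^2,E)}$, and $\Eq$ is preserved since it is pp-definable.

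First I would set up the local-closure reduction. Since $\Pol(\Gamma)$ is topologically closed, and since behaving like $x+y+z \bmod 2$ on a class $C$ is a condition on the values of a function on the finite set $C^3$, it suffices to show that for every finite set $S$ of classes there is a ternary $g_S\in\Pol(\Gamma)$ that behaves like $x+y+z \bmod 2$ on every class in $S$. A single polymorphism behaving like $x+y+z \bmod 2$ on all classes is then extracted by the compactness argument over the factor space $F/_\sim$ used in the proof of Proposition~\ref{prop:redendo}: for an increasing exhaustion $(S_n)$ of the set of classes, post-compose the $g_{S_n}$ with automorphisms to obtain convergence, and note that the limit inherits the behaviour on each class because it is a finitary condition.

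Next comes the finite reduction. Fix $S$. For each $D\in S$ I would first produce, by pre-composing the given function with automorphisms that move $D$ onto the class on which it is minority, a polymorphism behaving like $x+y+z \bmod 2$ on $D$; since $\Eq$ is preserved, this function maps every class into a single class, so post-composing with an automorphism realising the inverse permutation of the finitely many image classes of the classes in $S$ yields $g_D\in\Pol(\Gamma)$ which is minority on $D$ and maps each class of $S$ into itself. The $g_D$, together with the independent flips, now restrict to $\bigcup S$ as operations preserving each class of $S$. Identifying each class with $\{0,1\}$, this presents the problem inside a product, indexed by the classes in $S$, of copies of the clone of self-dual operations on $\{0,1\}$: composition acts coordinatewise, each coordinate carries the two permutations (from the flips) and a minority operation (from $g_D$ in coordinate $D$), and we must find an element that is minority in every coordinate.

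The heart of the proof is solving this coordinatewise problem, and this is the step I expect to be the main obstacle. The key point is that $x+y+z \bmod 2$ is XOR, so equal summands cancel, while a majority-type behaviour absorbs repeated arguments; combining these lets one cancel an unwanted behaviour in the coordinates where minority is already present while simultaneously installing minority in a new coordinate. For two blocks this is witnessed by compositions such as $h(x,y,z):=g_0\big(g_1(x,y,z),g_0(y,z,x),g_1(z,x,y)\big)$, which one checks is minority in the coordinate where $g_0$ is minority and $g_1$ majority (the two majority values cancel under XOR) and also in the coordinate where the roles are reversed (the majority of $(w,v,w)$ equals $w$). I would build $g_S$ by induction on $|S|$, adding one class at a time: using the cyclic symmetrisation $T(g,g')(x,y,z):=g(g'(x,y,z),g'(y,z,x),g'(z,x,y))$, which is minority on every class where $g'$ is minority and turns projection-type behaviours into minority, together with XOR-cancellation compositions as above to eliminate the residual self-dual behaviours (notably majority) that $T$ leaves untouched, and invoking the independent flips to normalise these behaviours. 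The delicate point — the crux of the whole lemma — is to guarantee that a single composition is minority on all classes of $S$ at once, i.e. that installing minority on a new class does not reintroduce a non-minority self-dual behaviour on the classes already handled; this is exactly what the affine structure of XOR and the freedom of independent class-flips are used to control.
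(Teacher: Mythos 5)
Your overall architecture is viable and genuinely different from the paper's at the key step: the paper runs an induction on the number of classes carrying a single accumulated function, splits into cases according to whether that function is essential on the new class, and invokes Post's theorem (Proposition~\ref{prop:Post}) on the two-element restriction in the essential case; you instead transport the given minority behaviour to every class of $S$ by automorphisms and try to merge the resulting family $\{g_D\}$ purely by explicit affine composition identities, with no essential/inessential case split and no appeal to Post. The transport step, the normalisation making everything class-preserving on $S$, the observation that preservation of $E$ forces every class-restriction to be a self-dual operation on $\{0,1\}$, and the compactness argument at the end are all correct and consistent with the paper (whose Case~1 uses exactly your cyclic composition with a role-swapped partner).

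The gap sits precisely where you flag the crux, and it is a real one as written. The restriction of $g_D$ to a class $C\neq D$ is an \emph{arbitrary} ternary self-dual operation on $\{0,1\}$ --- there are sixteen of these, not just projections and majority --- and your two-block formula $g_0(g_1(x,y,z),g_0(y,z,x),g_1(z,x,y))$ is only verified when the unwanted behaviour is majority; for other self-dual residuals it fails. Concretely, if on some class $g_0$ restricts to minority and $g_1$ to the first projection, the formula restricts there to $x+(x+y+z)+z=y$, a projection, not minority. What rescues your plan is a lemma you never state: for any self-dual $s$, the cyclic symmetrisation $T(m,s)=s(x,y,z)+s(y,z,x)+s(z,x,y)$ is invariant under cyclic shifts, hence symmetric (for ternary Boolean functions the cyclic orbits are the Hamming-weight classes), hence --- being self-dual --- equal to minority, majority, or one of their negations. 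With that, the induction closes: with $g$ minority on $S'$ and $D$ new, set $v:=T(g,g_D)$ (after class-flips: minority on $D$, minority-or-majority on $S'$) and $u:=T(v,g)$ (minority on $S'$, minority-or-majority on $D$); if $u$ is majority on $D$, then $u(v(x,y,z),u(y,z,x),v(z,x,y))$ is minority on all of $S'\cup\{D\}$, using that the surviving residuals are symmetric so the XOR-cancellation applies. You need to supply this computation; without it the ``residual self-dual behaviours'' are not under control and the composition you exhibit does not cover the cases that actually occur.
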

\begin{proof}
Let $C_0, C_1,\ldots$ be the equivalence classes of $\Eq$. We show, by induction over $n$, that for all $n\in\omega$, $\Pol(\Gamma)$ contains a function $g_n$ which equals $x+y+z$ modulo 2 on each class $C_0,\ldots,C_n$. The lemma then follows by a standard compactness argument: by $\omega$-categoricity, there exist $\alpha_n\in\Aut(C_\omega^2,E)$, for $n\in\omega$, such that $(\alpha_n\circ g_n)_{n\in\omega}$ converges to a function $g\in\Pol(\Gamma)$ (cf.~for example the proof of Proposition~\ref{prop:redendo}). That function then has the desired property: for every $i\in\omega$, there exists $n>i$ such that $g$ agrees with $\alpha_n \circ g_n$ on $C_i$, and hence it behaves like  $x+y+z$ modulo 2 on $C_i$.

For the base case $n=0$, the statement follows from the assumption of the lemma. Now suppose it holds for $n$. By the assumption that $\End(\Gamma)=\overline{\Aut(D,E)}$, we may assume that $g_n(x,x,x)=x$ for all $x\in C_0\cup \cdots \cup C_{n+1}$, and in particular $g_n$ preserves each of the classes $C_0, \ldots, C_{n+1}$. In particular, the restriction of $g_n$ to any $C_i$ with $0\leq i\leq n$ actually equals the function $x+y+z$ modulo 2 on that class.

Assume first that $g_n$ is not essential on $C_{n+1}$; by composing it with an automorphism of $(C_\omega^2,E)$, we may assume it is a projection, without loss of generality to the first coordinate, on $C_{n+1}$. Let $g_n'\in\Pol(\Gamma)$ be a ternary function which has the properties of $g_n$, but with the roles of $C_n$ and $C_{n+1}$ switched. 
Such a function $g_n'$ can be obtained by composing $g_n$ in all arguments with the same automorphism that switches $C_n$ and $C_{n+1}$.
Then $$g_{n+1}(x,y,z):=g_n(g_n'(x,y,z),g_n'(y,z,x),g_n'(z,x,y))$$ has the desired property.

Next assume that $g_n$ is essential on $C_{n+1}$, and write $g_n'$ for its restriction to $C_{n+1}$. Let $\alpha\in \Aut(C_\omega^2,E)$ flip the two elements of $C_{n+1}$, and fix all other elements of $C_\omega^2$; then the restriction $\alpha'$ of $\alpha$ to $C_{n+1}$ is the only non-trivial permutation of  $C_{n+1}$. By Proposition~\ref{prop:Post}, there exists a term $h'(x,y,z)$ over $\{g_n',\alpha'\}$ which induces either a constant function or the function $x+y+z$ modulo 2 on $C_{n+1}$. The term $h(x,y,z)$ obtained from $h'$ by replacing all occurrences of $\alpha'$ by $\alpha$, and all occurrences of $g_n'$ by $g_n$ induces a ternary function on $C_\omega^2$ whose restriction to $C_{n+1}$ equals $h'$. Since $h$ preserves $E$, it cannot be constant on $C_{n+1}$, and hence it is equal to $x+y+z$ modulo 2 on $C_{n+1}$. For each $0\leq i\leq n$, since $g_n$ equals $x+y+z$ modulo 2 on $C_i$, and since $\alpha$ is the identity on $C_i$, it is easy to see that the term function $h$, restricted to $C_i$, is of the form $\beta'\circ g$, where $\beta'$ is a permutation on $C_i$ and $g$ either equals $x+y+z$ modulo 2 or a projection on $C_i$. Hence, iterating the preceding case we obtain the desired function.
\end{proof}

\begin{proof}[of Proposition~\ref{prop:omega2}]  Suppose that neither of the first two items hold. 
Then by Proposition~\ref{prop:BK}, $\Pol(\Gamma)$ contains a binary function $f$ acting injectively on the classes of $\Eq$; moreover, using Proposition~\ref{prop:Post} and since $E$ is preserved, we see that $\Pol(\Gamma)$ contains a ternary function which equals $x+y+z$ modulo 2 on some equivalence class. Hence, by Lemma~\ref{lem:propagating+} it contains a ternary function $g$ which behaves like $x+y+z$ modulo 2 on all equivalence classes.


Observe first that since $f$ acts injectively on the classes of $\Eq$, we have that whenever $p,q\in (C_\omega^2)^2$ are not equivalent with respect to $\Eq$ in at least one coordinate, then $\Eq(f(p),f(q))$ cannot hold. In other words, we have the type conditions $f(N,\Eq)=f(\Eq,N)=f(N,N)=N$.

We next argue that on each class $C$ the operation $f$ is essentially unary. 
Write $C=\{0,1\}$. Since $E$ is preserved, we have $E(f(0,0),f(1,1))$; similarly, we know that $E(f(0,1),f(1,0))$. Since $f$ moreover preserves $\Eq$, the four values are contained in a single class. Hence either $f(0,1)=f(0,0)$ and $f(1,0)=f(1,1)$, or $f(1,0)=f(0,0)$ and $f(0,1)=f(1,1)$. In the first case, the restriction of $f$ to $C$ only depends on its first argument, and in the second case on its second argument. Assume without loss of generality that the former, i.e., $f(E,=)=E$ and $f(=,E)={=}$, holds on infinitely many equivalence classes $C$. By precomposing $f$ with self-embeddings of $(C_\omega^2,E)$ we may assume that $f$ satisfies these type conditions everywhere. In particular, we then have that $f$ is also canonical as a function from $(C_\omega^2,E)^2$ to $(C_\omega^2,E)$.

The function $q(x,y,z):=f(x,f(y,z))$ satisfies $q(N,\cdot,\cdot)=q(\cdot,N,\cdot)=q(\cdot,\cdot,N)=N$, and $q(P,Q,R)=P$ if $P,Q,R\in\{E,=\}$.

Consider the function $t$ on $(C_\omega^2)^3$ which sends every triple $(x,y,z)$ to the triple $(q(x,y,z),q(y,z,x),q(z,x,y))$. Then, 
whenever $P,Q,R\in\{E,=\}$ and $p,q\in (C_\omega^2)^3$ satisfy ${P}{Q}{R}(p,q)$, then also ${P}{Q}{R}(t(p),t(q))$, by the properties of $q$. Moreover, whenever $p,q\in (C_\omega^2)^3$ are related by $N$ in at least one coordinate, then $\NNN(t(p),t(q))$. By the latter property of $t$, there exist $\alpha,\beta,\gamma\in\overline{\Aut(C_\omega^2,E)}$ such that the function 
$$(\alpha,\beta,\gamma)\circ t(x,y,z):= (\alpha(q(x,y,z)),\beta(q(y,z,x)),\gamma(q(z,x,y)))$$ sends any product $C_i\times C_j\times C_k$ of three equivalence classes into the cube $C^3$ of a single equivalence class; moreover, this function still has the properties of $t$ mentioned above. Set $h(x,y,z):=g\circ (\alpha,\beta,\gamma)\circ t(x,y,z)=
g(\alpha (q(x,y,z)),\beta(q(y,z,x)),\gamma(q(z,x,y)))$. 
Then $h(N,\cdot,\cdot)=h(\cdot,N,\cdot)=h(\cdot,\cdot,N)=g(N,N,N)=N$. 
We claim that $h$ behaves like a minority on $\{E,=\}$. If $P,Q,R\in\{E,=\}$ then $h(P,Q,R)=g(P,Q,R)$. Since $(\alpha,\beta,\gamma)\circ t(x,y,z)$ maps the product of three equivalence classes into the cube of a single equivalence class, and since $g$ behaves like $x+y+z$ modulo 2 on each equivalence class,
the claim follows.
\end{proof}

\section{Polynomial-time tractable CSPs over homogeneous equivalence relations}\label{sect:CSP_equivalence}
We provide two polynomial-time algorithms: the first one is designed for the $\CSP$s of reducts of $(C_2^\omega,E)$ with a ternary injective canonical polymorphism of behaviour minority which is hyperplanely of behaviour balanced xnor (Section~\ref{thm:C-low-2-high-omega-P}), and the second one for reducts of $(C_\omega^2,E)$ with a ternary canonical polymorphism $h$ such that $$h(N,\cdot,\cdot)=h(\cdot,N,\cdot)=h(\cdot,\cdot,N)=N$$ and which behaves like a minority on $\{=,E\}$ (Section~\ref{thm:C-low-omega-high-2-P}).

\subsection{Two infinite classes}
\label{thm:C-low-2-high-omega-P}
We consider the case where  
$\Gamma$ is a reduct of $(C_2^\omega,E)$ which is preserved by a canonical
injection $h$ of behaviour minority which is
hyperplanely of behaviour balanced xnor (cf.~Proposition~\ref{prop:2omega}). Our algorithm for 
CSP$(\Gamma)$ is an adaptation 
of an algorithm for reducts of the random graph~\cite{BodPin-Schaefer}. 

We first reduce CSP$(\Gamma)$ to the CSP of a structure that we call the \emph{injectivization} of $\Gamma$, which  can then be reduced to a tractable CSP over a Boolean domain. 

\begin{definition}\label{def:injective}
	A tuple is called \emph{injective} if all its entries are pairwise distinct.
    A relation is called \emph{injective} if all its tuples are injective. 
    A structure is called \emph{injective} if all its relations are injective. 
\end{definition}

\begin{definition}\label{def:inj}
    We define \emph{injectivizations} for relations, atomic formulas, and structures.
    \begin{itemize}
        \item Let $R$ be any relation. Then the \emph{injectivization of $R$}, denoted by $\inj(R)$, is the (injective) relation consisting of all injective tuples of $R$.
        \item Let $\phi(x_1,\ldots,x_n)$ be an atomic formula in the language of $\Gamma$, where $x_1,\ldots,x_n$ is a list of the variables that appear in $\phi$. Then
        the \emph{injectivization of $\phi(x_1,\dots,x_n)$} is the formula $R^{\inj}_\phi(x_1,\ldots,x_n)$, where $R^{\inj}_\phi$ is a relation symbol which stands for the injectivization of the relation defined by $\phi$.
        \item The \emph{injectivization} of a relational structure $\Gamma$, denoted by $\inj(\Gamma)$, is the relational structure with the same domain as $\Gamma$ whose relations are the injectivizations of the atomic formulas over $\Gamma$, i.e., the relations $R^{\inj}_\phi$.
    \end{itemize}
\end{definition}

To state the reduction to the CSP of an injectivization, we also need the following operations on instances of $\Csp(\Gamma)$.
Here, it will be convenient to view instances of $\Csp(\Gamma)$ as primitive positive $\tau$-sentences.

\begin{definition}
    Let $\Phi$ be an instance of $\Csp(\Gamma)$. Then
    the \emph{injectivization of $\Phi$}, denoted by $\inj(\Phi)$, is the instance $\psi$
    of $\Csp(\inj(\Gamma))$ obtained from $\phi$ by replacing each conjunct
    $\phi(x_1,\dots,x_n)$ of $\Phi$ 
    by $R^{\inj}_\phi(x_1,\ldots,x_n)$.
\end{definition}

We say that a constraint in an instance of $\Csp(\Gamma)$ is \emph{false} if it defines an empty relation in $\Gamma$.
Note that a constraint
 $R(x_1,\dots,x_k)$ might be false
even if the relation
 $R$ is non-empty (simply because some of the variables from $x_1,\dots,x_k$
 might be equal).
The proof of the following statement
is identical to the proof for the
random graph instead of $(C^\omega_2,\Eq)$
in~\cite{BodPin-Schaefer}. 

\begin{proposition}[Lemma 71 in \cite{BodPin-Schaefer}]
\label{prop:inj}
    Let $\Gamma$ be preserved by 
    a binary injection $f$ of behaviour $E$-dominated projection. Then $\Csp(\Gamma)$ can be 
    reduced to $\Csp(\inj(\Gamma))$
     in polynomial time. 
\end{proposition}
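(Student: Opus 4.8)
The plan is to establish that the map $\Phi\mapsto\inj(\Phi)$ is a correct polynomial-time reduction, i.e.\ that $\Phi$ is satisfiable in $\Gamma$ if and only if $\inj(\Phi)$ is satisfiable in $\inj(\Gamma)$. First I would note that $\inj(\Gamma)$ has a finite signature: for a fixed finite signature and bounded arities there are only finitely many atomic formulas up to renaming of variables, hence only finitely many relation symbols $R^{\inj}_\phi$. The reduction itself runs in linear time, since $\inj(\Phi)$ keeps the variables of $\Phi$ and replaces each conjunct $\phi(x_1,\ldots,x_n)$ by the single atom $R^{\inj}_\phi(x_1,\ldots,x_n)$. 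I would precede this by a constant-time-per-constraint normalization pass that rejects instances containing a false constraint and iteratively contracts any two variables that a single constraint forces to be equal, so that afterwards no surviving constraint forces two of its distinct variables to coincide.

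The direction ``$\inj(\Phi)$ satisfiable $\Rightarrow\Phi$ satisfiable'' requires no polymorphism: by construction $R^{\inj}_\phi$ consists of tuples of the relation defined by $\phi$, so the atom $R^{\inj}_\phi(x_1,\ldots,x_n)$ entails $\phi(x_1,\ldots,x_n)$, and any satisfying assignment of $\inj(\Phi)$ in $\inj(\Gamma)$ is already a satisfying assignment of $\Phi$ in $\Gamma$.

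The substance lies in the converse, and here the hypothesis enters only through the fact that it furnishes a \emph{binary injective} polymorphism $f$ of $\Gamma$ (the precise $E$-dominated projection behaviour is immaterial for this particular reduction and is used only in the subsequent reduction of $\inj(\Gamma)$ to a Boolean CSP). Injectivity of $f$ yields the separation property of \cite[Lemma~42]{RandomMinOps}: for a pp-formula, simultaneous satisfiability of $\Phi\wedge u\neq v$ and $\Phi\wedge u'\neq v'$ implies satisfiability of $\Phi\wedge u\neq v\wedge u'\neq v'$, because applying $f$ componentwise to two solutions produces a solution that inherits every disequality witnessed by either. Starting from a solution of $\Phi$ and iterating $f$ over the finitely many pairs of variables, I would amalgamate all individually witnessed separations into a single solution $t$ that separates every pair $(u,v)$ for which $\Phi\wedge u\neq v$ is satisfiable; this $t$ is then injective on the scope of each constraint and hence maps it into $R^{\inj}_\phi$, witnessing $\inj(\Phi)$.

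The main obstacle is precisely the bookkeeping about equalities needed to make this last step correct: one must guarantee that within each surviving constraint $\phi(x_1,\ldots,x_n)$ every pair of co-occurring distinct variables is genuinely separable, so that $t$ places $(t(x_1),\ldots,t(x_n))$ among the injective tuples of the defining relation rather than on a forced diagonal. This is exactly where the normalization (removal of false constraints and contraction of forced-equal variables) and the injectivity of $f$ must be combined, and it is the only delicate point of the argument; since it uses nothing about the ambient graph beyond the existence of a binary injective polymorphism, the proof given for the random graph in \cite{BodPin-Schaefer} transfers to $(C_2^\omega,\Eq)$ without change.
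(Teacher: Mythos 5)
There is a genuine gap, and it sits exactly at the point you flag as ``the only delicate point''. You assert that the $E$-dominated projection behaviour is immaterial here and that the reduction uses nothing beyond the existence of a binary \emph{injective} polymorphism; that assertion is false, and without it your argument does not close. The separation property of \cite[Lemma~42]{RandomMinOps} only lets you \emph{amalgamate} disequalities each of which is individually satisfiable together with $\Phi$; it does not tell you that, after contracting the pairs forced equal by single constraints, every remaining co-occurring pair $(x,y)$ satisfies ``$\Phi\wedge x\neq y$ is satisfiable''. That latter fact genuinely fails for structures that merely have a binary injective polymorphism. Concretely, over the random graph let $R_1(x,y):\equiv(x{=}y\vee E(x,y))$ and $R_2(x,y):\equiv(x{=}y\vee N(x,y))$; both are preserved by a balanced binary injection of behaviour $p_1$, so $\Gamma=(V;R_1,R_2)$ has a binary injective polymorphism. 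The instance $R_1(x,y)\wedge R_2(x,y)$ is satisfiable (only by $x=y$), neither conjunct alone forces $x=y$, so your normalization does nothing, yet $\inj(\Phi)$ is $E(x,y)\wedge N(x,y)$, which is unsatisfiable. Thus for such a $\Gamma$ the map $\Phi\mapsto\inj(\Phi)$ is simply not a correct reduction, and no proof strategy can rescue it; the hypothesis must be used to exclude this situation. Note that $R_2$ is \emph{not} preserved by any $E$-dominated injection (which must send the type $({=},N)$ to $E$), which is precisely how the stated hypothesis rules the example out.

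The paper's proof (following Lemma~71 of \cite{BodPin-Schaefer}) uses the canonical behaviour in an essential and constructive way: given any solution $s$ of the normalized instance, define $s'$ by replacing equal values with adjacent ones while preserving $E$ and $N$ elsewhere. For each constraint with scope $x_1,\dots,x_k$ one picks an injective tuple $t$ in its relation (this exists because the relation contains, for each pair of coordinates, a tuple separating them, and is preserved by injections of every arity built from $f$), and observes that the tuple $\bigl(f(s(x_1),t_1),\dots,f(s(x_k),t_k)\bigr)$ lies in the relation and has the same type as $\bigl(s'(x_1),\dots,s'(x_k)\bigr)$ --- the behaviour $p_1$ handles the coordinates where $s$ took distinct values and $E$-domination in the second argument handles the coordinates where $s$ collapsed values, turning $=$ into $E$. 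Since the relations are unions of orbits, $s'$ is then an injective solution, witnessing $\inj(\Phi)$. Your easy direction and the finiteness/polynomial-time observations are fine; it is the converse that needs the type-transfer argument rather than the separation lemma.
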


We are now in a position to give our reduction.
\begin{proposition}
Let $\Gamma$ be a reduct of $(C_\omega^2, E)$ such that $\End(\Gamma)=\overline{\Aut(C_\omega^2,E)}$ and $\Gamma$ has a ternary injection $f$ which behaves like minority. Further, let $\Delta$ be $(\{0,1\};0,1,\{(x,y,z):z+y+z=1 \bmod 2\})$. There is a polynomial time reduction from $\Csp(\inj(\Gamma))$ to $\Csp(\Delta)$.
\label{prop:bool}
\end{proposition}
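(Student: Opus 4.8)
The plan is to exploit that in the setting of this subsection $\Eq$ has exactly two, necessarily infinite, equivalence classes, which I denote $C_0$ and $C_1$. For an \emph{injective} tuple $a=(a_1,\dots,a_n)$ over the domain the type is then completely encoded by one bit per coordinate, namely the map $i\mapsto c(a_i)\in\{0,1\}$ recording the class of $a_i$: indeed $E(a_i,a_j)$ holds iff $c(a_i)=c(a_j)$, and $N(a_i,a_j)$ otherwise. Two observations make this encoding decisive. First, since swapping $C_0$ and $C_1$ is an automorphism, the type depends only on the bit-vector $(c(a_1),\dots,c(a_n))$ up to global complementation. Second --- and this is exactly where two \emph{infinite} classes are easier than infinitely many finite ones --- \emph{every} bit-vector in $\{0,1\}^n$ is the class-pattern of some injective tuple, because both classes being infinite, any countable family of variables can be mapped injectively into the classes prescribed by their bits. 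Hence I would first record the equivalence: an instance $\Phi$ of $\Csp(\inj(\Gamma))$ is satisfiable if and only if there is an assignment $c\colon V\to\{0,1\}$ of a class-bit to each variable $V$ of $\Phi$ such that for every constraint $R^{\inj}_\phi(x_1,\dots,x_n)$ of $\Phi$ the bit-vector $(c(x_1),\dots,c(x_n))$ is the class-pattern of some tuple of $\inj(R)$.

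Next I would show that for each relation symbol $R$ the set $A_R\subseteq\{0,1\}^n$ of class-patterns occurring in $\inj(R)$ is an affine subspace of $\mathbb{F}_2^n$. Since $\Eq$ is pp-definable (Proposition~\ref{prop:eqpreserved}), $f$ preserves $\Eq$ and therefore induces an action $g\colon\{0,1\}^3\to\{0,1\}$ on the two classes; moreover $f$, being an injection, maps injective tuples to injective tuples and hence preserves $\inj(R)$, so $A_R$ is closed under $g$ applied coordinatewise. Reading the behaviour minority of Definition~\ref{defn:behaviours_ternary} on the two classes, $E(f(u),f(v))$ holds exactly when an even number of the three coordinates of the inputs lie in different classes, which forces $g(\gamma)=g(\delta)$ iff $\gamma$ and $\delta$ differ in an even number of coordinates, and hence $g(\gamma)=\gamma_1+\gamma_2+\gamma_3+b\pmod 2$ for a fixed $b\in\{0,1\}$. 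Combining closure under $g$ with closure under global complementation (coming from the class-swap automorphism, which adds $\mathbf{1}$) shows that $A_R$ is closed under $(\gamma,\delta,\varepsilon)\mapsto\gamma+\delta+\varepsilon$, so $A_R$ is an affine subspace, i.e.\ the solution set of a finite system of $\mathbb{F}_2$-linear equations in the coordinate bits.

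With these ingredients the reduction is immediate. Given $\Phi$, I introduce one $\Delta$-variable for the class-bit $c(x)$ of each variable $x$ of $\Phi$, and replace each constraint $R^{\inj}_\phi(x_1,\dots,x_n)$ by a primitive positive $\Delta$-formula in $c(x_1),\dots,c(x_n)$ expressing membership in $A_R$. Every $\mathbb{F}_2$-linear equation, and hence every affine condition, is primitive positive definable from the relation $\{(x,y,z):x+y+z\equiv 1\pmod 2\}$ together with the constants $0$ and $1$, introducing existentially quantified auxiliary variables for partial sums; the trivial system $A_R=\{0,1\}^n$ becomes no constraint, and an empty $A_R$ becomes an unsatisfiable equation such as $0=1$. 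As the signature of $\inj(\Gamma)$ is finite, each $A_R$ and its defining system can be precomputed, so the translation is computable in linear time. Correctness in both directions is exactly the equivalence of the first paragraph: a satisfying $\Delta$-assignment is precisely a consistent choice of class-bits, which by the realisability observation lifts to a solution of $\Phi$ over $\inj(\Gamma)$, and conversely any solution yields class-bits satisfying all the affine constraints. (Combined with Gaussian elimination over $\mathbb{F}_2$, this also shows $\Csp(\inj(\Gamma))$ is tractable, but that is not needed for the statement.)

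The main point to get right, and the only place where this case genuinely differs from the harder case of infinitely many finite classes, is the realisability step: one must verify that \emph{no} global consistency constraint is needed (in particular no non-affine condition forcing ``same class'' to be a partial matching), which is exactly what infiniteness of the two classes provides, so that the only constraints on the class-bits are the affine ones coming from the $A_R$. The remaining verifications --- that $f$ preserves the injectivizations $\inj(R)$, and that the behaviour minority yields the affine induced action $g$ --- are short and routine.
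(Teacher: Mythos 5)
Your proposal is correct and takes essentially the same route as the paper's proof: project each variable to its class-bit, observe that the quotiented constraint relations on $\{0,1\}$ are affine because the minority polymorphism induces $x+y+z+b \bmod 2$ on the two classes, and lift Boolean solutions back to injective assignments using the infiniteness of both classes. The only (minor, presentational) difference is that you derive closure of the class-pattern sets under $x+y+z$ directly from the behaviour of $f$ together with the class-swap automorphism, whereas the paper invokes Post's theorem and the standard description of affine Boolean relations to get pp-definability over $\Delta$.
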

\begin{proof}
Note that $f$ preserves $\inj(\Gamma)$ since $f$ is injective. From $f$ one can derive a polymorphism $f'$ on the two-element structure obtained from $\Gamma$ by factoring by the equivalence classes, which behaves like the ternary minimum function on domain $\{0,1\}$.

Take an instance $\phi$ for $\Csp(\inj(\Gamma))$ and build an instance $\phi'$ for $\Csp(\Delta)$ in the following manner. The variable set remains the same and every constraint $(b_1,\ldots,b_k) \in R$ from $\phi$ becomes $(a_1,\ldots, a_k)\in R'$ in $\phi'$ where $b_i \in C_{a_i}$. From Proposition~\ref{prop:Post}, through the presence of $f'$ and the lack of a polymorphism of $\Gamma$ identifying one equivalence class alone, we can assume that the relations of $\phi'$ are preserved by $x+ y + z \bmod 2$, and can thus be taken to be pp-definable in the relation $(x + y + z =1 \bmod 2)$ (see \mbox{e.g.} \cite{Creignou}).

Suppose $\phi$ is a yes-instance of $\Csp(\inj(\Gamma))$, then $\phi'$ is a yes-instance of $\Csp(\Delta)$, by application of the polymorphism $f'$.

Suppose $\phi'$ is a yes-instance of $\Csp(\Delta)$, with solution $f \colon V\rightarrow \{0,1\}$. Then we can build a satisfying assignment for $\phi$ by choosing any injective function from $V$ to $(C_\omega^2, E)$ sending $x \rightarrow C_{f(x)}$.
\end{proof}

\ignore{
\begin{proof}
    In the main loop, when the algorithm detects a constraint that is false and therefore rejects, then $\phi$ cannot hold in $\Gamma$, because
     the algorithm only contracts variables $x$ and $y$
    when $x=y$ in all solutions to $\phi$  -- and contractions are the
    only modifications performed on the input formula $\phi$.
    So suppose that the algorithm does not reject, and let $\psi$ be
    the instance of $\Csp(\Gamma)$ computed by the
    algorithm when it reaches the final line of the algorithm.

    By the observation we just made it suffices to show that
    $\psi$ holds in $\Gamma$
    if and only if $\inj(\psi)$ holds in $\inj(\Gamma)$.
    It is clear that when $\inj(\psi)$ holds
    in $\inj(\Gamma)$ then $\psi$ holds in $\Gamma$ (since the constraints in $\inj(\psi)$ have been made stronger).
    We now prove that if $\psi$ has a solution $s$ in $\Gamma$,
    then there is also a solution for $\inj(\psi)$ in $\inj(\Gamma)$.

    Let $s'$ be any mapping from the variable set $V$ of $\psi$ 
    to $C^\omega_2$ such that for all distinct $x,y \in W$ we have that
    \begin{itemize}
    \item if $E(s(x),s(y))$ then $E(s'(x),s'(y))$;
    \item if $N(s(x),s(y))$ then $N(s'(x),s'(y))$;
    \item if $s(x)=s(y)$ then $E(s'(x),s'(y))$.
    \end{itemize}
    By universality of $(C^\omega_2,E)$, such a mapping exists. We claim that $s'$ is a solution to $\psi$
    in $\Gamma$. Since $s'$ is injective, it is then clearly
    also a solution to $\inj(\psi)$.
     To prove the claim, let $\gamma$ be a constraint of $\psi$ on the variables
    $x_1,\dots,x_k \in W$. Since we are at the final stage of the algorithm, we can conclude that
    $\gamma(x_1,\dots,x_k)$ does not imply equality of any of the variables $x_1,\dots,x_k$,
    and so there is for all $1 \leq i < j \leq k$ a tuple $t^{(i,j)}$ such that $R(t^{(i,j)})$ and
    $t_i \neq t_j$ hold. Since $\gamma(x_1,\ldots,x_k)$ is preserved by a binary injection, it is also preserved by injections of arbitrary arity (it is straightforward to build such terms from a binary injection). Application of an injection of arity $\binom{k}{2}$ to the tuples $t^{(i,j)}$ shows that $\gamma(x_1,\ldots,x_k)$ is satisfied by an injective tuple $(t_1,\dots,t_k)$.

    Consider the mapping $r \colon \{x_1,\dots,x_k\} \rightarrow D$ 
    given by $r(x_l) := f(s(x_l),t_l)$.
    This assignment has the property that  for all $i,j \in S$
    if $E(s(x_i),s(x_j))$, then $E(r(x),r(y))$,
    and if $N(s(x_i),s(x_j))$ then $N(r(x_i),r(x_j))$, because $f$ is of type $p_1$.
    Moreover, if $s(x_i)=s(x_j)$ then $E(r(x_i),r(x_j))$ because $f$ is $E$-dominated in the second argument.
    Therefore, $(s'(x_1),\dots,s'(x_n))$ and $(r(x_1),\dots,r(x_n))$ have the same type in $(C^\omega_2,E)$.
    Since $f$ is a polymorphism of $\Gamma$, we have that $(r(x_1),\dots,r(x_n))$ satisfies the constraint $\gamma(x_1,\ldots,x_n)$. Hence, $s'$ satisfies
    $\gamma(x_1,\ldots,x_n)$ as well.
    We conclude that $s'$ satisfies all the constraints of $\psi$, proving our claim.
\end{proof}
}

\ignore{
To reduce the CSP for injective structures to Boolean CSPs,
we need the following definitions.
Let $t$ be a $k$-tuple of distinct vertices of $(C^\omega_2,E)$, and let $q$ be ${k}\choose{2}$.
     Then $\Bool(t)$ is the $q$-tuple $(a_{1,2},a_{1,3},\dots,a_{1,k}$,
    $a_{2,3},\dots,a_{k-1,k}) \in \{0,1\}^q$
    such that $a_{i,j}=0$ if $N(t_i,t_j)$
     and $a_{i,j} = 1$ if $E(t_i,t_j)$.
    If $R$ is a $k$-ary injective relation, then $\Bool(R)$ is the $q$-ary Boolean relation $\{ \Bool(t) \; | \;  t \in R \}$.
Note that if an injective relation $R$ is preserved by 
a ternary operation of type minority,
then $B:=\Bool(R)$ is preserved 
by the ternary minority function.
It is well-known that $B$ then has
a definition by a set of linear equations
over $\{0,1\}$ \cite{Schaefer}. 



\begin{definition}\label{def:bool}
Let $\Phi$ be an instance
of $\Gamma$ with variables $V$. 
Then $\Bool(\Phi)$ is the linear equation
system with variables ${V \choose 2}$
(that is, two-element subsets $\{u,v\}$ of $V$, denoted by $uv$)
that contains 
\begin{enumerate}
\item for each conjunct
$\phi(x_1,\dots,x_k)$ of $\Phi$  
all linear equations with variables
${\{x_1,\dots,x_k\} \choose 2}$ that 
 define $\Bool(R^{\inj}_{\phi})$, and 
\item 
all equations of the form 
$xy + yz + xz = 1$ for $x,y,z \in V$. 
\end{enumerate}
\end{definition}
}


\ignore{
\begin{proposition}\label{prop:bool}
The formula $\inj(\Phi)$ is satisfiable over $\inj(\Gamma)$ if and only if $\Bool(\Phi)$ is satisfiable over $\{0,1\}$. 
\end{proposition}
\begin{proof}
Let $V$ be the variables of $\inj(\Phi)$
so that $V \choose 2$ are the variables
of $\Bool(\Phi)$. 
First suppose that $\inj(\Phi)$ has
a solution $s \colon V \to C^\omega_2$; we may choose $s$ injective. Then $s' \colon {V \choose 2} \to \{0,1\}$ defined by $s'(xy) := 0$ if
$N(s(x),s(y))$ and $s'(xy) := 1$ if
$E(s(x),s(y))$ is a solution to $\Bool(\Phi)$. Conversely, if $s' \colon {V \choose 2} \to \{0,1\}$ is a solution to
$\Bool(\Phi)$, then define 
$s \colon V \to C^\omega_2$ as follows. 
Choose $x \in V$ and $v \in C^\omega_2$ arbitrarily, and define $s(x) := v$. 
For any  $y \in V \setminus \{x\}$, 
if $s'(xy) = 1$, then pick
$u \in C^\omega_2$ with $E(u,v)$ 
and if $s'(xy) = 0$, then pick $y \in C^\omega_2$ with $N(u,v)$; in both cases,
choose values from $C^\omega_2$ that are
distinct from all previously 
picked values from $C^\omega_2$. We claim
that $s$ satisfies all conjuncts $\phi$
of $\inj(\Phi)$. Let $R$ be the relation
defined by $\phi$ and let $\oplus$ signify the Boolean xor.
Then it suffices to show that $s$ 
satisfies all expressions of the
form $E(x_1,y_1) \oplus \cdots \oplus 
E(x_k,y_k)$ or $\neg(E(x_1,y_1) \oplus \cdots \oplus 
E(x_k,y_k))$ that correspond to the
Boolean equations defining $\Bool(R^{\inj}_\phi)$.
But 
\begin{align*}
& E(s(x_1),s(y_1)) \oplus \cdots \oplus 
E(s(x_k),s(y_k)) \\
\Leftrightarrow \; & (s'(xx_1)+s'(xy_1) = 1) \oplus \cdots \oplus (s'(xx_k) + s'(xy_k) = 1) && \text{(by definition of $s$)} \\
\Leftrightarrow \; & s'(x_1y_1) \oplus \cdots \oplus s'(x_ky_k) && \text{(by (2) in Definition~\ref{def:bool})}
\end{align*}
which is true because $s'$ satisfies
the equations from $(1)$ of Definition~\ref{def:bool}. 
\end{proof}
}
\begin{corollary}
   Let $\Gamma$ be a reduct of $(C_2^\omega,E)$ which is preserved by a
    ternary injection $h$ of behaviour minority which is hyperplanely of behaviour balanced xnor. Then 
    $\Csp(\Gamma)$ can be 
    solved in polynomial time.
\end{corollary}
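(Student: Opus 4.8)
The plan is to decide $\Csp(\Gamma)$ by a two-step chain of reductions ending in a tractable affine Boolean CSP, reusing the two reductions established earlier in this section. Throughout we may assume that $\End(\Gamma)=\overline{\Aut(C_2^\omega,E)}$: otherwise, by Proposition~\ref{prop:EndEq1}, $\Gamma$ is homomorphically equivalent to a reduct of equality, so the conclusion follows from the classification in~\cite{ecsps}. Observe also that since $h$ is of behaviour minority it satisfies $h(E,E,E)=E$ and $h(N,N,N)=N$, hence $h$ preserves $E$ and $N$ and is a genuine polymorphism in the sense required below.

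First I would extract from $h$ a binary polymorphism feeding into the injectivization reduction of Proposition~\ref{prop:inj}. Concretely, set $g(x,y):=h(x,y,y)$, which is a minor of $h$ and hence lies in $\Pol(\Gamma)$, and which is injective because $h$ is. Reading the minority behaviour of $h$ on triples of type $\neq\neq\neq$ gives that $E(g(u),g(v))$ holds if and only if $E(u_1,v_1)$, so $g$ is of behaviour $p_1$; reading the hyperplane behaviour of $h$ (balanced xnor) on inputs whose first coordinate has type $=$, evaluated at $(E,E)$ and $(N,N)$ where the xnor pattern sends both to $E$, gives $g(=,\neq)=E$, so $g$ is $E$-dominated in the second argument. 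Thus $g$ is a binary injection of behaviour $E$-dominated projection, and Proposition~\ref{prop:inj} supplies a polynomial-time reduction from $\Csp(\Gamma)$ to $\Csp(\inj(\Gamma))$.

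Next I would invoke Proposition~\ref{prop:bool} with the ternary minority $h$ itself, yielding a polynomial-time reduction from $\Csp(\inj(\Gamma))$ to $\Csp(\Delta)$, where $\Delta=(\{0,1\};0,1,\{(x,y,z)\mid x+y+z=1 \bmod 2\})$. Finally, $\Csp(\Delta)$ is solvable in polynomial time: its relations are affine over $\mathbb{F}_2$, so each instance is equivalent to a linear equation system over $\mathbb{F}_2$ and can be decided by Gaussian elimination (one of Schaefer's tractable cases). Composing the two reductions with this algorithm decides $\Csp(\Gamma)$ in polynomial time.

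The chaining of the two propositions is routine; the one point requiring care—and the main obstacle—is verifying that $g=h(x,y,y)$ really has the behaviour demanded by Proposition~\ref{prop:inj}. This is a finite check over the three binary types $E$, $N$, $=$, combining the minority pattern on type $\neq\neq\neq$ with the balanced-xnor pattern on the appropriate hyperplane. One must select exactly the hyperplane on which the repeated argument contributes type $=$: the alternative substitution $h(x,x,y)$ instead yields behaviour $p_2$ with $E$-domination in the \emph{first} argument, which would force the symmetric form of Proposition~\ref{prop:inj}, so keeping track of which coordinate is collapsed is essential.
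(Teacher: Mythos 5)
Your proof is correct and follows essentially the same route as the paper: the paper's own (one-line) proof likewise observes that $h(x,y,y)$ is a binary injection of behaviour $p_1$ which is $E$-dominated in the second argument, and then chains Proposition~\ref{prop:inj} with Proposition~\ref{prop:bool} and the tractability of the affine Boolean CSP, exactly as you do, and your explicit type-by-type verification of the behaviour of $h(x,y,y)$ is accurate. The only loose thread is your opening remark that the degenerate case is settled ``by the classification in~\cite{ecsps}'' --- that classification gives a P/NP-complete dichotomy rather than tractability outright, so one would still have to argue that the injective polymorphisms force the tractable side --- but the paper's proof does not address that case at all, and it is irrelevant where the corollary is actually invoked (namely under $\End(\Gamma)=\overline{\Aut(C_2^\omega,E)}$).
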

\begin{proof}
Note that the binary function $h(x,y,y)$ is of type $p_1$ and $E$-dominated in the second argument. So the statement
is a consequence of Proposition~\ref{prop:inj} and~\ref{prop:bool}. 
\end{proof}

\subsection{Infinitely many classes of size two}\label{thm:C-low-omega-high-2-P}

We now prove tractability of $\Csp(\Gamma)$ for reducts $\Gamma$ of $(C^2_\omega,\Eq)$ in a finite language 
such that $\Pol(\Gamma)$ contains a ternary canonical function $h$ such that $$h(N,\cdot,\cdot)=h(\cdot,N,\cdot)=h(\cdot,\cdot,N)=N$$ which behaves like a minority on $\{=,E\}$. 

\begin{prop}\label{prop:syntax}
A relation $R$ with a first-order definition
in $(C^2_\omega,\Eq)$ is preserved by $h$
if and only if it 
can be defined by a conjunction of formulas of the form 
\begin{align}\label{eq:one}
 N(x_1,y_1) \vee \cdots \vee N(x_k,y_k) \vee \Eq(z_1,z_2)
 \end{align}
for $k \geq 0$, or of the form 
 \begin{align}
N(x_1,y_1) \vee \cdots \vee N(x_k,y_k) \, \vee & \, (|\{i \in S : x_i \neq y_i\}| \equiv_2 p)  \label{eq:two}
 \end{align}
 where $p \in \{0,1\}$ and $S \subseteq \{1,\dots,k\}$.
\end{prop}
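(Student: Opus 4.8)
The plan is to prove the two implications separately: the direction ``definable by such a conjunction $\Rightarrow$ preserved by $h$'' is a direct computation, while the converse carries the weight of the argument. For the easy direction, since the set of relations preserved by $h$ is closed under conjunction, it suffices to check a single clause of each form. Fix tuples $t^1,t^2,t^3$ satisfying the clause and apply $h$ coordinatewise. If the output of a clause of the form~\eqref{eq:one} failed the clause, then by $N$-domination ($h(N,\cdot,\cdot)=h(\cdot,N,\cdot)=h(\cdot,\cdot,N)=N$) each pair $(x_i,y_i)$ would lie in $\Eq$ in all three tuples while $\Eq(z_1,z_2)$ fails in the output, hence in some $t^j$, so that $t^j$ itself would violate the clause. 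For a clause of the form~\eqref{eq:two}, the point is that on pairs lying in a common class $h$ acts on the flip bit $[x_i\neq y_i]$ like $x\oplus y\oplus z$ (minority on $\{=,E\}$); thus if all three premises hold, the output parity over $S$ is $p\oplus p\oplus p=p$, as required. The degenerate case $S=\varnothing$, $p=1$, where~\eqref{eq:two} is the purely negative clause $\bigvee_i N(x_i,y_i)$, is handled identically.

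For the converse, recall that $R$ is a union of orbits and that, the structure being homogeneous with binary relations only, the type of a tuple is determined by its pairwise subtypes, i.e. by the same-class equivalence $\approx_a$ on the index set together with the within-class \emph{flip pattern} (which indices in a common class are equal and which are partners). I would show that $R$ equals the conjunction of all its valid clauses of forms~\eqref{eq:one} and~\eqref{eq:two}; one inclusion is trivial, so I take a tuple $a$ satisfying every valid such clause and build some $r\in R$ with $\tp(r)=\tp(a)$, forcing $a\in R$. The construction proceeds in two stages reflecting the two behaviours of $h$: an intersection (semilattice) behaviour on classes coming from $N$-domination, and an affine behaviour on flips coming from minority.

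First I fix the class structure. Because $\approx_{h(r,r',r'')}=\approx_r\cap\approx_{r'}\cap\approx_{r''}$, the family $\{\approx_r : r\in R\}$ is closed under intersection, so $\rho:=\bigcap\{\approx_r : r\in R,\ \approx_r\supseteq\approx_a\}$ is itself realized by some $r_0\in R$ (there are finitely many equivalences on the index set). If no $r\in R$ had $\approx_r\supseteq\approx_a$, then $\bigvee_{u\approx_a v}N(x_u,x_v)$ (form~\eqref{eq:two} with $S=\varnothing$, $p=1$) would be valid and violated by $a$; and if $\rho\supsetneq\approx_a$, a pair $(z_1,z_2)$ with $z_1\approx_\rho z_2$ but $z_1\not\approx_a z_2$ would make the form-\eqref{eq:one} clause $\bigvee_{u\approx_a v}N(x_u,x_v)\vee\Eq(x_{z_1},x_{z_2})$ valid and violated by $a$. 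Both contradict the choice of $a$, so $\approx_{r_0}=\approx_a$.

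Second I fix the flip pattern. For tuples $r$ with $\approx_r\supseteq\approx_a$ the flips along $\approx_a$-pairs are defined, and since $h$ adds them mod $2$, the set $\Phi$ of such flip patterns is closed under ternary $\oplus$, hence an affine subspace of a finite-dimensional $\mathrm{GF}(2)$-space. A valid parity equation on $\Phi$ over a set of within-block pairs translates exactly into a valid clause~\eqref{eq:two} (with premise $\bigvee_{u\approx_a v}N$ and $S$ the chosen pairs), and every linear form on the flip space is such a parity since each single flip $[x_u\neq x_v]$ is realizable. As $a$ satisfies all valid clauses~\eqref{eq:two}, its flip pattern satisfies all equations defining the affine set $\Phi$, so it is realized by some $r_1\in R$ with $\approx_{r_1}\supseteq\approx_a$. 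The \textbf{main obstacle}, which I expect to be the subtle point of the whole proof, is precisely this gap between $\supseteq\approx_a$ and $=\approx_a$: the valid form-\eqref{eq:two} clauses only constrain the coarser tuples, so $r_1$ may merge extra classes. I resolve it with one further application of $h$: setting $r:=h(r_1,r_0,r_0)$ gives $\approx_r=\approx_{r_1}\cap\approx_{r_0}=\approx_a$ while the flips satisfy $\phi(r)=\phi(r_1)\oplus\phi(r_0)\oplus\phi(r_0)=\phi(r_1)=\phi(a)$, so $\tp(r)=\tp(a)$ and $a\in R$. Finally, since $R$ is a union of finitely many orbits, a finite subconjunction of the valid clauses already defines $R$, completing the argument.
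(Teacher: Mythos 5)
Your proposal is correct, and both directions rest on the same two structural facts the paper uses — that $h$ intersects the $\Eq$-patterns of its three arguments (from $h(N,\cdot,\cdot)=h(\cdot,N,\cdot)=h(\cdot,\cdot,N)=N$) and XORs the flip bits $[x_i\neq y_i]$ within classes (from the minority behaviour on $\{E,=\}$) — but the forward direction is organized quite differently from the paper's. The paper builds an explicit defining formula: it first proves (via a minimal-CNF argument, using $h(r',s',s')$ to rule out two $\Eq$-literals in one conjunct) that the union of the $\sim$-classes of $R$ is definable by clauses of form~(\ref{eq:one}), then attaches to each class representative $a$ an affine condition $\theta_a$ on split vectors, and verifies that $\psi_0\wedge\bigwedge_a(\neg\psi_a\vee\theta_a)$ defines $R$; the key trick there is \emph{coarsening} an arbitrary $t\in R$ to $h(a,a,t)\in R_a$ while preserving split bits. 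You instead take a tuple $a$ satisfying all valid clauses and directly construct a witness $r\in R$ of the same type, using the intersection-closure of $\{\approx_r : r\in R\}$ to find $r_0$ with exactly the pattern $\approx_a$, the affine structure of the flip set $\Phi$ to find $r_1$ with the right flips but a possibly coarser pattern, and then the \emph{refining} move $r=h(r_1,r_0,r_0)$ to reconcile the two. This buys you a cleaner, purely semantic argument that avoids the syntactic minimal-CNF step (your Step 1 replaces the paper's Claim 1 entirely), at the cost of being slightly less explicit about what the defining formula looks like; the paper's version hands you the formula $\phi$ directly, which is closer in spirit to what the algorithm in the following proposition consumes. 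Both correctly exploit that classes of $\Eq$ have size two only insofar as nothing more is needed; the arguments would survive in greater generality.
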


The proof is inspired from a proof for tractable phylogeny constraints~\cite{Phylo-Complexity}. 




\begin{proof}
For the backwards implication, it
suffices to verify that formulas 
of the form in the statement are preserved
by $h$. Let $o,p,q \in R$, and
let $r := h(o,p,q)$. Assume that
$R$ has a definition by a formula
$\phi$ of the form as described in the statement. Suppose for contradiction that
$r$ does not satisfy $\phi$.
For any conjunct of $\phi$ violated by $r$, of the form
$N(x_1,y_1) \vee \dots \vee N(x_k,y_k) \vee \theta$, the tuple
$r$ must therefore satisfy
 $\Eq(x_1,y_1) \wedge \cdots \wedge \Eq(x_k,y_k)$. 
Since $h$ has the property
that $h(N,\cdot,\cdot) = h(\cdot,N,\cdot) = h(\cdot,\cdot,N) = N$, this means
that each of $o$, $p$, and $q$ 
also satisfies this formula. This in 
turn implies that $o$, $p$, and $q$
must satisfy the formula $\theta$.
It suffices to prove that $r$ satisfies $\theta$, too, since
this contradicts the assumption that
$r$ does not satisfy $\phi$. 
Suppose first that $\theta$
is of the form $\Eq(z_1,z_2)$. 
In this case, $r$ must also satisfy $\Eq(z_1,z_2)$ since $h$ preserves $\Eq$. So assume that $\theta$ is of the form 
$|\{i \in S: x_i \neq y_i\}| \equiv_2 p$ for $S \subseteq \{1,\dots,k\}$ and $p \in \{0,1\}$. Since each of $o$, $p$, $q$ 
satisfies this formula and $h$ behaves like a minority on $\{E,=\}$, we have 
that $r$ satisfies this formula, too. 
 
For the forwards implication, 
let $R$ be an $n$-ary relation with a first-order definition in $(C^2_\omega,\Eq)$ that is preserved by $h$. Define $\sim$ to be the equivalence relation on $(C^2_\omega)^n$ where $a \sim b$ iff $\Eq(a_i,a_j) \Leftrightarrow \Eq(b_i,b_j)$ for all $i,j \leq n$. Note that $h$ preserves $\sim$. 
For $a \in (C^2_\omega)^n$, let $R_a$ be the relation 
that contains all $t \in R$ with $t \sim a$. Let $\psi_a$ be the formula
$$\bigwedge_{i < j \leq n, \Eq(a_i,a_j)} \Eq(x_i,x_j)$$ 
and $\psi_a'$ be the formula
$$\bigwedge_{i < j \leq n, N(a_i,a_j)} N(x_i,x_j) \, .$$ 
Note that $t \in (C^2_\omega)^n$ satisfies 
$\psi_a \wedge \psi_a'$ if and only if
$t \sim a$, and hence
a tuple from
$R$ 
is in $R_a$ if and only it satisfies $\psi_a \wedge \psi_a'$. 

Pick representatives $a_1,\dots,a_m$ for all orbits of $n$-tuples in $R$. 

\medskip \noindent
{\bf Claim 1.} $\bigvee_{i \leq m} (\psi_{a_i} \wedge \psi'_{a_i})$
is equivalent to a conjunction of formulas of the form $(\ref{eq:one})$ from the statement. 

Rewrite the formula into an equivalent formula $\psi_0$
in conjunctive normal form of minimal size where every literal is either of the form $\Eq(x,y)$ or of the form $N(x,y)$. Suppose that $\psi_0$ contains a conjunct with literals $\Eq(a,b)$ and $\Eq(c,d)$.
Since $\psi_0$ is of minimal size there
exists $r \in (C^2_\omega)^n$ that satisfies
$\Eq(a,b)$ and none of the other literals in the conjunct, and similarly there exists $s \in (C^2_\omega)^n$ that
satisfies $\Eq(c,d)$ and none of the other literals. By assumption, $r \sim r' \in R$ and $s \sim s' \in R$. 
Since $R$ is preserved by $h$,
we have $t' := h(r',s',s') \in R$.
Then $t \sim t'$ since $h$ preserves $\sim$, and hence $t$ satisfies $\psi_0$. 
But $t$ satisfies none of the literals in the conjunct, a contradiction. 
Hence, all conjuncts of $\psi_0$ have
form $(\ref{eq:one})$ from the statement.

Let $t \in (C^2_\omega)^n$, set $l := {n \choose 2}$, and let $i_1j_1,\dots,i_lj_l$ be an enumeration of ${\{1,\dots,n\} \choose 2}$. The tuple
$b \in \{0,1\}^{n \choose 2}$
with $b_s = 1$ if $t_{i_s} \neq t_{j_s}$
and $b_s = 0$ otherwise is called
the \emph{split vector} of $t$. 
We associate to $R_a$ the Boolean relation $B_a$ consisting of all split
vectors of tuples in $R_a$. 
Since $R$ and $R_a$ are preserved by $h$, the relation $B_a$ is preserved by
the Boolean minority operation, and
hence has a definition by a Boolean system of equations. Therefore,
there exists a conjunction $\theta_a$
of equations of the form
$|\{s \in S : x_{i_s} = y_{j_s}\}| \equiv_2 p$, $p \in \{0,1\}$ such that $\theta_a \wedge \psi_a \wedge \psi_a'$ defines $R_a$. 

\medskip
\noindent
{\bf Claim 2.} The following formula 
$\phi$ defines $R$:
$$\phi := \psi_0 \wedge \bigwedge_{a \in \{a_1,\dots,a_m\}} (\neg \psi_a \vee \theta_a)$$
It is straightforward to see that this
formula can be rewritten into a formula
of the form as required in the statement. 

To prove the claim, we first show
 that every $t \in R$ satisfies $\phi$. 
 Clearly, $t$ satisfies $\psi_0$. 
 Let $a \in \{a_1,\dots,a_m\}$ be arbitrary; we have to verify that $t$ satisfies $\neg \psi_a \vee \theta_a$. If there are indices $i,j \in \{1,\dots,n\}$ such that $N(t_i,t_j)$
and $\Eq(a_i,a_j)$, then $t$ satisfies $\neg \psi_a$. We are left with the case that for all $i,j \in \{1,\dots,n\}$ if $\Eq(a_i,a_j)$ then $\Eq(t_i,t_j)$. 
In order to show that $t$ satisfies
$\theta_a$, it suffices to show that
there exists a $t' \in R_a$ such 
that
for all $i,j \leq n$ with $\Eq(a_i,a_j)$
we have $t_i = t_j$ iff $t'_i = t'_j$. 
Note that $t' := h(a,a,t) \sim a$ 
since $h(N,\cdot,\cdot) = h(\cdot,N,\cdot) = h(\cdot,\cdot,N) = h$.
Moreover, $t' \in R$ and thus $t' \in R_a$. Finally, for all $i,j \leq n$ with $\Eq(a_i,a_j)$
we have $t_i = t_j$ iff $t'_i = t'_j$
because $h$ behaves as a minority on $\{E,=\}$. Hence, $t$ satisfies $\phi$.
 
 \medskip
Next, we show that every tuple
$t$ that satisfies $\phi$ is in $R$. 
Since $t$ satisfies $\psi_0$ we have that $t \sim a$
for some $a \in \{a_1,\dots,a_m\}$.
Thus, 
$t \models \psi_a \wedge \psi_a'$.
By assumption, $t$ satisfies 
$\neg \psi_a \vee \theta_a$
 and hence
$t \models \theta_a$. Therefore,
$t \in R_a$ and  in particular $t \in R$. 
\end{proof}

\begin{prop}\label{prop:algorithm}
There is a polynomial-time algorithm
that decides whether a given set 
 $\Phi$ 
of formulas as in the statement of
Proposition~\ref{prop:syntax}
is satisfiable.
\end{prop}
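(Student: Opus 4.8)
The plan is to exploit the two-layer structure of the clauses produced by Proposition~\ref{prop:syntax}. Writing $N=\neg\Eq$, the literals $\Eq$ and $N$ only constrain which variables land in a common $\Eq$-class, while the parity part of a clause of form~(\ref{eq:two}) only constrains, \emph{within} a class, which of its two elements each variable uses. Since $(C^2_\omega,\Eq)$ has infinitely many classes, each of size exactly two, an assignment is nothing but a partition $P$ of the variable set into blocks (realized by distinct classes) together with, for each block, a $2$-colouring recording which class-element each variable takes. I would encode the colouring by a Boolean variable $c_u$ for each variable $u$, so that two variables $u,v$ in the same block get distinct values iff $c_u\oplus c_v=1$; every such colouring is realizable precisely because the classes have size two.

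First I would treat the clauses~(\ref{eq:one}) as Horn clauses over the $\Eq$-atoms: such a clause is exactly the implication $(\Eq(x_1,y_1)\wedge\cdots\wedge\Eq(x_k,y_k))\Rightarrow\Eq(z_1,z_2)$. Running the standard least-fixpoint closure with a union--find structure — start from equality, add the merges forced by the $k=0$ clauses, and fire an implication whenever all its antecedent pairs are already merged — yields in polynomial time the least equivalence relation $\sim^\ast$ forced by~(\ref{eq:one}). A standard Horn argument shows every satisfying assignment induces an equivalence relation $\theta\supseteq{\sim^\ast}$ (i.e.\ a coarsening of $\sim^\ast$), and conversely $\sim^\ast$ together with \emph{any} colouring satisfies all clauses~(\ref{eq:one}), since for each such clause either some pair is cross-block (so $N$ holds) or all pairs are within-block (so $\Eq(z_1,z_2)$ was forced).

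Next I would collect the \emph{active} clauses of form~(\ref{eq:two}), namely those whose antecedent $\bigwedge_i\Eq(x_i,y_i)$ holds under $\sim^\ast$; the rest are already satisfied by an $N$-disjunct and may be discarded. The key point is that for an active clause all pairs $(x_i,y_i)$, in particular those with $i\in S$, lie in one block, so $x_i\neq y_i$ becomes $c_{x_i}\oplus c_{y_i}=1$ and the parity disjunct turns into the affine equation $\sum_{i\in S}(c_{x_i}\oplus c_{y_i})=p$ over $\mathrm{GF}(2)$ (a degenerate $S=\emptyset,\ p=1$ giving the infeasible $0=1$, as it should). I would then solve the resulting linear system by Gaussian elimination — polynomial, like all the preceding steps — and answer \textbf{yes} iff it is feasible, in which case realizing $\sim^\ast$ and colouring by a solution gives a satisfying assignment.

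The main obstacle is \emph{completeness}: justifying that it suffices to work with the least $\sim^\ast$ rather than searching over all coarsenings $\theta$. Here I would argue that passing to any $\theta\supseteq{\sim^\ast}$ can only \emph{enlarge} the set of active clauses: antecedents holding under $\sim^\ast$ still hold under $\theta$, and a newly activated clause again has all its $S$-pairs inside a common block, hence contributes a genuine affine equation in the same colour variables rather than a constant. Thus the linear system attached to $\theta$ contains the one attached to $\sim^\ast$, so its solution set is contained in that of the $\sim^\ast$-system; combined with the fact that every model is a coarsening of $\sim^\ast$, this yields that $\Phi$ is satisfiable iff the $\sim^\ast$-system is feasible. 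The behaviour of the polymorphism $h$ — namely $h(N,\cdot,\cdot)=h(\cdot,N,\cdot)=h(\cdot,\cdot,N)=N$ together with minority on $\{=,E\}$ — is exactly what underlies this separation into an equivalence-closure layer and an affine layer, and I would use it to cross-check the equation manipulations.
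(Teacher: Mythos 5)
Your proposal is correct and follows essentially the same route as the paper's proof: compute the least equivalence relation forced by the clauses of form~(\ref{eq:one}) read as Horn implications (the paper does this by iteratively deleting $N$-literals and growing a graph of forced $\Eq$-pairs, which is exactly your union--find closure), discard every clause that retains a satisfiable $N$-disjunct, and solve the surviving parity constraints as a linear system over $\mathrm{GF}(2)$. The only cosmetic difference is the encoding of the affine layer: you use one colour bit $c_u$ per variable, whereas the paper uses one Boolean variable per pair $xy$ together with the consistency equations $xy+yz=xz$ --- the two encodings are equivalent.
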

\begin{proof}
Let $X$ be the set of variables that appear in $\Phi$. 
Create 
a graph $G$ with vertex set $X$ 
that contains an edge
between $z_1$ and $z_2$ 
if 
$\Phi$ contains a formula
of the form $\Eq(z_1,z_2)$. 
Eliminate all literals of the form
$N(x_i,y_i)$ in formulas from $\Phi$ when $x_i$ and $y_i$ lie in the same connected component of $G$. 
Repeat this procedure until no
more literals get removed. 

We then create a Boolean system of equations 
$\Psi$ 
with variable set ${X \choose 2}$ as follows;
if $x,y \in X$ are distinct, for better readability we write $xy$ for the respective Boolean variable instead of $\{x,y\}$.
For each formula
$|\{i \in S \mid x_i \neq y_i \}| \equiv_2 p$
we add the Boolean equation
$\sum_{i \in S} x_iy_i = p$.
We additionally add for all
 $xy,yz,xz \in {X \choose 2}$ the equation $xy+yz = xz$. 
If the resulting system of equations $\Psi$ does not have
a solution 
over $\{0,1\}$, reject the instance. 
Otherwise accept. 

\medskip
To see that this algorithm is correct,
observe that the literals that have been removed in the first part of the algorithm are false in all solutions, 
so removing them from the disjunctions does not change the set of solutions. 

\medskip
If the algorithm rejects,
then there is indeed no solution to 
$\Phi$. 
To see this, suppose that $s \colon C^2_\omega \to C^2_\omega$ is a solution to $\Phi$. 
Define
$b \colon {X \choose 2} \to \{0,1\}$ as follows. 
Note that for every variable $x_iy_i$
that appears in some Boolean equation in $\Psi$, a literal $N(x_i,y_i)$ has been deleted in the first phase of the algorithm (recall the syntactic form in~(\ref{eq:two}); we only add Boolean equations to $\Psi$ if all the literals involving $N$ have been deleted), and hence we have $\Eq(s(x_i),s(y_i))$. Define
 $s'(x_iy_i) := 1$ if $s(x_i) \neq s(y_i)$
 and $s'(x_iy_i) := 0$ otherwise. 
Then $s'$ is a satisfying assignment for $\Psi$.

\medskip
We still have to show that there exists
a solution to $\Phi$ if the algorithm
accepts. Let $s' \colon {X \choose 2} \to \{0,1\}$ be a solution to $\Psi$. 
For each connected component
$C$ in the graph $G$ at the final stage of the algorithm we pick two values
$a_C,b_C \in C^2_\omega$ such that $\Eq(a_C,b_C)$, and such that
$N(a_C,d)$ and $N(b_C,d)$ for all
previously picked values $d \in C^2_\omega$.
Moreover, for each connected component $C$ of $G$ we pick a representative $r_C$. 
Define $s(r_C) := a_C$, 
and for $x \in C$ define $s(x) := a_C$
if $s'(xr_C) = 0$,
and $s(x) := b_C$ otherwise.  

Then $s$ satisfies all formulas in $\Psi$
that still contain disjuncts of the form
$N(x_i,y_i)$, since these disjuncts are satisfied by $s$. 
Formulas of the form
$|\{i \in S : x_i \neq y_i\}| \equiv_2 p$
are satisfied, too, since $x_i$ and
$y_i$ lie in the same connected component $C$, 
and hence $s(x_i) \neq s(y_i)$ iff
$s'(xr_C) \neq s'(y_ir_C)$,
which is the case iff
$s'(xr_C) + s'(y_ir_C) = s'(x_iy_i) = 1$ because of the additional equations we have added to $\Psi$. Therefore,
$|\{i \in S : x_i = y_i\}| \equiv_2 p$
iff $\sum_{i \in S} s'(x_iy_i) = p$. 
\end{proof}

\begin{corollary}\label{cor:tractability}
Let $\Gamma$ be a reduct of $(C^2_\omega,\Eq)$ with finite signature and such that
$\Pol(\Gamma)$ contains a ternary canonical injection $h$ as described in the beginning of Section~\ref{thm:C-low-omega-high-2-P}. Then $\Csp(\Gamma)$ is in P. 
\end{corollary}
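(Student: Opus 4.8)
The plan is that almost all of the work has already been done in Propositions~\ref{prop:syntax} and~\ref{prop:algorithm}, so that the corollary should follow by a routine reduction combining the syntactic normal form with the decision algorithm. I would therefore organize the proof as a translation of instances of $\Csp(\Gamma)$ into sets of formulas of the forms~(\ref{eq:one}) and~(\ref{eq:two}), followed by an appeal to the algorithm.

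First I would observe that since $\Gamma$ is a reduct of $(C^2_\omega,\Eq)$, each of its finitely many relations $R$ has a first-order definition in $(C^2_\omega,\Eq)$, and since $h\in\Pol(\Gamma)$ by assumption, each such $R$ is preserved by $h$. Hence, by Proposition~\ref{prop:syntax}, every relation $R$ of $\Gamma$ is equivalent to a conjunction $\delta_R$ of formulas of the forms~(\ref{eq:one}) and~(\ref{eq:two}). As the signature is finite, I fix one such $\delta_R$ for each relation symbol once and for all; each $\delta_R$ has constant size, independent of any input.

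Next, given an instance $\Phi$ of $\Csp(\Gamma)$, viewed as a primitive positive sentence, I would first eliminate all equality atoms $x=y$ by identifying the corresponding variables, which is standard preprocessing and costs only polynomial time. I would then replace each remaining conjunct $R(z_1,\dots,z_k)$ of $\Phi$ by the formula obtained from $\delta_R$ upon substituting $z_1,\dots,z_k$ for its variables. After such a substitution a disjunct may become trivially true (e.g.\ $\Eq(x,x)$) or trivially false (e.g.\ $N(x,x)$); deleting the former and removing the latter yields a set $\Phi'$ of formulas again of the forms~(\ref{eq:one}) and~(\ref{eq:two}). This translation is carried out in time linear in the size of $\Phi$, since each $\delta_R$ is of bounded size. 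Because the conjunction of the $\delta_R$ defines the same relations as the conjuncts they replace, and the domain is unchanged, $\Phi$ holds in $\Gamma$ if and only if $\Phi'$ is satisfiable over $(C^2_\omega,\Eq)$.

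Finally, I would apply Proposition~\ref{prop:algorithm} to decide the satisfiability of $\Phi'$ in polynomial time, which yields the answer for $\Phi$ and completes the argument. I do not expect a genuine obstacle here: the substantive content, namely the normal form and the polynomial-time algorithm, is supplied entirely by the two preceding propositions. The only points requiring (routine) care are the correct handling of repeated variables and equality atoms in the translation and the verification of the linear bound on its running time, neither of which presents any real difficulty.
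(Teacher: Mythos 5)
Your proposal is correct and matches the paper's approach: the paper's own proof is literally ``Direct consequence of Proposition~\ref{prop:syntax} and Proposition~\ref{prop:algorithm},'' and your write-up simply fills in the routine translation (fixed-size normal forms per relation symbol, substitution, handling of equality atoms and degenerate disjuncts) that the paper leaves implicit.
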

\begin{proof} 
Direct consequence of Proposition~\ref{prop:syntax} and Proposition~\ref{prop:algorithm}. 
\end{proof} 

\section{Summary for the homogeneous equivalence relations}\label{sect:summary_equivalence} \

\begin{thm}
\label{thm:equivalence-above-2}
Let $\Gamma$ be a finite signature reduct of $(C_n^s,E)$, where either $2< n<\omega$ or  $2< s<\omega$, and either $n$ or $s$ equals $\omega$. Then one of the following holds.
\begin{itemize}
\item[(1)] $\Gamma$ is homomorphically equivalent to a reduct of $(C_n^s,=)$, and $\CSP(\Gamma)$ is in P or NP-complete by~\cite{ecsps}.
\item[(2)] $\End(\Gamma)=\overline{\Aut(C_n^s,E)}$, $\Pol(\Gamma)$ has a uniformly continuous h1 clone homomorphism, and $\CSP(\Gamma)$ is NP-complete.
\end{itemize}
\end{thm}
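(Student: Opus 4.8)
The plan is to peel off the unary level first and then to manufacture a uniformly continuous strong h1 clone homomorphism of $\Pol(\Gamma)$ (in the sense of Definition~\ref{defn:clonehomo}) out of the finite group actions analysed in Propositions~\ref{prop:n>2} and~\ref{prop:s>2}. First I would apply Proposition~\ref{prop:EndEq1}: if $\End(\Gamma)$ contains an endomorphism onto a clique or an independent set, then $\Gamma$ is homomorphically equivalent to a reduct of $(C_n^s,=)$, which is case~(1), classified in~\cite{ecsps}. Otherwise $\End(\Gamma)=\overline{\Aut(C_n^s,E)}$, and I aim for case~(2). Membership of $\CSP(\Gamma)$ in $\NP$ is automatic since $(C_n^s,E)$ is finitely bounded homogeneous, so it only remains to produce the clone homomorphism, whereupon Theorem~\ref{thm:wonderland} yields $\NP$-hardness and hence $\NP$-completeness. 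Note that the hypothesis forces exactly one of the two configurations $n\text{ finite}\geq 3,\ s=\omega$ or $n=\omega,\ s\text{ finite}\geq 3$; in both we have $s\geq 3$, so Proposition~\ref{prop:eqpreserved} applies and $E$, $N$, $\Eq$ are preserved, whence $\Pol(\Gamma)$ acts on the classes of $\Eq$ and, after fixing a class, on that class.

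In the first configuration ($3\leq n<\omega$, $s=\omega$) the classes form a finite set of size $n\geq 3$ on which $\Pol(\Gamma)$ acts. This action contains every permutation, since $\Aut(C_n^\omega,E)$ realises the full symmetric group on the classes, and by Proposition~\ref{prop:n>2} it has no essential and no constant operation. Hence each operation of the action has the form $(x_1,\dots,x_k)\mapsto g(x_i)$ with $g$ non-constant, so it depends on a unique coordinate $i$. I would define $\xi$ to send each $f\in\Pol(\Gamma)$ to the projection $\pi_i^k$ onto that coordinate; this map is onto the projections, fixes projections, and (because a clone of essentially unary operations maps homomorphically onto its projections) even preserves full composition. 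It is uniformly continuous since the relevant coordinate is already visible on a fixed finite set of class representatives.

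The second configuration ($n=\omega$, $3\leq s<\omega$) is the delicate one, and I expect it to be the main obstacle. Here Proposition~\ref{prop:s>2} only tells us that the action of $\Pol(\Gamma,C)$ on a fixed finite class $C$ is essentially unary and non-constant, whereas a general $f\in\Pol(\Gamma)$ need not preserve $C$: it maps $C^k$ into a possibly different class $C':=f(C,\dots,C)$, which is a single class because $\Eq$ is preserved. I would resolve this by choosing $\alpha\in\Aut(C_\omega^s,E)$ swapping $C'$ with $C$ and fixing the remaining classes, so that $\alpha\circ f\in\Pol(\Gamma,C)$; then $\alpha\circ f$ restricted to $C$, and hence $f|_{C^k}$, is essentially unary and non-constant, depending on a unique coordinate $i$. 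Setting $\xi(f):=\pi_i^k$, one verifies that $\xi$ is onto the projections, fixes projections, and preserves composition with projections: restricting $f(\pi_{j_1}^m,\dots,\pi_{j_k}^m)$ to $C^m$ keeps all arguments inside $C$, so the dependence coordinate transforms as $i\mapsto j_i$, matching $\xi(f)(\pi_{j_1}^m,\dots,\pi_{j_k}^m)=\pi_{j_i}^m$. Uniform continuity again follows because $\xi(f)$ is determined by the finite table $f|_{C^k}$.

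In both configurations Theorem~\ref{thm:wonderland} then delivers $\NP$-hardness, and together with $\NP$-membership this gives $\NP$-completeness, establishing case~(2). The hard part, as indicated, is the second configuration: one must (i) account for the moving target class $C'$ by post-composing with a class-swapping automorphism so that Proposition~\ref{prop:s>2} becomes applicable, and (ii) recognise that only composition with \emph{projections} can be preserved, not arbitrary composition, since under genuine composition the inner functions would push the arguments of $f$ into several distinct classes, where essential unarity is no longer guaranteed. This is exactly why one obtains a \emph{strong} h1 clone homomorphism rather than a full clone homomorphism, and fortunately this weaker object is precisely what Theorem~\ref{thm:wonderland} requires. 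I expect the bookkeeping for well-definedness of the dependence coordinate and the verification of the strong-h1 composition law to be the steps needing the most care.
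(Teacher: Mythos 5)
Your proposal is correct and follows the paper's overall architecture — Proposition~\ref{prop:EndEq1} to split off case~(1), Proposition~\ref{prop:eqpreserved} to get $E$, $N$, $\Eq$ preserved, and then Propositions~\ref{prop:n>2} and~\ref{prop:s>2} feeding into Theorem~\ref{thm:wonderland} — but it diverges in how the clone homomorphism on $\Pol(\Gamma)$ itself is obtained in the second configuration, and that divergence is real and worth noting. The paper handles the case $n=\omega$, $3\leq s<\omega$ abstractly: it observes that the action of $\Pol(\Gamma,C)$ on $C$ has a projective clone homomorphism, passes to $\Pol(\Gamma,c)\subseteq\Pol(\Gamma,C)$ for a constant $c\in C$ (using that $C$ is pp-definable from $c$ and $\Eq$), and then invokes a black-box transfer result from~\cite{wonderland} — valid because $\Gamma$ is a model-complete core — to pull the projective h1 clone homomorphism back from the expansion by a constant to $\Pol(\Gamma)$ itself. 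You instead build the strong h1 clone homomorphism by hand: post-compose each $f\in\Pol(\Gamma)$ with a class-swapping automorphism so that Proposition~\ref{prop:s>2} applies to its restriction to $C^k$, read off the unique dependence coordinate, and verify directly that this assignment fixes projections, is compatible with composition by projections (the arguments stay inside $C$, which is exactly why only the \emph{strong h1} law, and not full composition, survives), and is uniformly continuous because it is determined by the finite table $f|_{C^k}$. Your route is more self-contained and makes explicit why one lands on a strong h1 homomorphism rather than a full clone homomorphism; the paper's route is shorter and leans on general machinery for model-complete cores. The verifications you flag as delicate (well-definedness of the dependence coordinate, non-constancy of the restriction) do go through: $f[C^k]$ lies in a single class since $\Eq$ is preserved, and the restriction is non-constant since $E$ is preserved and $|C|=s\geq 3$. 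Both arguments are sound.
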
 
\begin{proof}
If $\Gamma$ has an endomorphism whose image is a clique or an independent set, then $\Gamma$ is homomorphically equivalent to a reduct of $(C_n^s,=)$ and the complexity classification is known from~\cite{ecsps}. Otherwise, courtesy of Propositions~\ref{prop:EndEq1} and~\ref{prop:eqpreserved}, we may assume that $\End(\Gamma)=\overline{\Aut(C_n^s,E)}$, and that there is a pp-definition of $E$, $N$, and $\Eq$ in $\Gamma$.

In the first case, that $\Eq$ has a finite number $n\geq 3$ of classes, we use Proposition~\ref{prop:n>2} to see that the action of $\Pol(\Gamma)$ on the classes of $\Eq$ has no essential and no constant operation. It follows that this action has a uniformly continuous projective clone homomorphism as in Definition~\ref{defn:clonehomo}. The mapping which sends every function in $\Pol(\Gamma)$ to the function it becomes in the action on the classes of $\Eq$ is a uniformly continuous clone homomorphism~\cite{Topo-Birk}, and hence the original action of $\Pol(\Gamma)$ has a uniformly continuous projective clone homomorphism as well. This implies NP-completeness of $\CSP(\Gamma)$ (Theorem~\ref{thm:wonderland}).

In the second case, that $\Eq$ has classes of finite size $s \geq 3$, we use Proposition~\ref{prop:s>2} to see that the action of $\Pol(\Gamma,C)$ on some equivalence class $C$ has no essential and no constant operation, and hence has a uniformly continuous projective clone homomorphism. Picking any $c\in C$, we have that $\Pol(\Gamma,c)\subseteq \Pol(\Gamma,C)$ since $C$ is pp-definable from $c$ and $\Eq$. Consequently, $\Pol(\Gamma,c)$ has a uniformly continuous projective clone homomorphism as well. Because $\Gamma$ is a model-complete core, this implies that $\Pol(\Gamma)$ has a uniformly continuous projective h1 clone homomorphism~\cite{wonderland}, and hence $\CSP(\Gamma)$ is NP-complete by Theorem~\ref{thm:wonderland}.
\end{proof}

\begin{thm}
\label{thm:C-low-2-high-omega-complexity}
Suppose $\Gamma$ is a finite signature reduct of $(C_2^\omega,E)$. Then one of the following holds.
\begin{itemize}
\item[(1)] $\Gamma$ is homomorphically equivalent to a reduct of $(C_2^\omega,=)$, and $\CSP(\Gamma)$ is in P or NP-complete by~\cite{ecsps}.
\item[(2)] $\End(\Gamma)=\overline{\Aut(C_2^\omega,E)}$, $\Pol(\Gamma)$ contains a canonical ternary injection of behaviour minority which is hyperplanely of behaviour  balanced xnor, and $\CSP(\Gamma)$ is in P.
\item[(3)] $\End(\Gamma)=\overline{\Aut(C_2^\omega,E)}$, $\Pol(\Gamma)$ has a uniformly continuous h1 clone homomorphism, and $\CSP(\Gamma)$ is NP-complete.
\end{itemize}
\end{thm}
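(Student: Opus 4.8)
The plan is to assemble the structural results of Sections~\ref{sect:polymorphisms_equivalence} and~\ref{sect:CSP_equivalence} into a clean trichotomy, following the template of the proof of Theorem~\ref{thm:equivalence-above-2} but feeding in Proposition~\ref{prop:2omega} together with the new tractable case. First I would invoke Proposition~\ref{prop:EndEq1}: either $\End(\Gamma)$ contains an endomorphism onto a clique or onto an independent set, or $\End(\Gamma)=\overline{\Aut(C_2^\omega,E)}$. In the former case an endomorphism onto an independent set identifies $\Gamma$ with a reduct of equality directly, while an endomorphism onto a clique does so after reinterpreting $E$ as $\neq$ on the image; either way $\Gamma$ is homomorphically equivalent to a reduct of $(C_2^\omega,=)$, so $\CSP(\Gamma)$ is in $\PP$ or $\NP$-complete by~\cite{ecsps}, which is item~(1).

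In the core case $\End(\Gamma)=\overline{\Aut(C_2^\omega,E)}$, since $s=\omega\geq 3$ I would apply Proposition~\ref{prop:eqpreserved} to conclude that $E$, $N$, and $\Eq$ are all preserved, so that $\Pol(\Gamma)$ acts on the two classes of $\Eq$ and, for a fixed class $C$, on $C$ itself. Proposition~\ref{prop:2omega} then yields a trichotomy. If $\Pol(\Gamma)$ contains a canonical ternary injection of behaviour minority which is hyperplanely of behaviour balanced xnor, then by the corollary at the end of Section~\ref{thm:C-low-2-high-omega-P} the problem $\CSP(\Gamma)$ is in $\PP$, giving item~(2); here the structural hypothesis of that corollary matches the output of Proposition~\ref{prop:2omega} verbatim, so no further work is needed.

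It remains to treat the two ``no essential function'' branches, which I expect to be the main technical point. If the action of $\Pol(\Gamma)$ on $\{C_0,C_1\}$ has no essential operation, then every operation of this action is essentially unary, and since $N$ is preserved it has no constant operation, so each operation has a \emph{unique} essential variable; mapping each to the projection onto that variable is a projective clone homomorphism of the action (well-definedness, fixing of projections, and preservation of composition are immediate). Composing with the uniformly continuous clone homomorphism from $\Pol(\Gamma)$ onto its action on the classes, which exists because $\Eq$ is a congruence~\cite{Topo-Birk}, produces a uniformly continuous projective clone homomorphism of $\Pol(\Gamma)$, hence a uniformly continuous strong h1 clone homomorphism, so $\CSP(\Gamma)$ is $\NP$-hard by Theorem~\ref{thm:wonderland}. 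If instead the action of $\Pol(\Gamma,C)$ on $C$ has no essential operation, the same argument (now ruling out constants via preservation of $E$) gives a uniformly continuous projective clone homomorphism of $\Pol(\Gamma,C)$; restricting to $\Pol(\Gamma,c)\subseteq\Pol(\Gamma,C)$ for a constant $c\in C$, using that $C$ is pp-definable from $c$ via $\Eq$, and invoking that $\Gamma$ is a model-complete core, I would lift this to a uniformly continuous projective h1 clone homomorphism of $\Pol(\Gamma)$ by the machinery of~\cite{wonderland}, exactly as in the proof of Theorem~\ref{thm:equivalence-above-2}, again yielding $\NP$-hardness. In both branches $\CSP(\Gamma)$ lies in $\NP$ since $(C_2^\omega,E)$ is finitely bounded and homogeneous, so $\CSP(\Gamma)$ is $\NP$-complete, establishing item~(3).

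The genuinely delicate step is this last lifting from the one-constant expansion $\Pol(\Gamma,c)$ back to $\Pol(\Gamma)$ via the model-complete core property and~\cite{wonderland}; everything else is bookkeeping and a direct appeal to the earlier propositions. The only structural novelty compared with Theorem~\ref{thm:equivalence-above-2} is that, the classes now forming a two-element set, Proposition~\ref{prop:finite3} no longer forces the absence of essential functions; the resulting escape route, detected via Post's theorem inside Proposition~\ref{prop:2omega}, is precisely what produces the extra tractable case~(2).
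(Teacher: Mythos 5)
Your proposal is correct and follows essentially the same route as the paper: reduce to the core case via Proposition~\ref{prop:EndEq1} and Proposition~\ref{prop:eqpreserved}, apply the trichotomy of Proposition~\ref{prop:2omega}, derive NP-completeness in the two ``no essential function'' branches exactly as in the proof of Theorem~\ref{thm:equivalence-above-2}, and obtain tractability in the remaining branch from the corollary of Section~\ref{thm:C-low-2-high-omega-P}. The only difference is that you spell out the details that the paper compresses into a reference to the earlier theorem's proof.
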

\begin{proof}
As in the proof of Theorem~\ref{thm:equivalence-above-2} we may assume that $\End(\Gamma)=\overline{\Aut(C_2^\omega,E)}$, and that $E$, $N$ and $\Eq$ are pp-definable. We apply Proposition~\ref{prop:2omega}. The first two cases from that proposition imply a uniformly continuous projective h1 clone homomorphism, and hence NP-completeness of the CSP, as in the proof of Theorem~\ref{thm:equivalence-above-2}. The third case in Proposition~\ref{prop:2omega} yields case~(2) here, and tractability as detailed in Section~\ref{thm:C-low-2-high-omega-P}.
\end{proof}

\begin{thm}
\label{thm:C-low-omega-high-2-complexity}
Suppose $\Gamma$ is a finite signature reduct of $(C_\omega^2,E)$. Then one of the following holds.
\begin{itemize}
\item[(1)] $\Gamma$ is homomorphically equivalent to a reduct of $(C_\omega^2,=)$, and $\CSP(\Gamma)$ is in P or NP-complete by~\cite{ecsps}.
\item[(2)] $\End(\Gamma)=\overline{\Aut(C_\omega^2,E)}$, $\Eq$ is not pp-definable, $\Pol(\Gamma)$ contains a canonical binary injective polymorphism of behaviour $\mini$ that is $N$-dominated, and $\CSP(\Gamma)$ is in P.
\item[(3)] $\End(\Gamma)=\overline{\Aut(C_\omega^2,E)}$, $\Eq$ is pp-definable, $\Pol(\Gamma)$ contains a  ternary canonical function $h$ with $h(N,\cdot,\cdot)=h(\cdot,N,\cdot)=h(\cdot,\cdot,N)=N$ and which behaves like a minority on $\{E,=\}$, and $\CSP(\Gamma)$ is in P.
\item[(4)] $\End(\Gamma)=\overline{\Aut(C_\omega^2,E)}$, $\Pol(\Gamma)$ has a uniformly continuous h1 clone homomorphism, and $\CSP(\Gamma)$ is NP-complete.
\end{itemize}
\end{thm}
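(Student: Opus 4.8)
The plan is to assemble the structural results of Sections~\ref{sect:polymorphisms_equivalence} and~\ref{sect:CSP_equivalence} into the claimed four-way split, following the same template as the proof of Theorem~\ref{thm:equivalence-above-2}. First I would dispose of the degenerate case: by Proposition~\ref{prop:EndEq1}, either $\End(\Gamma)$ contains an endomorphism onto a clique or an independent set, or $\End(\Gamma)=\overline{\Aut(C_\omega^2,E)}$. In the former case $\Gamma$ is homomorphically equivalent to a reduct of $(C_\omega^2,=)$, so that $\CSP(\Gamma)$ coincides with the CSP of a reduct of equality and the dichotomy of~\cite{ecsps} applies; this is case~(1). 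Hence I may assume from now on that $\End(\Gamma)=\overline{\Aut(C_\omega^2,E)}$, so that $\Gamma$ is a model-complete core in which $E$ and $N$ are pp-definable (they are single orbits of pairs, so this is immediate from Lemma~\ref{lem:arity-reduction} and Theorem~\ref{conf:thm:inv-pol}).

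Next I would branch on whether $\Eq$ is pp-definable in $\Gamma$. If it is not, then Proposition~\ref{prop:al-jabr} directly supplies a canonical binary injection $f\in\Pol(\Gamma)$ of behaviour $\mini$ which is $N$-dominated; this is case~(2). For the tractability claim I would observe that such an $f$ is an embedding of $(C_\omega^2,E)^2$ into $(C_\omega^2,E)$: behaviour $\mini$ forces $E(f(u),f(v))$ exactly on the edges (pairs satisfying $\EE$) of the product, while $N$-domination sends every remaining pair of distinct tuples to $N$, and injectivity is automatic. Tractability then follows from Theorem~\ref{thm:maximal} with $\Delta=(C_\omega^2,E)$, once I verify that $\CSP(\hat\Delta)$ for $\hat\Delta=(C_\omega^2,E,\hat E,\neq)$ is in $\PP$; this is a routine consistency check entirely analogous to the verification carried out in the proof of Proposition~\ref{prop:mintractable}.

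If instead $\Eq$ is pp-definable, I would invoke Proposition~\ref{prop:omega2}, which leaves three possibilities. If $\Pol(\Gamma)$ contains the ternary canonical function $h$ with $h(N,\cdot,\cdot)=h(\cdot,N,\cdot)=h(\cdot,\cdot,N)=N$ behaving like a minority on $\{E,=\}$, then this is case~(3) and $\CSP(\Gamma)$ is in $\PP$ by Corollary~\ref{cor:tractability}. In the two remaining possibilities an action has no essential function, and both feed into case~(4). If the action of $\Pol(\Gamma)$ on the (countably many) classes of $\Eq$ has no essential function, then, since $N$ is preserved, this action has no constant operation either, so every one of its operations is essentially unary with a unique essential coordinate and with the full symmetric group as its unary reduct; the map sending each operation to the projection onto its essential coordinate is therefore a projective clone homomorphism of the action, and composing with the uniformly continuous quotient map $\Pol(\Gamma)\to\text{(action)}$ yields a uniformly continuous h1 clone homomorphism of $\Pol(\Gamma)$. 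If instead the action of $\Pol(\Gamma,C)$ on some class $C$ has no essential function, the same essentially-unary argument gives a uniformly continuous projective clone homomorphism of $\Pol(\Gamma,c)$ for some $c\in C$ (using that $C$ is pp-definable from $c$ and $\Eq$); because $\Gamma$ is a model-complete core, this lifts to a uniformly continuous projective h1 clone homomorphism of $\Pol(\Gamma)$ by~\cite{wonderland}. In either sub-case $\CSP(\Gamma)$ is $\NP$-hard by Theorem~\ref{thm:wonderland}, while membership in $\NP$ holds because $(C_\omega^2,E)$ is finitely bounded homogeneous; this is case~(4).

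I expect the main obstacle to be the two NP-hardness sub-cases, specifically the bookkeeping needed to convert ``the action has only essentially unary operations'' into a uniformly continuous h1 clone homomorphism of the full clone $\Pol(\Gamma)$. The infinitely-many size-two classes constitute the genuinely new feature here, so that Post's dichotomy (Proposition~\ref{prop:Post}) rather than Proposition~\ref{prop:finite3} governs a single class; care is required because the relevant action for the restriction-to-a-class argument lives on a two-element set, whereas the action on the classes themselves lives on an infinite set, and in the latter one must confirm that the unary reduct really is a group so that the projection map is composition-compatible. The remaining ingredients — the reduction to a model-complete core, the tractability of case~(2) via Theorem~\ref{thm:maximal}, and the appeal to Corollary~\ref{cor:tractability} for case~(3) — are comparatively routine once Propositions~\ref{prop:al-jabr} and~\ref{prop:omega2} are in hand.
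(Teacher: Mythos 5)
Your proposal is correct and follows essentially the same route as the paper's own proof: reduce to the model-complete core via Proposition~\ref{prop:EndEq1}, branch on pp-definability of $\Eq$, apply Proposition~\ref{prop:al-jabr} with Theorem~\ref{thm:maximal} in the negative case, and Proposition~\ref{prop:omega2} with Corollary~\ref{cor:tractability} respectively Theorem~\ref{thm:wonderland} in the positive case. The extra detail you supply (verifying that the $\mini$, $N$-dominated injection is an embedding of the square, and converting the essentially-unary actions into uniformly continuous h1 clone homomorphisms) matches what the paper delegates to the analogous steps in Proposition~\ref{prop:mintractable} and Theorem~\ref{thm:equivalence-above-2}.
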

\begin{proof}
As in Theorem~\ref{thm:equivalence-above-2}  we may assume that $\End(\Gamma)=\overline{\Aut(C_\omega^2,E)}$, and that therefore $E$ and $N$ are pp-definable. If $\Eq$ is not pp-definable, then by Proposition~\ref{prop:al-jabr}, we have a binary injective polymorphism of behaviour $\mini$ that is $N$-dominated, and we have a polynomial algorithm from Theorem~\ref{thm:maximal}, similarly as in Proposition~\ref{prop:mintractable} for reducts of $(H_n,E)$. Suppose now that $\Eq$ is pp-definable.
We apply Proposition~\ref{prop:omega2}. As before, the first two cases imply NP-completeness of $\CSP(\Gamma)$. The third case from Proposition~\ref{prop:omega2} yields tractability as detailed in Section~\ref{thm:C-low-omega-high-2-P}.
\end{proof}

Summarizing, we obtain a proof of Theorem~\ref{thm:equiv}.

\begin{proof}[of Theorem~\ref{thm:equiv}]
The statement follows from the preceding three theorems, together with~\cite{equiv-csps} (for $C_\omega^\omega$) and~\cite{ecsps} (for $C_\omega^1$ and $C_1^\omega$). 
\end{proof}

\ignore{
We close the section with a more detailed variant of Theorem~\ref{thm:equiv}. 

\begin{theorem}\label{thm:main2equiv}
Let  $(C_n^s,E)$ be an infinite graph whose reflexive closure $\Eq$ is an equivalence relation with $n$ classes of size $s$, where $1\leq n, s \leq \omega$. 
Let $\Gamma$ be a reduct of $(C_n^s,E)$. 
Then one of the following holds.
\begin{itemize}
\item[(1)] $\Gamma$ has an endomorphism whose image induces a clique or an independent set, and is homomorphically equivalent to a reduct of $(C_n^s,=)$.
\item[(2)] $\Gamma$ is a model complete core, $\End(\Gamma)=\overline{\Aut(C_n^s,E)}$, and $\Pol(\Gamma)$ has a uniformly continuous projective h1 clone homomorphism.
\item[(3)] $n=2, s=\omega$, $\Gamma$ is a model complete core, and $\Pol(\Gamma)$ contains a canonical ternary injection of behaviour minority which is hyperplanely of behaviour  E-dominated projection. 
\item[(4)] $n=\omega, s=2$, $\Gamma$ is a model complete core, and $\Pol(\Gamma)$ contains a ternary canonical function $h$ with $h(N,\cdot,\cdot)=h(\cdot,N,\cdot)=h(\cdot,\cdot,N)=N$ and which behaves like a minority on $\{E,=\}$. 
\end{itemize}
Neither items~(2) and~(3), nor items~(2) and~(4) can simultaneously hold, and when $\Gamma$ has a finite relational signature, then $(2)$ implies NP-completeness and both (3) and (4) imply tractability of its CSP.
\end{theorem}
}

\section{Outlook}\label{sect:final}



We have classified the computational
complexity of CSPs 
for reducts of the infinite homogeneous graphs. 
Our proof shows that the scope 
of the classification method from~\cite{BodPin-Schaefer} is much larger
than one might expect at first sight. 
The general research goal here is
to identify larger and larger classes
of infinite-domain CSPs where systematic
complexity classification is possible; two dichotomy conjectures are given for CSPs of reducts of finitely bounded homogeneous structures in \cite{BPP-projective-homomorphisms} and~\cite{wonderland}, where these have now been proved equivalent in~\cite{TwoDichotomyConjectures}. We have given additional evidence for these conjectures by proving that they hold for all reducts of homogeneous graphs. The next step in this direction
might be to 
show a general complexity dichotomy
for reducts of homogeneous structures whose
age is finitely bounded 
and has the \emph{free amalgamation property} (the Henson graphs provide natural examples for such structures).
The present paper
indicates that this problem might be within reach.

\section*{Acknowledgements}

We are most grateful to our reviewers, whose patience and diligence has uncovered errors as well as encouraged us to make the paper significantly more readable.


\newcommand{\etalchar}[1]{$^{#1}$}
\def\cprime{$'$} \def\cprime{$'$}
\begin{thebibliography}{BCKvO09}

\bibitem[BCKvO09]{Maximal}
Manuel Bodirsky, Hubie Chen, Jan K\'ara, and Timo von Oertzen.
\newblock Maximal infinite-valued constraint languages.
\newblock {\em Theoretical Computer Science (TCS)}, 410:1684--1693, 2009.
\newblock A preliminary version appeared at ICALP'07.

\bibitem[BJP17]{Phylo-Complexity}
Manuel Bodirsky, Peter Jonsson, and Trung~Van Pham.
\newblock {The Complexity of Phylogeny Constraint Satisfaction Problems}.
\newblock {\em ACM Transactions on Computational Logic (TOCL)}, 18(3), 2017.
\newblock An extended abstract appeared in the conference STACS 2016.

\bibitem[BK08]{ecsps}
Manuel Bodirsky and Jan K\'ara.
\newblock The complexity of equality constraint languages.
\newblock {\em Theory of Computing Systems}, 3(2):136--158, 2008.
\newblock A conference version appeared in the proceedings of Computer Science
  Russia {(CSR'06)}.

\bibitem[BK09]{tcsps-journal}
Manuel Bodirsky and Jan K\'ara.
\newblock The complexity of temporal constraint satisfaction problems.
\newblock {\em Journal of the ACM}, 57(2):1--41, 2009.
\newblock An extended abstract appeared in the Proceedings of the Symposium on
  Theory of Computing (STOC).

\bibitem[BKJ05]{JBK}
Andrei~A. Bulatov, Andrei~A. Krokhin, and Peter~G. Jeavons.
\newblock Classifying the complexity of constraints using finite algebras.
\newblock {\em SIAM Journal on Computing}, 34:720--742, 2005.

\bibitem[BKN09]{BartoKozikNiven}
Libor Barto, Marcin Kozik, and Todd Niven.
\newblock The {CSP} dichotomy holds for digraphs with no sources and no sinks
  (a positive answer to a conjecture of {B}ang-{J}ensen and {H}ell).
\newblock {\em SIAM Journal on Computing}, 38(5), 2009.

\bibitem[BKO{\etalchar{+}}17]{TwoDichotomyConjectures}
Libor Barto, Michael Kompatscher, Miroslav Ol\v{s}\'{a}k, Michael Pinsker, and
  Trung~Van Pham.
\newblock The equivalence of two dichotomy conjectures for infinite domain
  constraint satisfaction problems.
\newblock In {\em Proceedings of the 32nd Annual {ACM/IEEE} Symposium on Logic
  in Computer Science -- LICS'17}, 2017.
\newblock Preprint arXiv:1612.07551.

\bibitem[BM16]{Bodirsky-Mottet}
Manuel Bodirsky and Antoine Mottet.
\newblock Reducts of finitely bounded homogeneous structures, and lifting
  tractability from finite-domain constraint satisfaction.
\newblock In {\em Proceedings of the 31th Annual IEEE Symposium on Logic in
  Computer Science -- LICS'16}, pages 623--632, 2016.
\newblock Preprint available at ArXiv:1601.04520.

\bibitem[BMM18]{dCSPs2}
Manuel Bodirsky, Barnaby Martin, and Antoine Mottet.
\newblock Discrete temporal constraint satisfaction problems.
\newblock {\em J. {ACM}}, 65(2):9:1--9:41, 2018.

\bibitem[BN06]{BodirskyNesetrilJLC}
Manuel Bodirsky and Jaroslav Ne\v{s}et\v{r}il.
\newblock Constraint satisfaction with countable homogeneous templates.
\newblock {\em Journal of Logic and Computation}, 16(3):359--373, 2006.

\bibitem[Bod07]{Cores-journal}
Manuel Bodirsky.
\newblock Cores of countably categorical structures.
\newblock {\em Logical Methods in Computer Science}, 3(1):1--16, 2007.

\bibitem[Bod12]{Bodirsky-HDR}
Manuel Bodirsky.
\newblock Complexity classification in infinite-domain constraint satisfaction.
\newblock M\'emoire d'habilitation \`a diriger des recherches, Universit\'{e}
  Diderot -- Paris 7. Available at arXiv:1201.0856, 2012.

\bibitem[BP11]{BP-reductsRamsey}
Manuel Bodirsky and Michael Pinsker.
\newblock Reducts of {R}amsey structures.
\newblock {\em AMS Contemporary Mathematics, vol. 558 (Model Theoretic Methods
  in Finite Combinatorics)}, pages 489--519, 2011.

\bibitem[BP14]{RandomMinOps}
Manuel Bodirsky and Michael Pinsker.
\newblock Minimal functions on the random graph.
\newblock {\em Israel Journal of Mathematics}, 200(1):251--296, 2014.

\bibitem[BP15a]{BodPin-Schaefer}
Manuel Bodirsky and Michael Pinsker.
\newblock Schaefer's theorem for graphs.
\newblock {\em Journal of the ACM}, 62(3):52 pages (article number 19), 2015.
\newblock A conference version appeared in the Proceedings of STOC 2011, pages
  655--664.

\bibitem[BP15b]{Topo-Birk}
Manuel Bodirsky and Michael Pinsker.
\newblock Topological {B}irkhoff.
\newblock {\em Transactions of the American Mathematical Society},
  367:2527--2549, 2015.

\bibitem[BP16a]{wonderland}
Libor Barto and Michael Pinsker.
\newblock The algebraic dichotomy conjecture for infinite domain constraint
  satisfaction problems.
\newblock In {\em Proceedings of the 31th Annual IEEE Symposium on Logic in
  Computer Science -- LICS'16}, pages 615--622, 2016.
\newblock Preprint arXiv:1602.04353.

\bibitem[BP16b]{BodPin-CanonicalFunctions}
Manuel Bodirsky and Michael Pinsker.
\newblock {C}anonical functions: a proof via topological dynamics.
\newblock Preprint arXiv:1610.09660, 2016.

\bibitem[BP18]{Topo}
Libor Barto and Michael Pinsker.
\newblock Topology is irrelevant.
\newblock Preprint available from the authors' websites, 2018.

\bibitem[BPP14]{BPP-projective-homomorphisms}
Manuel Bodirsky, Michael Pinsker, and Andr\'{a}s Pongr\'acz.
\newblock Projective clone homomorphisms.
\newblock Accepted for publication in the Journal of Symbolic Logic, Preprint
  arXiv:1409.4601, 2014.

\bibitem[BPT13]{BPT-decidability-of-definability}
Manuel Bodirsky, Michael Pinsker, and Todor Tsankov.
\newblock Decidability of definability.
\newblock {\em Journal of Symbolic Logic}, 78(4):1036--1054, 2013.
\newblock A conference version appeared in the Proceedings of LICS 2011.

\bibitem[Bul06]{Bulatov}
Andrei~A. Bulatov.
\newblock A dichotomy theorem for constraint satisfaction problems on a
  3-element set.
\newblock {\em Journal of the ACM}, 53(1):66--120, 2006.

\bibitem[Bul17]{BulatovFVConjecture}
Andrei~A. Bulatov.
\newblock A dichotomy theorem for nonuniform {CSP}s.
\newblock In {\em 58th {IEEE} Annual Symposium on Foundations of Computer
  Science, {FOCS} 2017, Berkeley, CA, USA, October 15-17, 2017}, pages
  319--330, 2017.
\newblock Avaliable at {arXiv:1703.03021}.

\bibitem[BW12]{equiv-csps}
Manuel Bodirsky and Micha\l\ Wrona.
\newblock Equivalence constraint satisfaction problems.
\newblock In {\em Proceedings of Computer Science Logic}, volume~16 of {\em
  LIPICS}, pages 122--136. Dagstuhl Publishing, September 2012.

\bibitem[CKS01]{Creignou}
Nadia Creignou, Sanjeev Khanna, and Madhu Sudan.
\newblock {\em Complexity Classifications of Boolean Constraint Satisfaction
  Problems}.
\newblock SIAM Monographs on Discrete Mathematics and Applications 7, 2001.

\bibitem[FV99]{FederVardi}
Tom\'as Feder and Moshe~Y. Vardi.
\newblock The computational structure of monotone monadic {SNP} and constraint
  satisfaction: {a} study through {D}atalog and group theory.
\newblock {\em {SIAM} Journal on Computing}, 28:57--104, 1999.

\bibitem[GP08]{GoldsternPinsker}
Martin Goldstern and Michael Pinsker.
\newblock A survey of clones on infinite sets.
\newblock {\em Algebra Universalis}, 59:365--403, 2008.

\bibitem[HN90]{HellNesetril}
Pavol Hell and Jaroslav Ne\v{s}et\v{r}il.
\newblock On the complexity of {H}-coloring.
\newblock {\em Journal of Combinatorial Theory, Series B}, 48:92--110, 1990.

\bibitem[HR94]{HaddadRosenberg}
Lucien Haddad and Ivo~G. Rosenberg.
\newblock Finite clones containing all permutations.
\newblock {\em Canadian Journal of Mathematics}, 46(5):951--970, 1994.

\bibitem[JCG97]{JeavonsClosure}
Peter Jeavons, David Cohen, and Marc Gyssens.
\newblock Closure properties of constraints.
\newblock {\em Journal of the ACM}, 44(4):527--548, 1997.

\bibitem[KPT05]{Topo-Dynamics}
Alexander Kechris, Vladimir Pestov, and Stevo Todorcevic.
\newblock Fraiss\'e limits, {R}amsey theory, and topological dynamics of
  automorphism groups.
\newblock {\em Geometric and Functional Analysis}, 15(1):106--189, 2005.

\bibitem[LW80]{LachlanWoodrow}
Alistair~H. Lachlan and Robert~E. Woodrow.
\newblock Countable ultrahomogeneous undirected graphs.
\newblock {\em Transactions of the AMS}, 262(1):51--94, 1980.

\bibitem[NR89]{NesetrilRoedlPartite}
Jaroslav Ne\v{s}et\v{r}il and Vojt\v{e}ch R\"odl.
\newblock The partite construction and {R}amsey set systems.
\newblock {\em Discrete Mathematics}, 75(1-3):327--334, 1989.

\bibitem[Pon17]{Pon11}
Andr\'{a}s Pongr\'{a}cz.
\newblock Reducts of the {H}enson graphs with a constant.
\newblock {\em Annals of Pure and Applied Logic}, 168(7):1472--1489, 2017.

\bibitem[Pos41]{Post}
Emil~L. Post.
\newblock The two-valued iterative systems of mathematical logic.
\newblock {\em Annals of Mathematics Studies}, 5, 1941.

\bibitem[Sch78]{Schaefer}
Thomas~J. Schaefer.
\newblock The complexity of satisfiability problems.
\newblock In {\em Proceedings of the Symposium on Theory of Computing (STOC)},
  pages 216--226, 1978.

\bibitem[Sze86]{Szendrei}
\'Agnes Szendrei.
\newblock {\em Clones in universal algebra}.
\newblock S\'eminaire de Math\'ematiques Sup\'erieures. Les Presses de
  l'Universit\'e de {M}ontr\'eal, 1986.

\bibitem[Tho91]{RandomReducts}
Simon Thomas.
\newblock Reducts of the random graph.
\newblock {\em Journal of Symbolic Logic}, 56(1):176--181, 1991.

\bibitem[TZ12]{Tent-Ziegler}
Katrin Tent and Martin Ziegler.
\newblock {\em A course in model theory}.
\newblock Lecture Notes in Logic. Cambridge University Press, 2012.

\bibitem[Zhu17]{ZhukFVConjecture}
Dmitriy Zhuk.
\newblock A proof of {CSP} dichotomy conjecture.
\newblock In {\em 58th {IEEE} Annual Symposium on Foundations of Computer
  Science, {FOCS} 2017, Berkeley, CA, USA, October 15-17, 2017}, pages
  331--342, 2017.
\newblock Avaliable at {arXiv:1704.01914}.

\end{thebibliography}

\newcommand{\etalchar}[1]{$^{#1}$}
\def\cprime{$'$} \def\cprime{$'$}

\end{document}